\theoremstyle{definition}
\newtheorem{defi}{Definition}[section]
\newtheorem{rem}[defi]{Remark}
\newtheorem{exa}[defi]{Example}
\theoremstyle{plain}
\newtheorem{thm}[defi]{Theorem}
\newtheorem{lem}[defi]{Lemma}
\newtheorem{prop}[defi]{Proposition}
\newtheorem{cor}[defi]{Corollary}
\newcommand{\lra}{\longrightarrow}
\newcommand{\Ra}{\Rightarrow}
\newcommand{\ra}{\rightarrow}
\newcommand{\mc}{\mathcal}
\newcommand{\wt}{\widetilde}
\newcommand{\rat}{\rightarrowtail}
\newcommand{\vtr}{\vartriangleright}
\newcommand{\vtrd}{\mathrel{\mbox{$\vtr\!\!\!\!\!\cdot\,\,$}}}
\newcommand{\modelsm}{\models^{\mc{F}}_\rho}
\newcommand{\intrp}[1]{\mbox{$[\![#1]\!]^{\mc{F}}_\rho$}}
\newcommand{\pos}[1]{\mbox{$\langle #1\rangle$}}
\newcommand{\onto}[1]{\stackrel{#1}{\rightsquigarrow}}
\newcommand{\lonto}[1]{\stackrel{#1}{-\!\!\!\rightsquigarrow}}
\newcommand{\leads}[1]{\stackrel{#1}{\,\,\shortparallel\!\!\!\!\leadsto}}
\newcommand{\siz}[1]{\mbox{$|\!|#1|\!|$}}
\newcommand{\tref}[1]{{\hypersetup{linkcolor=black}\textttz{\ref{#1}}}}
\newcommand{\cp}{\mbox{$\mc{C}$} }
\newcommand{\cpi}{\cp_\imath}
\newcommand{\vmodels}{\mbox{$\;|\!\!\!\approx\;$}}
\newcommand{\pdl}{\texttt{PDL}\xspace}
\newcommand{\cpdl}{\texttt{CPDL}\xspace}
\newcommand{\tpdl}{$\tau$\pdl}
\newcommand{\atf}{\texttt{AtF}\xspace}
\newcommand{\atp}{\texttt{AtP}\xspace}
\newcommand{\ev}{\texttt{Ev}\xspace}
\newcommand{\lab}{\texttt{l}\xspace}
\newcommand{\actf}{\texttt{actF}}
\newcommand{\rdf}{\texttt{rdF}}
\newcommand{\sts}{\texttt{sts}}
\newcommand{\rd}{\texttt{rd}}
\newcommand{\true}{\texttt{tt}}
\newcommand{\false}{\texttt{ff}}
\newcommand{\reach}{\texttt{reach}}
\newcommand{\unsat}{\texttt{UnSat}\xspace}
\newcommand{\sat}{\texttt{Sat}\xspace}
\newcommand{\tempsat}{\texttt{TempSat}\xspace}
\newcommand{\fp}{\mc{F\!P}}
\newcommand{\fd}{\mc{F\!D}}
\newcommand{\stp}{\mc{SP}}
\newcommand{\tp}{\mc{T\!P}}
\newcommand{\fpath}{f\!path}
\newcommand{\cpr}{C\!R}
\title{\textsc{ExpTime} Tableaux for Type PDL}
\author{
	Agathoklis Kritsimallis\\ 
	\texttt{\small agiskr@gmail.com, agiskr@uom.edu.gr}
	\and
	Ioannis Refanidis\\
	\texttt{\small yrefanid@uom.edu.gr}
}
\date{\small Department of Applied Informatics, University of Macedonia, Thessaloniki, Greece.}
\begin{document}

\maketitle
\thispagestyle{empty}

\begin{abstract}
	The system of Type \texttt{PDL} ($\tau$\texttt{PDL}) is an extension of Propositional Dynamic Logic (\texttt{PDL}) and its main goal is to provide a formal basis for reasoning about types of actions (modeled by their preconditions and effects) and agent capabilities.
	The system has two equivalent interpretations, namely the standard relational semantics and the type semantics, where process terms are interpreted as types, i.e. sets of binary relations.
	Its satisfiability problem is decidable, as a \textsc{NExpTime} decision procedure was provided based on a filtration argument and it was suggested that the satisfiability problem for $\tau$\texttt{PDL} should be solvable in deterministic, single exponential time.
	In this paper, we address the problem of the complexity of the satisfiability problem of $\tau$\texttt{PDL}.
	We present a deterministic tableau-based satisfiability algorithm and prove that it is sound and complete and that it runs in \textsc{ExpTime}. 
	Additionally, the algorithm detects satisfiability as earlier as possible, by restricting or-branching whenever possible.
\end{abstract}

\section{Introduction}

The satisfiability problem of Propositional Dynamic Logic (\pdl) \cite{FischerLadner,Harel}) is \textsc{ExpTime}-{complete}. Various decision procedures have been given such as best-case exponential \cite{Pratt1}, working in multiple stages \cite{Pratt2,Nguyen-CPDL}, on-the-fly \cite{Giacomo,Abate} and with analytic cut-rule for the converse \cite{Giacomo,Nguyen-CPDL}, most of them based on and-or tableaux \cite{Gore-talk}. The algorithm of \cite{Gore-PDL} for \pdl and its extension for the converse (\cpdl) \cite{Gore-CPDL} reveal the crucial role that global caching plays to the achievement of an optimal algorithm.

The system of \tpdl (Type \pdl) \cite{Hartonas1,Hartonas2} is an extension of \pdl. The main goal of such a system is to provide a formal basis for reasoning about types of actions instead of merely actions. 
As described in \cite{Hartonas1}, the system is inspired by the \texttt{OWL-S} process model of services \cite{owl-s}, where concrete actions are hidden from public view and what is publicly known is only the type of actions a service can perform, modeled by their preconditions and effects.
Syntactically, the language of \tpdl extends that of \pdl with abstract process types designated by their preconditions and effects, written in the form $\varphi\Ra\psi$, with agent capabilities statements $\cpi A$ which declare that agent $\imath$ can do $A$ and with a backwards possibility operator, a weak form of converse.

The interpretation of the introduced system of \tpdl in \cite{Hartonas1} is different from the standard \pdl semantics.
As the main motivation of \tpdl is reasoning about types of processes, type semantics has been proposed where process terms are interpreted as types, i.e. sets of binary relations. On the other hand, in \cite{Hartonas2}, it was shown that the standard relational semantics is equivalent to the type semantics. 

As far as the capabilities operator is concerned, the semantic options made in \cite{Hartonas1,Hartonas2} are slightly different from those made in the KARO Framework \cite{Hoek1, Hoek2, Hoek3} and it is that difference in semantics, as detailed in \cite{Hartonas1,Hartonas2} that allows for a finitary axiomatization and a decidability proof, provided in \cite{Hartonas1,Hartonas2} and missing in \cite{Hoek1, Hoek2, Hoek3}. 

Also, it was shown that the satisfiability problem is decidable. A \textsc{NExpTime} decision procedure was provided based on a filtration argument and it was suggested that the satisfiability problem for \tpdl should be solvable in deterministic, single exponential time.

Here, we address the problem of the complexity of the satisfiability problem of \tpdl, by devising a suitable tableau-based algorithm.
We prove that it is sound and complete and that it runs in \textsc{ExpTime}.

The first approach in this direction was presented in \cite{bci_tpdl} (joint work of the first author and professor Chrysafis Hartonas).
In that work, the capabilities statements are not handled properly and the presented algorithm is a direct extension of the one given in \cite{Gore-CPDL} for \cpdl.
Here, based on previously mentioned approaches for \pdl and \cpdl, as those in \cite{Nguyen-CPDL} and \cite{Gore-PDL}, we propose a new approach.

We choose here not to deal with the backward possibility operator. 
First, as argued in \cite{Hartonas1}, the specific construct is a weak form of the well-known converse operator which has been successfully handled for \cpdl in \cite{Gore-CPDL,Widmann_Thesis}. 
Second, the converse, as well as the backwards possibility operator, has a significant impact on the design of a decision procedure, as it is revealed by \cite{Gore-PDL} and \cite{Gore-CPDL} for \pdl and \cpdl, respectively.
The converse requires to maintain the necessary data structures in order to examine `backwards' whether ancestor nodes have the necessary formulas and accordingly, to `restart' special nodes. Moreover, its addition has as a result to reuse only the states (global state caching) and not all the nodes of a tableau.
Thus, at this point, we choose to abstract from the details of such an operator and study the impact of the extension of \pdl with the precondition-effect construct and with the capabilities statements. 

Our approach handles the precondition-effect processes and the capabilities statements through the necessary tableau rules. 
The precondition-effect construct allows us to express the universal definable relation through the process term $\Omega = \true\Ra\true$ (where $\true$ is any tautology), as in all the states of any model the preconditions $\true$ hold and so do the effects.
This proves to be quite convenient in order to handle the formulas in which processes are abstractly determined by their preconditions and effects, as they can be expressed through $\Omega$.
On the other hand, the interpretation of the capabilities statements of atomic processes or of processes designated by their preconditions-effects is based on the interpretation of their own processes and thus, special attention is required for them. For example, consider processes $A_1$ and $A_2$ of the previous form such that the interpretation of $A_1$ is a subrelation of the interpretation of $A_2$. Now, if an agent has the capability to execute $A_2$ at a specific state, then this immediately implies that the same agent has also the capability to execute $A_1$ at the same state.

Besides the new operators, since \tpdl is an extension of \pdl, we also have to deal with the well-known problems of \pdl, such as efficiency of the algorithm (global caching) and fulfillment of eventualities. 
The proposed algorithm has available all the nodes of a tableau for possible reuse and thus, we obtain an exponential algorithm.
Additionally, it maintains and examines the appropriate data structure in order to detect unfulfilled eventualities. 

Also, the algorithm has been designed to avoid or-branching whenever possible. 
Previous approaches for \pdl \cite{Nguyen-CPDL,Gore-PDL} fully explore or-nodes, even those which do not work in multiple stages. Intuitively speaking, since loops (i.e. graph cycles) occur due to the iteration operator and due to caching, the algorithm cannot determine whether an or-branch is definitely satisfiable or not, until the satisfiability status of an ancestor node is calculated. One should also keep in mind that loops are formed even when eventualities are not involved and thus, the data strucures which are used for the detection of unfulfilled eventualities cannot help. 
Since the satisfiability status of a node might be dependent on ancestor nodes, what we have done here is to record these dependencies and as a result, the algorithm has the ability to distinguish the independent nodes (whose satisfiability status is not going to change). 
Of course, this also applies for the children of or-nodes and thus, whenever loops have not been formed and the appropriate status indicates satisfiability, the algorithm avoids exploring the remaining or-branches.

The impact of the previous characteristic of the proposed algorithm becomes more apparent in the case of satisfiable formulas. In the case of unsatisfiable formulas, the algorithm is forced to explore all the possible cases exhaustively.
In \cite{Hustadt}, as well as in \cite{Gore-PDL}, experimental evaluations are provided which reveal the importance of caching, but also the importance of the earliest possible detection of satisfiability, so that the search space can be restricted as much as possible. 
In \cite{Widmann_Thesis}, for the case of \pdl and \cpdl, a proposed optimization is the restriction of or-branching only in the absence of diamond formulas with atomic processes, as loops are avoided.

The rest of the paper is organized as follows.
In Section \ref{sec: TPDL}, we present the syntax of \tpdl and its standard relational semantics. 
In Subsection \ref{subsec: preliminaries}, we discuss properties of the new constructs, introduce the Smullyan's unifying notation for \tpdl, namely the $\alpha$/$\beta$ formulas, we define the eventualities and the reduction sets of an $\alpha$/$\beta$ formula and show syntactic and semantic properties of interest related to them.
In Section \ref{sec: Hintikka structures}, we introduce the Hintikka structures, namely transition systems with specific properties and with states which are labelled with formulas, and show that they imply the satisfiability of their formulas through the appropriate semantic model. 
In Section \ref{sec: algorithm}, we present our tableau-based satisfiability algorithm. First, in Subsection \ref{subsec: tableau calculus}, we give the necessary tableau calculus, as its rules are used to expand the nodes of a tableau, and show its soundness. In Subsection \ref{subsec: algorithm}, we first present the necessary definitions and then the procedures of the algorithm and in Subsection \ref{subsec: examples}, we give examples of tableaux.
In Section \ref{sec: correctness and complexity}, we prove the correctness of the algorithm and argue about its complexity. First, in Subsection \ref{subsec: tableau properties}, we present various properties that concern a tableau which are used in the subsequent subsections. Then, in Subsection \ref{subsec: soundness}, we show that the algorithm is sound (by defining the appropriate Hintikka structure), in Subsection \ref{subsec: completeness}, we show that it is complete and in Subsection \ref{subsec: complexity}, we show that it is in \textsc{ExpTime}.
Finally, in Section \ref{sec: conclusions}, we conclude our work and give future directions.

\section{Type \pdl and Preliminaries}	\label{sec: TPDL}

In this section, we review the syntax and the semantics of \tpdl and give various definitions and properties that will be useful in the remainder of the paper.

\subsection{Syntax and Intuitive Semantics}

The \tpdl language that we consider in this paper extends the standard \pdl language by capabilities statements and terms designating processes by their preconditions and effects, written $\varphi\Ra\psi$.
Holding a capability statement $\cpi A$ at some state indicates that the agent $\imath$ has the capability to execute the action $A$ at that state.

Assume countable, non-empty and disjoint sets of atomic formulas $\atf$ and atomic program terms $\atp$, and let $I$ be a set of agent (or service) names, disjoint from $\atf$ and $\atp$.
The formal syntax of \tpdl follows:
\begin{equation*}
\begin{split}
\mc{L}_s\ni \varphi    \;&:=\;    p   \mid    \neg\varphi   	\mid   \forall A.\varphi    \mid		\cpi A										\\
\mc{L}_a\ni 	A	   \;&:=\;    a   \mid     \varphi          \mid 	\varphi\Ra\varphi 	\mid   	      AA        \mid       A+A      \mid   A^*
\end{split}
\end{equation*}
where $p\in\atf$, $a\in\atp$ and $\imath\in I$. In what follows, $p,q$ range over atomic formulas and $a,b,c$ over atomic programs.
We adopt a slightly different notation from the one usually met in the literature and we write $AA$ for $A;A$, $A+A$ for $A\cup A$, and $\varphi$ for $\varphi?$, when $\varphi$ acts as a program. 
Moreover, we write $\forall A.\varphi$ for the familiar boxed formula $[A]\varphi$ in order to stay consistent with the notation in \cite{Hartonas1}, where it received a non-standard, type interpretation.
The possibility operator and the propositional connectives are not taken as primitive, but they can be defined as usual (e.g. $\exists A.\varphi \equiv \neg\forall A.\neg\varphi$ and $\varphi\ra\psi \equiv \forall \varphi.\psi$). 

We let $\Sigma$ be generated by the grammar $a\; (a\in\atp) \mid \varphi \mid \varphi\Ra \varphi\;(\varphi\in\mc{L}_s)$ and $\wt{\Sigma}$ be its restriction generated by the grammar $a\; (a\in\atp) \mid \varphi\Ra \varphi\;(\varphi\in\mc{L}_s)$. The language of the action (program) terms $\mc{L}_a$ is then the language of regular expressions on the alphabet $\Sigma$.

We trust that the reader is familiar with the standard relational interpretation of the language of \pdl.
In \cite{Hartonas1} and \cite{Hartonas2}, \tpdl has received two kinds of interpretations: a type interpretation and a standard relational one. In  \cite{Hartonas2}, the two interpretations are shown to be equivalent, meaning that exactly the same formulas are validated in the two classes of models.
In \cite{Hartonas1}, a set $\atp$ of basic types, test types and precondition-effect types are considered  and then complex process types are built by using the regular operators of composition, choice and Kleene star.
The precondition-effect program terms $\varphi\Ra\psi$ can be viewed as anonymous processes/actions or as a type of processes/actions. They are specified by the preconditions which must hold in order for the action to be executed and by the effects that the execution of the action has as a result. 
The sentential fragment of the \tpdl language is an extension of the \pdl language by the introduction of capabilities statements.

\subsection{Standard Relational Semantics}

In this paper, we follow the standard relational semantics of \tpdl, as presented in \cite{Hartonas2}.

\begin{defi}
A \textit{frame} $\mc{F}$ is a structure $\langle W, P, \onto{}, I \rangle$, also denoted as $\langle W, (\onto{\pi})_{\pi\in P}, I \rangle$, where

\begin{itemize}
\item $(W,P,\onto{})$ is  a labeled transition system, with $W$ a nonempty set of states of the system.
$P$ is a set of labels and
$\onto{}$ is a transition function assigning to each label $\pi\in P$ a binary relation  $\onto{\pi}\; \subseteq W\times W$.
The map $\,\onto{}\,$ extends to all  of $P^+$ (finite, non-empty sequences of items in $P$), by composition (thus we write, for example, $\lonto{\sigma,\tau}\;$  for the composition $\onto{\sigma}\onto{\tau}$).

\item $I$ is a set of agent names.
\end{itemize}
The frame is an $\mc{L}$-\textit{frame} if the set $P$ of process labels is the set $\atp$ of atomic process terms.
\end{defi}

\begin{defi}	\label{def: l-model} 
An $\mc{L}$-\textit{model} is a triple $\mc{M}=\langle \mc{F},  \rho, (\imath^\mc{M})_{\imath\in I}\rangle$,
where
\begin{enumerate}
\item $\mc{F}$ is an $\mc{L}$-frame $\langle W, (\onto{a})_{a\in\atp}, I \rangle$,
\item $\rho: \atf \lra 2^{W}$ is an interpretation function of atomic properties of states,
\item for each $\imath\in I$, $\imath^\mc{M}$ is a map assigning to the agent $\imath$ capabilities that it has at each state $w\in W$, more
specifically $\imath^\mc{M}(w)  \subseteq  \bigcup \{ \onto{A} \mid A\in\Sigma^+ \}$, where $\onto{A}$ is a binary relation on $W$, as we define below.
\end{enumerate}

The interpretation function $\rho$ and the transition function $\onto{}$ are extended to all formulas and actions, respectively, as shown in Table \ref{tab: int}, where for $\varphi\in\mc{L}_s$, its interpretation $\intrp{\varphi} \subseteq W$ is the set of states  where $\varphi$ holds and for $A \in \mc{L}_a$, the set $\onto{A} \;\subseteq\; W\times W$  is a process, i.e. a binary relation on $W$.

\begin{table}[ht] \label{tab: int}
\caption{Interpretation in Relational Semantics}
\small
\vskip1.2mm
\centering 
\begin{tabular}{lll l lll l lll}
	\toprule
	\hspace{-0.15 cm}$\intrp{p}$ 				&=&\hspace{-0.3 cm} 	$\rho(p)$    
	\\[1mm]
	\hspace{-0.15 cm}$\intrp{\neg\varphi}$ 		&=&\hspace{-0.3 cm} 	$W\setminus\intrp{\varphi}$
	&\hspace{-0.8 cm}$\intrp{\forall A.\varphi}$ &\hspace{-0.35 cm}=&\hspace{-0.3 cm}  	$\{  w\!\in\!W  \mid  \forall w'\!\in W (w\onto{A}w' \Ra w'\!\in\intrp{\varphi})  \}$
	\\[1mm]
	\hspace{-0.15 cm}$\intrp{\cpi a}$ 					&=&\hspace{-0.3 cm}  $\{w\in W \mid \,\onto{a}\,\subseteq\imath^\mc{M}(w)\}$
	&\hspace{-0.8 cm}$\intrp{\cpi(AB)}$   	&\hspace{-0.35 cm}=&\hspace{-0.3 cm} $\{ w\!\in\!W \mid w\!\in\!\intrp{\cpi A}  \land \forall w' (w\!\onto{A}\!w' \!\Ra w'\!\!\in\! \intrp{\cpi B}) \}$
	\\[1mm]
	\hspace{-0.15 cm}$\intrp{\cpi\varphi}$				&=&\hspace{-0.3 cm}  $W$
	&\hspace{-0.8 cm}$\intrp{\cpi(A\!+\!B)}$ &\hspace{-0.35 cm}=&\hspace{-0.3 cm} $\intrp{\cpi A} \cap \intrp{\cpi B}$
	\\[1mm]
	\hspace{-0.15 cm}$\intrp{\cpi(\varphi\!\Ra\!\psi)}$\hspace{-0.35 cm}  	&=&\hspace{-0.3 cm}  $\{w\!\in\!W\mid \,\lonto{\varphi\Ra\psi}\,\subseteq\imath^\mc{M}(w)\}$
	\hspace{0.65 cm}
	&\hspace{-0.8 cm}$\intrp{\cpi A^*}$ 		&\hspace{-0.35 cm}=&\hspace{-0.3 cm} $\bigcup \left\{ \intrp{\varphi} \mid \intrp{\varphi} \subseteq \intrp{\cpi A} \cap \intrp{\forall A.\varphi} \right\}$
	\\\bottomrule
\end{tabular}

\vspace{5 mm}
\begin{tabular}{lll l lll l lll}
	\toprule
	$\onto{a}$ 				 &$\subseteq$&  $W\times W$							
		&$\lonto{AB}$ 				&=& 	$\onto{A}\,\onto{B}$
	\\[1mm]
	$\onto{\varphi}$ 		 &=& 			$\{(w,w)\mid w\!\in\!\intrp{\varphi}\!\}$	
		&$\lonto{A +  B}$ 			&=& 	$\onto{A}\;\cup\;\onto{B}$
	\\[1mm]
	$\lonto{\varphi\Ra\psi}$ &=&  			$\bigcup\!\big\{ \!\onto{A}\,  \mid  A\in \Sigma^+ \text{ and } \forall w\in\intrp{\varphi}\,\forall w'\,( w\onto{A}w' \Ra w'\in\intrp{\psi})\big\}$ 
	\hspace{1.1 cm}
		&$\lonto{A^*}$ 				&=& 	$\bigcup_{n\geq 0}\left(\onto{A}\right)^n$
	\\\bottomrule
\end{tabular}
\end{table}

Furthermore, the interpretation and the capabilities assignment function are required to satisfy the normality condition
\eqref{eq: normality}, for each $\imath\in I$ and $w\in W$.

\begin{equation}
\imath^\mc{M}(w)=
\bigcup\{\lonto{\varphi\Ra\psi} \mid w\in\intrp{\cpi(\varphi\Ra\psi)}\}    \cup   
\bigcup\{\onto{a}  \,\mid  w\in\intrp{\cpi a}\}														\label{eq: normality}
\end{equation}
\end{defi}

Assuming an $\mc{L}$-model $\mc{M} = \langle \mc{F},  \rho, (\imath^\mc{M})_{\imath\in I}\rangle$, we define the satisfaction relation $\modelsm$ (or equivalently denoted as $\models^\mc{M}$) as usual: $w \modelsm \varphi$ iff $w\in\intrp{\varphi}$.
We also write $w\models^\mc{M}\Gamma$, where $\Gamma$ is a set of formulas iff $w\models^\mc{M}\varphi$, for each $\varphi$ in $\Gamma$.
Additionally, we say that a formula is (logically) \textit{valid} iff it is true at any state of any $\mc{L}$-model.

For a discussion on semantic intuitions, the reader is referred to \cite{Hartonas1,Hartonas2}. 
Here, we draw the reader's attention only to the relation $\lonto{\varphi\Ra\psi}$,  defined in Table \ref{tab: int} as the largest definable relation connecting states $w,w'$ such that if $\varphi$ holds at $w$, then $\psi$ holds at $w'$.

In the sequel, we will be interested in and make use of two types of semantic facts, written as $\gamma\models^\mc{M}\Psi$ and $\gamma\vmodels^{\!\mc{M}} A$, where $\gamma= A_1,\dotsc,A_n$ is a finite sequence of action terms and $\Psi$ a finite set of formulas. 
The first is defined to denote that for any sequence of transitions $w_1\onto{A_1}w_2\onto{A_2}\cdots\onto{A_n}w_n$ in $\mc{M}$, the end state satisfies some $\psi\in\Psi$. 
Only in this case, $\gamma$ could be the empty sequence which merely represents the trivial test $\true$ (where $\true$ is any tautology). 
The second relation $\gamma\vmodels^{\!\mc{M}}A$ is defined to mean that the composite relation interpreting $\gamma$ is a subrelation of that interpreting $A$, i.e. $\onto{\gamma} \,\subseteq\, \onto{A}$. As usual, the meaning of $\gamma\models\Psi$ and $\gamma\vmodels A$ is obtained by quantifying universally over models.

\subsection{Preliminaries}	\label{subsec: preliminaries} 

It follows from the semantics that
\begin{equation}	\label{prop1}
\varphi\models\forall A.\psi 		\quad\text{ iff }\quad			A\vmodels\varphi\Ra\psi			\quad\text{ iff }\quad
\varphi,A\models\psi 				\quad\text{ iff }\quad			\varphi A\models\psi
\end{equation}
and, as detailed in \cite{Hartonas1}, these equivalences reveal the naturalness of precondition-effect process type operator. 

For the purposes of this paper, we point out first that the relation designated by the construct $\true\Ra\true$, which we shall hereafter designate by $\Omega$, is the universal definable relation, i.e. a universal process such that for any term $A$, we have $A\vmodels\Omega$ (meaning, as we previously explained, that $\onto{A} \,\subseteq\, \onto{\Omega} \,=\, \lonto{\true\Ra\true}$). Consequently, $\onto{\Omega}$ is transitive (i.e. $\Omega\Omega\vmodels\Omega$) and it is also clearly a reflexive relation (since $\true\vmodels\Omega$, i.e. $\onto{\true} \,\subseteq\, \onto{\Omega}$). For later use, we list this as a Lemma.

\begin{lem}[Universal Definable Relation]		\label{lem: Omega}
	Letting $\Omega=\true\Ra\true$, where $\true$ is any tautology, the following hold for $\Omega$.
	\begin{enumerate}
	\item For any process term $A$, $A\vmodels\Omega$ ($\onto{\Omega}$ is the universal definable relation)
	\item $\true\vmodels\Omega$ ($\onto{\Omega}$ is reflexive)
	\item $\Omega\Omega\vmodels\Omega$ ($\onto{\Omega}$ is transitive).
	\end{enumerate}
	Therefore, $\forall\Omega.\underline{\hspace{0.25 cm}}$ is an $S4$ modality and of course $\onto{\Omega}$ is the smallest (in the sense of set inclusion) reflexive and transitive relation containing $\onto{\Omega}$ and thus, $\onto{\Omega^*} \,=\, \onto{\Omega}$.
	\qed
\end{lem}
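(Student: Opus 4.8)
The plan is to begin by unwinding the definition of $\onto{\Omega}$ from Table~\ref{tab: int}. Since $\true$ is a tautology we have $\intrp{\true}=W$, so for every $A\in\Sigma^+$ the side condition $\forall w\in\intrp{\true}\,\forall w'\,(w\onto{A}w'\Ra w'\in\intrp{\true})$ is vacuously satisfied, the consequent $w'\in W$ always holding. Hence the defining union collapses to
\[
\onto{\Omega}\;=\;\bigcup\{\onto{B}\mid B\in\Sigma^+\}.
\]
I would record this identity first, because all three parts and both corollaries follow from it once we note that $\Sigma^+$ is closed under concatenation and that the trivial test $\true$ is itself a member of $\Sigma$.

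From this identity reflexivity (part~2) is immediate: $\true\in\Sigma\subseteq\Sigma^+$ and $\onto{\true}=\{(w,w)\mid w\in W\}$ is the full diagonal, so $\onto{\true}\subseteq\onto{\Omega}$, i.e.\ $\true\vmodels\Omega$. Transitivity (part~3) is equally direct: given $(w,w')\in\onto{B}$ and $(w',w'')\in\onto{C}$ with $B,C\in\Sigma^+$, the concatenation $BC$ again lies in $\Sigma^+$ and $(w,w'')\in\onto{B}\onto{C}=\onto{BC}\subseteq\onto{\Omega}$, which is exactly $\Omega\Omega\vmodels\Omega$.

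For part~1 I would proceed by structural induction on the regular expression $A\in\mc{L}_a$. The base cases $A=a$, $A=\varphi$ and $A=\varphi\Ra\psi$ are dispatched at once, since each such term belongs to $\Sigma$ and hence $\onto{A}\subseteq\onto{\Omega}$ by the displayed identity. The inductive cases then reuse parts~2 and~3: for $A=BC$ I would combine the induction hypothesis with transitivity, $\onto{B}\onto{C}\subseteq\onto{\Omega}\onto{\Omega}\subseteq\onto{\Omega}$; for $A=B+C$ I would simply take the union of the two inclusions; and for $A=B^*=\bigcup_{n\geq 0}(\onto{B})^n$ I would run a secondary induction on $n$, using reflexivity for the diagonal $n=0$ term and transitivity for the inductive step.

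Finally, the concluding assertions follow formally. Reflexivity and transitivity of $\onto{\Omega}$ say precisely that $\forall\Omega.\underline{\hspace{0.25 cm}}$ validates the axioms $T$ and $4$, so it is an $S4$ modality; and since $\onto{\Omega}$ is already reflexive and transitive it coincides with its own reflexive--transitive closure, whence $\onto{\Omega^*}=\bigcup_{n\geq 0}(\onto{\Omega})^n=\onto{\Omega}$. I do not expect a genuine obstacle: the single delicate point is to make the collapse of the defining union fully rigorous, i.e.\ to argue carefully that the quantifier side condition is vacuous rather than merely to assert it, since every remaining step rests on that one identity.
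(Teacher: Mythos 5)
Your proof is correct, and it rests on exactly the observation the paper itself relies on: since $\intrp{\true}=W$, the side condition in the definition of $\lonto{\true\Ra\true}$ is vacuous, so $\onto{\Omega}=\bigcup\{\onto{B}\mid B\in\Sigma^+\}$. The paper, however, gives no proof at all (the lemma is stated with a closing qed, its claims asserted as immediate in the preceding paragraph), and where it does indicate a logical order it runs opposite to yours: the paper takes universality (part 1) as primary and reads off transitivity and reflexivity as the instances $A=\Omega\Omega$ and $A=\true$, whereas you derive parts 2 and 3 directly from the collapse identity and then prove part 1 by structural induction on $A$, using 2 and 3 in the $BC$ and $B^*$ cases. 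Your order is arguably the more careful one: for terms built with $+$ and ${}^*$, the relation $\onto{A}$ is not literally one of the members of the union in the collapse identity, so some induction (or an appeal to the decomposition into languages of words over $\atp_\Omega\cup\mc{L}_s$, proved only later as Lemma \ref{lem: lang properties}) is genuinely needed — a step the paper leaves implicit. Both routes establish the same facts; yours simply makes the omitted induction explicit.
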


In what follows, we often make use of the set $\atp\cup\{\Omega\}$. Thus, for simplicity of notation, we denote it as $\atp_\Omega$.
\begin{lem}	\label{=>1}
	The following properties hold:
	\begin{enumerate}[1)]
	\item For any program term $A$, $(\neg\varphi)A\vmodels\varphi\Ra\psi$ and $A\psi\vmodels\varphi\Ra\psi$. Taking $A=\true$, we have $\neg\varphi\vmodels\varphi\Ra\psi$ and $\psi\vmodels\varphi\Ra\psi$. Also, the sentential test designated by $\varphi\ra\psi$ is a subrelation of the relation designated by $\varphi\Ra\psi$, in symbols $\varphi\ra\psi\vmodels\varphi\Ra\psi$.
	\item For any $\vartheta,\varphi,\psi$, $\;\vartheta\vmodels\varphi\Ra\psi$ iff $\vartheta\models\varphi\ra\psi$. Consequently, in any $\mc{L}$-model $\mc{M}$,
	\begin{equation}
	\lonto{\varphi\Ra\psi} \;=\;
	\lonto{\varphi\ra\psi}\;\cup\;\bigcup\{ \onto{A} \mid  A\in\widetilde{\Sigma}^+ \text{ and } \forall w,w'  ( w\onto{A}w' \text{ and } w\models^\mc{M}\varphi \,\text{ implies }\, w'\models^\mc{M}\psi )  \}
	\end{equation}
	\end{enumerate}
\end{lem}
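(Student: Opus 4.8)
The plan is to reduce everything to the four equivalences recorded in \eqref{prop1}, unwinding the semantic-fact notation into the per-model relational statements it abbreviates and reading off the results; no model construction should be needed. For Part 1, I would instantiate the equivalence ``$B\vmodels\varphi\Ra\psi$ iff $\varphi B\models\psi$'' of \eqref{prop1} at the relevant program terms. For $(\neg\varphi)A\vmodels\varphi\Ra\psi$, the equivalent statement $\varphi(\neg\varphi)A\models\psi$ holds \emph{vacuously}, since $\onto{\varphi}\onto{\neg\varphi}=\emptyset$ (no state lies in both $\intrp{\varphi}$ and $\intrp{\neg\varphi}$), so there is no transition sequence that could falsify it. For $A\psi\vmodels\varphi\Ra\psi$, the equivalent $\varphi A\psi\models\psi$ holds because any witnessing sequence terminates with a $\psi$-test, which forces its end state into $\intrp{\psi}$. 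Setting $A=\true$ and using that $\onto{\true}$ is the identity (so that $(\neg\varphi)\true$ and $\true\psi$ denote the same relations as $\neg\varphi$ and $\psi$) yields $\neg\varphi\vmodels\varphi\Ra\psi$ and $\psi\vmodels\varphi\Ra\psi$. Finally $\varphi\ra\psi\vmodels\varphi\Ra\psi$ follows most transparently by noting $\intrp{\varphi\ra\psi}=\intrp{\neg\varphi}\cup\intrp{\psi}$, so the test relation $\onto{\varphi\ra\psi}=\onto{\neg\varphi}\cup\onto{\psi}$ is a union of two relations already shown to refine $\lonto{\varphi\Ra\psi}$.

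For the biconditional of Part 2, where $\vartheta$ is a formula acting as a test, I would again apply \eqref{prop1}: $\vartheta\vmodels\varphi\Ra\psi$ iff $\varphi\vartheta\models\psi$. Unwinding the right-hand side, a sequence $w\onto{\varphi}w'\onto{\vartheta}w''$ collapses to $w=w'=w''$ with $w\in\intrp{\varphi}\cap\intrp{\vartheta}$, so $\varphi\vartheta\models\psi$ asserts exactly $\intrp{\varphi}\cap\intrp{\vartheta}\subseteq\intrp{\psi}$, which is equivalent to $\intrp{\vartheta}\subseteq\intrp{\varphi\ra\psi}$, i.e. to $\vartheta\models\varphi\ra\psi$. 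This holds in every model and hence under universal quantification.

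For the displayed identity, I would abbreviate $U=\{(w,w')\mid w\models^\mc{M}\varphi \text{ implies } w'\models^\mc{M}\psi\}$ and observe that the defining side-condition of $\lonto{\varphi\Ra\psi}$ is precisely $\onto{B}\subseteq U$, so that $\lonto{\varphi\Ra\psi}=\bigcup\{\onto{B}\mid B\in\Sigma^+,\ \onto{B}\subseteq U\}$ and in particular $\lonto{\varphi\Ra\psi}\subseteq U$. The inclusion of the right-hand side in the left is then routine: the test term $\varphi\ra\psi$ is covered by Part 1, and every $A\in\wt{\Sigma}^+$ with $\onto{A}\subseteq U$ is an admissible index of the union defining $\lonto{\varphi\Ra\psi}$ because $\wt{\Sigma}^+\subseteq\Sigma^+$. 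The reverse inclusion is the point worth isolating: since $\varphi,\psi\in\mc{L}_s$, the precondition-effect term $\varphi\Ra\psi$ is itself a single letter of $\wt{\Sigma}$, whence $\varphi\Ra\psi\in\wt{\Sigma}^+$ and $\onto{\varphi\Ra\psi}\subseteq U$ (just noted); therefore the whole relation $\onto{\varphi\Ra\psi}=\lonto{\varphi\Ra\psi}$ already occurs as one term of the right-hand union, giving $\lonto{\varphi\Ra\psi}\subseteq$ RHS.

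The only genuinely delicate points are bookkeeping ones, and I expect them to be the main source of error rather than of difficulty: correctly translating each occurrence of $\models$ and $\vmodels$ into the relational statement it abbreviates — in particular keeping track that the antecedent $\varphi$ of $\varphi\Ra\psi$ is evaluated at the \emph{source} of a transition whereas the test $\varphi\ra\psi$ is evaluated at the \emph{target} — and noticing that $\wt{\Sigma}$ already contains the precondition-effect letters, which is exactly what makes the reverse inclusion in the displayed equation immediate, sparing one from having to reduce an arbitrary test-laden word of $\Sigma^+$ to a test-free word of $\wt{\Sigma}^+$.
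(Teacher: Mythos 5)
Your proof is correct and takes essentially the same route as the paper: Part 1 is read off from the equivalences in \eqref{prop1}, and Part 2 is obtained by unwinding the definitions of $\models$, $\vmodels$ and of $\lonto{\varphi\Ra\psi}$, which is exactly what the paper's (very terse) proof does. The one step you make explicit that the paper leaves implicit is the left-to-right inclusion of the displayed identity, where your observation that $\varphi\Ra\psi$ is itself a single letter of $\wt{\Sigma}$, so that $\lonto{\varphi\Ra\psi}$ occurs verbatim as one term of the right-hand union, correctly disposes of all the mixed (test-containing) words of $\Sigma^+$ at once.
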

\begin{proof}
	All claims of 1) follow from the semantic equivalences in equation \eqref{prop1}.
	For 2), the claims follow immediately from definitions, recalling that $\wt{\Sigma}$ is obtained from $\Sigma$ by omitting sentential tests ($\Sigma$ is generated by the grammar $a\; (a\in\atp) \mid \varphi \mid \varphi\Ra \varphi\;(\varphi\in\mc{L}_s)$).
\end{proof}

\begin{cor}	\label{cor: =>2}
	In any $\mc{L}$-model $\mc{M}$, the following hold for a program term $\varphi\Ra\psi$
	\begin{align}
	\lonto{\varphi\Ra\psi}\;
	&=\; \{  (w,w')  \mid  \exists A\in\Sigma^+ (w\onto{A}w' \text{ and } ( w\models^\mc{M}\neg\varphi, \,\text{ or }\, w'\models^\mc{M}\psi ) )  \}	\label{=>2 eq1} \\
	&=\;\; \lonto{\varphi\ra\psi}   \cup   \;\{ (w,w')  \mid  \exists A \in \widetilde{\Sigma}^+  (w\onto{A}w' \text{ and } ( w\models^\mc{M}\neg\varphi, \,\text{ or }\, w'\models^\mc{M}\psi ) ) \}	\label{=>2 eq2}
	\end{align}
	The relation $\lonto{\varphi\Ra\psi}$ can also be defined in terms of $\Omega$ (recall that $\Omega$ was defined as $\true\Ra\true$):
	\begin{equation}
	\lonto{\varphi\Ra\psi}  \;=\;  \lonto{\neg\varphi\Omega}  \cup  \lonto{\Omega\psi}		\label{=>2 eq3}
	\end{equation}
\end{cor}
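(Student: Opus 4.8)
The plan is to derive all three displayed equalities directly from the characterizations already established for $\lonto{\varphi\Ra\psi}$, rather than unwinding the original definition in Table \ref{tab: int} from scratch. The starting point is Lemma \tref{=>1}(2), which expresses $\lonto{\varphi\Ra\psi}$ as the union of the sentential test $\lonto{\varphi\ra\psi}$ with the contribution of all non-test action terms $A\in\wt\Sigma^+$ satisfying the implication condition. The key observation I would use throughout is the simple propositional fact that, for a transition $w\onto{A}w'$, the condition ``$w\models^\mc{M}\varphi$ implies $w'\models^\mc{M}\psi$'' is logically equivalent to ``$w\models^\mc{M}\neg\varphi$ or $w'\models^\mc{M}\psi$''. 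This lets me rewrite the universally quantified implication inside the set-builder notation as the disjunctive existential condition appearing in \eqref{=>2 eq1} and \eqref{=>2 eq2}.

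First I would prove \eqref{=>2 eq2}. Starting from Lemma \tref{=>1}(2), the set $\{\onto{A}\mid A\in\wt\Sigma^+ \text{ and } \forall w,w'(w\onto{A}w' \text{ and } w\models^\mc{M}\varphi \text{ implies } w'\models^\mc{M}\psi)\}$ is a union of whole relations indexed by those $A$ that satisfy the condition globally. To match the pointwise set on the right of \eqref{=>2 eq2}, I would argue the two sets of pairs coincide by double inclusion: any pair contributed by a globally-good $A$ certainly satisfies the pointwise disjunctive condition for that same $A$; conversely, given a pair $(w,w')$ with $w\onto{A}w'$ for some $A\in\wt\Sigma^+$ and $(w\models^\mc{M}\neg\varphi$ or $w'\models^\mc{M}\psi)$, the single transition witnesses membership via the existential. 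The subtle point here is that the formula in Lemma \tref{=>1}(2) quantifies $A$ so that the \emph{whole} relation $\onto{A}$ is included only when every pair in it satisfies the implication, whereas \eqref{=>2 eq2} admits individual pairs; I would note that these produce the same union of pairs precisely because if a single transition $w\onto{A}w'$ satisfies the disjunction, that pair already belongs to $\lonto{\varphi\Ra\psi}$ by equation \eqref{prop1} / Lemma \tref{=>1}(1) (indeed $(\neg\varphi)A\vmodels\varphi\Ra\psi$ and $A\psi\vmodels\varphi\Ra\psi$ show exactly that such pairs are captured). Equation \eqref{=>2 eq1} then follows from \eqref{=>2 eq2} by folding the test back in: the test pairs $\lonto{\varphi\ra\psi}$ are themselves of the form $w\onto{A}w'$ with a test $A\in\Sigma^+$, so extending the index set from $\wt\Sigma^+$ to all of $\Sigma^+$ absorbs the $\lonto{\varphi\ra\psi}$ summand into the single existential over $\Sigma^+$.

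For \eqref{=>2 eq3}, I would translate the disjunctive condition of \eqref{=>2 eq1} into composites with $\Omega$. Using Lemma \tref{lem: Omega}, $\onto{\Omega}$ is the universal definable relation, so $\lonto{\neg\varphi\,\Omega}$ consists of all pairs $(w,w')$ such that $w\models^\mc{M}\neg\varphi$ and $w'$ is $\Omega$-reachable from $w$, i.e. all pairs reachable by some definable $A$ from a state where $\neg\varphi$ holds; symmetrically $\lonto{\Omega\,\psi}$ consists of pairs reachable by some definable $A$ that land in a state satisfying $\psi$. Since $\onto{\Omega}$ subsumes every $\onto{A}$ for $A\in\Sigma^+$, the existential ``$\exists A\in\Sigma^+$ with $w\onto{A}w'$ and ($w\models^\mc{M}\neg\varphi$ or $w'\models^\mc{M}\psi$)'' splits cleanly into the two cases captured respectively by $\lonto{\neg\varphi\,\Omega}$ and $\lonto{\Omega\,\psi}$, giving the union in \eqref{=>2 eq3}.

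I expect the main obstacle to be the reconciliation in the first step between the ``whole relation'' quantification of Lemma \tref{=>1}(2) and the ``pointwise pair'' description of \eqref{=>2 eq1}--\eqref{=>2 eq2}: one must be careful that admitting isolated good pairs does not enlarge the relation beyond what the globally-good $A$'s already contribute. The resolution is that a good pair for any $A$ is itself witnessed by the atomic steps or by the appropriate composite $(\neg\varphi)A$ or $A\psi$, which Lemma \tref{=>1}(1) guarantees is a subrelation of $\varphi\Ra\psi$; so no spurious pairs are introduced and the two descriptions genuinely coincide. Once this equivalence between global and pointwise formulations is pinned down, the remaining manipulations in \eqref{=>2 eq1} and \eqref{=>2 eq3} are routine rewritings using the universality of $\onto{\Omega}$.
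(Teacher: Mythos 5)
Your proposal is correct and follows essentially the same route as the paper's own proof: both handle \eqref{=>2 eq1}--\eqref{=>2 eq2} by an easy left-to-right inclusion plus a converse inclusion witnessed through Lemma \ref{=>1} (via $(\neg\varphi)A\vmodels\varphi\Ra\psi$ and $A\psi\vmodels\varphi\Ra\psi$, with part 2 of that lemma handling the $\widetilde{\Sigma}^+$ version), and both obtain \eqref{=>2 eq3} from \eqref{=>2 eq1} using the universality of $\onto{\Omega}$. The differences---you prove \eqref{=>2 eq2} before \eqref{=>2 eq1} and argue \eqref{=>2 eq3} verbally where the paper writes an explicit chain of set equalities factoring out $\onto{\neg\varphi}\onto{\Omega}$ and $\onto{\Omega}\onto{\psi}$---are purely presentational.
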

\begin{proof}
For equations \eqref{=>2 eq1} and \eqref{=>2 eq2}, the inclusion left to right is immediate. For the converse inclusion, let $(w,w')$ be a pair of states and $A\in\Sigma^+$ be such that $w\onto{A}w'$ and either $w\models^\mc{M}\neg\varphi$, or $w'\models^\mc{M}\psi$. By Lemma \ref{=>1}, in the first case, $(\neg\varphi)A\vmodels\varphi\Ra\psi$ and in the second $A\psi\vmodels\varphi\Ra\psi$, and so $w\lonto{\varphi\Ra\psi}w'$. Similarly for the second identity, now using also the second part of Lemma \ref{=>1}.
For the last case, by using equation \eqref{=>2 eq1}, we have
\begin{align*}
\lonto{\varphi\Ra\psi}		\,&=\;		\{(w,w') \mid \exists A\in\Sigma^+ (w\onto{A}w' \text{ and } w\models^\mc{M}\neg\varphi)  \}		\,\cup\,
										\{(w,w') \mid \exists A\in\Sigma^+ (w\onto{A}w' \text{ and } w'\models^\mc{M}\psi )  \}\\
							\,&=\;		\{(w,w') \mid \exists A\in\Sigma^+ (w\onto{\neg\varphi}w\onto{A}w')  \}								\,\cup\,
										\{(w,w') \mid \exists A\in\Sigma^+ (w\onto{A}w'\onto{\psi}w')  \}\\
							\,&=\;\,	\onto{\neg\varphi}\! \{(w,w') \mid \exists A\!\in\!\Sigma^+ (w\!\onto{A}\!w')  \}					\,\cup\,
										\{(w,w') \mid \exists A\!\in\Sigma^+ (w\!\onto{A}\!w')  \}	\!\onto{\psi}\\
							\;&=\;\,\onto{\neg\varphi} \onto{\Omega}	\,\cup\,	\onto{\Omega} \onto{\psi}	
							\;\;=\;\;	\lonto{\neg\varphi\Omega}	\,\cup\,	\lonto{\Omega\psi} 								\qedhere
\end{align*}
\end{proof}

\begin{lem}		\label{lem: => equiv}
The following semantical equivalences hold
\begin{align}
\forall(\varphi\!\Ra\!\psi).\vartheta \;&\equiv\; (\varphi \land \forall\Omega.\forall\psi.\vartheta) \lor \forall\Omega.\vartheta	   \label{sequiv1}\\
\neg\forall(\varphi\!\Ra\!\psi).\vartheta	\;&\equiv\;	(\neg\varphi \land \neg\forall\Omega.\vartheta)  \lor \neg\forall\Omega.\forall\psi.\vartheta) 																																		\label{sequiv2}
\end{align}
\end{lem}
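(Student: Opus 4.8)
The plan is to reduce everything to the decomposition of $\lonto{\varphi\Ra\psi}$ already established in Corollary~\ref{cor: =>2}, and then compute the box modality over the resulting union. Starting from equation~\eqref{=>2 eq3}, namely $\lonto{\varphi\Ra\psi} = \lonto{\neg\varphi\Omega} \cup \lonto{\Omega\psi}$, I would first use that a box over a union of relations splits into a conjunction of boxes, so that $\forall(\varphi\Ra\psi).\vartheta \equiv \forall(\neg\varphi\Omega).\vartheta \land \forall(\Omega\psi).\vartheta$. Each composite box then unfolds by the standard ``box over a composition is a nested box'' identity (valid since $\lonto{AB}=\onto{A}\onto{B}$), giving $\forall(\neg\varphi\Omega).\vartheta \equiv \forall\neg\varphi.\forall\Omega.\vartheta$ and $\forall(\Omega\psi).\vartheta \equiv \forall\Omega.\forall\psi.\vartheta$.

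Next I would evaluate the test modality. Since $\onto{\neg\varphi}$ is the partial identity on $\intrp{\neg\varphi}$, a box over the test $\neg\varphi$ is just a material implication, $\forall\neg\varphi.\chi \equiv \varphi \lor \chi$ (vacuously true where $\varphi$ holds, reducing to $\chi$ elsewhere). Applying this with $\chi = \forall\Omega.\vartheta$ yields $\forall(\neg\varphi\Omega).\vartheta \equiv \varphi \lor \forall\Omega.\vartheta$, so that altogether
\begin{equation*}
\forall(\varphi\Ra\psi).\vartheta \;\equiv\; (\varphi \lor \forall\Omega.\vartheta) \land \forall\Omega.\forall\psi.\vartheta.
\end{equation*}

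The remaining work is purely Boolean, and here lies the one point that needs care. Write $Y = \forall\Omega.\vartheta$ and $Z = \forall\Omega.\forall\psi.\vartheta$. The expression above is $(\varphi \lor Y) \land Z$, whereas the target \eqref{sequiv1} is $(\varphi \land Z) \lor Y$; these agree only because $Y \models Z$. I would justify this entailment by noting that $\vartheta \models \forall\psi.\vartheta$ (the test box $\forall\psi.\vartheta \equiv \psi\ra\vartheta$ is weaker than $\vartheta$), whence applying $\forall\Omega$ monotonically gives $\forall\Omega.\vartheta \models \forall\Omega.\forall\psi.\vartheta$. Given $Y \models Z$ we have $Y \land Z \equiv Y$, so $(\varphi \lor Y) \land Z \equiv (\varphi \land Z) \lor (Y \land Z) \equiv (\varphi \land Z) \lor Y$, which is exactly \eqref{sequiv1}.

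Finally, \eqref{sequiv2} follows by negating \eqref{sequiv1}: De Morgan turns $(\varphi \land Z) \lor Y$ into $(\neg\varphi \lor \neg Z) \land \neg Y$, i.e.\ $(\neg\varphi \land \neg Y) \lor (\neg Z \land \neg Y)$, and the dual entailment $\neg Z \models \neg Y$ (from $Y \models Z$) collapses $\neg Z \land \neg Y$ to $\neg Z$, yielding $(\neg\varphi \land \neg\forall\Omega.\vartheta) \lor \neg\forall\Omega.\forall\psi.\vartheta$ as required. The only genuine obstacle is spotting the entailment $\forall\Omega.\vartheta \models \forall\Omega.\forall\psi.\vartheta$, which is what lets the conjunctive normal form collapse onto the stated disjunctive one; everything else is routine modal bookkeeping built on Corollary~\ref{cor: =>2}.
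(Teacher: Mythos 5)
Your proof is correct and follows essentially the same route as the paper: decompose $\lonto{\varphi\Ra\psi}$ via equation~\eqref{=>2 eq3}, unfold the boxes over union, composition and test to reach $(\varphi \lor \forall\Omega.\vartheta) \land \forall\Omega.\forall\psi.\vartheta$, distribute, and collapse the second disjunct using the validity of $\forall\Omega.\vartheta \ra \forall\Omega.\forall\psi.\vartheta$, with \eqref{sequiv2} obtained by negating \eqref{sequiv1}. The only difference is cosmetic: you explicitly justify the key entailment $\forall\Omega.\vartheta \models \forall\Omega.\forall\psi.\vartheta$ (via $\vartheta \models \forall\psi.\vartheta$ and monotonicity of the box), where the paper simply asserts its validity.
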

\begin{proof}
For the first equivalence, by semantics and by using equation \eqref{=>2 eq3} of Corollary \ref{cor: =>2}, we have
\begin{align*}
\forall(\varphi\!\Ra\!\psi).\vartheta	
					&\equiv \forall(\neg\varphi\Omega\!+\!\Omega\psi).\vartheta
					\equiv \forall(\neg\varphi\Omega).\vartheta  \land \forall(\Omega\psi).\vartheta
					\equiv \forall(\neg\varphi).\forall\Omega.\vartheta  \land \forall\Omega.\forall\psi.\vartheta
					\equiv (\varphi \!\lor\! \forall\Omega.\vartheta)  \land \forall\Omega.\forall\psi.\vartheta\\
					&\equiv (\varphi \land \forall\Omega.\forall\psi.\vartheta)\lor(\forall\Omega.\vartheta \land \forall\Omega. \forall\psi. \vartheta)
\end{align*}
Observing that $\forall\Omega.\vartheta \ra \forall\Omega.\forall\psi.\vartheta$ is logically valid, the second disjunct can be equivalently replaced merely by $\forall\Omega.\vartheta$ and this completes the proof of \eqref{sequiv1}.
For the second equivalence, we can either use Corollary \ref{cor: =>2} in a similar way or apply negation to both sides of equation \eqref{sequiv1}. 
\end{proof}

\begin{lem} \rm	\label{lem: cap}
	In any $\mc{L}$-model $\mc{M}$, the following semantical equivalences hold
	\begin{equation}									\label{cap equiv} 
	\cpi(AB)\equiv\cpi A\land\forall A.\cpi B	\hspace{5mm}	\cpi(A\!+\!B)\equiv \cpi A\land\cpi B	\hspace{5mm}\cpi A^*\equiv\forall A^*.\cpi A
	\end{equation}
	and if the interpretation of a process $A\in\wt{\Sigma}$ is the empty set (i.e. $\onto{A}\,=\,\emptyset$), then the capability statement $\cpi A$ is satisfied by all the states of $\mc{M}$. By contraposition, if there is a state which does not satisfy the statement $\cpi A$, then $\onto{A} \, \neq \emptyset$.
\end{lem}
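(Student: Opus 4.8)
The plan is to fix an arbitrary $\mc{L}$-model $\mc{M}$ and verify each equivalence in \eqref{cap equiv} by computing the two interpretations directly from the clauses of Table \ref{tab: int}, and then to settle the empty-relation claim from the atomic clauses. I expect the first two equivalences to be immediate. For $\cpi(AB)$, the clause of Table \ref{tab: int} says that $w\in\intrp{\cpi(AB)}$ exactly when $w\in\intrp{\cpi A}$ and every $A$-successor $w'$ of $w$ satisfies $\cpi B$; but the latter conjunct is verbatim the box clause for $\forall A.\cpi B$, so $\intrp{\cpi(AB)}=\intrp{\cpi A}\cap\intrp{\forall A.\cpi B}=\intrp{\cpi A\land\forall A.\cpi B}$. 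Likewise the clause $\intrp{\cpi(A\!+\!B)}=\intrp{\cpi A}\cap\intrp{\cpi B}$ is literally $\intrp{\cpi A\land\cpi B}$. No fixpoint reasoning is needed for these.

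The only substantive equivalence is $\cpi A^*\equiv\forall A^*.\cpi A$, and here I would argue by double inclusion, writing $Y:=\intrp{\cpi A^*}$ and $X:=\intrp{\forall A^*.\cpi A}$. For $Y\subseteq X$, take $w\in Y$, so there is a formula $\varphi$ with $w\in\intrp{\varphi}$ and $\intrp{\varphi}\subseteq\intrp{\cpi A}\cap\intrp{\forall A.\varphi}$; an induction on the length $n$ of an $A$-path shows that every $(\onto{A})^n$-successor of $w$ stays inside $\intrp{\varphi}$ (using $\intrp{\varphi}\subseteq\intrp{\forall A.\varphi}$ at each step), hence lies in $\intrp{\cpi A}$, so $w\in X$ since $\onto{A^*}=\bigcup_{n\geq 0}(\onto{A})^n$. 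The converse inclusion $X\subseteq Y$ is the heart of the matter: I would exhibit $X$ itself as an admissible witness in the union defining $Y$. Since $X=\intrp{\forall A^*.\cpi A}$ is definable by the formula $\varphi:=\forall A^*.\cpi A$, it suffices to check $X\subseteq\intrp{\cpi A}\cap\intrp{\forall A.\varphi}$: containment in $\intrp{\cpi A}$ follows by taking the zero-step successor, and containment in $\intrp{\forall A.\varphi}$ follows because any $A$-successor $w'$ of a state $w\in X$ inherits the property that all of its $A^*$-successors satisfy $\cpi A$ (prepending the edge $w\onto{A}w'$ turns an $A^*$-path out of $w'$ into one out of $w$). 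Then $\varphi$ qualifies in the union, whence $X\subseteq Y$.

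For the empty-relation claim I would recall that $\wt{\Sigma}$ consists exactly of the atomic programs $a$ and the precondition-effect terms $\varphi\Ra\psi$, tests being excluded. For both kinds, the interpretation clause has the form $\intrp{\cpi A}=\{w\mid\onto{A}\subseteq\imath^\mc{M}(w)\}$, so $\onto{A}=\emptyset$ makes the inclusion vacuously true at every state, giving $\intrp{\cpi A}=W$; the stated contrapositive is then automatic.

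I expect the Kleene-star direction $X\subseteq Y$ to be the main obstacle, precisely because it requires recognizing that $\intrp{\forall A^*.\cpi A}$ is a genuine (definable and $A$-closed) member of the union rather than merely being bounded by it; once that membership is established, the greatest-fixpoint character of the $\cpi A^*$ clause does the rest.
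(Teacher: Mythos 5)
Your proof is correct, but it takes a more self-contained route than the paper: the paper does not prove the three equivalences at all, deferring them to the cited works of Hartonas (\cite{Hartonas1,Hartonas2}), and only argues the empty-relation claim directly. Your treatment of that last claim coincides with the paper's (emptiness makes the inclusion $\onto{A}\,\subseteq\imath^\mc{M}(w)$ vacuous at every state, for both atomic and precondition-effect processes), but you additionally verify the equivalences from the clauses of Table \ref{tab: int}. The first two are indeed immediate rewritings of the clauses for $\cpi(AB)$ and $\cpi(A\!+\!B)$. The substantive contribution is your double-inclusion argument for $\cpi A^*\equiv\forall A^*.\cpi A$: the inclusion $\intrp{\cpi A^*}\subseteq\intrp{\forall A^*.\cpi A}$ by induction on path length using $\intrp{\varphi}\subseteq\intrp{\forall A.\varphi}$, and the converse by observing that $\intrp{\forall A^*.\cpi A}$ is itself definable, contained in $\intrp{\cpi A}$ (via the zero-step path), and $A$-closed (by prepending an edge), so it qualifies as a witness in the union defining $\intrp{\cpi A^*}$ --- this is exactly the greatest-(definable-)fixpoint character of that clause, and your argument is sound. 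What the paper's approach buys is brevity; what yours buys is a verification that these equivalences genuinely hold in the specific relational semantics adopted here, rather than taking that on trust from works whose primary semantics is the type interpretation.
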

\begin{proof}
	For the equivalences, we refer the reader to \cite{Hartonas1,Hartonas2}. 
	The second property follows from the semantics of the capabilities statements of the form $\cpi a$ and $\cpi(\varphi\Ra\psi)$. In any $\mc{L}$-model $\mc{M}$, if the interpretation of some process $A\in\wt{\Sigma}$ is the empty set, then for any state $w$ and any agent $\imath\in I$, we have that $\onto{A} \,\subseteq \imath^\mc{M}(w)$ which signifies that $\imath$ has the ability to execute $A$.
\end{proof}

In the following of this paper, it is convenient to make use of Smullyan's distinction to $\alpha$ and $\beta$ formulas, appropriately extending this notion to the syntax of \tpdl, as presented in Table \ref{tab: alpha/beta}. As usual, the conjunctive cases are classified as $\alpha$-formulas and the disjunctive as $\beta$-formulas.

\begin{table}[ht]\small
\caption{The $\alpha$ and $\beta$-formulas}
\label{tab: alpha/beta}
\vspace{2 mm} 
\centering
\begin{tabular}{	@{\hspace{0.1 cm}}c@{\hspace{0.2 cm}}		@{\hspace{0.15 cm}}c@{\hspace{0.15 cm}}		@{\hspace{0.15 cm}}c@{\hspace{0.15 cm}}		@{\hspace{0.15 cm}}c@{\hspace{0.15 cm}}		@{\hspace{0.15 cm}}c@{\hspace{0.15 cm}}		@{\hspace{0.15 cm}}c@{\hspace{0.15 cm}}		@{\hspace{0.15 cm}}c@{\hspace{0.15 cm}}		@{\hspace{0.15 cm}}c@{\hspace{0.15 cm}}		@{\hspace{0.15 cm}}c@{\hspace{0.15 cm}}		@{\hspace{0.15 cm}}c@{\hspace{0.15 cm}}		@{\hspace{0.15 cm}}c@{\hspace{0.1 cm}}}
\toprule 
$\alpha$	
&	$\neg\neg\varphi$					&	$\neg\forall\psi.\varphi$		&	$\forall AB.\varphi$			&	$\neg \forall AB.\varphi$			
&	$\forall(A\!+\!B).\varphi$			&	$\forall A^*.\varphi$			
&	$\cpi(AB)$							&	$\cpi(A\!+\!B)\!\!$				&	$\cpi (A^*)\!\!$				&	$\neg\cpi (A^*)\!\!$\\
\midrule 
$\alpha_1$
&	$\varphi$							&   $\neg\varphi$					&	$\forall A.\forall B.\varphi$	&	$\neg\forall A.\forall B.\varphi$
& 	$\forall A.\varphi$					&	$\varphi$						
&	$\cpi A$							&	$\cpi A$						&	$\forall A^*.\cpi A$			&	$\neg\forall A^*.\cpi A$\\
$\alpha_2$
&										&	$\psi$								&									&
&	$\forall B.\varphi$					&	$\forall A.\forall A^*\!.\varphi$
&	$\forall A.\cpi B$					&	$\cpi B$							&									&\\
\bottomrule
\end{tabular}

\vspace{3.5 mm}
\begin{tabular}{ @{\hspace{0.1 cm}}c@{\hspace{0.3 cm}}		@{\hspace{0.2 cm}}c@{\hspace{0.2 cm}}		@{\hspace{0.2 cm}}c@{\hspace{0.2 cm}}		@{\hspace{0.2 cm}}c@{\hspace{0.2 cm}}		@{\hspace{0.2 cm}}c@{\hspace{0.2 cm}}		@{\hspace{0.2 cm}}c@{\hspace{0.2 cm}}		@{\hspace{0.2 cm}}c@{\hspace{0.2 cm}}		@{\hspace{0.2 cm}}c@{\hspace{0.1 cm}} }
\toprule 
$\beta$
&	$\forall\psi.\varphi$	  							&	$\forall(\varphi\!\Ra\!\psi).\vartheta$
&	$\neg\forall(\varphi\!\Ra\!\psi).\vartheta$			&	$\neg\forall(A\!+\!B).\varphi$
&	$\neg\forall A^*.\varphi$							&	$\neg\,\cpi(AB)$
&	$\neg\,\cpi(A\!+\!B)\!$\\
\midrule 
$\beta_1$
&	$\neg\psi$													&	$\varphi \!\land\! \forall\Omega^*.\forall\psi.\vartheta$
&	$\neg\forall(\neg\varphi).\forall\Omega.\vartheta$	&	$\neg\forall A.\varphi$
&	$\neg\varphi$												&	$\neg\,\cpi A$
&	$\neg\,\cpi A$\\
$\beta_2$
&	$\varphi$											&	$\forall\Omega^*.\vartheta$
&	$\neg\forall\Omega.\forall\psi.\vartheta$			&	$\neg\forall B.\varphi$
&	$\neg\forall A.\forall A^*.\varphi$					&	$\neg\forall A.\cpi B$
&	$\neg\,\cpi B$\\
\bottomrule
\end{tabular}
\end{table}

\begin{rem}[Notational Conventions]			\label{rem: alpha-beta formulas}
	In the sequel we will be referring to rules that expand an $\alpha$-formula (resp. a $\beta$-formula) as $\alpha$-rules (resp. $\beta$-rules). We write $\alpha/\beta$ when we wish to state something about $\alpha$ or $\beta$ formulas and we often write $\alpha$ (resp. $\beta$) as a formula of the language, i.e. one of the $\alpha$-formulas (resp. $\beta$-formulas).
	Now, observe that there are cases in which $\alpha_2$ formula does not exist (see Table \ref{tab: alpha/beta}). For simplicity of notation, we treat uniformly all the $\alpha$ formulas as if $\alpha_2$ existed for all of them. So, in what follows, with some abuse of notation, we list $\alpha_2$ as an element of a set, while there are cases that we shouldn't, and we write $\alpha_1\!\land\!\alpha_2$, while there are cases that we should just write $\alpha_1$.
\end{rem}

\begin{prop}		\label{prop: equiv alpha beta} 
	The formulas $\alpha \leftrightarrow \alpha_1\land\alpha_2$ and $\beta \leftrightarrow \beta_1\lor\beta_2$ are logically valid.
\end{prop}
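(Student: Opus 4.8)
The plan is to observe that each asserted validity is a finite conjunction of biconditionals, one per column of Table \ref{tab: alpha/beta}, and that a biconditional is logically valid precisely when its two sides hold at the same states of every $\mc{L}$-model. Hence it suffices to verify, column by column, that the head formula and the displayed $\alpha_1\land\alpha_2$ (resp.\ $\beta_1\lor\beta_2$) receive the same interpretation, reading off the clauses of Table \ref{tab: int}; for columns lacking an $\alpha_2$ the conjunction is read as $\alpha_1$ per Remark \ref{rem: alpha-beta formulas}. I would partition the columns into three groups: the propositional and standard \pdl columns, the capability columns, and the precondition-effect columns.

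First I would dispatch the routine group. The column $\neg\neg\varphi$ is double negation. For the test columns $\forall\psi.\varphi$ and $\neg\forall\psi.\varphi$, Table \ref{tab: int} gives $\onto{\psi}$ as the identity restricted to $\intrp{\psi}$, so $w\models\forall\psi.\varphi$ iff $w\models\psi\ra\varphi$; thus $\forall\psi.\varphi\equiv\neg\psi\lor\varphi$ and $\neg\forall\psi.\varphi\equiv\neg\varphi\land\psi$. The composition, choice and star columns $\forall AB.\varphi$, $\forall(A+B).\varphi$, $\forall A^*.\varphi$ (and their negations) are the familiar \pdl identities induced by $\onto{AB}=\onto{A}\onto{B}$, $\onto{A+B}=\onto{A}\cup\onto{B}$ and the unfolding $\onto{A^*}=\bigcup_{n\ge 0}(\onto{A})^n$; each negated column follows from its positive counterpart by De Morgan. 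These are all immediate.

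Next I would handle the two nonstandard groups by invoking the lemmas already established. The positive capability columns $\cpi(AB)$, $\cpi(A+B)$ and $\cpi(A^*)$ are exactly the equivalences of Lemma \ref{lem: cap}, and the negated columns $\neg\cpi(AB)$, $\neg\cpi(A+B)$, $\neg\cpi(A^*)$ follow at once by pushing the negation inward. The precondition-effect columns $\forall(\varphi\Ra\psi).\vartheta$ and $\neg\forall(\varphi\Ra\psi).\vartheta$ are, up to notation, equations \eqref{sequiv1} and \eqref{sequiv2} of Lemma \ref{lem: => equiv}.

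The one step that is not purely mechanical, and the one I would flag as the main obstacle, is reconciling the presentation in Table \ref{tab: alpha/beta} with the statements of Lemma \ref{lem: => equiv}. The table writes $\forall\Omega^*$ where the lemma writes $\forall\Omega$; these coincide because $\onto{\Omega^*}=\onto{\Omega}$ by Lemma \ref{lem: Omega}, so that $\forall\Omega^*.\chi\equiv\forall\Omega.\chi$ for every $\chi$. In addition, the $\beta_1$ entry $\neg\forall(\neg\varphi).\forall\Omega.\vartheta$ of the $\neg\forall(\varphi\Ra\psi).\vartheta$ column must first be expanded through the test semantics to $\neg\varphi\land\neg\forall\Omega.\vartheta$ in order to match the first disjunct of \eqref{sequiv2}. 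Once these two bookkeeping identifications are made, every column of Table \ref{tab: alpha/beta} reduces either to a previously proved equivalence or to propositional logic, and the proposition follows.
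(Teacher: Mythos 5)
Your proposal is correct and follows essentially the same route as the paper's proof: the standard \pdl columns are dispatched as well-known identities, the capability columns by Lemma \ref{lem: cap}, the precondition-effect columns by Lemma \ref{lem: => equiv}, with Lemma \ref{lem: Omega} reconciling $\Omega^*$ versus $\Omega$. Your additional bookkeeping (expanding $\neg\forall(\neg\varphi).\forall\Omega.\vartheta$ through the test semantics, and deriving the negated capability columns by pushing negation through the equivalences of Lemma \ref{lem: cap}) only makes explicit steps the paper leaves implicit.
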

\begin{proof}
	The equivalences concerning the standard \pdl operators are well-known \cite{FischerLadner,Harel}. 
	The formulas that correspond to the capabilities statements are valid by Lemma \ref{lem: cap}.
	The equivalences for the $\beta$ formulas $\forall(\varphi\!\Ra\!\psi).\vartheta$ and $\neg\forall(\varphi\!\Ra\!\psi).\vartheta$ follow by Lemma \ref{lem: => equiv}. 
	Also, have in mind that in the case of $\forall(\varphi\!\Ra\!\psi).\vartheta$ we use the process term $\Omega^*$ instead of $\Omega$. By Lemma \ref{lem: Omega}, we know that this makes no difference.
\end{proof}

As in \pdl, formulas of interest are the eventualities due to the iteration operator. 
Special attention is required for this kind of formulas, as global properties of a tableau (as opposed to local properties of a node) need to be examined in order to ensure that some appropriate formula is eventually satisfied.

\begin{defi}
	An \textit{eventuality} is a formula of the form $\neg\forall A_1.\dotsb\forall A_k.\forall A^*\!.\varphi$, for some $k\geq0$. The set of all eventualities will be designated by \ev.
\end{defi}

\begin{defi}\label{def: vtr}
	The binary relation $\vtr$ relates formulas $\varphi$ and $\psi$, where $\varphi$ is an $\alpha/\beta$ formula of the form $\neg\forall A.\chi$, under the following conditions:
	\begin{enumerate*}[i)]
		\item If $\varphi$ is an $\alpha$ formula, then $\psi=\alpha_1$ and
		\item if $\varphi$ is a $\beta$ formula, then $\psi\in\{\beta_1, \beta_2\}$.
	\end{enumerate*}
\end{defi}

A tableau-based satisfiability algorithm usually uses the $\alpha$ and $\beta$ rules for the expansion of a tableau. A common issue of these rules related to eventualities is that, after a series of applications of them, an eventuality may reoccur without ever being fulfilled (see also in \cite{Abate, Gore-PDL, Gore-CPDL} the problem with ``at a world'' cycles). Consider the following example.
\begin{exa}
	Eventualities for which all the necessary $\alpha$ and $\beta$ rules were applied, but without ever being fulfilled:
	\begin{itemize}
	\item 
	$\neg\forall a^{**}.\psi		\;\vtr\;		\neg\forall a^{*}.\forall a^{**}.\psi		\;\vtr\;		\neg\forall a^{**}.\psi $
	
	\item 
	$\neg\forall(\vartheta\xi)^*.\psi  \;\vtr\;  \neg\forall\vartheta\xi.\forall(\vartheta\xi)^*.\psi   \;\vtr\;   \neg\forall\vartheta.\forall\xi.\forall(\vartheta\xi)^*.\psi   \;\vtr\;   \neg\forall\xi.\forall(\vartheta\xi)^*.\psi  \;\vtr\; \neg\forall(\vartheta\xi)^*.\psi$
	
	\item 
	$\neg\forall a^*b^*.\forall(a^*b^*)^*.\psi		\vtr		\neg\forall a^*.\forall b^*.\forall(a^*b^*)^*.\psi		\vtr		\neg\forall b^*.\forall(a^*b^*)^*.\psi    \vtr\!		\neg\forall(a^*b^*)^*.\psi		\vtr\!\allowbreak		\neg\forall a^*b^*.\forall(a^*b^*)^*\!.\psi$
	
	\item 
	$\neg\forall(a+\vartheta).\forall(a+\vartheta)^*.\psi    \;\vtr\;    \neg\forall\vartheta.\forall(a+\vartheta)^*.\psi     \;\vtr\;     \neg\forall(a+\vartheta)^*.\psi     \;\vtr\;      \neg\forall(a+\vartheta).\forall(a+\vartheta)^*.\psi$
	\end{itemize}
\end{exa}

In order to detect such cases of unfulfilled eventualities, we monitor the application of the $\alpha$ and $\beta$ rules for some eventuality in the way that the following two definitions determine. 
The same definitions have also been used for the case of hybrid \pdl (i.e. \pdl with nominals) in \cite{hpdl_ictac}.

\begin{defi}	 		\label{def: decomposition} 
The \textit{decomposition set} $\mc{D}(\varphi)$ of an $\alpha$/$\beta$ eventuality $\varphi$ is a set of triples $(\mc{P}, \mc{T}, \vartheta)$ where $\mc{P}, \mc{T}\subseteq \mc{L}_s$ are sets of formulas and $\vartheta$ is a formula.
It contains the triple $\big(\emptyset, \emptyset, \varphi\big)$ and it is closed under the following \textit{decomposition rules}, where a rule $\dfrac{(\mc{P},\mc{T},\vartheta)}{(\mc{P}',\mc{T}',\vartheta')}$ is applied iff $\vartheta$ is an eventuality and $\vartheta\notin\mc{P}$:
	\begin{gather*}
	\dfrac
	{\big(\mc{P}, \mc{T}, \neg\forall B^*.\chi\big)}
	{\big(\mc{P} \cup \{\neg\forall B^*.\chi\}, \mc{T}, \neg\chi\big)}
	\quad\quad
	\dfrac
	{\big(\mc{P}, \mc{T}, \neg\forall B^*.\chi\big)}
	{\big(\mc{P} \cup \{\neg\forall B^*.\chi\}, \mc{T}, \neg\forall B.\forall B^*.\chi\big)}
	\quad\quad
	\dfrac
	{\big(\mc{P}, \mc{T}, \neg\forall B_1B_2.\chi\big)}
	{\big(\mc{P} \!\cup\! \{\neg\forall B_1B_2.\chi\}, \mc{T}, \neg\forall B_1.\forall B_2.\chi\big)}
	\end{gather*}\vspace{-0.2 cm} 
	\begin{gather*}
	\dfrac
	{\big(\mc{P}, \mc{T}, \neg\forall(B_1\!+\!B_2).\chi\big)}
	{\big(\mc{P} \!\cup\! \{\neg\forall(B_1\!+\!B_2).\chi\}, \mc{T}, \neg\forall B_1.\chi\big)}
	\quad 
	\dfrac
	{\big(\mc{P}, \mc{T}, \neg\forall(B_1\!+\!B_2).\chi\big)}
	{\big(\mc{P} \!\cup\! \{\neg\forall(B_1\!+\!B_2).\chi\}, \mc{T}, \neg\forall B_2.\chi\big)}
	\quad 
	\dfrac
	{\big(\mc{P}, \mc{T}, \neg\forall\vartheta.\chi\big)}
	{\big(\mc{P} \!\cup\! \{\neg\forall\vartheta.\chi\}, \mc{T}\cup\{\vartheta\}, \neg\chi\big)}
	\end{gather*}\vspace{-0.2 cm} 
	\begin{gather*}
	\dfrac
	{\big(\mc{P}, \mc{T}, \neg\forall(\varphi\Ra\psi).\chi\big)}
	{\big(\mc{P} \cup \{\neg\forall(\varphi\Ra\psi).\chi\}, \mc{T}, \neg\forall(\neg\varphi).\forall\Omega.\chi\big)}
	\quad\quad
	\dfrac
	{\big(\mc{P}, \mc{T}, \neg\forall(\varphi\Ra\psi).\chi\big)}
	{\big(\mc{P} \cup \{\neg\forall(\varphi\Ra\psi).\chi\}, \mc{T}, \neg\forall\Omega.\forall\psi.\chi\big)}
	\end{gather*}
The \textit{finalized decomposition set} $\fd(\varphi)$ is the set of all the pairs $(\mc{T},\vartheta)$ such that there is a triple $(\mc{P},\mc{T},\vartheta)$ in $\mc{D}(\varphi)$ such that no decomposition rule can be applied to it and at the same time $\vartheta$ is not in $\mc{P}$.
\end{defi}

Intuitively speaking, the decomposition set of an $\alpha$/$\beta$ eventuality represents series of applications of the $\alpha$/$\beta$ rules. In a triple $(\mc{P},\mc{T},\vartheta)$, $\vartheta$ is used as the principal formula of the rule, the set $\mc{P}$ gathers all the principal formulas from the application of the decomposition rules and $\mc{T}$ gathers all the sentential tests (i.e. formulas that act as programs) that occur. Observe that, in a specific sequence of applications of the decomposition rules, a rule cannot be applied to the same eventuality more than once. So, if there is no re-occurrence of a principal formula, then the application of the rules stops when either a non-eventuality appears or some eventuality of the form $\neg\forall A.\chi$ with $A\in\atp_\Omega$ appears.

\begin{defi}	 		\label{def: reduction sets}
	The (family of) \textit{reduction sets} $\mc{R}_1^\varphi, \dotsc, \mc{R}_n^\varphi \subseteq\mc{L}_s$  of an $\alpha$/$\beta$ formula $\varphi$, for some $n\geq1$ depending on $\varphi$, are defined as follows:
	\begin{itemize}
		\item If $\varphi$ is an $\alpha$ non-eventuality, then its unique reduction set is the set $\mc{R}_1^\varphi = \{\alpha_1, \alpha_2\}$ and 
			if $\varphi$ is a $\beta$ non-eventuality, then the family of its reduction sets consists of the singletons $\mc{R}_1^\varphi = \{\beta_1\}$ and $\mc{R}_2^\varphi = \{\beta_2\}$.
		
		\item If $\varphi$ is an $\alpha$/$\beta$ eventuality, then for each pair $(\mc{T},\vartheta)\in \fd(\varphi)$ and for each $i=1,\dotsc,n$, where $n$ is equal to the cardinality of $\fd(\varphi)$, we define $\mc{R}_i^\varphi = \mc{T}\cup\{\vartheta\}$.
	\end{itemize}
	The size $n\geq1$ of the family of reduction sets will be referred to as the \textit{reduction degree} of the formula $\varphi$.
\end{defi}

\begin{exa}
We calculate the reduction sets for various forms of an $\alpha$/$\beta$ eventuality:
\begin{enumerate}
\item Let $\varphi$ be the formula $\neg\forall a^{***}.\psi$ and $\xi$ the formula $\forall a^{***}.\psi$ such that $\neg\psi$ is not in $\ev$:
\begin{align*}
	\mc{D}(\varphi) = \Big\{
		&\Big(\emptyset,\emptyset, \varphi\Big),\,
		\Big(\{\varphi\},\emptyset, \neg\psi\Big),\,   
		\Big(\{\varphi\},\emptyset, \neg\forall a^{**}\!.\xi\Big),\,
		\Big(\{\varphi,\neg\forall a^{**}\!.\xi\},\emptyset, \varphi\Big),\,
		\Big(\{\varphi,\neg\forall a^{**}\!.\xi\},\emptyset, \neg\forall a^*.\forall a^{**}\!.\xi\Big), \\
		&\Big(\{\varphi,\neg\forall a^{**}\!.\xi,\neg\forall a^*.\forall a^{**}.\xi\},\emptyset, \neg\forall a^{**}\!.\xi\Big), \;
		\Big(\{\varphi,\neg\forall a^{**}\!.\xi,\neg\forall a^*.\forall a^{**}\!.\xi\},\emptyset, \neg\forall a.\forall a^*.\forall a^{**}\!.\xi\Big)
	\Big\}
\end{align*}
\begin{equation*}
	\mc{R}^\varphi_1 = \{\neg\psi\}  \qquad  
	\mc{R}^\varphi_2 = \{\neg\forall a.\forall a^*.\forall a^{**}.\forall a^{***}.\psi \}
\end{equation*}

\item Let $\varphi$ be the formula $\neg\forall\xi^*.\psi$ such that $\neg\psi$ is not in $\ev$:
\begin{align*}
	\mc{D}(\varphi) = \Big\{ 
		\Big(\emptyset,\emptyset, \varphi\Big),\,
		\Big(\{\varphi\},\emptyset, \neg\psi\Big),   \,
		\Big(\{\varphi\}, \emptyset, \neg\forall\xi.\forall\xi^*.\psi\Big),    \,
		\Big(\{\varphi, \neg\forall\xi.\forall\xi^*.\psi\}, \{\xi\}, \neg\forall\xi^*.\psi\Big)
	\Big\}
	\quad \mc{R}^\varphi_1 = \{ \neg\psi \}
\end{align*}

\item Let $\varphi$ be the formula $\neg\forall(\vartheta a)^*.\psi$ and $\xi$ the formula $\forall(\vartheta a)^*.\psi$ such that $\neg\psi$ is not in $\ev$:
\begin{align*}
	\mc{D}(\varphi) \!=\! \Big\{
		&\big(\emptyset,\emptyset, \varphi\big),
		\big(\{\varphi\},\emptyset, \!\neg\psi\big),
		\big(\{\varphi\},\emptyset, \!\neg\forall\vartheta\! a.\xi\big),
		\Big(\!\{\varphi,\!\neg\forall\vartheta\! a.\xi\}, \emptyset, \!\neg\forall\vartheta\!.\forall a.\xi\Big), \\
		&\Big(\!\{\varphi,\!\neg\forall\vartheta\! a.\xi, \neg\forall\vartheta\!.\forall a.\xi\}, \{\vartheta\}, \neg\forall a.\xi\Big)
	\Big\}
\end{align*}
\begin{equation*}
	\mc{R}^\varphi_1 = \{ \neg\psi \}   \qquad   
	\mc{R}^\varphi_2 = \{ \vartheta, \neg\forall a.\forall(\vartheta a)^*.\psi \}
\end{equation*}

\item Let $\varphi$ be the formula $\neg\forall(A\!+\!\vartheta).\forall(A\!+\!\vartheta)^*.\psi$ such that $A$ is the process $\chi_1\!\Ra\!\chi_2$ and $\neg\psi$ is not in $\ev$. We let the interested reader calculate the decomposition set of $\varphi$ and we just give its reduction sets:
\begin{equation*}
	\mc{R}^\varphi_1 = \{ \neg\forall\Omega.\forall\chi_2.\forall((\chi_1\!\Ra\!\chi_2)\!+\!\vartheta)^*.\psi\}   					\qquad
	\mc{R}^\varphi_2 = \{ \neg\chi_1, \neg\forall\Omega.\forall((\chi_1\!\Ra\!\chi_2)\!+\!\vartheta)^*.\psi\}   					\qquad
	\mc{R}^\varphi_3 = \{ \vartheta,\!\neg\psi \}
\end{equation*}
\end{enumerate}
\end{exa}

\begin{defi} 	  	\label{def: vtrd}
	The binary relation $\vtrd$ relates formulas $\varphi$ and $\psi$ and we write $\varphi\vtrd\psi$ iff:
	\begin{itemize}
	\item $\varphi$ is an $\alpha$/$\beta$ formula of the form $\neg\forall A.\chi$ and $\psi$ belongs to one of the reduction sets of $\varphi$ and
	\item if $\varphi$ is a non-eventuality, then $\varphi\vtr\psi$ and 
	if $\varphi$ is an eventuality, then there is a pair $(\mc{T},\vartheta)\in\fd(\varphi)$ such that $\vartheta=\psi$ (notice that the set $\mc{T}\cup\{\vartheta\}$ is a reduction set of $\varphi$).
	\end{itemize}
\end{defi}

\begin{lem}	 		\label{lem: properties of vtrd}
	If $\varphi\vtrd\psi$ and $\varphi$ is an eventuality $\neg\forall A_1.\dotsb\forall A_n.\forall A^*.\chi$ with $n\geq0$ and $\neg\chi\notin\ev$,
	then $\psi$ is either the non-eventuality $\neg\chi$, or an eventuality of the form $\neg\forall B_1.\dotsb\forall B_r.\forall A^*.\chi$ with $r\geq1$ and $B_1\in\atp_\Omega$ (i.e. $B_1\in\atp \cup\{\Omega\}$).
\end{lem}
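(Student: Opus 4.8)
The plan is to unfold the definition of $\vtrd$ and then to establish a single structural invariant on the third components of the triples generated in $\mc{D}(\varphi)$. Since $\varphi$ is an eventuality, Definition \ref{def: vtrd} says that $\varphi\vtrd\psi$ exactly when $\psi=\vartheta$ for some pair $(\mc{T},\vartheta)\in\fd(\varphi)$; and by Definition \ref{def: decomposition} the existence of such a pair means that there is a triple $(\mc{P},\mc{T},\psi)\in\mc{D}(\varphi)$ with $\psi\notin\mc{P}$ to which no decomposition rule applies. (The requirement that $\psi$ lie in a reduction set is then automatic, since the reduction sets of an eventuality are precisely the sets $\mc{T}\cup\{\vartheta\}$.) Hence it suffices to determine the possible shapes of the third component of such a terminal triple.

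The heart of the proof is the claim that \emph{every} triple $(\mc{P}',\mc{T}',\vartheta')\in\mc{D}(\varphi)$ has its third component $\vartheta'$ equal either to the non-eventuality $\neg\chi$, or to a formula of the form $\neg\forall B_1.\dotsb\forall B_r.\forall A^*.\chi$ for some $r\geq0$, where $A$ and $\chi$ are exactly the program and the formula fixed by writing $\varphi=\neg\forall A_1.\dotsb\forall A_n.\forall A^*.\chi$. I would prove this by induction on the generation of $\mc{D}(\varphi)$. The base triple $(\emptyset,\emptyset,\varphi)$ satisfies the invariant with $r=n$. For the inductive step I would run through the decomposition rules of Definition \ref{def: decomposition} and check that, applied to a third component of the stated shape, each produces one again of that shape. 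The observation that makes every case routine is that each rule rewrites only the outermost box $\forall B_1$ (or the outermost two boxes, in the composition case), while the designated tail $\forall A^*.\chi$ is left untouched; the only way to reach that tail is the star-elimination rule applied when $r=0$, i.e. to $\neg\forall A^*.\chi$ itself, and it yields exactly $\neg\chi$. For instance, the two precondition-effect rules replace a leading box $\forall(\varphi_0\Ra\psi_0)$ by $\forall(\neg\varphi_0).\forall\Omega$ or by $\forall\Omega.\forall\psi_0$, with $\neg\varphi_0$ and $\psi_0$ re-entering as sentential-test boxes, so the prefix stays a sequence of boxes and the tail survives.

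It then remains to read off the terminal triples. A triple with third component $\psi\notin\mc{P}$ is terminal precisely when no rule is applicable, which by Definition \ref{def: decomposition} happens exactly when $\psi$ is not an eventuality, or $\psi$ is an eventuality whose outermost program lies in $\atp_\Omega$; recall that, as noted after Definition \ref{def: decomposition}, $\Omega$ is treated as a primitive and is not broken up by the precondition-effect rules. Combining this with the invariant and the hypothesis $\neg\chi\notin\ev$ leaves exactly two possibilities. If $\psi=\neg\chi$, then it is a genuine non-eventuality and the first alternative of the lemma holds. Otherwise $\psi=\neg\forall B_1.\dotsb\forall B_r.\forall A^*.\chi$; here $r\geq1$, since $r=0$ would give the eventuality $\neg\forall A^*.\chi$ whose outermost program $A^*$ is starred, so a star rule would still apply and the triple could not be terminal; and terminality forces the outermost program $B_1$ to admit no matching rule, i.e. $B_1\in\atp_\Omega$, which is the second alternative.

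I expect the main, though essentially bookkeeping, difficulty to be the inductive verification of the invariant: one must make sure that the distinguished occurrence of $\forall A^*.\chi$ really is preserved by every rule and is never conflated with an inner starred box that appears in the prefix (for example when a prefix box is itself of the form $C^*$ and the star-elimination rule deletes it, shortening the prefix without disturbing the tail). Keeping $A$ and $\chi$ fixed and always decomposing strictly from the outside makes this precise, and the whole argument is purely syntactic, requiring no appeal to the semantics.
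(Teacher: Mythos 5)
Your proposal is correct and follows essentially the same route as the paper's proof: unfold Definition \ref{def: vtrd} and Definition \ref{def: decomposition} to obtain a terminal triple of $\mc{D}(\varphi)$, establish the syntactic invariant that every third component keeps the shape $\neg\forall B_1.\dotsb\forall B_r.\forall A^*.\chi$ (or is $\neg\chi$), and then read off the two alternatives from terminality together with $\neg\chi\notin\ev$. The only difference is presentational: you prove the invariant for all triples by induction on the generation of $\mc{D}(\varphi)$, whereas the paper asserts it along the particular rule sequence leading to the terminal triple and leaves the case inspection implicit, so your write-up in fact supplies the details the paper elides.
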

\begin{proof}
	According to Definition \ref{def: vtrd}, there is a reduction set $\mc{R}$ of $\varphi$ such that $\psi\in\mc{R}$ and
	by Definition \ref{def: reduction sets}, we know that $\mc{R}$ is defined based on a pair from the finalized decomposition set of $\varphi$. 
	By Definition \ref{def: decomposition}, we know that there is a sequence $(\mc{P}_0,\mc{T}_0,\xi_0), \dotsc, (\mc{P}_n,\mc{T}_n,\xi_n)$ of triples of $\mc{D}(\varphi)$, with $n\geq1$, such that:
	\begin{enumerate*}[i)]
		\item $(\mc{P}_0,\mc{T}_0,\xi_0)=(\emptyset,\emptyset,\varphi)$,
		\item for $j=0,\dotsc,n-1$, $\dfrac  {(\mc{P}_i,\mc{T}_i,\xi_i)}  {(\mc{P}_{i+1},\mc{T}_{i+1},\xi_{i+1})}$ is an instance of one of the decomposition rules of Definition \ref{def: decomposition} such that $\xi_i$ is an $\alpha$/$\beta$ eventuality,
		\item $(\mc{T}_n,\xi_n)$ is in $\fd(\varphi)$ and $\mc{R} = \mc{T}_n\cup \{\xi_n\}$,
		\item $\xi_n=\psi$ is either a non-eventuality, or an eventuality of the form $\neg\forall B.\vartheta$ with $B\in\atp_\Omega$.
	\end{enumerate*}
	Due to the syntactic form of $\varphi$ and of the decomposition rules, we can show that $\xi_0,\dotsc,\xi_{n-1}$ are eventualities of the form $\neg\forall C_1.\dotsb\forall C_m.\forall A^*.\chi$ with $m\geq0$.
	If $\xi_n\in\ev$, then the previous property also holds for $\xi_n=\psi$.
	In the case that $\xi_n\notin\ev$, since $\xi_{n-1}$ is an eventuality $\neg\forall C_1.\dotsb\forall C_m.\forall A^*.\chi$ with $m\geq0$, by careful inspection of the decomposition rules, we have to admit that $\xi_{n-1} = \neg\forall A^*.\chi$ and $\xi_n=\psi$ is the non-eventuality $\neg\chi$.
\end{proof}

\begin{defi}	 		\label{def: fpath} 
	Let $\mc{M}$ be an $\mc{L}$-model, $w$ a state of $\mc{M}$, $\varphi\!=\!\neg\forall A_1.\dotsb\forall A_r.\forall A^*.\vartheta$ an eventuality, with $r\geq0$ and $\neg\vartheta\notin\ev$, and $R$ one of the relations $\vtr$ and $\vtrd$.
	A \textit{fulfilling path} $\fpath(\mc{M},w,\varphi,R)$  is a finite sequence $(w_1,\psi_1),\dotsc,(w_k,\psi_k)$ of pairs of states and formulas, with $k\geq2$, such that:
	\begin{itemize}
	\item $w_1=w$, $\psi_1=\varphi$ and $\psi_k$ is the formula $\neg\vartheta$ which appears only in the last pair,
	\item for $i=1,\dotsc,k$, $w_i\models^\mc{M}\psi_i$,
	\item for $i=1,\dotsc,k\!-\!1$, 
	if $\psi_i$ is some formula $\neg\forall A.\chi$ such that $A\in\atp_\Omega$, then $\psi_{i+1}=\neg\chi$ and $w_i\!\onto{A}\!w_{i+1}$, or else, $\psi_iR\psi_{i+1}$ and $w_i\!=\!w_{i+1}$.
	\end{itemize}
	If $R$ is the relation $\vtrd$, then for $i=1,\dotsc,k-1$, if $\psi_i\vtrd\psi_{i+1}$ and $w_i=w_{i+1}$, then there is a reduction set $\mc{R}$ of $\psi_i$ such that $\psi_{i+1}\in\mc{R}$ and $w_i\models^\mc{M}\mc{R}$.
\end{defi}

\begin{lem}	 		\label{lem: fpaths}
	If an eventuality $\varphi=\neg\forall A_1.\dotsb\forall A_r.\forall A^*.\vartheta$, with $r\geq0$ and $\neg\vartheta\notin\ev$, is satisfied by a state $w$ of an $\mc{L}$-model $\mc{M}$, then there are fulfilling paths $\fpath(\mc{M},w,\varphi,\vtr)$ and $\fpath(\mc{M},w,\varphi,\vtrd)$.
\end{lem}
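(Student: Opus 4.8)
The plan is to construct both paths by following one semantic witness, with a single integer measure controlling the global progress and the finiteness of Definition~\ref{def: decomposition} controlling the local progress. Since $w\models^\mc{M}\varphi$ with $\varphi=\neg\forall A_1.\dotsb\forall A_r.\forall A^*.\vartheta$, unfolding the relational semantics together with $\onto{A^*}=\bigcup_{n\geq0}(\onto{A})^n$ shows that there is a path in $\mc{M}$ from $w$, built from atomic and $\Omega$ transitions, that realizes the program $A_1\cdots A_r A^n$ for some $n\geq0$ and ends at a state satisfying $\neg\vartheta$. For a state $v$ and an $\alpha/\beta$ formula $\psi$ of the form $\neg\forall B.\chi$ with $v\models^\mc{M}\psi$, I write $\mu(v,\psi)$ for the least number of atomic/$\Omega$ transitions in such a realizing path whose final state satisfies $\neg\vartheta$; this is a well-defined natural number since $v\models^\mc{M}\psi$ and tests contribute no transitions. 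The two quantitative facts I would read off the semantics are: (i)~$\mu$ is unchanged by each single $\alpha/\beta$ decomposition performed at a fixed state, provided that at a $\beta$-formula one keeps a disjunct $\beta_1$ or $\beta_2$ that is satisfied at $v$ and minimises $\mu$; and (ii)~$\mu$ strictly drops whenever a modal step is taken, i.e.\ when the leading program lies in $\atp_\Omega$.

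First I would build the fulfilling path for $\vtrd$. Starting from $(w,\varphi)$, at a configuration $(v,\psi)$ with $\psi$ an eventuality I invoke Lemma~\ref{lem: properties of vtrd}: one $\vtrd$-step sends $\psi$ either to the non-eventuality $\neg\vartheta$ or to an eventuality $\neg\forall B_1.\dotsb\forall B_s.\forall A^*.\chi$ with $B_1\in\atp_\Omega$. Guided by $\mu$, I select the reduction set (equivalently, the pair of $\fd(\psi)$) whose branch choices are all satisfied at $v$ and preserve $\mu$: when $\mu(v,\psi)=0$ I take the branch leading to $\neg\vartheta$ and stop; when $\mu(v,\psi)>0$ the continue-branch yields an eventuality with leading program in $\atp_\Omega$, and I perform the ensuing modal step into a successor that keeps the path $\mu$-optimal. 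Thus the $\vtrd$-path strictly alternates a bulk decomposition with a modal step, and since every modal step lowers the integer $\mu$, the construction halts, necessarily at $\neg\vartheta$. Satisfaction at every state holds because each equivalence of Proposition~\ref{prop: equiv alpha beta} is preserved and a satisfied disjunct is always chosen; the extra clause of Definition~\ref{def: fpath}, that the whole reduction set $\mc{R}=\mc{T}\cup\{\psi_{i+1}\}$ hold at $v$, follows because the sentential tests collected in $\mc{T}$ are introduced only by the $\alpha$-rule $\neg\forall\xi.\chi\equiv\xi\land\neg\chi$, so each holds at $v$, while $\psi_{i+1}$ holds by construction.

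Next I would obtain the fulfilling path for $\vtr$ by expanding each $\vtrd$-step into the finite sequence of single $\alpha/\beta$ decomposition steps that underlies it, namely the decomposition rules of Definition~\ref{def: decomposition}. Each such single step stays at the same state, relates consecutive formulas by $\vtr$, and preserves satisfaction through the chosen disjunct, while the modal steps are copied verbatim. The resulting sequence is finite because each $\vtrd$-step unfolds into finitely many decomposition steps (the decomposition set is finite, as no principal eventuality is revisited within one block by the guard $\vartheta\notin\mc{P}$), and there are finitely many blocks. In both constructions $\neg\vartheta$ is forced to occur only in the last pair by terminating at its first appearance, which is exactly the moment $\mu$ reaches $0$; since $\varphi$ is an eventuality and $\neg\vartheta\notin\ev$, at least one step is taken, so $k\geq2$.

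The main obstacle is the termination argument in the presence of the iteration operator: unrolling $\neg\forall B^*.\chi$ into $\neg\forall B.\forall B^*.\chi$ strictly increases syntactic size, so a naive structural induction collapses. The crucial point is therefore the separation of concerns in the measure: the integer $\mu$ governs the star-sensitive global progress and falls only at modal steps, whereas the finiteness of each fixed-state decomposition block---guaranteed by Definition~\ref{def: decomposition} and already exploited in Lemma~\ref{lem: properties of vtrd}---governs the local progress between modal steps. The delicate verification is claim~(i) for the star, namely that a correctly chosen $\beta_2$-unrolling leaves $\mu$ invariant, which rests on the fact that the minimal witness lengths for $\neg\forall B^*.\chi$ and for $\neg\forall B.\forall B^*.\chi$ coincide precisely when at least one iteration is required; establishing this equality is what makes the whole scheme go through.
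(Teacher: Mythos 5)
Your overall strategy -- build $\fpath(\mc{M},w,\varphi,\vtrd)$ directly, guided by a transition-counting measure $\mu$, and then expand each $\vtrd$-step into single decomposition steps to obtain $\fpath(\mc{M},w,\varphi,\vtr)$ -- is the reverse of the paper's, which first constructs the $\vtr$-path by structural induction on programs (following the standard \pdl argument, plus the new $\varphi\Ra\psi$ case via Corollary~\ref{cor: =>2}) and then contracts it, removing ``at a world'' cycles, to obtain the $\vtrd$-path. The inversion would be legitimate, but it places the whole weight of the proof on a step you assert rather than prove: the existence, at each configuration $(v,\psi)$, of a pair $(\mc{T},\xi)\in\fd(\psi)$ all of whose branch choices are satisfied at $v$ and preserve $\mu$. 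By Definition~\ref{def: decomposition}, $\fd(\psi)$ contains only terminated decompositions in which no principal formula re-occurs ($\vartheta\notin\mc{P}$); a satisfaction-preserving, $\mu$-preserving sequence of branch choices can perfectly well run into a re-occurrence, and such a leaf is excluded from $\fd(\psi)$ and yields no reduction set. Your justification -- ``when $\mu(v,\psi)>0$ the continue-branch yields an eventuality with leading program in $\atp_\Omega$'' -- is false when the starred program is complex: for $\psi=\neg\forall B^*.\chi$ the continue branch is $\neg\forall B.\forall B^*.\chi$, whose leading program is $B$, and if $B$ is a test, a choice containing a test, or itself a star, further $\mu$-preserving choices can return to $\neg\forall B^*.\chi$ at the same state (for $B=\xi$ a test, $\neg\forall\xi.\forall\xi^*.\chi$ decomposes straight back to $\neg\forall\xi^*.\chi$, which is why in the paper's Example 2.12 the formula $\neg\forall\xi^*.\psi$ has the single reduction set $\{\neg\psi\}$; for $B=C^*$, the exit branch of the inner star re-creates the outer formula). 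Appealing to Lemma~\ref{lem: properties of vtrd} does not close this: that lemma describes the shape of $\vtrd$-successors under the hypothesis that a $\vtrd$-step exists; it does not assert that a satisfiable one exists. Likewise, the guard $\vartheta\notin\mc{P}$ gives finiteness of $\mc{D}(\psi)$, not that your guided choices avoid the excluded (re-occurring) leaves.

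The missing idea is an excision/tiebreaking argument: you must refine the guiding witness -- minimizing, in addition to the transition count $\mu$, the number of star iterations (or the size of a derivation of the realizing path) -- and show that under such minimality a star body is never traversed by tests alone, so the guided decomposition never revisits a principal formula and therefore genuinely terminates in $\fd(\psi)$, at $\neg\vartheta$ or at a formula with leading program in $\atp_\Omega$. This is precisely the work that the paper's cycle-removal step performs in the opposite direction: after contracting repeated (state, formula) pairs of the $\vtr$-path, the grouped decomposition sequences automatically satisfy $\xi_j\notin\mc{P}_j$ and hence land in $\fd$. Without this supplement your construction can dead-end before reaching $\neg\vartheta$, so the proposal as written has a genuine gap, even though the measure-based scheme is repairable along these lines.
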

\begin{proof}
The proof of the existence of $\fpath(\mc{M},w,\varphi,\vtr)$ is similar to the corresponding proof for the case of \pdl, see \cite[Chapter 4]{Widmann_Thesis}, besides the case that concerns the precondition-effect construct. 

First, we consider a formula $\neg\forall A.\vartheta$ and we assume that there are states $w$ and $w'$ of an $\mc{L}$-model $\mc{M}$ such that
 $w\onto{A}w'$ and $w\models^\mc{M} \neg\forall A.\vartheta$ and $w'\models^\mc{M} \neg\vartheta$. 
We show that there is a sequence $(w,\neg\forall A.\vartheta),\dotsc,(w',\neg\vartheta)$ of pairs of states and formulas with the same properties as those of a fulfilling path. The proof is by structural induction on $A$.
We argue only for the case $A=\varphi\!\Ra\!\psi$, as the rest cases are similar to those of \pdl \cite{Widmann_Thesis}.
Since $\onto{\Omega}$ is the universal definable relation, $\varphi\Ra\psi \vmodels \Omega$ and thus, $w\onto{\Omega}w'$. Moreover, since $w\onto{\Omega}w'$ and $w'\models^\mc{M}\neg\vartheta$, we have that $w\models^\mc{M}\neg\forall\Omega.\vartheta$.
We know that $w$ satisfies either $\neg\varphi$ or $\varphi$.
In the former case, having in mind that $w\models^\mc{M}\neg\forall\Omega.\vartheta$, we can conclude that $w\models^\mc{M}\neg\forall(\neg\varphi).\forall\Omega.\vartheta$. Observe that $\neg\forall (\varphi\Ra\psi).\chi \vtr \neg\forall(\neg\varphi).\forall\Omega.\vartheta$. So, the required sequence is the sequence $(w,\neg\forall A.\vartheta), (w,\neg\forall(\neg\varphi).\forall\Omega.\vartheta), (w,\neg\forall\Omega.\vartheta), (w',\neg\vartheta)$.
Now, in the case that $w$ satisfies $\varphi$, as $w$ satisfies $\neg\forall(\varphi\Ra\psi).\vartheta$, by Proposition \ref{prop: equiv alpha beta}, we know that $w$ should also satisfy $\neg\forall\Omega.\forall\psi.\vartheta$ (it cannot satisfy $\neg\forall(\neg\varphi).\forall\Omega.\vartheta$). By Corollary \ref{cor: =>2}, since $w\lonto{\varphi\Ra\psi}w'$, we know that either $w\models^\mc{M}\neg\varphi$, or $w'\models^\mc{M}\psi$. Since $w\models^\mc{M}\varphi$, we have to admit that $w'\models^\mc{M}\psi$. Since $w'\models^\mc{M}\psi$ and $w'\models^\mc{M}\neg\vartheta$, we can also conclude that $w'\models^\mc{M} \neg\forall\psi.\vartheta$.
Thus, the necessary sequence this time is the sequence $(w,\neg\forall A.\vartheta), (w,\neg\forall\Omega.\forall\psi.\vartheta), (w',\neg\forall\psi.\vartheta), (w',\neg\vartheta)$.

Now, we consider an eventuality $\varphi=\neg\forall A_1.\dotsb\forall A_r.\forall A^*.\vartheta$ which is satisfied by a state $w$ of an $\mc{L}$-model $\mc{M}$ and we show that there is a fulfilling path $\fpath(\mc{M},w,\varphi,\vtr)$. To simplify notation, we define $n=r+1$ and $A_n=A^*$ and thus, $\varphi = \neg\forall A_1.\dotsb\forall A_n.\vartheta$.
Since $w\models^\mc{M}\varphi$, there is a sequence $w_1,\dotsc,w_{n+1}$ of states such that $w_1=w$ and $w_1\onto{A_1}w_2\dotsb w_n\onto{A_n}w_{n+1}$ and for $i=1,\dotsc,n$, $w_i\models^\mc{M} \neg\forall A_i.\dotsb\forall A_n.\vartheta$ and $w_{n+1}\models^\mc{M}\neg\vartheta$.
By the properties shown earlier, we have that for $i=1,\dotsc,n-1$, there is the appropriate sequence $\pi_i = (w_i,\neg\forall A_i.\dotsb\forall A_n.\vartheta),\dotsc, (w_{i+1},\neg\forall A_{i+1}.\dotsb\forall A_n.\vartheta)$ and there is also the sequence $\pi_n = (w_n,\neg\forall A_n.\vartheta),\dotsc, (w_{n+1},\neg\vartheta)$.
Thus, the desired $\fpath(\mc{M},w,\varphi,\vtr)$ is the concatenation of $\pi_1,\dotsc,\pi_n$, in which we contract the consecutive identical pairs.

In the sequel, we take advantage of $\fpath(\mc{M},w,\varphi,\vtr)$ in order to show the existence of $\fpath(\mc{M},w,\varphi,\allowbreak\vtrd\nolinebreak)$.
First, we replace each subsequence $(w_1,\chi_1),\dotsc,(w_n,\chi_n)$ of $\fpath(\mc{M},\allowbreak w,\varphi,\vtr)$ such that $w_1=\dotsb=w_n$ and $\chi_1=\chi_n$ with the single pair $(w_1,\chi_1)$. Suppose that the modified fulfilling path is the sequence $(w_1,\psi_1),\dotsc,(w_k,\psi_k)$.

Let $\psi_i$, where $i=1,\dotsc,k-1$, be an $\alpha$/$\beta$ non-eventuality. By the properties of $\fpath(\mc{M},\allowbreak w,\varphi,\vtr)$, we know that $\psi_i\vtr\psi_{i+1}$ and $w_i=w_{i+1}$ and $w_i\models^\mc{M}\psi_i$ and $w_{i+1}\models^\mc{M}\psi_{i+1}$.
Since $\psi_i$ is not an eventuality, by Definition \ref{def: vtrd}, we know that $\psi_i\vtrd\psi_{i+1}$. Moreover, by Definitions \ref{def: vtr} and \ref{def: reduction sets} and Proposition \ref{prop: equiv alpha beta}, we can conclude that there is a reduction set of $\psi_i$ which is satisfied by $w_i$ and at the same time, $\psi_{i+1}$ is in it. 

For any two subsequent pairs $(w_r,\psi_r),(w_{r+1},\psi_{r+1})$ of $\fpath(\mc{M},w,\varphi,\vtr)$, with $1\leq r<k$, such that $\psi_r$ is not a formula of the form $\neg\forall B.\chi$ with $B\in\atp_\Omega$, the state $w_r$ is the same with $w_{r+1}$ and there is a decomposition rule $\dfrac  {(\mc{P}_1,\mc{T}_1,\xi_1)}  {(\mc{P}_2,\mc{T}_2,\xi_2)}$ of Definition \ref{def: decomposition} such that $\psi_r$ is of the form $\xi_1$ and $\psi_{r+1}$ is of the form $\xi_2$ and $\xi_1\vtr\xi_2$. Moreover, the decomposition rules are essentially a variation of the usual $\alpha$/$\beta$ rules (recall Remark \ref{rem: alpha-beta formulas}): $\dfrac  {(\mc{P},\mc{T},\alpha)}  {(\mc{P}\cup\{\alpha\},\mc{T}\cup\{\alpha_2\},\alpha_1)}$ and $\dfrac  {(\mc{P},\mc{T},\beta)}  {(\mc{P}\cup\{\beta\},\mc{T},\beta_i)}$, where $i\in\{1,2\}$. Thus, by Proposition \ref{prop: equiv alpha beta}, whenever some premise triple $(\mc{P}_1,\mc{T}_1,\xi_1)$ is satisfied by a state $w$ of an $\mc{L}$-model $\mc{M}$ (i.e. the set $\mc{P}_1 \cup \mc{T}_1 \cup \{\xi_1\}$ is satisfied by $w$), we can conclude that there is a decomposition rule such that its conclusion triple $(\mc{P}_2,\mc{T}_2,\xi_2)$ is also satisfied by the same state.
So, for $i=1,\dotsc,k$, if $\psi_i$ is an $\alpha$/$\beta$ eventuality, then there is a sequence $(\mc{P}_0,\mc{T}_0,\xi_0), \dotsc, (\mc{P}_n,\mc{T}_n,\xi_n)$ of triples of $\mc{D}(\psi_i)$, with $(\mc{P}_0,\mc{T}_0,\xi_0)=(\emptyset,\emptyset,\psi_i)$ (recall that $w_i\models^\mc{M}\psi_i$) and $n\geq1$, such that:
\begin{itemize}
\item for $j=0,\dotsc,n-1$, $\dfrac  {(\mc{P}_i,\mc{T}_i,\xi_i)}  {(\mc{P}_{i+1},\mc{T}_{i+1},\xi_{i+1})}$ is an instance of one of the decomposition rules of Definition \ref{def: decomposition} such that $\xi_i$ is an $\alpha$/$\beta$ eventuality and $\xi_i\vtr\xi_{i+1}$,
\item for $j=0,\dotsc,n$, $w_{i+j}=w_i$ and $\psi_{i+j}=\xi_j$ and the triple $(\mc{P}_i,\mc{T}_i,\xi_i)$ is satisfied by $w_i$,
\item for $j=0,\dotsc,n$, $\mc{P}_j$ is the set $\{\xi_0,\dotsc,\xi_{j-1}\}$ and $\xi_j$ is not in $\mc{P}_j$, as otherwise there would be a subsequence $(w,\chi_1),\dotsc,(w,\chi_m)$ of $\fpath(\mc{M},w,\varphi,\vtr)$ such that $\chi_1=\chi_m$,
\item the pair $(\mc{T}_n,\xi_n)$ is in $\fd(\psi_i)$ which, according to Definitions \ref{def: vtrd} and \ref{def: reduction sets}, entails that $\psi_i\!\vtrd\!\xi_n$ (notice that $\xi_n \!\!=\! \psi_{i+n}$) and that the set $\mc{T}_n\!\cup\!\{\xi_n\}$ is a satisfiable reduction set of $\psi_i$,
\item $\xi_n$ is either a non-eventuality, or an eventuality of the form $\neg\forall B.\chi$ with $B\in\atp_\Omega$.
\end{itemize}
The formula of the last pair of $\fpath(\mc{M},w,\varphi,\vtr)$ is a non-eventuality. According to Definition \ref{def: decomposition}, all the principal formulas of the triples of the considered sequence for the $\alpha$/$\beta$ eventuality $\psi_i$ need to be eventualities, possibly except for the last one. Therefore, $i+n$ cannot be greater than the length of $\fpath(\mc{M},w,\varphi,\vtr)$.
By the previous properties, it should be clear that if we remove the appropriate pairs from $\fpath(\mc{M},w,\varphi,\vtr)$ in which the corresponding formulas are $\vtr$-related, then we can obtain $\fpath(\mc{M},w,\varphi,\vtrd)$.
\end{proof}

The language $\mc{L}_a$ of process terms of \tpdl is the language of the regular expressions on the alphabet $\Sigma$ (where $\Sigma$ is generated by the grammar $a\; (a\in\atp) \mid \varphi \mid \varphi\Ra \varphi\;(\varphi\in\mc{L}_s)$).
In the following, we show that each process/regular expression of $\mc{L}_a$ defines a language of processes of sequentially composed elements of $\atp_\Omega\cup\mc{L}_s$.

\begin{defi} 		\label{lang} 
	The language generated by some process $A\in\mc{L}_a$ is defined by the following rules: 
	$\mc{L}(a) = \{a\}$, 
	$\mc{L}(\varphi) = \{\varphi\}$, 
	$\mc{L}(\Omega) = \{\Omega\}$, 
	$\mc{L}(\varphi\Ra\psi) = \mc{L}(\neg\varphi\Omega + \Omega\psi)$, 
	$\mc{L}(A+B) = \mc{L}(A) \cup \mc{L}(B)$, 
	$\mc{L}(AB) = \mc{L}(A)\mc{L}(B)$, 
	$\mc{L}(A^*) = \mc{L}(A)^*$.
	As usual, if $\mc{L}_1$ and $\mc{L}_2$ are languages of processes, then $\mc{L}_1\mc{L}_2 = \{ A_1 A_2 \mid A_1\in\mc{L}_1 \text{ and } A_2\in\mc{L}_2\}$ and $\mc{L}_1^* = \bigcup_{n\geq0} \mc{L}_1^n$, where $\mc{L}_1^0= \{\varepsilon\}$ and $\mc{L}_1^{n+1} = \mc{L}_1\mc{L}_1^n$.
	By the previous rules, it is clear that the words/processes of a language $\mc{L}(A)$ are members of $(\atp_\Omega\cup \mc{L}_s)^*$.
	
	Let $A=A_1\dotsb A_k$ and $B= B_1\dotsb B_n$ be processes of $(\atp_\Omega\cup \mc{L}_s)^*$ different from the empty string such that  
	each one of their subprocesses (i.e the process $A_i$, where $1\leq i\leq k$, and $B_j$, where $1\leq j\leq n$) is in $\atp_\Omega$ or it is a sequence of sequentially composed formulas of $\mc{L}_s$, the largest possible 
	(in the sense that there is no $1\leq i< k$ such that $A_iA_{i+1}\in \mc{L}_s^*$ and there is no $1\leq i< n$ such that $B_iB_{i+1}\in \mc{L}_s^*$).
	We say that $A$ and $B$ are \textit{interchangeable} iff one of the following conditions hold:
	\begin{enumerate*}[i)]
	\item $A = B$, or
	\item $k=n$ and for $i=1,\dotsc,k$, 
	if $A_i\in\atp_\Omega$, then $A_i=B_i$, and 
	if $A_i=\varphi_1\dotsb\varphi_r$ with $r\geq 1$, then $B_i=\psi_1\dotsb\psi_l$ with $l\geq1$ such that the formula $\bigwedge\limits_{1\leq j\leq l}\psi_j$ is semantically equivalent with $\bigwedge\limits_{1\leq j\leq r}\varphi_j$.
	\end{enumerate*}
\end{defi}

\begin{lem}	\label{lem: lang properties} 
	In any $\mc{L}$-model, 
	for any process $A\in\mc{L}_a$, $\onto{A} \;= \bigcup\limits_{B\in\mc{L}(A)} \onto{B}$, and
	for any interchangeable processes $A$ and $B$ of $(\atp_\Omega\cup \mc{L}_s)^*$, $\onto{A}\,=\,\onto{B}$.
\end{lem}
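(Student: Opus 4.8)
The plan is to prove the two identities separately: the first by structural induction on the regular expression $A$ over the alphabet $\Sigma$, and the second by a direct computation that collapses composed tests into a single test. Throughout I would use only that relational composition distributes over arbitrary unions, i.e. $\big(\bigcup_i R_i\big)\big(\bigcup_j S_j\big)=\bigcup_{i,j} R_i S_j$, together with the semantic facts already established in the excerpt.

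For the first identity, the base cases $A=a$, $A=\varphi$ and $A=\Omega$ are immediate, since $\mc{L}(A)$ is then the singleton $\{A\}$ and the union on the right collapses to $\onto{A}$. The genuinely new base case is $A=\varphi\Ra\psi$: here I would unfold the definition of the generated language, computing $\mc{L}(\varphi\Ra\psi)=\mc{L}(\neg\varphi\Omega+\Omega\psi)=\{\neg\varphi\Omega,\,\Omega\psi\}$ by applying the rules for $+$, concatenation, and the atomic cases $\Omega,\neg\varphi,\psi$. Consequently $\bigcup_{B\in\mc{L}(\varphi\Ra\psi)}\onto{B}=\onto{\neg\varphi\Omega}\cup\onto{\Omega\psi}$, which is exactly $\onto{\varphi\Ra\psi}$ by equation \eqref{=>2 eq3} of Corollary \ref{cor: =>2}. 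This is the only step carrying real semantic content.

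For the inductive step, the case $A=A_1+A_2$ follows since $\mc{L}(A_1+A_2)=\mc{L}(A_1)\cup\mc{L}(A_2)$ and the union on the right splits accordingly, so the induction hypothesis yields $\onto{A_1}\cup\onto{A_2}=\onto{A_1+A_2}$. For $A=A_1A_2$ I would apply the distributivity identity to $\mc{L}(A_1A_2)=\mc{L}(A_1)\mc{L}(A_2)$ and combine it with the induction hypothesis to get $\onto{A_1}\onto{A_2}=\onto{A_1A_2}$. The star case $A=A_1^*$ is the most delicate: iterating the distributivity identity gives $\bigcup_{B\in\mc{L}(A_1)^n}\onto{B}=(\onto{A_1})^n$ for each $n\ge0$, where the $n=0$ term is the identity relation matching the empty word, and then $\bigcup_{B\in\mc{L}(A_1)^*}\onto{B}=\bigcup_{n\ge0}(\onto{A_1})^n=\onto{A_1^*}$ by the definition of $\onto{A^*}$ in Table \ref{tab: int}.

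For the second identity, case (i) is trivial, and in case (ii) I would argue componentwise: the subprocesses in $\atp_\Omega$ coincide by hypothesis, so it suffices to treat a maximal test-block $A_i=\varphi_1\dotsb\varphi_r$ against $B_i=\psi_1\dotsb\psi_l$. The key observation is that a composition of test relations is again the test relation of the conjunction, i.e. $\onto{\varphi_1}\dotsb\onto{\varphi_r}=\{(w,w)\mid w\in\intrp{\varphi_1\land\dotsb\land\varphi_r}\}$, proved by a short induction using $\onto{\chi}=\{(w,w)\mid w\in\intrp{\chi}\}$. Hence $\onto{A_i}$ depends only on $\intrp{\bigwedge_j\varphi_j}$, and since $\bigwedge_j\varphi_j\equiv\bigwedge_j\psi_j$ we obtain $\onto{A_i}=\onto{B_i}$; composing the componentwise equalities along $i=1,\dotsc,k$ gives $\onto{A}=\onto{B}$. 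The only genuine obstacle is the precondition-effect base case of the first identity, which rests entirely on Corollary \ref{cor: =>2}; the two points needing care are the treatment of the empty word as the identity relation in the star case, and the maximality condition on the test-blocks in Definition \ref{lang}, which ensures the componentwise matching in case (ii) is well defined.
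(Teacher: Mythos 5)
Your proposal is correct and follows essentially the same route as the paper's (much terser) proof: structural induction on $A$ using the interpretations in Table \ref{tab: int} together with equation \eqref{=>2 eq3} of Corollary \ref{cor: =>2} for the precondition-effect base case, and, for the second identity, the observation that semantically equivalent sentential tests denote the same (diagonal) relation. The extra details you supply --- distributivity of composition over unions, the identity relation for the empty word in the star case, and the collapse of composed tests into the test of the conjunction --- are exactly the steps the paper leaves implicit, so there is nothing to correct.
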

\begin{proof}
	The proof of the first property is by induction on the structure of $A$. We distinguish cases on $A$ and we make use of the interpretation of the processes (see Table \ref{tab: int}) and of Corollary \ref{cor: =>2} ($\lonto{\varphi\Ra\psi} = \lonto{\neg\varphi\Omega + \Omega\psi}$). Then, the proof follows by the induction hypothesis and of the rules for the definition of the language that is generated by a process (see Definition \ref{lang}).
	The second property follows immediately from the semantics, as the interpretation of semantically equivalent sentential tests (i.e. formulas that act as programs) is the same.
\end{proof}

In the following of the paper, the definition of the size of a formula and of a process will be useful. The definition that follows is the same as that given in \cite{Hartonas1}. In accordance with \cite{Hartonas1}, the definition treats some constructs slightly differently in comparison to the case of \pdl. This is done to compensate the failure of a linear bound of the Fischer-Ladner closure for \tpdl and prove instead a quadratic bound.
For more details, we refer the reader to \cite{Hartonas1}.

\begin{defi} 		\label{def: size} 
	The size $|\varphi|$ of a formula $\varphi$ and the size $\siz{A}$ of a program $A$ are defined as follows:
	\begin{align*}
	& |p| = 1					&&	|\forall A.\varphi| = 1+\siz{A}+|\varphi|	&&	\siz{a} = 1 				&&	\siz{A+B} = 1 + \siz{A} + \siz{B} \\
	& |\neg\varphi|=1+|\varphi| &&	|\cpi A| = 1 + \siz{A}						&&	\siz{\varphi} = 1+|\varphi| &&	\siz{AB} =1+2\siz{A}+ \siz{B} \\
	& 							&&								&&	\siz{\varphi\Ra\psi} = 1+|\varphi|+|\psi|	&&	\siz{A^*} = 1+2\siz{A} 
	\end{align*}
	Moreover, we define the size of $\true$ and $\false$ to be equal to $1$ and thus, $\siz{\Omega}=\siz{\true\Ra\true}=3$. 
\end{defi}

\section{Hintikka Structures} \label{sec: Hintikka structures} 

In this section, we present the usual Hintikka structures (also called in the literature model graphs) and we show that the formulas which appear in them are satisfiable. Moreover, a Hintikka structure reveals the kind of properties that our algorithm should ensure in order to decide affirmatively for the satisfiability of a formula.

\begin{defi} \label{def: labelled structure}   
	A \emph{labelled structure} $\mc{A}$ is a triple $\langle S, (\leads{A})_{A\in\atp_\Omega},  L \rangle$, where: 
	\begin{itemize}
	\item $S$ is a nonempty set of states.
	\item $\leads{}: \atp_\Omega \lra 2^{S\times S}$ assigns a binary relation to each atomic process $a\in\atp$ and to the term $\Omega$.
	\item $L: S \lra 2^{\mc{L}_s}$ labels each state of $S$ with a set of formulas of the language $\mc{L}_s$.
	\end{itemize}
\end{defi}

\begin{defi}  		\label{def: structure path} 
	In a labelled structure $\mc{A} \!=\! \langle S, (\leads{A})_{A\in\atp_\Omega},  L \rangle$, a \textit{structure path} $\stp(s,\varphi, s',\psi)$ is a sequence of state-formula pairs $(s_0, \varphi_0), \dotsc, (s_k, \varphi_k)$ with $k\geq 0$, such that:
	\begin{itemize}
	\item $s_0 = s$, $\varphi_0 = \varphi$, $s_k = s'$, $\varphi_k = \psi$ and for $i=0,\dotsc,k$, $\varphi_i\in L(s_i)$,
	\item for $i=0,\dotsc,k-1$, if $\varphi_i$ is a formula of the form $\neg\forall A.\chi$ such that $A\in\atp_\Omega$, then $\varphi_{i+1}=\neg\chi$ and $s_i\leads{A}s_{i+1}$, otherwise, $\varphi_i\vtrd\varphi_{i+1}$ and $s_{i+1} = s_i$ and there is a reduction set $R$ of $\varphi_i$ such that $\varphi_{i+1}\in R$ and $R\subseteq L(s_{i+1})$.
	\end{itemize}
\end{defi}

\begin{defi} \label{def: Hintikka structure}  
	A \emph{Hintikka structure} $\mc{H} =\pos{S, (\leads{A})_{A\in\atp_\Omega}, L}$ is a labelled structure such that, for any state $s \in S$, the labelling function $L$ meets the following conditions:
	\begin{enumerate}[(H1)]
	\item \label{contr} For each $p\in\atf$, if $\neg p \in L(s)$, then $p\notin L(s)$, and 
	for each process $A\in\wt{\Sigma}$, for each agent $\imath\in I$, if $\neg\cpi A \in L(s)$, then $\cpi A\notin L(s)$.
	\item \label{cap1} No capability statement of the form $\neg\cpi\varphi$ is in $L(s)$.
	\item \label{alpha/beta} If $\varphi\in L(s)$ is an $\alpha$/$\beta$ formula, then one of its reduction sets is a subset of $L(s)$.
	\item \label{cap2} 
	If some capability statement $\neg\cpi A$ with $A\in\wt{\Sigma}$ and $\imath\in I$ is in $L(s)$ and $\{ \cpi(\varphi_1\Ra\psi_1),\dotsc,\cpi(\varphi_k\Ra\psi_k) \}$ with $k\geq0$ is the set of all the capabilities statements in $L(s)$ that concern precondition-effect processes for $\imath\in I$, 
	then there is a state $s'\in S$ such that $\{ \varphi_1,\dotsc,\varphi_k, \neg\forall A.\forall(\neg\psi_1).\dotsb\forall(\neg\psi_k).\false \} \subseteq L(s')$.
	\item \label{box} If $\forall a.\vartheta \in L(s)$, then $\forall s' \in S$, if $s \leads{a} s'$, then $\vartheta \in L(s')$.
	\item \label{Omega} If $\forall\Omega.\vartheta \in L(s)$, then 
	\begin{enumerate*}[i)]
	\item $\vartheta$ is a formula of the form $\forall\Omega^*.\varphi$ and it is in $L(s)$, and
	\item for each $A\in\atp_\Omega$, for each $s'\in S$, if $s\leads{A}s'$, then $\vartheta\in L(s')$.
	\end{enumerate*}
	\item \label{diamond} If $\neg\forall A.\chi \in L(s)$ such that $A\in\atp_\Omega$, then there is a state $s'\in S$ such that $s \leads{A} s'$ and $\neg\chi\in L(s')$.
	\item \label{ev} If an $\alpha$/$\beta$ eventuality $\varphi=\neg\forall A_1.\dotsb\forall A_k.\forall A^*.\chi$ with $k\geq0$ and $\neg\chi\notin\ev$ is in $L(s)$, then there is a structure path $\stp(s,\varphi, s',\neg\chi)$ with $s'\in S$.
	\end{enumerate}
	Finally, the transitions in $\mc{H}$ are restricted as follows:
	\begin{enumerate}[resume*]
	\item \label{transitions} For any process $A\in\atp_\Omega$, for any states $s_1,s_2\in S$, if $s_1\leads{A}s_2$, then there is no other process $A'\in\atp_\Omega$ such that $s_1\leads{A'}s_2$ (i.e. for any processes $A,A'\in\atp_\Omega$, if $s_1\leads{A}s_2$ and $s_1\leads{A'}s_2$, then $A=A'$).
	\end{enumerate}
	We say that there is a Hintikka structure $\mc{H}$ for a formula $\varphi$ iff there is a state $s$ of $\mc{H}$ such that $\varphi \in L(s)$.
\end{defi}

\begin{defi}   \label{def: Hintikka model}
	If $\mc{H} = \pos{S, (\leads{A})_{A\in\atp_\Omega}, L }$ is a Hintikka structure, then the $\mc{L}$-model which corresponds to $\mc{H}$, denoted by $\mc{M(H)}$, is the triple $\pos{\mc{F},\rho, (\imath^\mc{M(H)})_{\imath\in I} }$ with $\mc{F} = \pos{W, (\onto{a})_{a\in\atp}, I }$, where $W=S$, such that:
	\begin{itemize}
	\item The transition function $\onto{}$ is defined as follows: 
		\begin{itemize}
		\item for each $a\in\atp$, $\onto{a} \,=\, \leads{a}$,
		\item for the universal definable relation, $\onto{\Omega} \,=\, \onto{\true} \cup \{ (w,w') \mid \exists A_1\dotsb A_k\in\atp_\Omega^+ (w\leads{A_1}\leads{A_2}\dotsb\leads{A_k}w')\}$,
		\item for each process term $\varphi\Ra\psi$, $\lonto{\varphi\Ra\psi} \,=\, \lonto{\neg\varphi\Omega} \cup \lonto{\Omega\psi}$.
		\end{itemize}
	
	\item The interpretation function $\rho$ is defined as follows: for each $p\in\atf$,  $\rho(p) = \{s\in S \mid p\in L(s)\}$.
	
	\item For each $\imath\in I$, for each $w \in W$, we have that
	$\imath^\mc{M(H)}(w)  =  \bigcup\{  \onto{A}  \mid  A\in\wt{\Sigma}  \land  \cpi A\in L(w) \}$.
	\end{itemize}
\end{defi}

\begin{rem}
	We point out that the defined $\mc{L}$-model $\mc{M(H)}$ of Definition \ref{def: Hintikka model} is indeed an $\mc{L}$-model. 
	Definition \ref{def: l-model} does not explicitly define the interpretation of the terms $\varphi\Ra\psi$, but it provides a condition that their interpretation should satisfy (see Table \ref{tab: int}).
	The way that we interpreted the terms $\varphi\Ra\psi$ in Definition \ref{def: Hintikka model} is compatible with that condition. First, notice that the interpretation of $\Omega$ is defined in terms of $\leads{}$. It is indeed the universal definable relation as it considers all the possible transitions in the corresponding Hintikka structure, as well as the transitions of the relation $\onto{\true}$. Second, we defined the interpretation of a process term $\varphi\Ra\psi$ by using $\Omega$ in the way that Corollary \ref{cor: =>2} states.
	Finally, by the way that we defined the capabilities assignment function follows that the normality condition of equation \eqref{eq: normality} is satisfied  and thus, we have that $\imath^\mc{M(H)}(w)  =  \bigcup\{\onto{A} \mid A\in\wt{\Sigma} \text{ and } w\models^\mc{M(H)} \cpi A \}$ for each $\imath\in I$ and for each $w\in W$.
\end{rem}

\begin{lem} 		\label{lem: Hintikka->Model} 
	If there is a Hintikka structure $\mc{H} = \pos{S,(\onto{a})_{a\in \atp}, L}$ for a formula $\xi$, then the formula $\xi$ is satisfied by a state of the model $\mc{M(H)}$ which corresponds to $\mc{H}$.
\end{lem}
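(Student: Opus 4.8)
The plan is to prove the standard truth lemma for $\mc{M(H)}$ and read off the statement from it. Concretely, I would show that for every state $s\in S$ and every formula $\varphi\in L(s)$ we have $s\models^{\mc{M(H)}}\varphi$; the lemma then follows at once, since $\mc{H}$ being a Hintikka structure for $\xi$ means $\xi\in L(s)$ for some $s$, whence $s\models^{\mc{M(H)}}\xi$. The induction is on the size $|\varphi|$ of Definition~\ref{def: size}, and the guiding principle is that every appeal to the induction hypothesis is made on a formula of strictly smaller size, while the three families of formulas whose size is \emph{not} controlled --- the unwindings $\forall A.\forall A^*.\varphi$ of a box-star, the $\Omega^*$-boxes introduced by the $\beta$-rule for $\varphi\Ra\psi$, and the eventualities themselves --- are never fed back into the hypothesis but are only transported through the structure as membership-in-label facts supplied by the Hintikka conditions.

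The propositional and positive-capability cases are routine. For atoms, $p\in L(s)$ gives $s\in\rho(p)$ by construction of $\rho$, and $\neg p\in L(s)$ gives $p\notin L(s)$ by condition~\ref{contr}. A non-eventuality $\alpha$/$\beta$ formula is handled by condition~\ref{alpha/beta}: one of its reduction sets lies in $L(s)$, its members are strictly smaller, so the hypothesis together with the validity of $\alpha\leftrightarrow\alpha_1\land\alpha_2$ and $\beta\leftrightarrow\beta_1\lor\beta_2$ (Proposition~\ref{prop: equiv alpha beta}) yields $s\models^{\mc{M(H)}}\varphi$. A positive capability $\cpi a$ or $\cpi(\varphi\Ra\psi)$ with program in $\wt{\Sigma}$ is immediate from the definition of $\imath^{\mc{M(H)}}$, which is exactly the union of the $\onto{B}$ over $\cpi B\in L(s)$; the test capabilities $\cpi\varphi$ hold vacuously and condition~\ref{cap1} forbids any $\neg\cpi\varphi$, while the compound capabilities $\cpi(AB),\cpi(A\!+\!B),\cpi A^*$ reduce through condition~\ref{alpha/beta} and Lemma~\ref{lem: cap} to these and to box formulas.

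The modal cases are the core. For a box $\forall A.\varphi$ I would first treat the star by a reachability argument: the set $X=\{t\mid\forall A^*.\varphi\in L(t)\}$ contains $s$, lies inside $\{t\mid t\models^{\mc{M(H)}}\varphi\}$ (each $t\in X$ has $\varphi\in L(t)$ by condition~\ref{alpha/beta}, and $|\varphi|<|\forall A^*.\varphi|$, so the hypothesis applies), and is closed under $\onto{A}$; since $\onto{A^*}$ is the reflexive and transitive closure of $\onto{A}$, this gives $s\models^{\mc{M(H)}}\forall A^*.\varphi$. Closure of $X$, and more generally the propagation of a box across a transition, is established by induction on the structure of $A$ using condition~\ref{box} for atomic steps and condition~\ref{Omega} together with condition~\ref{alpha/beta} along a chain for $\Omega$-steps, the wrong test-disjunct being ruled out by the hypothesis on $\neg\theta$. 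A precondition-effect box is never propagated directly --- its label carries only one $\beta$-disjunct --- but is decomposed by condition~\ref{alpha/beta} into $\varphi\land\forall\Omega^*.\forall\psi.\vartheta$ or $\forall\Omega^*.\vartheta$; here the type of the transition $\onto{A}=\lonto{\neg\varphi\Omega}\cup\lonto{\Omega\psi}$ forces the matching precondition or effect to hold, so the $\Omega^*$-box carries $\forall A^*.\varphi$ forward and the genuinely smaller inner formulas $\varphi,\vartheta,\forall\psi.\vartheta$ are discharged by the hypothesis, the step being closed by the equivalence of Lemma~\ref{lem: => equiv}. Diamonds are dual: a non-eventuality reduces through condition~\ref{alpha/beta} to an atomic or $\Omega$ diamond, where condition~\ref{diamond} supplies a successor already carrying $\neg\chi$ in its label, so the recursion stays within labels until the final appeal to the hypothesis on the smaller $\neg\chi$. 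For an eventuality $\varphi=\neg\forall A_1.\dotsb\forall A_k.\forall A^*.\chi\in L(s)$, condition~\ref{ev} provides a structure path $\stp(s,\varphi,s',\neg\chi)$, which I would traverse backwards: its transition steps become genuine $\onto{B}$-edges of $\mc{M(H)}$, and its $\vtrd$-steps are justified by the soundness of the reduction-set decomposition --- the same semantic content as in Lemma~\ref{lem: fpaths}, now read in reverse --- the accompanying satisfiable reduction set entailing the current formula while its members are either already-verified path formulas or strictly smaller tests.

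The main obstacle is the negated capability $\neg\cpi A$ with $A\in\wt{\Sigma}$, the genuinely new case. I would invoke condition~\ref{cap2} to obtain a state $s'$ whose label contains the preconditions $\varphi_1,\dotsc,\varphi_k$ of all the agent's precondition-effect capabilities together with the diamond $\neg\forall A.\forall(\neg\psi_1).\dotsb\forall(\neg\psi_k).\false$; extracting the transition from this diamond by the machinery just described yields an edge $s'\onto{A}u'$ with $s'\models^{\mc{M(H)}}\bigwedge_j\varphi_j$ and $u'\models^{\mc{M(H)}}\bigwedge_j\neg\psi_j$, the $\neg\psi_j$ arising from the test-diamond decompositions at $u'$ on formulas smaller than the pieces of the original statement. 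The decisive point is then to check that this edge escapes $\imath^{\mc{M(H)}}(s)$: it lies in no $\lonto{\varphi_j\Ra\psi_j}=\lonto{\neg\varphi_j\Omega}\cup\lonto{\Omega\psi_j}$ because the source satisfies $\varphi_j$ and the target $\neg\psi_j$, and in no atomic $\onto{b}$ with $\cpi b\in L(s)$ because condition~\ref{contr} keeps $\cpi A$ out of $L(s)$ when $A$ is atomic while condition~\ref{transitions} makes the label of an edge unique; hence $\onto{A}\not\subseteq\imath^{\mc{M(H)}}(s)$ and $s\models^{\mc{M(H)}}\neg\cpi A$. I expect this capability bookkeeping --- aligning the syntactic $\beta$-expansion of $\varphi\Ra\psi$ with the semantic splitting $\lonto{\neg\varphi\Omega}\cup\lonto{\Omega\psi}$ and certifying that the witnessing edge is granted by no capability at $s$ --- to be the most delicate step, together with the care needed to keep the size-induction well founded despite the size-increasing star and $\varphi\Ra\psi$ expansions.
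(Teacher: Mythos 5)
Your proposal is correct, and its skeleton --- a truth lemma (``$\varphi\in L(s)$ implies $s\models^{\mc{M(H)}}\varphi$'') proved by induction on the size of Definition~\ref{def: size}, with the case analysis driven by conditions \ref{contr}--\ref{transitions} --- coincides with the paper's proof. The propositional, box, diamond and capability cases match step for step: your invariant-set phrasing of the box-star case is a repackaging of the paper's chain-propagation argument (both need the same structural induction on $A$), and your treatment of $\neg\cpi A$ via \ref{cap2}, Corollary~\ref{cor: =>2}, and the edge-uniqueness condition \ref{transitions} is exactly the paper's. The one place where you take a genuinely different route is the eventuality case. The paper proceeds \emph{forwards} along the structure path supplied by \ref{ev}: it builds an explicit word $\delta\in(\atp_\Omega\cup\mc{L}_s)^*$ recording the transitions and tests encountered, shows $\delta$ is interchangeable with a word of the language $\mc{L}(A_1\dotsb A_k)$ in the sense of Definition~\ref{lang}, and invokes Lemma~\ref{lem: lang properties} to convert $s\onto{\delta}s'$ into $s\lonto{A_1\dotsb A_k}s'$, so that the induction hypothesis is applied only to $\neg\chi$. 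You instead traverse the structure path \emph{backwards}, using at each $\vtrd$-step the converse entailment $\bigwedge\mc{R}\to\varphi_i$ for the witnessing reduction set $\mc{R}$. This entailment does hold: each decomposition rule of Definition~\ref{def: decomposition} is invertible by the equivalences of Proposition~\ref{prop: equiv alpha beta} and Lemma~\ref{lem: => equiv}, and the inversions compose along the decomposition chain. Moreover the remaining members of $\mc{R}$ are sentential tests (or negated preconditions and effects) drawn from the programs occurring in $\xi$, hence of size strictly below $|\xi|$ by Definition~\ref{def: size}, so the induction hypothesis legitimately covers them even though the intermediate eventualities themselves are not size-bounded --- the point you rightly flag as the delicate one. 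Your route buys economy, avoiding the language-theoretic apparatus of Definition~\ref{lang} and Lemma~\ref{lem: lang properties} entirely, at the price of two extra obligations the paper does not incur: verifying the converse soundness of every decomposition rule, and verifying the size bound on the collected tests. Both obligations check out, so the proposal stands as a valid alternative proof.
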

\begin{proof}
According to Definition \ref{def: Hintikka model}, $\mc{M(H)}$ is the triple $\pos{\mc{F},\rho, (\imath^\mc{M(H)})_{\imath\in I} }$ such that $\mc{F} = \pos{W, (\onto{a})_{a\in\atp}, I }$. Moreover, $W$ is the set of states $S$.
We need to show that for any formula $\xi$, for any state $s\in S$, if $\xi\in L(s)$, then $s\models^\mc{M(H)}\xi$.
We proceed by induction on the size of the formula $\xi$ (recall Definition \ref{def: size}). The induction hypothesis is stated as follows:
\begin{equation}	\label{IH}
\text{For any formula } \psi, \text{ if } |\psi|<|\xi|, \text{ then } \forall s\in S \big( \psi\in L(s) \,\Ra\, s\models^\mc{M(H)} \psi \big).
\tag{I.H.}
\end{equation}

In the following, for some state $s\in S$, we examine all the forms that a formula $\xi\in L(s)$ may have.
First, we assume that the formula $\xi$ is an atomic formula $p\in\atf$ or its negation. Since $p\in L(s)$, by the definition of $\mc{M(H)}$ (see Definition \ref{def: Hintikka model}), $s$ is in $\rho(p)$ and thus, $s\models^{\mc{M(H)}} p$. In the case that $\neg p\in L(s)$, we know that $p$ cannot be in $L(s)$ (see condition \ref{contr} of Definition \ref{def: Hintikka structure}). So, $s$ is not in $\rho(p)$ and thus, s does not satisfy $p$. In other words, $s$ satisfies $\neg p$.

In the sequel, we examine all the cases that concern the formulas of the form $\forall A.\vartheta$ and $\neg\forall A.\vartheta$. First, we examine the cases that concern an iterated process, as they are the most interesting and they are also used in subsequent cases.

We assume that $\xi = \forall A^*.\vartheta$. 
Suppose that there is a sequential transition $s_0\onto{A}s_1\onto{A}s_2 \dotsb s_{k-1}\onto{A}s_k$ such that $s_0 = s$ and $k\geq0$. We show that $\vartheta$, $\forall A.\forall A^*.\vartheta\in L(s_k)$. The proof proceeds by induction on $k$. 
In the case that $k=0$, since $\forall A^*.\vartheta\in L(s)$, by \ref{alpha/beta} of Definition \ref{def: Hintikka structure}, we know that $\vartheta, \forall A.\forall A^*.\vartheta\in L(s)$.
Now, for the composite transition $s_0\onto{A}s_1\onto{A}s_2 \dotsb s_{n}\onto{A}s_{n+1}$, we need to show that $\vartheta$, $\forall A.\forall A^*.\vartheta\in L(s_{n+1})$. By the induction hypothesis, we know that $\vartheta$, $\forall A.\forall A^*.\vartheta\in L(s_n)$. We proceed by induction on the structure of $A$:
\begin{enumerate}
	\item $A=a$: Since $s_n\onto{a}s_{n+1}$, by Definition \ref{def: Hintikka model}, we know that $s_n\leads{a}s_{n+1}$. By \ref{box} of Definition \ref{def: Hintikka structure}, since $\forall a.\forall a^*.\vartheta\in L(s_n)$ and $s_n\leads{a}s_{n+1}$, we can conclude that $\forall A^*.\vartheta\in L(s_{n+1})$. Furthermore, by \ref{alpha/beta}, we have that $\vartheta, \forall A.\forall A^*.\vartheta\in L(s_{n+1})$.
	
	\item $A =\varphi$: By semantics, the existence of the transition $s_0\onto{\varphi}s_1 \dotsb s_{n}\onto{\varphi}s_{n+1}$ shows that $s \models^\mc{M(H)} \varphi$ and for $i=1,\dotsc,n+1$, $s_i=s$. Since $s_n = s_{n+1}$, it is immediate that $\vartheta$, $\forall A.\forall A^*.\vartheta\in L(s_{n+1})$.
	
	\item $A=\Omega$: Since $s_n\onto{\Omega}s_{n+1}$, by Definition \ref{def: Hintikka model}, we know that either $s_n\onto{tt}s_{n+1}$, or $s_n\leads{A_1}\leads{A_2}\dotsb\leads{A_m}s_{n+1}$, where $m\geq1$ and for $i=1,\dotsc,m$, $A_i\in\atp_\Omega$. 
	In the first case, we can conclude that $s_n = s_{n+1}$ and that $\vartheta$, $\forall A.\forall A^*.\vartheta\in L(s_{n+1})$.
	In the second case, by induction on $m$ and by using conditions \ref{Omega} and \ref{alpha/beta} of Definition \ref{def: Hintikka structure}, we can show that $\vartheta, \forall\Omega.\forall\Omega^*.\vartheta\in L(s_{n+1})$.
	
	\item $A=\varphi\Ra\psi$: Since $s_n\lonto{\varphi\Ra\psi}s_{n+1}$, by Definition \ref{def: Hintikka model} (see also Corollary \ref{cor: =>2}), we have that either $s_n\onto{\neg\varphi}s_n\onto{\Omega}s_{n+1}$, or $s_n\onto{\Omega}s_{n+1}\onto{\psi}s_{n+1}$. Since in both cases $s_n\onto{\Omega}s_{n+1}$, we know that either $s_n\onto{tt}s_{n+1}$, or $s_n\leads{A_1}\leads{A_2}\dotsb\leads{A_m}s_{n+1}$, where $m\geq1$ and for $i=1,\dotsc,m$, $A_i\in\atp_\Omega$.
	In the first case, we conclude that $s_n = s_{n+1}$ and that $\vartheta$, $\forall A.\forall A^*.\vartheta\in L(s_{n+1})$.
	In the second, we first notice that since $\forall(\varphi\Ra\psi).\forall A^*.\vartheta\in L(s_n)$, by \ref{alpha/beta}, either $\{\varphi, \forall\Omega^*.\forall\psi.\forall A^*.\vartheta\}\subseteq L(s_n)$, or $\forall\Omega^*.\forall A^*.\vartheta\in L(s_n)$. So, by induction on $m$ and by using \ref{Omega} and \ref{alpha/beta}, we can show that either $\forall\Omega^*.\forall\psi.\forall A^*.\vartheta \in L(s_{n+1})$, or $\forall\Omega^*.\forall A^*.\vartheta\in L(s_{n+1})$.
	In the case that $\forall\Omega^*.\forall\psi.\forall A^*.\vartheta \in L(s_{n+1})$, recall that we know that $\varphi\in L(s_n)$ and by \ref{alpha/beta}, we conclude first that $\forall\psi.\forall A^*.\vartheta \in L(s_{n+1})$ and then that either $\neg\psi \in L(s_{n+1})$ or $\forall A^*.\vartheta \in L(s_{n+1})$. By the outer induction hypothesis, since $|\varphi|<|\xi|$, we conclude that $s_n$ satisfies $\varphi$. Now, since $s_n\models^\mc{M(H)}\varphi$ and $s_n\lonto{\varphi\Ra\psi}s_{n+1}$, we have to admit that $s_{n+1}$ satisfies $\psi$. Therefore, we have to reject the case that $\neg\psi \in L(s_{n+1})$ (as otherwise, since $|\neg\psi| <|\xi|$, $s_{n+1}$ would also satisfy $\neg\psi$) and admit that $\forall A^*.\vartheta$ is in $L(s_{n+1})$. Again, by \ref{alpha/beta}, we have that $\{\vartheta, \forall A.\forall A^*.\vartheta\}$ is a subset of $L(s_{n+1})$.
	Finally, in the case that $\forall\Omega^*.\forall A^*.\vartheta\in L(s_{n+1})$, by using \ref{alpha/beta}, we can reach the desired conclusion. 
	
	\item $A = A_1+A_2$: We know that $s_n\!\!\lonto{A_1+A_2}\!\!s_{n+1}$ and thus, either $s_n\onto{A_1}s_{n+1}$, or $s_n\onto{A_2}s_{n+1}$. In both cases, since $\forall(A_1\!+\!A_2).\forall A^*.\vartheta\in L(s_n)$, by \ref{alpha/beta} of Definition \ref{def: Hintikka structure}, we have that $\forall A_1.\forall A^*.\vartheta, \forall A_2.\forall A^*.\vartheta\in L(s_n)$ and by the induction hypothesis of the structural induction, we have that $\forall A^*.\vartheta\in L(s_{n+1})$. By condition \ref{alpha/beta}, we have that $\vartheta, \forall A.\forall A^*.\vartheta\in L(s_{n+1})$.
	
	\item $A = A_1A_2$: Since $s_n\lonto{A_1A_2}s_{n+1}$, there is a state $s'$ such that $s_n\onto{A_1}s'\onto{A_2}s_{n+1}$. Furthermore, as $\forall A_1A_2.\forall A^*.\vartheta\in L(s_n)$, by condition \ref{alpha/beta} of Definition \ref{def: Hintikka structure}, we can conclude that $\forall A_1.\forall A_2.\forall A^*.\vartheta\in L(s_n)$. By the induction hypothesis of the structural induction, first, we have that $\forall A_2.\forall A^*.\vartheta\in L(s')$ and then, that $\forall A^*.\vartheta\in L(s_{n+1})$. By \ref{alpha/beta}, we conclude that $\vartheta, \forall A.\forall A^*.\vartheta\in L(s_{n+1})$.
	
	\item $A = B^*$: Since $s_n\onto{B^*}s_{n+1}$, there are transitions such that $w_0\onto{B}w_1\dotsb w_{m-1}\onto{B}w_m$ with $w_0 = s_n$, $w_m = s_{n+1}$ and $m\geq0$. If $m=0$, then $s_n= s_{n+1}$ and as a result, $\vartheta$, $\forall B^*.\forall A^*.\vartheta\in L(s_{n+1})$. In the case that $m>0$, we assume that $\vartheta$, $\forall B^*.\forall A^*.\vartheta\in L(w_{m-1})$. By condition \ref{alpha/beta} of Definition \ref{def: Hintikka structure}, we have that $\forall B.\forall B^*.\forall A^*.\vartheta\in L(w_{m-1})$. Now, by the induction hypothesis of the structural induction, we can conclude that $\forall B^*.\forall A^*.\vartheta\in L(w_m)$. Again, by condition \ref{alpha/beta}, first, we have that $\forall A^*.\vartheta\in L(w_m)$ and then, that $\vartheta, \forall A.\forall A^*.\vartheta\in L(w_m)$.
\end{enumerate}

We have shown that if $\forall A^*.\vartheta \in L(s)$, then $\forall s' \in S \big(s\onto{A^*}s' \Ra \vartheta \in L(s')\big)$. As $|\vartheta|<|\forall A^*.\vartheta|$, by the \ref{IH}, we know that $\forall s' \in S \big(s\onto{A^*}s' \Ra s'\models^\mc{M(H)}\vartheta\big)$. Thus, by semantics, we have that $s\models^\mc{M(H)}\forall A^*.\vartheta$.

Let $\xi$ be an $\alpha$/$\beta$ eventuality $\neg\forall A_1.\dotsb\forall A_k.\chi$ such that $k\geq1$ and $A_k=A^*$ and $\neg\chi\notin\ev$. By \ref{ev} of Definition \ref{def: Hintikka structure}, we know that there is a structure path $\stp_0 (s,\xi, s',\neg\chi) = (s_1,\varphi_1),\dotsc,(s_n,\varphi_n)$ in $\mc{H}$ with $n\geq2$ and $s'\in S$. Without loss of generality, we assume that $\neg\chi$ appears only in the last pair of the structure path. 
By Lemma \ref{lem: properties of vtrd}, it follows that for $i=1,\dotsc,n-1$, the formula $\varphi_i$ is an eventuality of the form $\neg\forall B_1.\dotsb\forall B_r.\forall A^*.\chi$ with $r\geq0$.

Having in mind the sequence $\stp_0 (s,\xi, s',\neg\chi) = (s_1,\varphi_1),\dotsc,(s_n,\varphi_n)$, 
we define an appropriate process $\delta_m= B_1\dotsb B_m \in (\atp_\Omega \cup \mc{L}_s)^*$, with $1\leq m\leq n$.
Let $B_1 = \varepsilon$ (thus, $\delta_1=\varepsilon$) and for $i=2,\dotsc,m$, 
\begin{enumerate*}[i)]
	\item if $\varphi_{i-1}$ is a formula of the form $\neg\forall A.\chi$ with $A\in\atp_\Omega$, then $B_i = A$, and 
	\item if $\varphi_{i-1}\!\vtrd\!\varphi_i$ and $R$ is the reduction set of the $\alpha$/$\beta$ eventuality $\varphi_{i-1}$ such that $\varphi_i\!\in R$ and $R\subseteq L(s_i)$ and $(\mc{T},\psi)$ is the pair in $\fd(\varphi_{i-1})$ such that  $R=\mc{T}\!\cup\!\{\psi\}$, 
	then $B_i$ is the formula $\bigwedge\!\mc{T}$, i.e. the conjunction of the formulas in $\mc{T}$.
\end{enumerate*}

We show that for $i=1,\dotsc,n$,
\begin{enumerate}[i)]
\item if $i\neq n$, then $\varphi_i$ is an eventuality of the form $\neg\forall B_1.\dotsb\forall B_l.\chi$ such that $l\geq 1$ and for $j=1,\dotsc,l$, $\siz{B_j}<|\xi|$,
\item if $\varphi_i=\neg\forall C_1.\dotsb\forall C_r.\chi$ with $r\geq0$, then for each process of the language $\{\delta_i\} \mc{L}(C_1\dotsb C_r)$, there is an interchangeable process in $\mc{L}(A_1\dotsb A_k)$ (in the sense of Definition \ref{lang}), and 
\item there is the transition $s_1 \onto{\delta_i} s_i$ in $\mc{M(H)}$.
\end{enumerate}

We proceed by induction on $i$. For $i=1$, since $\varphi_1 = \xi$, it is clear that all the necessary properties are satisfied. Furthermore, since $\delta_1=\varepsilon$, it is immediate that the language $\{\varepsilon\}\mc{L}(A_1\dotsb A_k)$ is a subset of $\mc{L}(A_1\dotsb A_k)$. Finally, since $\varepsilon$ is essentially the trivial test $\true$, we have that $s_1\onto{\delta_1}s_1$.

Next, we assume that the necessary properties hold for $i$ and we show that they also hold for $i+1$.
We distinguish cases on $\varphi_i= \neg\forall B_1.\dotsb\forall B_r.\chi$. First, we assume that $B_1$ is an atomic process $b$.
By the definition of a structure path, we have that $\varphi_{i+1}=\neg\forall B_2.\dotsb\forall B_r.\chi$ with $r\geq0$. Thus, by the induction hypothesis for $\varphi_i$, we have to admit that all the necessary properties also hold for $\varphi_{i+1}$.
Notice that $\delta_{i+1} = \delta_ib$. Since, by the induction hypothesis, for each process of the language $\{\delta_i\} \mc{L}(bB_2\dotsb B_r)$, there is an interchangeable process in $\mc{L}(A_1\dotsb A_k)$, it is clear that the same also holds for the processes of $\{\delta_ib\} \mc{L}(B_2\dotsb B_r)$ (as by Definition \ref{lang} $\{\delta_ib\} \mc{L}(B_2\dotsb B_r) = \{\delta_i\} \mc{L}(bB_2\dotsb B_r)$). 
Since by the induction hypothesis $s_1 \onto{\delta_i} s_i$ and by the definition of a structure path, $s_i\leads{b}s_{i+1}$, we can conclude that $s_i\onto{b}s_{i+1}$ (by the definition of $\mc{M(H)}$), as well as that $s_1\onto{\delta_ib}s_{i+1}$.
We can argue for the case $B_1=\Omega$ in a similar way. Notice that, by the definition of $\mc{M(H)}$, the relation $\leads{\Omega}$ is a subset of $\onto{\Omega}$.

Now, we assume that $\varphi_i= \neg\forall B_1.\dotsb\forall B_r.\chi$ is an $\alpha$/$\beta$ formula. By the definition of a structure path, we have that $\varphi_i\vtrd\varphi_{i+1}$ and $s_{i+1}=s_i$. Moreover, there is a reduction set $R$ of $\varphi_i$ such that $\varphi_{i+1}\in R$ and $R\subseteq L(s_{i+1})$. 
According to Definitions \ref{def: reduction sets} and \ref{def: decomposition}, since $\varphi_i$ is an eventuality, there is a series of applications of the decomposition rules of Definition \ref{def: decomposition} which leads to the definition of $R$. The decomposition rules correspond to specific $\alpha$ and $\beta$ rules of Table \ref{tab: alpha/beta}. We distinguish cases on which decomposition rule is applied each time and we show that the necessary properties hold.

First, we assume that $B_1=\varphi\Ra\vartheta$ and as a result, the next formula that occurs is either the formula $\psi=\neg\forall(\neg\varphi).\forall\Omega. \forall B_2.\dotsb\allowbreak\forall B_r.\chi$, or the formula $\psi=\neg\forall\Omega.\forall\vartheta.\forall B_2.\dotsb\forall B_r.\chi$. In both cases, by the induction hypothesis for $\varphi_i$, the necessary properties also hold for $\psi$.
Now, since $\mc{L}(\varphi\Ra\vartheta) = \{\neg\varphi\Omega,\Omega\vartheta\}$, the languages $\mc{L}(\neg\varphi\Omega)$ and $\mc{L}(\Omega\vartheta)$ are subsets of $\mc{L}(\varphi\Ra\vartheta)$. Thus, the languages $\{\delta_i\} \mc{L}(\neg\varphi\Omega B_2\dotsb B_r)$ and  $\{\delta_i\} \mc{L}(\Omega\vartheta B_2\!\dotsb B_r)$ are subsets of $\{\delta_i\} \mc{L}(B_1B_2\!\dotsb B_r)$. So, by the induction hypothesis for $\{\delta_i\} \mc{L}(B_1B_2\!\dotsb B_r)$, the necessary property is also satisfied for $\{\delta_i\} \mc{L}(\neg\varphi\Omega B_2\dotsb B_r)$ and for $\{\delta_i\} \mc{L}(\Omega\vartheta B_2\dotsb B_r)$.
Finally, by the induction hypothesis, we know that $s_1\onto{\delta}s_i$. Since $s_{i+1}=s_i$, we have that $s_1\onto{\delta}s_{i+1}$.

In a similar way, we can examine the rest cases. We leave the details to the interested reader. So, no matter which decomposition rule is applied for the definition of the reduction set of $\varphi_i$ the necessary properties are satisfied.

Recall that we need to show that $\xi=\neg\forall A_1.\dotsb\forall A_k.\chi$ is satisfied by the state $s=s_1$. By the properties that we have shown earlier, we are interested in the case that $i=n$. Since $\varphi_n=\neg\chi$, there is a process $\delta_n$ such that $s\onto{\delta_n}s'$ (recall that $s'=s_n$) and there is an interchangeable process $B$ in $\mc{L}(A_1\dotsb A_k)$ (in the sense of Definition \ref{lang}). By Lemma \ref{lem: lang properties}, we know that $\onto{\delta_n}\,=\onto{B}$ and thus, $s\onto{B}s'$. Again by Lemma \ref{lem: lang properties}, we know that  $\lonto{A_1\dotsb A_k}$ is the union of all the relations which correspond to the processes that belong to the language $\mc{L}(A_1\dotsb A_k)$. Thus, since $B$ is in $\mc{L}(A_1\dotsb A_k)$ and $s\onto{B}s'$, we can conclude that $s\lonto{A_1\dotsb A_k}s'$. 
Moreover, by the structure path $\stp_0 (s,\xi, s',\neg\chi)$, we know that $\neg\chi$ is in $L(s')$. Therefore, by the \ref{IH} ($|\neg\chi| < |\neg\forall A_1.\dotsb\forall A_k.\chi|$), we know that $s'\models^\mc{M(H)} \!\neg\chi$. Finally, by semantics, we have that $s \models^\mc{M(H)} \neg\forall A_1.\dotsb\forall A_k.\chi$.

Now, we consider the case that $\xi=\forall a. \vartheta$.
By \ref{box} of Definition \ref{def: Hintikka structure}, we know that for any state $s'\in S$, if $s\leads{a}s'$ then $\vartheta\in L(s')$.
Now, by the definition of $\mc{M(H)}$ (i.e. we know that $\onto{a}\,=\leads{a}$) and by the \ref{IH} (as $|\vartheta|<|\forall a.\vartheta|$), we have that $\forall s'\in S(s\onto{a}s' \Ra s'\models^{\mc{M(H)}}\!\vartheta)$. Consequently, by semantics, we have that $s\models^{\mc{M(H)}}\forall a.\vartheta$.

In the case that $\xi=\forall \Omega. \vartheta$, by \ref{Omega} of Definition \ref{def: Hintikka structure}, we know that $\vartheta$ is a formula of the form $\forall\Omega^*.\varphi$ and that it is in $L(s)$. By Lemma \ref{lem: Omega}, it is immediate that $\onto{\Omega}\onto{\Omega^*}\,=\onto{\Omega^*}$. So, $\forall\Omega.\forall\Omega^*.\varphi$ is satisfiable iff $\forall\Omega^*.\varphi$ is satisfiable. Recall that we have already argued for the case $\forall A^*.\vartheta$ which includes the case $A=\Omega$.

Now, we examine the case that $\xi=\neg\forall A.\vartheta$ such that $A\in\atp_\Omega$. 
By \ref{diamond}, we know that there is a state $s'\in S$ such that $s\leads{A}s'$ and $\neg\vartheta\in L(s')$. 
By Definition \ref{def: Hintikka model}, we know that $\leads{\Omega}\, \subseteq\onto{\Omega}$ and $\forall a\in\atp$, $\onto{a}\,=\leads{a}$. Thus, we have that $s\onto{A}s'$. Since $|\neg\vartheta|<|\neg\forall A.\vartheta|$, by the \ref{IH}, we have that $s'\models^{\mc{M(H)}}\!\!\neg\vartheta$. 
So, by semantics, we conclude that $s\models^{\mc{M(H)}}\neg\forall A.\vartheta$.

If $\xi$ is a formula of the form $\forall(\varphi\Ra\psi).\vartheta$,
according to Definition \ref{def: reduction sets}, its reduction sets are the sets $\{ \varphi, \forall\Omega^*.\forall\psi.\vartheta \}$ and $\{ \forall\Omega^*.\vartheta \}$. Hence, according to \ref{alpha/beta} of Definition \ref{def: Hintikka structure}, either $\{ \varphi, \forall\Omega^*.\forall\psi.\vartheta \}$ is a subset of $L(s)$, or $\{ \forall\Omega^*.\vartheta \}$ is a subset of $L(s)$.
In the first case, since $|\varphi| < |\forall(\varphi\Ra\psi).\vartheta|$, by the induction hypothesis, $s$ satisfies $\varphi$. Moreover, notice that $|\forall\psi.\vartheta| < |\forall(\varphi\Ra\psi).\vartheta|$ and that we have already argued for the more general case $\forall A^*.\vartheta$. Therefore, we can conclude that $s$ satisfies the set $\{ \varphi, \forall\Omega^*.\forall\psi.\vartheta \}$ and by Proposition \ref{prop: equiv alpha beta}, it also satisfies $\forall(\varphi\Ra\psi).\vartheta$.
In the second case, since we have already argued for the more general case $\forall A^*.\vartheta$, we can conclude that $s$ satisfies $\forall\Omega^*.\vartheta$ and again, by Proposition \ref{prop: equiv alpha beta}, it also satisfies $\forall(\varphi\Ra\psi).\vartheta$.

We assume that $\xi$ is the $\beta$ formula $\neg\forall(\varphi\Ra\psi).\vartheta$ and that it is not an eventuality. Thus, $\neg\vartheta$ cannot be an eventuality as well. According to Definition \ref{def: reduction sets}, the reduction sets of $\xi$ are the singletons $\{ \neg\forall(\neg\varphi).\forall\Omega.\vartheta \}$ and $\{ \neg\forall\Omega.\forall\psi.\vartheta \}$. Hence, according to \ref{alpha/beta} of Definition \ref{def: Hintikka structure}, either $\{ \neg\forall(\neg\varphi).\forall\Omega.\vartheta \}$ is a subset of $L(s)$, or $\{ \neg\forall\Omega.\forall\psi.\vartheta \}$ is a subset of $L(s)$.

In the first case, by \ref{alpha/beta}, we know that $\neg\varphi\in L(s)$ and $\neg\forall\Omega.\vartheta \in L(s)$. Now, by \ref{diamond}, we know that there is a state $s'\in S$ such that $s\leads{\Omega}s'$ and $\neg\vartheta\in L(s')$.
As $|\neg\varphi|, |\neg\vartheta|<|\xi|$, by the \ref{IH}, we know that $s\models^{\mc{M(H)}} \!\!\neg\varphi$ and $s' \models^{\mc{M(H)}} \!\!\neg\vartheta$. So, as by Definition \ref{def: Hintikka model}, $\leads{\Omega}\, \subseteq\onto{\Omega}$, by Corollary \ref{cor: =>2}, we conclude that $s\models^{\mc{M(H)}}\!\!\neg\forall(\varphi\Ra\psi).\vartheta$.
We can argue similarly for the second case.

We assume that $\xi$ is an $\alpha$ formula of one of the following forms, but not an eventuality: $\neg\neg\vartheta$, $\neg\forall\psi.\vartheta$, $\forall A_1A_2.\vartheta$, $\neg\forall A_1A_2.\vartheta$ and $\forall(A_1\!+\!A_2).\vartheta$. Since $\alpha\in L(s)$, by Definition \ref{def: Hintikka structure} (condition \ref{alpha/beta}), we have that $\{\alpha_1,\alpha_2\} \subseteq L(s)$. Now, by the induction hypothesis, we have that $s\models^{\mc{M(H)}}\alpha_1$ and $s\models^{\mc{M(H)}}\alpha_2$ and as a result, by Proposition \ref{prop: equiv alpha beta}, $s\models^{\mc{M(H)}}\alpha$.
Now, we assume that $\xi$ is a $\beta$ formula of one of the following forms, but not an eventuality: $\forall \psi.\vartheta$ and $\neg\forall(A_1\!+\!A_2).\vartheta$. Since $\beta\in L(s)$, by Definition \ref{def: Hintikka structure} (condition \ref{alpha/beta}), we have that $\beta_1\in L(s)$ or $\beta_2\in L(s)$. Now, by the induction hypothesis, we have that $s\models^{\mc{M(H)}}\beta_1$ or $s\models^{\mc{M(H)}}\beta_2$ and as a result, by Proposition \ref{prop: equiv alpha beta}, $s\models^{\mc{M(H)}}\beta$.

Now, we examine all the forms of a capability statement. First, observe that, by semantics, any statement $\cpi\varphi$ is logically valid. By \ref{cap1} of Definition \ref{def: Hintikka structure}, $\xi$ cannot be a statement $\neg\cpi\varphi$. In the case that $\xi = \cpi A$ with $A\in\wt{\Sigma}$, since $\cpi A$ is in $L(s)$, by the definition of $\imath^\mc{M(H)}(s)$,  we know that $\onto{A} \,\subseteq \imath^\mc{M(H)}(s)$. Therefore, it is immediate that $s \models^\mc{M(H)} \cpi A$.

The most interesting case is when $\xi = \neg\cpi A \in L(s)$ with $A\in\wt{\Sigma}$. By semantics, in order for $s$ to satisfy $\neg\cpi A$, the relation $\onto{A}$ must not be a subset of $\imath^\mc{M(H)}(s)$. 
So, we need to show that $\onto{A} \,\neq \emptyset$ (obviously, the empty set is a subset of $\imath^\mc{M(H)}(s)$) and by the definition of $\imath^\mc{M(H)}(s)$,  that $\onto{A}$ is not a subset of  $\bigcup\{ \onto{A'} \mid \cpi A'\in L(s) \text{ and } A'\in\wt{\Sigma} \}$. In other words, we need to show that there is a transition $s_1\onto{A}s_2$ such that there is no positive capability statement $\cpi A'$ with $A'\in\wt{\Sigma}$ in $L(s)$ such that $s_1\onto{A'}s_2$. Observe that by condition \ref{contr}, $\cpi A$ cannot be in $L(s)$.

Let $\Gamma = \{ \cpi(\varphi_1\Ra\psi_1),\dotsc,\cpi(\varphi_k\Ra\psi_k) \}$ be the set of all the statements of the form $\cpi(\varphi\Ra\psi)$ in $L(s)$. So, what we actually need to show is that $\neg\cpi A$ is satisfiable in conjunction with the set $\Gamma$.
By condition \ref{cap2}, we know that there is a state $s_1\in S$ such that the set $\{ \varphi_1,\dotsc,\varphi_k, \neg\forall A.\forall(\neg\psi_1).\dotsb\forall(\neg\psi_k).\false \}$ is a subset of $L(s_1)$. Observe that by the induction hypothesis, $s_1$ satisfies the preconditions $\varphi_1,\dotsc,\varphi_k$.

Now, we distinguish cases on $A\in\wt{\Sigma}$. First, we assume that $A$ is an atomic process. In this case, by condition \ref{diamond}, we know that there is a state $s_2\in S$ such that $s_1\leads{A}s_2$ and $\neg\forall(\neg\psi_1).\dotsb\forall(\neg\psi_k).\false \in L(s_2)$. Now, by condition \ref{alpha/beta}, we can conclude that $\{ \neg\psi_1,\dotsc,\neg\psi_k,\neg\false \}$ is a subset of $L(s_2)$. By the induction hypothesis, we know that $s_2$ does not satisfy the effects $\psi_1,\dotsc,\psi_k$. 
Since $A$ is an atomic process, by Definition \ref{def: Hintikka model}, we know that $\onto{A}\,=\,\leads{A}$. Since all the preconditions $\varphi_1,\dotsc,\varphi_k$ hold at $s_1$, but at the same time the effects $\psi_1,\dotsc,\psi_k$ do not hold at $s_2$, by Corollary \ref{cor: =>2}, we can conclude that for $i=1,\dotsc,k$, the transition $s_1\onto{A}s_2$ is not of type $\varphi_i\Ra\psi_i$ (i.e. it does not hold that $s_1\lonto{\varphi_i\Ra\psi_i}s_2$).

Now, assume that there is an atomic process $a$ different from $A$ such that $\cpi a\in L(s)$ and $s_1\onto{a}s_2$. By the definition of $\mc{M(H)}$, we can conclude that $s_1\leads{a}s_2$. So, there are the transitions $s_1\leads{a}s_2$ and $s_1\leads{A}s_2$ such that $A\neq a$. However, by \ref{transitions}, this cannot be true.
Therefore, there is no statement $\cpi A'$ with $A'\in\wt{\Sigma}$ in $L(s)$ such that $s_1\onto{A'}s_2$.

In the case that $A$ is a process $\varphi\Ra\psi$, by condition \ref{alpha/beta}, either  $\neg\forall(\neg\varphi).\forall\Omega. \forall(\neg\psi_1).\dotsb\forall(\neg\psi_k).\false \in L(s_1)$ or $\neg\forall\Omega.\forall\psi. \forall(\neg\psi_1).\dotsb\forall(\neg\psi_k).\false \in L(s_1)$.
So, in the first case, by \ref{alpha/beta}, we have that $\neg\varphi$ and $\neg\forall\Omega. \forall(\neg\psi_1).\dotsb\forall(\neg\psi_k).\false$ are in $L(s_1)$. By the induction hypothesis, we have that $s_1$ satisfies $\neg\varphi$. By \ref{diamond}, we know that there is a state $s_2\in S$ such that $s_1\leads{\Omega}s_2$ and $\neg\forall(\neg\psi_1).\dotsb\forall(\neg\psi_k).\false$ is in $L(s_2)$. By condition \ref{alpha/beta} and the induction hypothesis, we can conclude that $s_2$ does not satisfy the effects $\psi_1,\dotsc,\psi_k$ whose negations are in $L(s_2)$.
In the second case, similarly to the first, there is a state $s_2\in S$ such that $s_1\leads{\Omega}s_2$ and $s_2$ satisfies $\psi$ which is in $L(s_2)$, but it does not satisfy the effects $\psi_1,\dotsc,\psi_k$ whose negations are in $L(s_2)$.
In both cases, by the definition of $\mc{M(H)}$, we have that $s_1\onto{\Omega}s_2$ (notice that $\leads{\Omega}$ is a subset of $\onto{\Omega}$) and $s_1\lonto{\varphi\Ra\psi}s_2$, as $s_1\models^\mc{M(H)} \!\!\neg\varphi$ or $s_2\models^\mc{M(H)} \!\psi$ (see Corollary \ref{cor: =>2}). Moreover, since all the preconditions $\varphi_1,\dotsc,\varphi_k$ hold at $s_1$, but at the same time the effects $\psi_1,\dotsc,\psi_k$ do not hold at $s_2$, we can conclude that for $i=1,\dotsc,k$, the transition $s_1\onto{A}s_2$ is not of type $\varphi_i\Ra\psi_i$.
Similarly to the case of an atomic process $A$, we can assume that there is an atomic process $a$ such that $\cpi a\in L(s)$ and $s_1\onto{a}s_2$ and reach a contradiction by using \ref{transitions}. Consequently, we have shown that there is no statement $\cpi A'$ with $A'\in\wt{\Sigma}$ in $L(s)$ such that $s_1\onto{A'}s_2$.

Finally, the remaining cases for the capabilities statements can be shown similarly to previous cases of this proof.
\end{proof}

\section{The Tableau-based Satisfiability Algorithm} 		\label{sec: algorithm}

Here, we present the algorithm which decides the satisfiability of \tpdl formulas. 
More specifically, in Subsection \ref{subsec: tableau calculus}, we present the necessary tableau rules that the algorithm uses to expand the nodes of a tableau, in Subsection \ref{subsec: algorithm}, we present how our algorithm constructs a tableau and in Subsection \ref{subsec: examples}, we give examples of tableaux.

\subsection{A Tableau Calculus for Type PDL}	\label{subsec: tableau calculus}

In order to give a tableau-based satisfiability algorithm, we need to define the appropriate tableau calculus, always according to the semantics of the system (see \cite{Gore-Handbook} for a review on the tableaux calculi for modal logics). We present the tableau rules as transformation functions of a set of formulas to other sets of formulas. In the satisfiability algorithm, a tableau node is not just a set of formulas. Nevertheless, we choose to present the rules here independently of the way that the other attributes of a tableau node are defined, in order to abstract at this point from the corresponding details.

A \textit{tableau calculus} is defined as a finite collection of tableau rules.
A \textit{tableau rule} $\rho$ is a function $\rho(M) \subseteq 2^{\mc{L}_s}$ which accepts as input a set $M \subseteq \mc{L}_s$ of formulas, where one or more formulas are distinguished as the \textit{principal formulas}, and delivers a finite but nonempty set $\{M_1, \dotsc, M_k\}$ of sets of formulas. 
The set $M$ is usually called  the premise, or the numerator of the rule, while the sets $M_1, \dotsc, M_k$ are called  the conclusions, or the denominators of the rule.
The tableau rules are usually separated in two types:
\begin{enumerate*}[i)]
	\item or-rule:	$(\rho_\lor) \;\;\;\dfrac{M}{M_1 \mid M_2 \mid \dotsb \mid M_k}$	
	which is \textit{sound} iff if its premise is satisfiable then at least one of its conclusions is also satisfiable,
	
	\item and-rule:
	$(\rho_\land) \;\;\;\dfrac{M}{M_1 \;\&\;  M_2 \;\&\;  \dotsb \;\&\;  M_k}$
	which is \textit{sound} iff if its premise is satisfiable then all of its conclusions are also satisfiable.
\end{enumerate*}

\paragraph{The Static Rule.}
Due to Definition \ref{def: reduction sets}, all the $\alpha$ and $\beta$ rules of Table \ref{tab: alpha/beta} can be summarized in the following single rule:
\begin{equation}	\label{static rule} 
(static)\;
\dfrac
{\varphi}
{\mc{R}^\varphi_1 \mid \dotsb \mid \mc{R}^\varphi_k}
\quad
\text{where $\varphi$ is an $\alpha$/$\beta$ formula and $k\geq1$ is its reduction degree}
\end{equation}

\begin{prop} \label{prop: static sound}
	The unique static rule is sound.
\end{prop}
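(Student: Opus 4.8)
The plan is to observe that the static rule is an or-rule, so by the definition of soundness for or-rules it suffices to establish the implication: if the premise $\varphi$ is satisfiable, then at least one of the conclusions $\mc{R}^\varphi_1,\dotsc,\mc{R}^\varphi_k$ is satisfiable. I would therefore fix an $\mc{L}$-model $\mc{M}$ and a state $w$ with $w\models^\mc{M}\varphi$ and exhibit a single reduction set that is satisfied at this very state $w$; since satisfaction at one state already witnesses satisfiability, this settles the claim. Following Definition \ref{def: reduction sets}, the proof splits according to whether $\varphi$ is a non-eventuality or an eventuality.

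For the non-eventuality case the conclusion is immediate from Proposition \ref{prop: equiv alpha beta}. If $\varphi$ is an $\alpha$ non-eventuality, its unique reduction set is $\{\alpha_1,\alpha_2\}$; since $\alpha\leftrightarrow\alpha_1\land\alpha_2$ is valid and $w\models^\mc{M}\alpha$, the state $w$ satisfies both $\alpha_1$ and $\alpha_2$, i.e. the single conclusion. If $\varphi$ is a $\beta$ non-eventuality, its reduction sets are the singletons $\{\beta_1\}$ and $\{\beta_2\}$; since $\beta\leftrightarrow\beta_1\lor\beta_2$ is valid, $w$ satisfies $\beta_1$ or $\beta_2$, so at least one conclusion is satisfied at $w$.

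The eventuality case is where the real content lies, and I would reduce it to Lemma \ref{lem: fpaths}. First I would dispose of the trivial subcase in which $\varphi$ has the form $\neg\forall B.\chi$ with $B\in\atp_\Omega$: here no decomposition rule of Definition \ref{def: decomposition} has a numerator matching $\varphi$, so $\fd(\varphi)=\{(\emptyset,\varphi)\}$ and the unique reduction set is $\{\varphi\}$ itself, which is trivially satisfied at $w$. In the remaining subcase $\varphi$ is not of that shape, so by Lemma \ref{lem: fpaths} there is a fulfilling path $\fpath(\mc{M},w,\varphi,\vtrd)$. Inspecting its first step from $(w,\varphi)$ and using that $\varphi$ is not of the form $\neg\forall B.\chi$ with $B\in\atp_\Omega$, the transition alternative in Definition \ref{def: fpath} is excluded, so we are in the case $\varphi\vtrd\psi_2$ with the two states equal to $w$. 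The final clause of Definition \ref{def: fpath} for $\vtrd$ then supplies a reduction set $\mc{R}$ of $\varphi$ with $\psi_2\in\mc{R}$ and $w\models^\mc{M}\mc{R}$. By Definition \ref{def: reduction sets} this $\mc{R}$ is precisely one of the conclusions $\mc{R}^\varphi_i$ listed by the static rule, and it is satisfied at $w$, as required.

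The hard part is concentrated entirely in the eventuality case, and more precisely in guaranteeing that the decomposition reaches a genuine element of $\fd(\varphi)$ — a leaf that is not a pruned re-occurrence of a principal formula — along a branch that remains satisfiable at $w$. Invoking Lemma \ref{lem: fpaths} is exactly what lets me avoid re-running the termination and non-re-occurrence analysis, since the fulfilling path already encodes a satisfiability-respecting traversal of the decomposition whose first non-transition step lands on a satisfied reduction set. A more self-contained alternative would be to propagate a triple $(\mc{P},\mc{T},\vartheta)$ satisfied by $w$ through the decomposition rules of Definition \ref{def: decomposition} — each such rule mirrors an $\alpha$- or $\beta$-rule, so Proposition \ref{prop: equiv alpha beta} carries satisfiability of the numerator into some denominator — but one would then still have to rule out that the satisfiable branch is the one looping back to a formula already in $\mc{P}$, which is exactly the subtlety that Lemma \ref{lem: fpaths} has already packaged.
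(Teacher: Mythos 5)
Your proof is correct and follows essentially the same route as the paper: the non-eventuality case is dispatched by Proposition \ref{prop: equiv alpha beta}, and the eventuality case extracts a satisfied reduction set from the first step of the fulfilling path $\fpath(\mc{M},w,\varphi,\vtrd)$ supplied by Lemma \ref{lem: fpaths}, exactly as the paper does. Your separate ``trivial subcase'' for premises of the form $\neg\forall B.\chi$ with $B\in\atp_\Omega$ is harmless but vacuous, since under the paper's conventions such formulas are not $\alpha$/$\beta$ formulas and hence never occur as premises of the static rule; the paper's phrase ``since $\varphi$ is an $\alpha$/$\beta$ formula'' plays precisely the role of your exclusion of the transition alternative.
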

\begin{proof}
	We assume that the $\alpha$/$\beta$ premise $\varphi$ is satisfied by a state $w$ of an $\mc{L}$-model $\mc{M}$ and we distinguish cases on whether $\varphi$ is an eventuality.
	If $\varphi$ is a non-eventuality, then according to Definition \ref{def: reduction sets} the static rule is nothing more than one of the $\alpha$/$\beta$ rules of Table \ref{tab: alpha/beta}. Thus, by Proposition \ref{prop: equiv alpha beta}, at least one of the reduction sets of $\varphi$ is also satisfiable.
	Now, in the case that $\varphi$ is an eventuality, since $\varphi$ is satisfied by $w$ of $\mc{M}$, by Lemma \ref{lem: fpaths}, we know that there is a fulfilling path $\fpath(\mc{M},w,\varphi,\vtrd)$. By its properties, since $\varphi$ is an $\alpha$/$\beta$ formula, the sequence $\fpath(\mc{M},w,\varphi,\vtrd)$ begins with two pairs $(w,\varphi),(w,\psi)$ such that $\varphi\vtrd\psi$ and there is a reduction set $\mc{R}$ of $\varphi$ such that $\psi\in\mc{R}$ and $w\models^\mc{M}\mc{R}$.
\end{proof}

\paragraph{The Transitional Rule.}
For a set $\Gamma$ of formulas, let $\Gamma_\diamondsuit$ consist of all the formulas in $\Gamma$ of the form $\neg\forall A.\chi$ such that $A$ is either an atomic program term $a$, or the program term $\Omega$.
If $\Gamma_\diamondsuit$ is the set $\{ \neg\forall A_1.\chi_1, \dotsc, \neg\forall A_n.\chi_n \}$, then for each $\neg\forall A_i.\chi_i$ in $\Gamma_\diamondsuit$ where $1\leq i\leq n$, define a set $\Delta_i$ of formulas by distinguishing cases as follows:
\begin{align}
\text{if } A_i \text{ is some } a\in\atp, \text{ then } 
&\Delta_i = \{\neg\chi_i\} \cup \{\psi \mid \forall a.\psi\in\Gamma\}  \cup  \{\psi \mid \forall\Omega.\psi\in\Gamma\}\\
\text{if } A_i = \Omega, \text{ then } 
&\Delta_i = \{\neg\chi_i\} \cup \{\psi \mid \forall\Omega.\psi\in\Gamma\}
\end{align}
The transitional rule can then be stated as an and-rule:
\begin{equation}		\label{trans}
(trans)\;\dfrac{\Gamma}{\Delta_1 \;\&\;  \dotsb \;\&\; \Delta_n}
\end{equation}
If $\Gamma_\diamondsuit = \emptyset$, then the transitional rule cannot be applied.

\begin{prop}	\label{prop: trans sound}
	The $(trans)$-rule is sound.
\end{prop}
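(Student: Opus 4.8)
The plan is to unfold the soundness condition for an and-rule: assuming the premise $\Gamma$ is satisfied by some state $w$ of an $\mc{L}$-model $\mc{M}$, I must exhibit, for each $i=1,\dotsc,n$, a state satisfying the entire conclusion set $\Delta_i$. The natural candidate is the witness state forced to exist by the diamond formula $\neg\forall A_i.\chi_i\in\Gamma_\diamondsuit$ itself, so the whole argument is a single ``transfer across one transition'' computation, carried out uniformly in $i$.

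First I would fix $i$ and appeal to the semantics of $\forall$ (Table \ref{tab: int}): since $w\models^{\mc{M}}\neg\forall A_i.\chi_i$, there is a state $w'$ with $w\onto{A_i}w'$ and $w'\models^{\mc{M}}\neg\chi_i$. This immediately accounts for $\neg\chi_i$, the first element of $\Delta_i$, in both the atomic case $A_i=a$ and the case $A_i=\Omega$. I would then transfer the relevant box formulas of $\Gamma$ along this edge: for $\forall a.\psi\in\Gamma$ with $A_i=a$, the semantics of $\forall a$ together with $w\onto{a}w'$ give $w'\models^{\mc{M}}\psi$, and similarly each $\forall\Omega.\psi\in\Gamma$ is transferred directly whenever $w\onto{\Omega}w'$.

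The subtle point — and the step I expect to be the main obstacle — is the clause $\{\psi\mid\forall\Omega.\psi\in\Gamma\}$ that appears in $\Delta_i$ even when the transition actually used is an atomic edge $\onto{a}$ rather than $\onto{\Omega}$. To discharge these universal-$\Omega$ obligations through an atomic edge I would invoke Lemma \ref{lem: Omega}(1): since $\onto{\Omega}$ is the universal definable relation, $a\vmodels\Omega$, i.e.\ $\onto{a}\subseteq\onto{\Omega}$, so $w\onto{a}w'$ entails $w\onto{\Omega}w'$ and hence $w'\models^{\mc{M}}\psi$ for every $\forall\Omega.\psi\in\Gamma$. The case $A_i=\Omega$ is the easier one, as there $\Delta_i$ contains only $\neg\chi_i$ together with the $\forall\Omega.\psi$ obligations, all justified directly by $w\onto{\Omega}w'$.

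Collecting these observations shows $w'\models^{\mc{M}}\Delta_i$, so $\Delta_i$ is satisfiable; since $i$ was arbitrary, every conclusion is satisfiable, which is exactly the soundness condition for an and-rule. I do not anticipate any genuine difficulty beyond correctly routing the $\forall\Omega$ formulas through atomic transitions via the universality of $\onto{\Omega}$, and beyond noting that the same witness $w'$ works simultaneously for both the $\neg\chi_i$ component and the transferred box formulas.
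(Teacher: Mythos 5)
Your proof is correct and follows essentially the same route as the paper's: obtain the witness state $w'$ from the diamond formula $\neg\forall A_i.\chi_i$, transfer the $\forall a.\psi$ and $\forall\Omega.\psi$ obligations across that single transition, and handle the atomic case by invoking Lemma \ref{lem: Omega} ($a\vmodels\Omega$, i.e.\ $\onto{a}\,\subseteq\,\onto{\Omega}$) so that the $\forall\Omega$ formulas also propagate along atomic edges. The only difference is presentational: you single out this last step as the ``subtle point,'' which is exactly the one place where the paper's proof also does something beyond plain modal reasoning.
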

\begin{proof}
	Suppose that the set $\Gamma$ is satisfied by a state $w$ of an $\mc{L}$-model $\mc{M}$. Moreover, let $\neg\forall A.\chi$ be a formula in $\Gamma$ such that $A$ is either an atomic program $a$, or the program term $\Omega$. By semantics, since $w\models^\mc{M}\neg\forall A.\chi$, we can conclude that there is a state $w'$ of $\mc{M}$, such that $w\onto{A}w'$ and $w'\models^\mc{M} \neg\chi$. Furthermore, for each formula $\forall A.\psi$ of $\Gamma$, since $w\models^\mc{M}\forall A.\psi$ and $w\onto{A}w'$, it is immediate that $w'\models^\mc{M} \psi$.
	More specifically, in the case that $A=a$, for each formula $\forall\Omega.\psi$ of $\Gamma$, since $w\models^\mc{M}\forall\Omega.\psi$ and $w\onto{a}w'$ and $a\vmodels\Omega$ (recall Lemma \ref{lem: Omega}), we have again that $w'\models^\mc{M} \psi$.
\end{proof}

\paragraph{The Capability Rule.}

The $\alpha$/$\beta$ rules decompose the processes of the capabilities statements until processes of $\wt{\Sigma}$ are defined. This is not enough. For example, in the case of a negated capability statement $\neg\cpi A$ with $A\in\wt{\Sigma}$ along with a non-negated statement $\cpi B$ with $B\in \wt{\Sigma}$, the algorithm should make sure that the interpretation of $A$ is not a subrelation of that of $B$, as otherwise the agent $\imath$ should also have the capability to execute $A$ besides $B$. 

For a set $\Gamma$ of formulas, let $\Gamma_{N\mc{C}}$ consist of all the formulas in $\Gamma$ of the form $\neg\cpi A$ such that $A$ is in $\wt{\Sigma}$ and
for some agent $\imath\in I$, let $\Gamma_{\mc{C}_\imath\Ra}$ be the set of all the non-negated capabilities statements in $\Gamma$ of the form $\cpi(\varphi\Ra\psi)$.
Now, let $\Gamma_{N\mc{C}}$ be the set $\{  \neg\cp_{\imath_1} A_1, \dotsc, \neg\cp_{\imath_n} A_n \}$. So, for each formula $\neg\cp_{\imath_i} A_i$ where $1\leq i\leq n$, if $\Gamma_{\mc{C}_{\imath_i}\Ra}$ is the set $\{ \cp_{\imath_i}(\varphi_1\Ra\psi_1), \dotsc, \cp_{\imath_i}(\varphi_k\Ra\psi_k) \}$, then define a set $\Delta_i$ of formulas as follows:
\begin{equation}	\label{eq: cap Delta}
\Delta_i = \{ \varphi_1,\dotsc,\varphi_k, \neg\forall A_i.\forall(\neg\psi_1).\dotsb\forall(\neg\psi_k).\false \}
\end{equation}
In the case that $\Gamma_{\mc{C}_{\imath_i}\Ra}$ is empty, $k=0$. Thus, we use $\false$ to have a syntactically correct formula, (i.e. $\neg\forall A_i.\false$). Of course, $\neg\false=\true$ holds in any state of any $\mc{L}$-model.
The capability rule is stated as an and-rule:
\begin{equation}		\label{eq: capability rule}
(cap)\;\dfrac{\Gamma}{\Delta_1 \;\&\;  \dotsb \;\&\; \Delta_n}
\end{equation}
If $\Gamma_{N\mc{C}}=\emptyset$, then the capability rule cannot be applied.

\begin{prop}	\label{prop: cap sound}
	The $(cap)$-rule is sound.
\end{prop}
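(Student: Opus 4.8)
The plan is to exploit that $(cap)$ is an and-rule: assuming the premise $\Gamma$ is satisfied by a state $w$ of some $\mc{L}$-model $\mc{M}$, it suffices to show that each conclusion $\Delta_i$ is separately satisfiable. So I would fix an arbitrary index $i$, the negated statement $\neg\cp_{\imath_i}A_i\in\Gamma_{N\mc{C}}$, and the associated set $\Gamma_{\mc{C}_{\imath_i}\Ra}=\{\cp_{\imath_i}(\varphi_1\Ra\psi_1),\dotsc,\cp_{\imath_i}(\varphi_k\Ra\psi_k)\}$ that determines $\Delta_i$ via \eqref{eq: cap Delta}, and exhibit a single state satisfying $\Delta_i$.

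First I would produce a witnessing pair of states. Since $w\models^\mc{M}\neg\cp_{\imath_i}A_i$, the interpretation of capabilities in Table \ref{tab: int} yields $\onto{A_i}\not\subseteq\imath_i^\mc{M}(w)$, so there are states $w_1,w_2$ with $w_1\onto{A_i}w_2$ and $(w_1,w_2)\notin\imath_i^\mc{M}(w)$. Because $w$ satisfies each positive statement $\cp_{\imath_i}(\varphi_j\Ra\psi_j)$, Table \ref{tab: int} also gives $\lonto{\varphi_j\Ra\psi_j}\subseteq\imath_i^\mc{M}(w)$ for every $j$; consequently the chosen pair lies outside every one of these relations, i.e. $(w_1,w_2)\notin\lonto{\varphi_j\Ra\psi_j}$.

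The crux is to convert this exclusion into local facts about $w_1$ and $w_2$. Here I would appeal to Corollary \ref{cor: =>2}, specifically the characterisation \eqref{=>2 eq1} of $\lonto{\varphi_j\Ra\psi_j}$. Since $A_i\in\wt{\Sigma}$ is a single symbol and hence lies in $\Sigma^+$, and $w_1\onto{A_i}w_2$, membership of $(w_1,w_2)$ in $\lonto{\varphi_j\Ra\psi_j}$ would follow as soon as $w_1\models^\mc{M}\neg\varphi_j$ or $w_2\models^\mc{M}\psi_j$ held. As the pair is excluded, both must fail, giving $w_1\models^\mc{M}\varphi_j$ and $w_2\models^\mc{M}\neg\psi_j$ for every $j=1,\dotsc,k$. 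I expect this to be the main obstacle, because it is exactly where the form of the precondition-effect relation is used and where one must check $A_i\in\Sigma^+$ so that Corollary \ref{cor: =>2} applies to $A_i$ itself.

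Finally I would verify $w_1\models^\mc{M}\Delta_i$. The preconditions $\varphi_1,\dotsc,\varphi_k$ hold at $w_1$ by the previous step. For the remaining conjunct, I would unfold the nested test boxes using the equivalence $\forall(\neg\psi_j).\chi\equiv\psi_j\lor\chi$ read off from the $\beta$-entry for $\forall\psi.\varphi$ in Table \ref{tab: alpha/beta}, obtaining $\forall(\neg\psi_1).\dotsb\forall(\neg\psi_k).\false\equiv\psi_1\lor\dotsb\lor\psi_k$, whose negation is $\neg\psi_1\land\dotsb\land\neg\psi_k$. Since $w_2\models^\mc{M}\neg\psi_j$ for all $j$, the successor $w_2$ witnesses $w_1\models^\mc{M}\neg\forall A_i.\forall(\neg\psi_1).\dotsb\forall(\neg\psi_k).\false$, so $w_1\models^\mc{M}\Delta_i$. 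The degenerate case $k=0$ is identical: $\Delta_i=\{\neg\forall A_i.\false\}$, and $\onto{A_i}\neq\emptyset$ (from $\onto{A_i}\not\subseteq\imath_i^\mc{M}(w)$) already provides an $A_i$-successor $w_2$ with $w_2\models^\mc{M}\true$. As $i$ was arbitrary, all conclusions are satisfiable and the rule is sound.
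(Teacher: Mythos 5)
Your proof is correct and follows essentially the same route as the paper's: fix a conclusion $\Delta_i$, obtain a transition $w_1\onto{A_i}w_2$ lying outside every $\lonto{\varphi_j\Ra\psi_j}$, apply Corollary \ref{cor: =>2} to conclude $w_1\models^\mc{M}\varphi_j$ and $w_2\models^\mc{M}\neg\psi_j$ for all $j$, and let $w_2$ witness the diamond formula. The only (harmless) variation is that you extract the witness pair directly from $\onto{A_i}\not\subseteq\imath_i^\mc{M}(w)$, whereas the paper argues that $\onto{A_i}$ is not contained in $\bigcup_j\lonto{\varphi_j\Ra\psi_j}$ and invokes Lemma \ref{lem: cap} for $\onto{A_i}\neq\emptyset$; your version subsumes both steps.
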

\begin{proof}
	Suppose that the numerator $\Gamma$ is satisfied by a state $w$ of an $\mc{L}$-model $\mc{M}$. We consider a formula $\neg\cpi A \in \Gamma$ and the set $\Gamma_{\mc{C}_\imath\Ra} = \{ \cpi(\varphi_1\!\Ra\!\psi_1),\dotsc,\cpi(\varphi_k\!\Ra\!\psi_k) \}$. 
	We show that the set $\{ \varphi_1,\dotsc,\varphi_k, \neg\forall A.\forall(\neg\psi_1).\dotsb\allowbreak\forall(\neg\psi_k).\false \}$ is satisfiable.
	So, by semantics, we know that for $i=1,\dotsc,k$, $\lonto{\varphi_i\Ra\psi_i} \,\subseteq\, \imath^\mc{M}(w)$. Moreover, since $A$ is in $\wt{\Sigma}$, it is necessarily true that $\onto{A}$ is not a subset of 
	$\bigcup\limits_{1\leq i\leq k}\lonto{\varphi_i\Ra\psi_i}$,
	as otherwise $w$ would also satisfy $\cpi A$. Additionally, since $w\models^\mc{M}\neg\cpi A$, by Lemma \ref{lem: cap}, we have that $\onto{A}\,\neq\emptyset$.
	Hence, there is a transition $w_1\onto{A}w_2$ which cannot be in the interpretations of the processes $\varphi_1\Ra\psi_1,\dotsc,\varphi_k\Ra\psi_k$. In other words, by Corollary \ref{cor: =>2}, $w_1$ satisfies the preconditions $\varphi_1,\dotsc,\varphi_k$, but $w_2$ does not satisfy the effects $\psi_1,\dotsc,\psi_k$. 
	It follows that $w_1$ satisfies the set $\{ \varphi_1,\dotsc,\varphi_k, \neg\forall A. \forall(\neg\psi_1).\dotsb\forall(\neg\psi_k).\false \}$.
\end{proof}

\subsection{The Algorithm} \label{subsec: algorithm}

\subsubsection{Basic Definitions} \label{subsubsec: basic definitions}

The following technical definitions facilitate the presentation of the algorithm.

\begin{defi}	 		\label{def: label} 
A \textit{label} $\lab$ is a pair $\pos{\Phi, \rd}$, where $\Phi$ is a set of formulas and $\rd$ is a reduction function which assigns to the $\alpha$/$\beta$ eventualities of $\Phi$ one of the values $1$ and $0$
such that for each $\alpha$/$\beta$ eventuality $\varphi\in\Phi$, if $\rd(\varphi)=1$, then at least one of the reduction sets of $\varphi$ is a subset of $\Phi$. 

The \textit{active set} $\actf$ of a label $\lab=\pos{\Phi, \rd}$ is the subset of $\Phi$ which consists of
\begin{enumerate*}[i)]
\item the formulas of the form 
	$p$, $\neg p$, 
	$\forall A.\psi$, $\neg\forall A.\psi$,
	$\cpi B$, $\neg\cpi B$, 
		where $A\in\atp_\Omega$ and $B\in\Sigma$,
\item the $\alpha$/$\beta$ non-eventualities for which none of their reduction sets is a subset of $\Phi$ and
\item the $\alpha$/$\beta$ eventualities $\varphi$ such that $\rd(\varphi)=0$.
\end{enumerate*}

The \textit{reduced set} $\rdf$ of a label $\lab=\pos{\Phi, \rd}$ consists of the $\alpha$/$\beta$ formulas $\varphi$ of $\Phi$ such that:
\begin{enumerate*}[i)]
\item if $\varphi$ is not an eventuality, then at least one of its reduction sets is a subset of $\Phi$, and
\item if $\varphi$ is an eventuality, then $\rd(\varphi)=1$.
\end{enumerate*}

The active and the reduced set of a label form a partition of its set of formulas. A \textit{partial label} is a label such that its active set contains $\alpha$/$\beta$ formulas. A \textit{saturated label} is a label which is not a partial one.
\end{defi}

\begin{defi} 			\label{def: similar labels}
We say that two labels $\lab_1$, $\lab_2$ are \textit{similar} and we write $\lab_1 \approx \lab_2$ iff
their active sets are equal (thus, both of them are either partial labels or saturated ones) and 
if they are partial labels, then their reduced sets are also equal.
\end{defi}

\begin{defi} 			\label{def: fully reduced}
We say that a formula $\varphi$ is \textit{fully reduced} to a formula $\psi$ within a label $\lab$ and we write $\varphi\vtrd_\lab\psi$ iff $\varphi\vtrd\psi$ and there is a reduction set $\mc{R}$ of $\varphi$ such that $\psi\in\mc{R}$ and $\mc{R}$ is a subset of $\Phi_\lab$.
\end{defi}

\begin{defi}	 		\label{def: reach} 
Let $\lab$ be a label  and $\varphi\in\Phi_\lab\cap\ev$ an eventuality.
The set $\reach(\varphi,\lab)$ is defined as the set of formulas $\psi\in\Phi_\lab$ such that there is a sequence of formulas $\psi_1,\dotsc,\psi_k$ of $\Phi_\lab$, with $k\geq 1$, such that 
\begin{enumerate}[i)]
\item $\psi_1=\varphi$ and $\psi_k=\psi$,
\item for $i=1,\dotsc,k-1$, $\psi_i\in\rdf_\lab\cap\ev$ (thus, $\psi_i\notin\actf_\lab$) and $\psi_i$ is fully reduced to $\psi_{i+1}$, i.e. $\psi_i \vtrd_\lab \psi_{i+1}$,
\item $\psi_k$ is either an active eventuality of $\Phi_\lab$, or a non-eventuality formula of $\Phi_\lab$.
\end{enumerate}
\end{defi}

\begin{defi}	\label{def: node}
A \textit{tableau node} or just \textit{node} $v$ of a rooted directed graph $G$ is a triple $\pos{\lab_v, D_v, \sts_v}$ where
$\lab_v$ is a label,
$D_v$ is a set of tableau nodes of $G$, called \textit{dependency set} and its elements dependency nodes, such that they are ancestors of $v$ (i.e. closer to the root) and 
$\sts_v$ is the \textit{status} of $v$ which is assigned one of the values \sat, \tempsat and \unsat.
We call a node \textit{partial} iff its label is partial and \textit{saturated} or \textit{state} iff its label is saturated.
\end{defi}

Intuitively speaking, the status of a node reveals the satisfiability status of the set of formulas of its label. The value \unsat indicates that the label cannot be satisfiable, while the value \sat indicates the opposite. 
Loops (i.e. graph cycles) are formed as usual due to the iteration operator, as the nodes of a tableau are reused. Dependencies are created in the sense that a node becomes dependent on ancestor nodes, as at the moment that a loop is formed, the latter nodes do not have the required information yet so that the status of a node can be determined.
Thus, the value \tempsat is used which indicates temporary satisfiability, as we cannot be certain yet whether a label is satisfiable. The ancestor nodes on which a label is dependent belong to the corresponding dependency set.
In the cases of the values \sat and \unsat, the label is not dependent on any node. 
At the end of the construction of a tableau, the satisfiabilitiy of all the nodes must have been determined and thus, their statuses are eventually assigned one of the values \sat and \unsat.

\begin{defi}	 		\label{def: tableau} 
A \emph{tableau} for a formula $\varphi$ is a directed graph $G = (V,E_f,E_b,E_c)$, where
\begin{itemize}
\item $V$ is a set of tableau nodes,
\item $E_f$, $E_b$ and $E_c$ are sets of forward, backward and cyclic edges, respectively, which may be labelled with some formula $\neg\forall A.\chi$ with $A\in\atp_\Omega$ or with some capability statement $\neg\cpi A$ with $A\in\wt{\Sigma}$,
\item its root is a node whose label is the pair $\pos{\{\varphi\},\rd}$, where $\rd$ is the reduction function which assigns to all the formulas of its domain the value $0$.
\end{itemize}
For simplicity of notation, we assume that $E$ is the set of all the edges of $G$ and we equivalently write $G=(V,E)$.
\end{defi}

\begin{defi}	 		\label{def: fulfillment relation} 
	A \emph{fulfillment relation} $\rat_G \subseteq (V\times\ev) \times (V\times\mc{L}_s)$ for a tableau $G=(V,E)$ relates 
	pairs consisting of a node and one of its label's active eventualities
	to pairs of the same form, with the difference that the involved formula can be any formula of the corresponding label, even a non-eventuality. In the following, we drop the index $G$ whenever $G$ is clear from the context.
\end{defi}

\begin{defi}	\label{def: fulfilled dependent}
	Let $G=(V,E)$ be a tableau and $\rat$ its fulfillment relation. 
	Let $v$ be a node of $G$ and $\varphi= \neg\forall A_1.\dotsb.\forall A_n.\forall A^*.\chi$ an active eventuality of $v$ with $n\geq0$ and $\neg\chi\notin\ev$.
	We say that
	\begin{enumerate}[i)]
		\item $\varphi$ is \textit{$\rat$-fulfilled} iff
		there is a sequence $(v_1,\varphi_1),\dotsc,(v_k,\varphi_k)$ of pairs of nodes and of active formulas of their labels with $k\geq2$ such that 
		no pair appears more than once in it and 
		$v_1=v$ and $\varphi_k = \neg\chi$ and
		for $i=2,\dotsc,k$, if $v_i\neq v$, then $\sts_{v_i} \in \{\sat, \tempsat\}$, and 
		for $i=1,\dotsc,k-1$, $\varphi_i\in\ev$ and $(v_i,\varphi_i)\rat(v_{i+1},\varphi_{i+1})$,
		
		\item $\varphi$ is \textit{$\rat$-dependent} on $v'$ iff
		there is a sequence $(v_1,\varphi_1),\dotsc,(v_k,\varphi_k)$ of pairs of nodes and of active eventualities of their labels with $k\geq2$ such that
		no pair appears more than once in it and $v_1=v$ and $v_k=v'$ and
		for $i=2,\dotsc,k-1$, if $v_i\neq v$, then $\sts_{v_i} = \tempsat$, and
		for $i=1,\dotsc,k-1$, $(v_i,\varphi_i)\rat(v_{i+1},\varphi_{i+1})$, and 
		the status of $v_k$ has not been defined yet and at the same time, $(v_k,\varphi_k)$ is not $\rat$-related to any other pair.
	\end{enumerate}
Finally, we say that $\varphi$ is \textit{$\rat$-unfulfilled} iff it is not $\rat$-fulfilled and it is not $\rat$-dependent on any node.
\end{defi}

Intuitively speaking, the fulfillment relation of a tableau records the fulfilling paths of the active eventualities of the nodes. If the value \sat has been assigned to the status of a node, then all of its active eventualities are $\rat$-fulfilled. On the other hand, if an active eventuality of a node is $\rat$-unfulfilled, then the status of a node is assigned the value \unsat. Finally, roughly speaking, if an active eventuality of a node is not $\rat$-fulfilled and it is $\rat$-dependent on a node $v$ and there are no $\rat$-unfulfilled eventualities, then the status is assigned the value \tempsat and it is natural to expect that $v$ belongs to the corresponding dependency set.

\begin{rem}[Notational and Other Conventions] \label{rem not}
	When we refer to the attributes of a label $\lab_v$ of a specific node $v$, we may write $\actf_v$, $\rdf_v$ and $\Phi_v$, without further disambiguation, unless they are explicitly denoted otherwise. Moreover, if $\varphi$ and $\psi$ are formulas such that $\varphi\vtrd_{\lab_v}\psi$, we may equivalently write $\varphi\vtrd_v\psi$ (recall Definition \ref{def: fully reduced}).
	To deal with the undefined attributes of the nodes of a tableau, we let the `undefined value' be denoted by $\bot$. 
	Additionally, for some value $x$ of a set $B$, we denote as $f^x:A\to B$ the function which assigns to all the values of its domain the value $x$. 
	Finally, for nodes $v_1$ and $v_2$  of a tableau $G$, we write $(v_1,v_2)^t$ for the edge $(v_1,v_2)$ labelled with $t$.
\end{rem}

\subsubsection{The Procedures}	\label{subsubsec: the procedures}

Here, we present how the algorithm constructs a tableau and we describe all the necessary procedures in detail. The algorithm expands the nodes in a depth-first, left-to-right fashion and defines their statuses in a postorder manner.

The procedure \tref{isSat} takes as input the formula whose satisfiability we want to examine and it merely initializes a tableau and defines the label of its root node.
Then, the procedure \tref{ct} is used which roughly speaking builds a tableau and defines the status of the root node which determines the satisfiability of the input formula.

\begin{procedure}[H]
\caption{isSat($\varphi_0$)}\label{isSat}
\KwIn{A formula $\varphi_0 \in \mc{L}_s$.}
\KwOut{Whether the formula $\varphi_0$ is satisfiable or not.}
\BlankLine

Define a global variable $G$ to hold a graph structure: 
$G = (V, E)$ with $V := \emptyset$ and $E := \emptyset$\;

Define a global variable $\rat$ to hold the fulfillment relation for $G$.\;

Define the node $v = \pos{\lab_v,D_v,\sts_v}$ as follows:\quad
	$\lab_v := \pos{\{\varphi_0\}, \rd^0}$,\qquad
	$D_v := \bot$,\qquad
	$\sts_v := \bot$\;																									\label{issat: root1}
\tref{ct}$(\bot, \bot, v)$ 																								\label{call}\;
\leIf 	{$\sts_v = \sat$} 	{\Return{\texttt{true}}} 	{\Return{\texttt{false}}}
\end{procedure}

\begin{procedure}[H]
\caption{constructTableau($v_0, t, v_1$)}	\label{ct} 
\KwIn{A node $v_0\!\in\! V\cup\!\{\bot\}$, an edge tag $t$ (a formula), both of them potentially undefined, and a node $v_1\!\notin\!V$.}
\KwOut{-}
\BlankLine

\eIf(\tcc*[f]{see Definition \ref{def: similar labels} for similar labels})
	{$\exists v_1'\in V$ such that $\lab_{v_1'} \approx \lab_{v_1}$}{														\label{ct: caching}
	
	\leIf{$v_1'$ is a forward ancestor of $v_0$ (i.e. $(v_1',v_0)\!\in\! E_f^+$)}{
		$E_c \!:= E_c	 \!\cup\! \{(v_0,v_1')^t\!\}$
	}
	{ 
		$E_b \!:= E_b \!\cup\! \{(v_0,v_1')^t\}$																			\label{ct: edge}
	}
	
	\lIf{$\rdf_{v_1} \!\!\neq \rdf_{v'_1}$}{
		extend $\rdf_{v_1'}$ with $\rdf_{v_1}$:\;		
			\hspace{3.1 cm} $\Phi_{v_1'} \!:= \Phi_{v_1'} \!\cup \rdf_{v_1}$ and for each $\alpha$/$\beta$ formula $\varphi\in \Phi_{v_1'} \!\cap \ev$, $\rd_{v_1'}(\varphi) := 1$																											\label{ct: extrdf}
	}
}		
{
	Extend $G$ with $v_1$: 
		$V := V \cup \{v_1\}$ and if $v_0\neq\bot$, then add a labelled forward edge: $E_f := E_f \cup \{(v_0,v_1)^t\}$			\label{ct: add node}
	 
	\uIf{$\exists \varphi \in \Phi_{v_1}$ such that $\neg\varphi\in\Phi_{v_1}$ or $\varphi=\neg\cpi\psi$}{						\label{ct: contr}
		Assign to $D_{v_1}$ the empty set and to $\sts_{v_1}$ the value \unsat.													\label{ct: unsat}
	}
	
	\uElseIf{$v_1$ is a partial node}{																							\label{ct: partial} 
		\tref{asr}$(v_1)$\;																										\label{ct: expand partial} 
		$\tref{csp}(v_1)$\;																										\label{ct: sts partial}
	}
	
	\Else{ 																														\label{ct: state}
		\tref{ansr}$(v_1)$\;
		$\tref{css}(v_1)$\;																										\label{ct: sts state}
	}
	
	\lIf{$v_1$ has cyclic parents}{																								\label{ct: update}
		$\tref{upd}(v_1)$
	}

}

\end{procedure}

The procedure \tref{ct} is the backbone of our algorithm, as it handles all the cases that occur. It accepts as input two nodes, $v_0$ and $v_1$, and an edge tag $t$. Notice that $v_0$ and $t$ might be undefined. In the case that $v_0\neq\bot$, the node $v_0$ is already part of the tableau, whereas $v_1$ is treated as a candidate node of the graph under construction and, in particular, as a candidate child of $v_0$. 
In lines \ref{ct: caching}-\ref{ct: extrdf}, the algorithm has discovered a pre-existing node $v_1'$ with a label similar with that of $v_1$. The procedure ignores $v_1$ and it adds a backward or a cyclic edge labelled with the edge tag $t$.
Now, in the opposite case, 
lines \ref{ct: add node}-\ref{ct: update} are used. First, the procedure adds $v_1$ to the tableau as a forward child of $v_0$. 
Next, if there are contradictory formulas in its label, its status is assigned the value \unsat. Otherwise, the procedure proceeds and examines what type of node $v_1$ is. One of the procedures \tref{asr} and \tref{ansr} expands $v_1$ and one of the procedures \tref{csp} and \tref{css} defines its status and its dependency set.
Finally, the nodes which are dependent on $v_1$ are updated by the procedure \tref{upd}.

The procedure \tref{asr} expands a partial node by applying the static rule (see Section \ref{subsec: tableau calculus}).
An active $\alpha/\beta$ formula $\varphi$ is chosen as the principal formula and the number of the defined children is at most the reduction degree of $\varphi$ (recall Definition \ref{def: reduction sets}). 
The set of formulas of a child is defined by adding to the set of formulas of the parent one of the reduction sets of $\varphi$ and the reduction function is defined so that $\varphi$ is considered as reduced and all the new $\alpha$/$\beta$ eventualities as active. 
We point out that in the case that the status of one of the defined children has been assigned the value \sat, \tref{asr} does not proceed to the definition of the remaining children.

\begin{procedure}[H]
\caption{applyStaticRule($v$)}\label{asr} 
\KwIn{A partial node $v$.}
\KwOut{-}
\vspace{0.1 cm}

Let $\varphi$ be an $\alpha$/$\beta$ formula in $\actf_v$, of minimal reduction degree $k$, and $\mc{R}_1^\varphi, \dotsc,\mc{R}_k^\varphi$ its reduction sets.\;																											\label{asr: principal} 
\For{$i:=1$ \KwTo $k$}{
	
	\uIf{there is no child $v_c$ of $v$ such that $\sts_{v_c} = \sat$}{
		Create a node $v_i = \pos{\lab_i, D_i, \sts_i}$ with its attributes undefined.\;
		Define the label $\lab_i = \pos{\Phi_{\lab_i}, \rd_{\lab_i}}$ as follows:\; 
					\hspace{0.5 cm} 	$\Phi_{\lab_i} := \Phi_v \cup \mc{R}_i^\varphi$ and
					\hspace{0 cm} 	for each $\alpha$/$\beta$ $\psi\in \ev\cap\Phi_{\lab_i}$,
							$\rd_{\lab_i}(\psi) :=
							\begin{cases}
							\rd_v(\psi)		&		\text{if $\psi\in \Phi_v\setminus\{\varphi\}$}\\
							1					&		\text{if $\psi = \varphi$}\\
							0					&		\text{if $\psi\in \mc{R}_i^\varphi\setminus\Phi_v$}
							\end{cases}$\;																					\label{asr: deflabel}
		\tref{ct}$(v, \bot, v_i)$
	}
	\lElse{
		break the for loop.
	}
}

\end{procedure}

\begin{procedure}[H]
\caption{calcStsPartial($v$)}\label{csp} 
\KwIn{A partial node $v$.}
\KwOut{-}
\BlankLine 

\uIf{there is a child of $v$ such that the value \sat is assigned to its status}{											\label{csp: children1}
	let $v_1,\dotsc,v_k$ be the children of $v$ such that for $i=1,\dotsc,k$, $\sts_{v_i}=\sat$.
}
\lElse{
	let $v_1,\dotsc,v_k$ be the children of $v$ such that for $i=1,\dotsc,k$, $\sts_{v_i}\in \{\bot,\tempsat\}$.			\label{csp: children2}
}

\uIf{$k>0$}{
	$D_v := \Big( \bigcup\limits_{i=1,\dotsc,k}$  
							\textbf{if} ($D_{v_i}\neq\bot$) \textbf{then} $D_{v_i}$ \textbf{else} $\{ v_i \}$ 
					$\Big) \setminus \{ v \}$ \;																			\label{csp: D}
	
	\ForEach{active eventuality $\varphi$ of $v$}{																			\label{csp: for} 
		\For{$i:=1$ \KwTo $k$}{
			\lForEach{$\psi\in \reach(\varphi,\lab_{v_i})$}{
				define $(v,\varphi) \rat (v_i,\psi)$:\quad 
				$\rat := \rat \cup \{ ((v,\varphi), (v_i,\psi)) \}$															\label{csp: rat}
			}
		}
	}

	\eIf{$\exists\varphi\in \actf_v\cap\ev$ which is $\rat$-unfulfilled}{													\label{csp: if_unfulfilled}
		Assign to $D_v$ the empty set and to $\sts_v$ the value \unsat.														\label{csp: unsat2}
	}
	{
		\leIf{$D_v = \emptyset$}{
			$\sts_v := \sat$
		}
		{
			$\sts_v := \tempsat$
		}																													\label{csp: sat_tempsat}
	}

}
\lElse{
	Assign to $D_v$ the empty set and to $\sts_v$ the value \unsat.															\label{csp: unsat1}
}
\end{procedure}

After the expansion of a partial node, the algorithm calculates its dependency set, the fulfillment relation for its active eventualities and its status by using the procedure \tref{csp}. 
The procedure uses only the children whose statuses have not been assigned the value \unsat (see lines \ref{csp: children1}-\ref{csp: children2}) and if there is no such child, then it assigns the value \unsat.
First, it defines the dependency set by using the dependency sets of the appropriate children and by distinguishing cases on whether they have been defined. 
Then, according to lines \ref{csp: for}-\ref{csp: rat}, it defines the fulfillment relation for the active eventualities of the node.
Finally, according to lines \ref{csp: if_unfulfilled}-\ref{csp: sat_tempsat}, by using the defined dependency set and the defined fulfillment relation, it determines the status of the node.

The procedure \tref{ansr} expands a state by applying the transitional rule in lines \ref{ansr: trans1B}-\ref{ansr: trans2E} and the capability rule in lines \ref{ansr: capB}-\ref{ansr: capE} (see also Section \ref{subsec: tableau calculus}).
In the case that one of the already defined children has been assigned the value \unsat, the procedure stops the expansion of its input node.

\begin{procedure}[H]
\caption{applyNonStaticRules($v$)}\label{ansr} 
\KwIn{A state node $v$.}
\KwOut{-}
\BlankLine

Let $\Delta_1$ be the set $\big\{\neg\forall A.\chi\in\Phi_v \mid A\in\atp_\Omega \big\}$ and 
$\Delta_2$ the set $\big\{ \neg\cpi A \in\Phi_v \mid A\in\wt{\Sigma} \big\}$.

\ForEach{formula $\varphi$ in $\Delta_1\cup\Delta_2$}{
	\uIf{there is no child $v_c$ of $v$ such that $\sts_{v_c} = \unsat$}{
		
		Create a node $v' = \pos{\lab,D,\sts}$ with its attributes undefined.
		
		\uIf{$\varphi$ is some formula $\neg\forall a.\chi$}{																	\label{ansr: trans1B}
			Define $\lab$ as follows:
			\hspace{0.3 cm} $\Phi_\lab:= \{\neg\chi\}  \cup  \{\vartheta \mid \forall a.\vartheta\in\actf_v \}  \cup  \{\vartheta  \mid \forall \Omega.\vartheta \in \actf_v \}$
			\hspace{0.3 cm} $\rd_\lab := \rd^0$\;
			\tref{ct}$(v, \varphi, v')$
		}																														\label{ansr: trans1E}

		\uElseIf{$\varphi$ is some formula $\neg\forall\Omega.\chi$}{															\label{ansr: trans2B}
			Define $\lab$ as follows:
			\hspace{0.3 cm} $\Phi_\lab := \{\neg\chi\} \cup \{\vartheta  \mid \forall \Omega.\vartheta\in\actf_v \}$
			\hspace{0.3 cm} $\rd_\lab := \rd^0$\;
			\tref{ct}$(v, \varphi, v')$
		}																														\label{ansr: trans2E}

		\ElseIf{$\varphi$ is some formula $\neg\cpi A$}{																		\label{ansr: capB}
			Let $\{ \cpi(\varphi_1\Ra\psi_1),\dotsc,\cpi(\varphi_k\Ra\psi_k) \}$ be the set of all the capabilities statements for the agent $\imath \in I$ in $\Phi_v$ that concern precondition-effect processes.\;
			Define $\lab$ as follows:
			\hspace{0.3 cm} $\Phi_\lab := \{ \varphi_1,\dotsc,\varphi_k, \neg\forall A.\forall(\neg\psi_1).\dotsb\forall(\neg\psi_k).\false \}$
			\hspace{0.3 cm} $\rd_\lab := \rd^0$\;
			\tref{ct}$(v, \varphi, v')$
		}																														\label{ansr: capE}
		
	}
	\lElse{
		break the for loop.
	}
}

\end{procedure}

Similarly to \tref{csp}, the procedure \tref{css} defines the dependency set of a state (see line \ref{css: D}), the fulfillment relation for its active eventualities (see lines \ref{css: for}-\ref{css: rat}) and its status (see lines \ref{css: if_unfulfilled}-\ref{css: sat_tempsat}).
If there is a child whose status has been assigned the value \unsat, then it directly assigns the value \unsat to the status of the input state.

\begin{procedure}[H]
\caption{calcStsState($v$)}\label{css} 
\KwIn{A state node $v$.}
\KwOut{-}
\BlankLine

Let $v_1,\dotsc,v_k$ the children of $v$ with $k\geq0$.\;
\uIf{for $i=1,\dotsc,k$, $\sts_{v_i} \neq \unsat$}{																		\label{css: if}
	
	$D_v := \Big( \bigcup\limits_{i=1,\dotsc,k}$  
		\textbf{if} ($D_{v_i}\neq\bot$) \textbf{then} $D_{v_i}$ \textbf{else} $\{ v_i \}$ 
		$\Big) \setminus \{ v \}$ \;																					\label{css: D}
	
	\ForEach{active eventuality $\neg\forall A.\chi$ of $v$ ($A\in\atp_\Omega$)}{										\label{css: for}
		Let $v'$ be the child of $v$ such that $(v,v')$ is labelled with $\neg\forall A.\chi$.\;
		Define $(v,\neg\forall A.\chi) \rat (v',\neg\chi)$: \quad
		$\rat := \rat  \cup  \{ (v,\neg\forall A.\chi), (v',\neg\chi) \}$												\label{css: rat}
	}

	\eIf{$\exists\varphi\in \actf_v\cap\ev$ which is $\rat$-unfulfilled}{												\label{css: if_unfulfilled}
		Assign to $D_v$ the empty set and to $\sts_v$ the value \unsat.													\label{css: unsat2}
	}
	{
		\leIf{$D_v = \emptyset$}{
			$\sts_v := \sat$
		}
		{
			$\sts_v := \tempsat$
		}																												\label{css: sat_tempsat}
	}

}
\lElse{
	Assign to $D_v$ the empty set and to $\sts_v$ the value \unsat.														\label{css: unsat1}
}
	
\end{procedure}

The procedure \tref{upd} takes as input a node $v$ and it updates only the nodes whose statuses have been assigned the value \tempsat and at the same time, their dependency sets contain $v$.
If the status of the input node or one of the updated nodes has been assigned one of the values \sat and \unsat, then the procedure \tref{propag} is used to propagate the same status value to the appropriate ancestor nodes.
In lines \ref{upd: while_Unsat}-\ref{upd: propag_Unsat}, the procedure updates those nodes which have an eventuality which is $\rat$-unfulfilled. 
Then, in lines \ref{upd: while_Sat_TempSat}-\ref{upd: TempSat}, it updates the dependency sets of the appropriate remaining nodes by using the dependency set of the input node and it defines their statuses accordingly.

\begin{procedure}[H]
	\caption{updDepNodes($v$)}		\label{upd}
	\KwIn{A node $v$ which has at least one cyclic parent.}
	\KwOut{-}
	\BlankLine
	
	\lIf{$\sts_{v} \in \{\sat,\unsat\}$}{																	\label{upd: propag_Sat_Unsat}
		$\tref{propag}(v,v)$
	}
	
	\While{$\exists v'\in V$ such that $\sts_{v'} = \tempsat$ and $v\in D_{v'}$ and 
		$\exists\varphi\in\actf_{v'}\cap\ev$ which is $\rat$-unfulfilled}{									\label{upd: while_Unsat}
		Assign to $D_{v'}$ the empty set and to $\sts_{v'}$ the value \unsat.\;								\label{upd: unsat}
		$\tref{propag}(v,v')$																				\label{upd: propag_Unsat}
	}
	
	\While{there is a node $v'\in V$ such that $\sts_{v'} = \tempsat$ and $v\in D_{v'}$}{					\label{upd: while_Sat_TempSat}
		$D_{v'} := \left(D_{v'} \setminus \{v\}\right) \cup D_v$\;											\label{upd: updateDep}
		\uIf{$D_{v'} = \emptyset$}{																			\label{upd: Sat_TempSat}
			$\sts_{v'} := \sat$\;																			\label{upd: Sat}
			$\tref{propag}(v,v')$																			\label{upd: propag_Sat}
		}
		\lElse{
			$\sts_{v'} := \tempsat$																			\label{upd: TempSat}
		}
	}
	
\end{procedure}

The procedure call $\tref{propag}(v_0,v_1)$ updates only the nodes whose statuses have been assigned the value \tempsat and their dependency sets contain $v_0$. Additionally, the update begins from the parents of $v_1$ whose status has been assigned one of the values \sat and \unsat (see \tref{upd}).
Each time that the procedure \tref{propag} is used, it exclusively propagates the status value of its second (input) argument. It follows two simple rules. 
If the status of a child of a partial node or the statuses of all the children of a state have been assigned the value \sat, then the value \sat is also assigned to the status of the parent. 
On the other hand, if the status of a child of a state or the statuses of all the children of a partial node have been assigned the value \unsat, then the value \unsat is also assigned to the status of the parent.

\begin{procedure}[H]
	\caption{propagateSts($v_0,v_1$)}		\label{propag}
	\KwIn{A node $v_0$ whose status has been defined and a node $v_1$ such that $\sts_{v_1}$ is in $\{\sat, \unsat\}$.}
	\KwOut{-}
	\BlankLine
	
	Let $X$ be a set of nodes which is initialized as the singleton $\{ v_1 \}$.\;
	\While{$X\neq\emptyset$}{
		Choose a node $v$ from $X$ and remove it from $X$.\;
		\ForEach{parent $v'$ of $v$ such that $\sts_{v'}  = \tempsat$ and $v_0\in D_{v'}$}{									\label{propag: for}
			\uIf{$\sts_v \!\!=\! \sat$  $\land$ 
				\big($v'$ is a partial node  or  ($v'$ is a state and $\forall v''( (v',v'')\!\in\! E \Rightarrow \sts_{v''}\!=\!\sat)$)\big)}{
				Assign to $D_{v'}$ the empty set and to $\sts_{v'}$ the value \sat.\;										\label{propag: sat}
				$X : = X \cup \{v'\}$
			}		
			\ElseIf{$\sts_v \!\!=\! \unsat$  $\!\land\!$ 
				\big($v'$ is a state  or  ($v'$ is a partial node and $\forall v''( (v'\!,v'')\!\in\! E \Rightarrow \sts_{v''}\!=\!\unsat)$)\big) }{
				Assign to $D_{v'}$ the empty set and to $\sts_{v'}$ the value \unsat.\;										\label{propag: unsat}
				$X : = X \cup \{v'\}$
			}
		}	
	}
	
\end{procedure}

\vspace{0.0 cm}
\begin{rem}
	In the following of this paper, whenever we refer to a tableau for some formula $\varphi$, we consider the tableau that is built by the procedure \tref{isSat} with input the formula $\varphi$.
\end{rem}

\subsection{Examples} 		\label{subsec: examples}

In this subsection, we give three examples of tableaux with which we present different aspects of our algorithm. We adopt some notational conventions in order to simplify their presentation.
The tableau nodes are presented within a rectangle. To easily distinguish between states and partial nodes, the latter ones are presented within a dashed rectangle.
We present only the set of formulas of a label of a node without its reduction function and we do not distinguish the reduced formulas from the active ones. We trust the reader to fill in the details.
We may present consecutive applications of the static rule in a single step. Additionally, the principal $\alpha$/$\beta$ formula of the static rule (which is applied only on partial nodes) is underlined.
For sake of simplicity, we omit the dependency sets of the nodes, under the condition that they are the empty set.
As the dependency set and the status of a node may be calculated several times during the construction of a tableau, the different values are listed from the oldest to the most recent one and they are separated by a vertical bar `$\mid$'. 
Assuming the steps of the algorithm to be numbered and each step being either about creating a node, or (re)defining the status and the dependency set of a node, we write $v_{i_1.\dotsb.i_k}$ with $k\geq2$ for the node which becomes created at step $i_1$ and gets its dependency set and its status defined for the first time at step $i_2$ and redefined at steps $i_3,\dotsc,i_k$. We simply refer to any node $v_{i_1.\dotsb.i_k}$ by writing $v_{i_1}$ and we refer to its label as $\lab_{i_1}$.
Finally, backward edges are depicted as dashed arrows, whereas cyclic edges as dotted arrows.

The first example, see Figure \ref{fig: ex1}, has to do with the conjunction of the capabilities statements $\cpi(p\land q \Ra r)$ and $\neg\cpi(p\Ra r)$.
The specific formula cannot be satisfiable. Notice that, by Corollary \ref{cor: =>2} and by semantics, the process $p\Ra r$ is of type $p\land q \Ra r$, in symbols $\lonto{p\Ra r} \subseteq \lonto{p\land q \Ra r}$. Thus, there is no $\mc{L}$-model $\mc{M}$, there is no state $w$ of $\mc{M}$ such that $\lonto{p\land q \Ra r}$ is a subset of $\imath^\mc{M}(w)$ and at the same time, $\lonto{p\Ra r}$ is not a subset of $\imath^\mc{M}(w)$, so that $w$ satisfies $\cpi(p\land q \Ra r)$ and $w$ does not satisfy $\cpi(p\Ra r)$. In other words, the formula $\cpi(p\land q \Ra r) \ra \cpi(p\Ra r)$ is valid.
The algorithm applies the capability rule through the procedure \tref{ansr}. Intuitively speaking, the algorithm tries, through the formulas of $v_2$, to define a transition of type $p\Ra r$, but not of type $p\land q \Ra r$. As argued earlier, this is impossible and as a result, it fails and it concludes that the input set of formulas is unsatisfiable.

\begin{figure}[ht] \label{fig: ex1}
	\centering
	\includegraphics[width=\columnwidth]{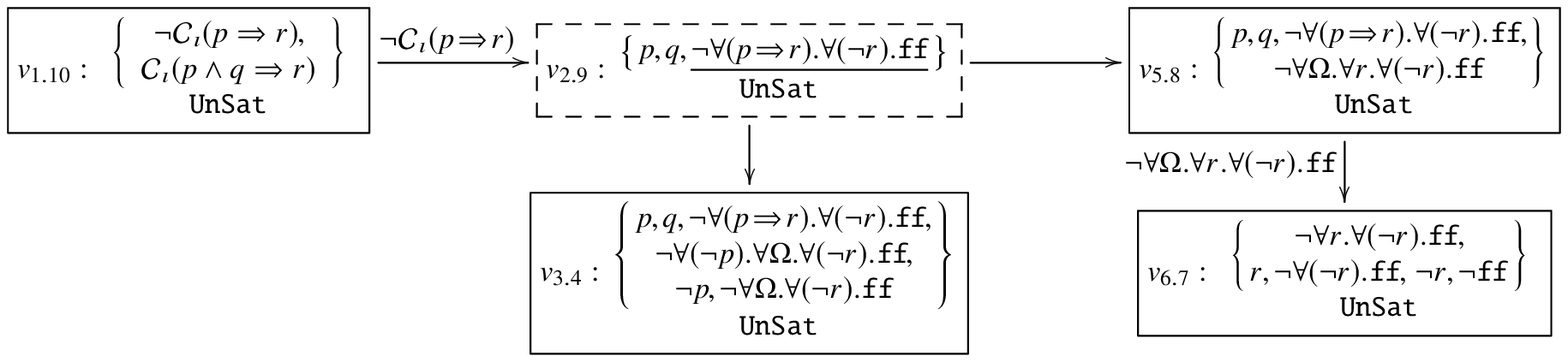}
	\caption{The tableau for the unsatisfiable set of formulas $\{ \cpi(p\land q \Ra r), \neg\cpi(p\Ra r) \}$}
\end{figure}

The second example, see Figure \ref{fig: ex2}, demonstrates how the algorithm handles the preconditions-effects processes.
The formula $\neg\forall((p \Ra r) +a).p$ requires the existence of a transition either of a process of type $p \Ra r$ or of $a$. In both cases, the reachable state should not satisfy the atomic formula $p$. On the other hand, the formula $\forall(p\land q \Ra r).p$ requires that after the execution of a process of type $p\land q \Ra r$, the atomic formula $p$ should be satisfied. So, in the case of a transition of type $p \Ra r$, as $\lonto{p\Ra r} \subseteq \lonto{p\land q \Ra r}$ (as argued in the previous example), the reachable state should satisfy $p$ and $\neg p$. This is not possible and the algorithm reaches the same conclusion as the status of $v_2$ indicates. 
Despite the previous result, the algorithm also examines the formulas $\neg\forall a.p$ and $\forall(p\land q \Ra r).p$ through the node $v_{20}$. Notice that the transition that the formula $\neg\forall a.p$ requires might not be of type $p\land q \Ra r$ and as a result, the problematic situation of the formulas $\neg\forall(p \Ra r).p$ and $\forall(p\land q \Ra r).p$ is avoided. Therefore, there is a transition of $a$ which leaves a state which satisfies the preconditions $p$ and $q$ and reaches a state which does not satisfy the effects $r$. Thus, the specific transition is not of type $p\land q \Ra r$ and the reachable state is not necessary to satisfy $p$ due to the formula $\forall(p\land q \Ra r).p$. The algorithm reaches the same conclusion through the nodes $v_{20}$, $v_{21}$, $v_{22}$ and $v_{23}$. Observe that in the case of the principal formula $\forall r.p$ of $v_{22}$, the procedure \tref{asr} examines only the reduction set $\{\neg r\}$ and not the reduction set $\{p\}$, as the status of $v_{23}$ was assigned the value \sat.
Similarly, the algorithm examines only one of the two reduction sets of the $\beta$ formula $\forall(p\land q \Ra r).p$ of the partial node $v_{20}$.

\begin{figure}[ht] \label{fig: ex2}
	\centering
	\includegraphics[width=\columnwidth]{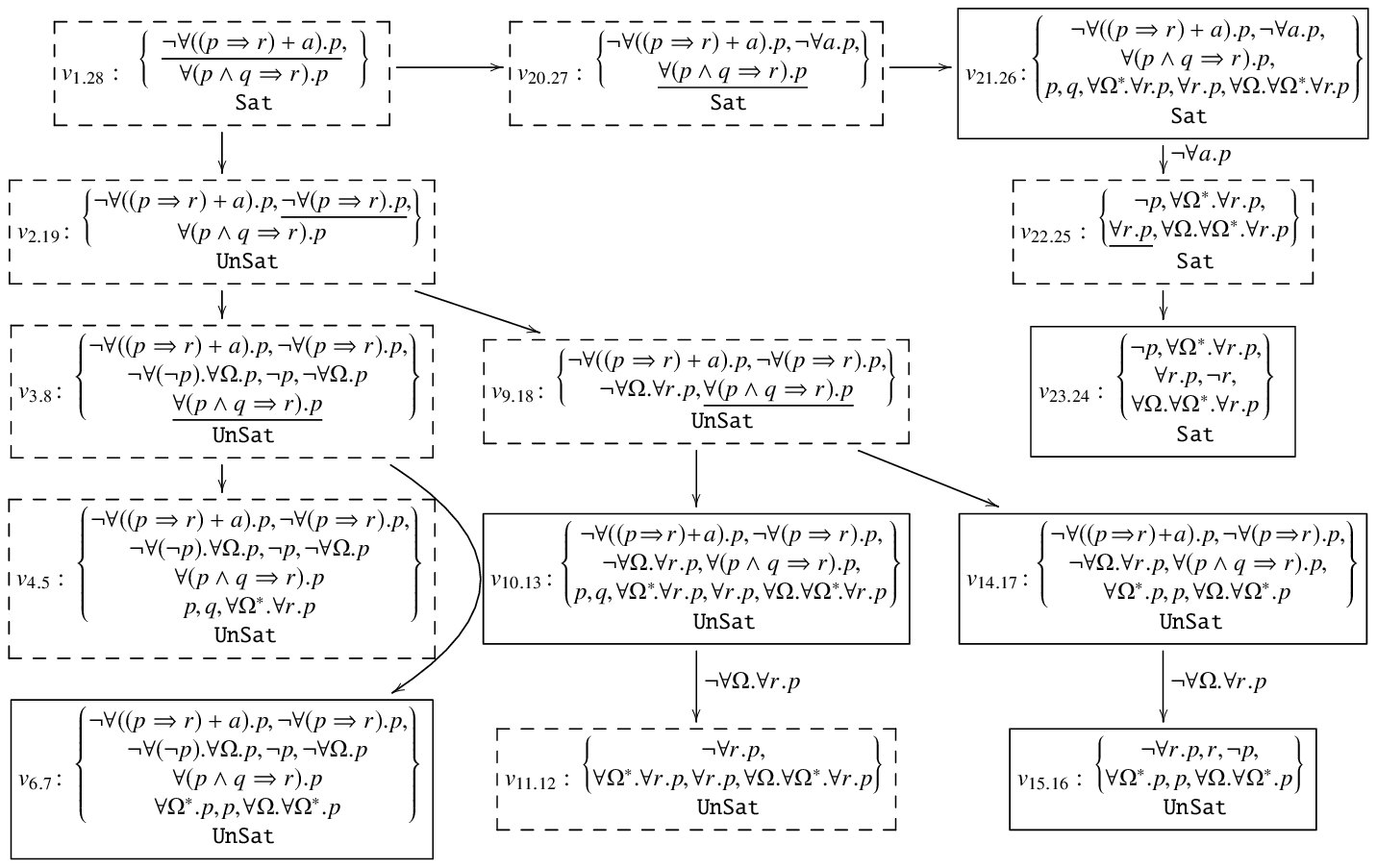}
	\caption{The tableau for the satisfiable set of formulas $\{ \neg\forall((p \Ra r) +a).p, \forall(p\land q \Ra r).p \}$}
\end{figure}

The last example, see Figure \ref{fig: ex3}, deals with the eventuality $\neg\forall(a+b)^*.p$ and with the formula $\forall a^*.p$. The latter formula requires that after any number of executions of $a$, including zero, the atomic formula $p$ must hold. The eventuality $\neg\forall(a+b)^*.p$ requires that after a finite number of executions of $a+b$, including zero, the formula $p$ must not be satisfied.
According to Definitions \ref{def: reduction sets} and \ref{def: decomposition}, the reduction sets of $\neg\forall(a+b)^*.p$ are three singletons: $\{\neg p\}$, $\{\neg\forall a.\forall(a+b)^*.p\}$ and $\{\neg\forall b.\forall(a+b)^*.p\}$.
In the case of the first set, the algorithm fails to fulfill the eventuality due to $\forall a^*.p$ (see the node $v_3$). In the case of the second reduction set, a loop is formed, i.e. a graph cycle, and as the status of $v_1$ has not been defined yet, the node $v_5$ is dependent on $v_1$ and thus, $\sts_{v_5}$ is initially assigned the value \tempsat. In the case of the last reduction set, the eventuality is fulfilled through the nodes $v_7$, $v_8$ and $v_9$. Observe that the procedure \tref{asr} examines only the first reduction set of the eventuality of $v_8$ as the status of $v_9$ was assigned the value \sat. 
Eventually, after the definition of $\sts_{v_1}$, the status of $v_5$ is recalculated and it is assigned the value \sat.
The fulfillment relation of the tableau is also illustrated in Figure \ref{fig: ex3} as a graph in which the nodes are the related pairs.

\begin{figure}[ht] \label{fig: ex3}
	\centering
	\includegraphics[width=\columnwidth]{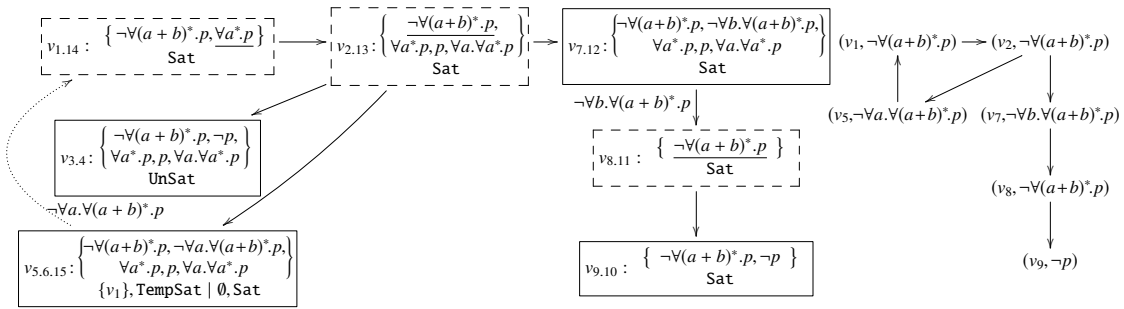}
	\caption{The tableau for the set of formulas $\{ \neg\forall(a+b)^*.p, \forall a^*.p \}$ and its fulfillment relation as a graph.}
\end{figure}

\section{Correctness and Complexity} 		\label{sec: correctness and complexity}

In this section, we show that the algorithm works properly.
More specifically, in Subsection \ref{subsec: tableau properties}, we give some tableaux properties which will be useful in the subsequent subsections. In Subsection \ref{subsec: soundness}, we show that the algorithm is sound and in Subsection \ref{subsec: completeness}, that it is complete. Finally, in Subsection \ref{subsec: complexity}, we show that it runs in exponential time.

\subsection{Tableaux Properties} 		\label{subsec: tableau properties}

\begin{lem}	\label{lem: saturation properties}
	In a tableau $G=(V,E)$, the following properties hold:
	\begin{enumerate}[i)]
		\item The set of formulas of a partial node is a subset of the set of formulas of any of its children.
		\item There is no sequence $v_1,\dotsc,v_k$ of partial nodes of $G$ such that $v_1=v_k$ and for $i=1,\dotsc, k-1$, $(v_i,v_{i+1})\in E$.
		\item For each partial node of $G$, there is a path of nodes which reaches a state. 
	\end{enumerate}
\end{lem}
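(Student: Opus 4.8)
The three statements are all about the saturation phase driven by the static rule, so the plan is to reduce each of them to the behaviour of \tref{asr} and the caching test at the start of \tref{ct}.

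For (i), the claim is essentially by construction. A partial node $v$ is expanded only by \tref{asr}, which for each reduction set $\mc{R}_i^\varphi$ of the chosen principal formula $\varphi$ sets the candidate child's formula set to $\Phi_v \cup \mc{R}_i^\varphi \supseteq \Phi_v$ (see line \ref{asr: deflabel}). If this candidate becomes an actual forward child, we are done. If instead the caching branch of \tref{ct} (line \ref{ct: caching}) fires and the child is an already existing similar node $v_1'$, then, since similar partial labels have equal active and equal reduced sets (Definition \ref{def: similar labels}) and the possible extension in line \ref{ct: extrdf} only enlarges $\Phi_{v_1'}$, we still obtain $\Phi_v \subseteq \Phi_{v_1'}$. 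Hence the formula set of every child contains $\Phi_v$.

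For (ii), I would isolate a strictly increasing, bounded measure along every edge whose source is partial. Each edge leaving a partial node is produced by \tref{asr} (states are expanded by \tref{ansr}, never partial nodes), so along such an edge the principal $\alpha$/$\beta$ formula $\varphi$ passes from $\actf$ to $\rdf$: in the child it is reduced, because either its reduction set $\mc{R}_i^\varphi$ is now a subset of the (larger) formula set, or, if $\varphi$ is an eventuality, its reduction value is set to $1$. At the same time no formula leaves $\rdf$, since a reduced non-eventuality stays reduced once more formulas are added (by (i)) and the reduction value of a non-principal eventuality is preserved by \tref{asr}. Therefore, whenever both endpoints of an edge are partial, the reduced set grows strictly, $\rdf_{v_i}\subsetneq\rdf_{v_{i+1}}$ (this covers the caching case too, using that a node similar to a partial candidate is itself partial with the same reduced set). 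As all labels of the tableau draw their formulas from the finite closure of the input formula, $\rdf$ is bounded; it cannot increase strictly around a cycle $v_1,\dots,v_k=v_1$ of partial nodes, which would give the contradiction $\rdf_{v_1}\subsetneq\cdots\subsetneq\rdf_{v_k}=\rdf_{v_1}$.

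For (iii), I would combine (ii) with the finiteness of $V$. First, every partial node has at least one outgoing edge, because \tref{asr} always runs the first iteration of its loop and so connects $v$ to at least one child (a new forward child, or via caching an existing similar node). Starting at a partial node and repeatedly following a child, either we reach a state and stop, or we remain among partial nodes; in the latter case, since $V$ is finite, some partial node would recur, yielding a cycle of partial nodes and contradicting (ii). Hence any such path must leave the partial nodes, i.e.\ reach a state. The main obstacle I expect is (ii), specifically making the measure behave correctly in the caching branch of \tref{ct}: I must verify, using Definition \ref{def: similar labels} together with the reduction-function bookkeeping in \tref{asr} and line \ref{ct: extrdf}, that replacing a candidate child by a similar pre-existing node does not break the strict growth of $\rdf$. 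The remaining cases (forward children, and monotonicity of ``being reduced'' under adding formulas) are routine once the measure is fixed.
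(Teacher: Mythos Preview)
Your proposal is correct and follows essentially the same line as the paper. The paper's own proof is a two-sentence sketch---the static rule cannot be applied twice to the same $\alpha/\beta$ formula (this is exactly your strict growth of $\rdf$), hence a state is eventually reached---and your version simply spells out the measure argument and the caching case that the paper leaves implicit.
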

\begin{proof}
	The properties of the lemma are an immediate consequence of the way that the algorithm applies the static rule through the procedure \tref{asr}.
	Moreover, by the way that the active and the reduced sets of a node have been defined, the static rule cannot be applied more than once for a specific $\alpha$/$\beta$ formula. Hence, a state is eventually defined.
\end{proof}

\begin{lem}	\label{lem: sts dep} \label{lem: unchanged sat unsat}
	Let $G=(V,E)$ be a tableau and $v$ one of its nodes. 
	\begin{enumerate}[i)]
		\item If the status of $v$ has been assigned the value \sat or \unsat, then the dependency set $D_v$ is empty. 
		\item If the status of $v$ has been assigned the value $\tempsat$, then $D_v \neq \emptyset$.
		\item From the moment that the status of a node is assigned the value \sat or \unsat, it never changes.
	\end{enumerate}
\end{lem}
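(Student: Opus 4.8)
The plan is to prove all three claims at once by a careful inspection of every point in the algorithm at which the status of a node is set, separating \emph{initial} assignments from \emph{later} reassignments. The status of a node receives a value in $\{\sat,\tempsat,\unsat\}$ only in the following places: the contradiction branch of \tref{ct} (line \ref{ct: unsat}); the procedures \tref{csp} (lines \ref{csp: unsat2}, \ref{csp: sat_tempsat} and \ref{csp: unsat1}) and \tref{css} (lines \ref{css: unsat2}, \ref{css: sat_tempsat} and \ref{css: unsat1}), each invoked exactly once per node immediately after it is created; and finally \tref{upd} (lines \ref{upd: unsat}, \ref{upd: Sat} and \ref{upd: TempSat}) and \tref{propag} (lines \ref{propag: sat} and \ref{propag: unsat}), which are the only procedures that may touch a node whose status has already been defined. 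First I would establish i) and ii) by running through this list, and then deduce iii) from the guarding conditions of the reassigning procedures.

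For i) and ii) I would verify the required coincidence at each assignment site. Every assignment of \unsat (lines \ref{ct: unsat}, \ref{csp: unsat2}, \ref{csp: unsat1}, \ref{css: unsat2}, \ref{css: unsat1}, \ref{upd: unsat} and \ref{propag: unsat}) is performed together with an explicit assignment $D_v := \emptyset$, so i) holds for \unsat by construction. The value \sat is assigned in \tref{csp} and \tref{css} only inside the branch guarded by $D_v = \emptyset$ (line \ref{csp: sat_tempsat}, resp. \ref{css: sat_tempsat}), in \tref{upd} only inside the branch guarded by $D_{v'} = \emptyset$ (line \ref{upd: Sat}), and in \tref{propag} together with $D_{v'} := \emptyset$ (line \ref{propag: sat}); hence i) holds for \sat as well. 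Dually, \tempsat is assigned only in the complementary branches, each guarded by $D_v \neq \emptyset$ (lines \ref{csp: sat_tempsat}, \ref{css: sat_tempsat} and \ref{upd: TempSat}), which gives ii). Here I would note that in \tref{upd} the relevant dependency set has already been replaced by $(D_{v'}\setminus\{v\})\cup D_v$ at line \ref{upd: updateDep} before the emptiness test, so the guard is evaluated against the correct, current value of $D_{v'}$.

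For iii) the key observation is that none of the initial-assignment sites can overwrite an already-defined status: the contradiction case of \tref{ct} and the calls to \tref{csp}/\tref{css} occur exactly once, immediately after a node is created, when its status is still the undefined value $\bot$. Therefore the only way a status could change after being set is through \tref{upd} or \tref{propag}. But every reassignment in these procedures sits inside a loop whose guard explicitly requires the affected node to currently carry the value $\tempsat$ (the \texttt{while} conditions at lines \ref{upd: while_Unsat} and \ref{upd: while_Sat_TempSat}, and the \texttt{foreach} condition at line \ref{propag: for}). Consequently a node whose status is already \sat or \unsat is never selected for reassignment, which is exactly claim iii).

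The main obstacle I anticipate is not any single case but ensuring the case analysis is genuinely exhaustive. In particular I would have to confirm that \tref{csp} and \tref{css} really are invoked at most once per node and always on a node with status $\bot$ — which follows because \tref{ct} enters its node-creating branch only for a node not already similar to an existing one, and the caching branch (lines \ref{ct: caching}--\ref{ct: extrdf}) touches no status at all — and that the dependency-set bookkeeping in \tref{upd} (the order of line \ref{upd: updateDep} relative to the emptiness test) is handled so that the guards underpinning i) and ii) read the final value of $D_v$.
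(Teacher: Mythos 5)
Your proof is correct and follows essentially the same approach as the paper's own (much terser) proof: an exhaustive inspection of every status-assignment site in \tref{ct}, \tref{csp}, \tref{css}, \tref{upd} and \tref{propag} for properties i) and ii), plus the observation that the reassigning procedures \tref{upd} and \tref{propag} only select nodes currently carrying \tempsat, which gives iii). Your version simply makes explicit the case analysis and the guard conditions that the paper's proof leaves implicit.
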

\begin{proof}
	In all the cases that the procedures \tref{ct}, \tref{csp}, \tref{css}, \tref{upd} and \tref{propag} define the status and the dependency set of a node, they are always compatible with the first two properties.
	As far as the third property is concerned, we point out that the procedures \tref{upd} (see lines \ref{upd: while_Unsat} and \ref{upd: while_Sat_TempSat}) and \tref{propag} (see line \ref{propag: for}) update the nodes whose statuses have been assigned the value \tempsat.
\end{proof}

\begin{lem}	\label{lem: rat}
	Let $G=(V,E)$ be a tableau, $\rat$ the fulfillment relation of $G$, $v$ a node of $G$ and $\varphi = \neg\forall A_1.\dotsb\forall A_n.\forall A^*.\chi$ an active eventuality of $v$ with $n\geq0$ and $\neg\chi\notin\ev$. 
	If $(v,\varphi) \rat (v',\psi)$, then the following properties hold: 
	\begin{enumerate}[i)]
		\item $v'$ is a child of $v$ and either $\psi$ is an active eventuality of $v'$ of the form $\neg\forall B_1.\dotsb\forall B_r.\forall A^*\!.\chi$ with $r\geq0$, or $\psi$ is the non-eventuality $\neg\chi$ which is in the set of formulas of the label of $v'$,
		\item if $v$ is a state, then $\varphi$ is some eventuality $\neg\forall B.\vartheta$ with $B\in\atp_\Omega$ and $\psi$ is the eventuality $\neg \vartheta$.
	\end{enumerate}
\end{lem}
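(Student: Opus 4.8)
The plan is to exploit the fact that the fulfillment relation is written to in only two places: line~\ref{csp: rat} of the procedure \tref{csp} (when $v$ is a partial node) and line~\ref{css: rat} of \tref{css} (when $v$ is a state). Since $v$ is either partial or a state, I would split the argument into these two cases and in each simply read off the shape of the pair $(v',\psi)$ from the code, using the earlier structural lemmas to control the syntactic form of $\psi$.

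For the partial case the pair is created as $(v,\varphi)\rat(v_i,\psi)$, where $v_i$ ranges over the children of $v$ that are kept (status not \unsat) and $\psi\in\reach(\varphi,\lab_{v_i})$, so $v'=v_i$ is a child of $v$ at once. To pin down $\psi$ I would unfold Definition~\ref{def: reach}: there is a chain $\psi_1=\varphi,\dotsc,\psi_k=\psi$ in $\Phi_{v_i}$ with $\psi_j\vtrd_{\lab_{v_i}}\psi_{j+1}$ for $j<k$, each $\psi_j$ with $j<k$ a reduced eventuality, and $\psi_k$ either an active eventuality or a non-eventuality of $\Phi_{v_i}$. Since $\vtrd_{\lab}$ refines $\vtrd$, I can apply Lemma~\ref{lem: properties of vtrd} repeatedly along the chain: starting from $\varphi=\neg\forall A_1.\dotsb\forall A_n.\forall A^*.\chi$ with $\neg\chi\notin\ev$, each step sends an eventuality of the form $\neg\forall C_1.\dotsb\forall C_m.\forall A^*.\chi$ either to the non-eventuality $\neg\chi$ or to another eventuality of the same shape (same $A^*$ and $\chi$). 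Because $\neg\chi\notin\ev$ while $\psi_1,\dotsc,\psi_{k-1}\in\ev$, none of the intermediate formulas is $\neg\chi$, so they all retain this shape and the induction goes through. Combined with the third clause of Definition~\ref{def: reach}, $\psi_k$ is then either the non-eventuality $\neg\chi$ (lying in $\Phi_{v_i}$) or an active eventuality of $v_i$ of the form $\neg\forall B_1.\dotsb\forall B_r.\forall A^*.\chi$, which is exactly property~i); property~ii) is vacuous here.

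For the state case, line~\ref{css: rat} only ever relates pairs $(v,\neg\forall A.\chi')$ with $A\in\atp_\Omega$ to $(v',\neg\chi')$, where $v'$ is the child joined to $v$ by the edge tagged $\neg\forall A.\chi'$. Matching $\varphi=\neg\forall A_1.\dotsb\forall A_n.\forall A^*.\chi$ against this form forces $A_1\in\atp_\Omega$, so $\varphi=\neg\forall B.\vartheta$ with $B=A_1\in\atp_\Omega$ and $\vartheta=\forall A_2.\dotsb\forall A_n.\forall A^*.\chi$, and $\psi=\neg\vartheta$; this is precisely property~ii), and it incidentally forces $n\geq1$, since an $\alpha$/$\beta$ eventuality such as $\neg\forall A^*.\chi$ can never sit in the active set of a saturated label. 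For property~i), $\psi=\neg\forall A_2.\dotsb\forall A_n.\forall A^*.\chi$ still carries the block $\forall A^*.\chi$, so it is an eventuality of the required shape with $r=n-1\geq0$; that it is \emph{active} in $v'$ I would obtain from the candidate child being built with reduction function $\rd^0$, whence every eventuality (in particular $\psi$) is active by clause~iii) of Definition~\ref{def: label}, noting that if the candidate is discarded in favour of a cached node its active set is unchanged because similar labels share the same active set.

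I expect the only delicate point to be the partial case: one must check that the iteration of Lemma~\ref{lem: properties of vtrd} stays applicable, i.e. that every formula strictly before $\psi_k$ in the $\reach$-chain is genuinely an eventuality of the controlled form (so that $\neg\chi\notin\ev$ remains the correct hypothesis at each step), and that ``child'' is read in the graph-theoretic sense, so that cached targets reached by backward or cyclic edges also count. The state case and property~ii) are essentially bookkeeping on the two lines of code that write $\rat$.
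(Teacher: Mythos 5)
Your proposal is correct and follows essentially the same route as the paper: a case split on whether $v$ is a state or a partial node, reading the shape of $(v',\psi)$ off the two lines that write $\rat$, and for the partial case combining Definition~\ref{def: reach} with Lemma~\ref{lem: properties of vtrd}. The only cosmetic difference is that the paper observes directly that the $\reach$-chain has length at most two (an intermediate element would have to be both a reduced $\alpha$/$\beta$ formula and, by Lemma~\ref{lem: properties of vtrd}, an eventuality beginning with a modality in $\atp_\Omega$, which is impossible), whereas your iterated-induction phrasing reaches the same conclusion with the longer chains handled vacuously.
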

\begin{proof}
	The relation $\rat$ is defined exclusively by the procedures \tref{csp} and \tref{css}. Thus, the lemma is an immediate consequence of how these procedures work.
	First, we assume that $v$ is a state. Since $\varphi$ is an active eventuality of $v$, it is some formula $\neg\forall B.\vartheta$ with $B \in\atp_\Omega$. By the definition of \ev, we have that $\neg\vartheta\in \ev$. We assume that $v'$ is the child that \tref{ansr} defines such that the edge $(v,v')$ is labelled with $\varphi$. Notice that according to \tref{ansr}, the formula $\neg\vartheta$ is an active eventuality of $v'$. According to line \ref{css: rat} of \tref{css}, we have that $(v,\varphi) \rat (v',\neg\vartheta)$. Therefore, the related pairs meet the requirements of the lemma.
	
	Now, we presume that $v$ is a partial node and that $v'$ is one of the children of $v$ which \tref{csp} considers (see lines \ref{csp: children1}-\ref{csp: children2}). According to line \ref{csp: rat} of \tref{csp}, for a formula $\psi$ in $\reach(\varphi,\lab_{v'}) \subseteq \Phi_{v'}$, we have that $(v,\varphi) \rat (v',\psi)$. By Definition \ref{def: reach}, we know that there is a sequence $\varphi_1,\dotsc,\varphi_k$ of formulas with $k\geq1$ such that $\varphi_1=\varphi$ and $\varphi_k=\psi$ and for $i=1,\dotsc,k-1$, $\varphi_i \in \rdf_{v'}$ is a reduced $\alpha$/$\beta$ formula and $\varphi_i\vtrd\varphi_{i+1}$.
	In the case that $k=1$, we can conclude that the formula $\varphi=\psi$	is also an active eventuality of $v'$.
	In the case that $k>1$, since $\varphi_1\vtrd\varphi_2$, by Lemma \ref{lem: properties of vtrd}, we know that there are two cases for $\varphi_2$. In the first case, it is the non-eventuality $\neg\chi$ (note that $k=2$) and thus, the lemma is satisfied. In the other case, $\varphi_2$ is an active eventuality of $v'$ of the form $\neg\forall B_1.\dotsb\forall B_r.\forall A^*.\chi$ with $r\geq1$ and $B_1\in\atp_\Omega$ (again, $k=2$).
\end{proof}

\begin{lem}	\label{lem: dependency set}
	Let $G=(V,E)$ be a tableau, $v$ a node of $G$ such that $\sts_v\neq\unsat$ and $\varphi$ an active eventuality of $v$.
	The following properties hold:
	\begin{enumerate}[i)]
		\item Each node $v'$ of the dependency set $D_v$ is a forward ancestor node of $v$
		and at the same time, it is a descendant of $v$ through a path of nodes in which the last edge is cyclic. 
		\item If $\varphi$ is not $\rat$-fulfilled and it is $\rat$-dependent on a node $v'$, then $v'$ is in $D_v$.
		\item If $\varphi$ is $\rat$-fulfilled only through sequences of $\rat$-related pairs in which there is at least one node whose status has been assigned the value \tempsat, 
		then for each sequence $(v_1,\vartheta_1) \rat \dotsb \rat (v_k,\vartheta_k)$ through which $\varphi$ is $\rat$-fulfilled, for $i=2,\dotsc,k$, if $\sts_{v_i} = \tempsat$, then $D_{v_i} \setminus \{v\}$ is a subset of $D_v$.
	\end{enumerate}
\end{lem}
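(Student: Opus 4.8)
The plan is to read the three clauses as \emph{invariants maintained throughout the construction of $G$}, rather than as facts about the completed graph: once every status is \sat or \unsat the hypothesis $\sts_v\neq\unsat$ forces $\sts_v=\sat$, whence $D_v=\emptyset$ by Lemma~\ref{lem: sts dep} and all three clauses are vacuous, so the content lives in the snapshots where \tempsat and undefined statuses still occur. Dependency sets are written only by the unions at line~\ref{csp: D} of \tref{csp} and line~\ref{css: D} of \tref{css}, rewritten by the reassignment $D_{v'}:=(D_{v'}\setminus\{v\})\cup D_v$ at line~\ref{upd: updateDep} of \tref{upd}, and otherwise only reset to $\emptyset$. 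I would therefore argue by induction on the construction steps, using throughout that, by Lemma~\ref{lem: rat}, every step $(u,\vartheta)\rat(u',\vartheta')$ makes $u'$ a child of $u$. The crucial move is to strengthen clause~(i) with the auxiliary invariant that \emph{every member of a dependency set has undefined status}; since an undetermined node is exactly one still lying on the active depth-first branch, such a member is automatically a forward ancestor of $v$.

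For~(i), the only primitive additions arise through the ``\textbf{else}~$\{v_i\}$'' branch of the unions in \tref{csp}/\tref{css}: there $v_i$ is a considered child of $v$ whose dependency set is still undefined, and since any node with a defined status also has a defined dependency set, $v_i$ can only be a cyclic child, i.e.\ by the guard of \tref{ct} a forward ancestor of $v$ joined to $v$ by a single cyclic edge; this $v_i$ satisfies~(i) at once. Every other entry is inherited, either from a considered child's dependency set in the same union or through the reassignment in \tref{upd}, and I would close the induction by transitivity of forward ancestry along $E_f^+$ together with concatenation of the two cyclic-terminated paths furnished by the induction hypothesis, the concatenation again ending in a cyclic edge. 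The auxiliary invariant is what keeps inheritance through a backward edge inside the forward ancestors of $v$ (an inherited node is undetermined, hence on the active branch above $v$), and \tref{upd} preserves it precisely because it deletes a node from the dependency sets depending on it at the moment that node receives a \sat/\unsat status.

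Clauses~(ii) and~(iii) would then both follow by pushing a membership upward along the relevant $\rat$-chain. For~(ii), take a witnessing sequence $(v_1,\varphi_1)\rat\dotsb\rat(v_k,\varphi_k)$ for $\varphi$ being $\rat$-dependent on $v'=v_k$ (Definition~\ref{def: fulfilled dependent}); the endpoint $v_k$ has undefined status and, being a child of $v_{k-1}$ by Lemma~\ref{lem: rat}, is necessarily a cyclic child, so $v'\in D_{v_{k-1}}$ via the ``\textbf{else}'' branch. I would then propagate upward: each intermediate $v_i$ with $2\le i\le k-1$ carries status \tempsat, hence a defined and (by Lemma~\ref{lem: sts dep}) nonempty dependency set that is folded into $D_{v_{i-1}}$ up to removal of $v_{i-1}$, yielding in turn $v'\in D_{v_2}$ and finally $v'\in D_v$. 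For~(iii) I would run the identical propagation along a fulfilling sequence, the only change being that a node of status \sat on the way contributes $\emptyset$ and merely shortens the inclusion. In both clauses the delicate point is that the successive $\setminus\{v_{i-1}\}$ (and the outer $\setminus\{v\}$) must not discard the element being tracked; this is exactly where part~(i) is used, since a subtracted parent node $v_{i-1}$ is not a cyclic-descendant of $v_i$ and so cannot lie in $D_{v_i}$, while the degenerate loop-back $v'=v$ is what the subtraction $\setminus\{v\}$ absorbs and does not occur under the stated hypotheses.

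The step I expect to be hardest is the timing/monotonicity bookkeeping: dependency sets are read off at the depth-first instant a node's status is first computed, yet \tref{upd} and \tref{propag} later overwrite them, so the inclusions asserted in~(ii) and~(iii) must be shown to survive those rewrites. Making the auxiliary ``undetermined-ancestor'' invariant precise, and checking that the \tempsat-only rewrite $D_{v'}:=(D_{v'}\setminus\{v\})\cup D_v$ respects each inclusion, is where the real work lies; once that invariant is secured, clause~(i) supplies the bookkeeping that prevents an inadvertently subtracted node from being the very one required in~(ii) and~(iii).
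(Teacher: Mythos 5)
Your proposal is correct and follows essentially the same route as the paper's proof: an induction on the construction of $G$, a case analysis of the three ways dependency sets get written (the ``\textbf{else}~$\{v_i\}$'' branch giving cyclic children that are forward ancestors by the postorder/depth-first discipline, inheritance from children's sets, and the rewrite in \tref{upd}), with Lemma~\ref{lem: rat} and propagation along $\rat$-chains handling clauses~(ii) and~(iii). Your auxiliary invariant (members of dependency sets have undefined status, and a node is purged from all dependency sets the moment its status is defined) is exactly the fact the paper establishes in-line within its induction, so the two arguments coincide in substance.
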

\begin{proof}
	We proceed by induction on the inductive definition of $G$. A tableau for a formula $\varphi$ can be viewed as the graph which has as root a node whose label has as only active formula the input formula $\varphi_0$ (see line \ref{issat: root1} of \tref{isSat}) and then, according to the procedure call $\tref{isSat}(\varphi_0)$, it is inductively defined either by expanding some node, or by (re)defining the status and the dependency set of some node. 
	In the following, we examine all the cases that concern the definition of the dependency set of a node $v$ and we examine all the properties of the lemma simultaneously. Additionally, we assume that the lemma holds for all the nodes whose dependency sets have been defined before the one that we examine.
	
	The dependency set of a node is calculated by one of the procedures \tref{csp} and \tref{css} and it might be recalculated due to line \ref{ct: update} of \tref{ct} with the use of \tref{upd}. In the following, we examine the previous cases.
	
	There is no difference on the definition of the dependency set of a partial node with that of a state (see line \ref{csp: D} of \tref{csp} and line \ref{css: D} of \tref{css}). 
	Thus, we consider a node $v$ without distinguishing cases on its type. The procedures \tref{csp} and \tref{css} use the appropriate children of $v$. Let $v_c$ be one of them.	
	If the dependency set of $v_c$ is undefined, then its status is also undefined, as both attributes are calculated in a postorder manner. 
	Thus, $v_c$ must be a forward ancestor of $v$ and a cyclic child of $v$ (see also line \ref{ct: edge} of \tref{ct}). The algorithm adds $v_c$ to $D_v$. 
	Additionally, if $(v,\varphi)\rat(v_c,\psi)$, where $\varphi$ is an active eventuality of $v$ and $\psi$ an active eventuality of $v_c$ (recall Lemma \ref{lem: rat}), then $\varphi$ is $\rat$-dependent on $v_c$ which is in $D_v$.
	
	Now, if $D_{v_c}$ has been calculated, then its status should also have been calculated. 
	In this case, the nodes of $D_{v_c}$ are added to $D_v$. By the induction hypothesis, we know that the nodes of $D_{v_c}$ are forward ancestors of $v_c$ in which the node $v$ might be included. Thus, $v$ is removed from the set $D_v$ and as a result, all the nodes of $D_v$ are forward ancestors of $v$.
	Moreover, by the induction hypothesis, we also know that the nodes of $D_{v_c}$ are descendants of $v_c$ as the lemma requires. Since $v_c$ is a descendant of $v$, all the nodes which are added to $D_{v}$ also satisfy the lemma.
	Next, we examine the second property of the lemma. If $(v,\varphi)\rat(v_c,\psi)$ and $\psi$ is $\rat$-dependent on a node $v''$, then $\varphi$ is also $\rat$-dependent on the same node. Again, by the induction hypothesis, we know that $v''$ is in $D_{v_c}$ and as argued earlier, it is also in $D_v$. Notice that if $v'' = v$, then, by Definition \ref{def: fulfilled dependent}, $\varphi$ cannot be dependent on $v$ as $(v,\varphi)$ is related to the pair $(v_c,\psi)$.
	
	Finally, if $(v,\varphi)\rat(v_c,\psi)$ and the active eventuality $\psi$ of $v_c$ is $\rat$-fulfilled, then $\varphi$ is also $\rat$-fulfilled. By assumption, we know that if $\varphi$ is $\rat$-fulfilled, then it is through a sequence $(v,\varphi)\rat(v_c,\psi) \rat (v_1,\vartheta_1) \rat\dotsb \rat (v_k,\vartheta_k)$ of pairs, where $k\geq1$, in which there is at least one node whose status has been assigned the value \tempsat. Since $\varphi$ is $\rat$-fulfilled through $\psi$, it is clear that the same also holds for $\psi$. By the induction hypothesis, we know that for $i=1,\dotsc,k$, if $\sts_{v_i} = \tempsat$, then $D_{v_i} \setminus \{v_c\}$ is a subset of $D_{v_c}$. 
	After the definition of the status of $v_c$, there is no node whose dependency set contains $v_c$. If there was such a node, then by the first property of this lemma, $v_c$ would have at least one cyclic parent and the procedure \tref{upd} would be triggered due to line \ref{ct: update} of \tref{ct}. According to \tref{upd} and \tref{propag}, the node $v_c$ is removed from all the dependency sets. So, the desirable subset property remains true and more specifically, we have that for $i=1,\dotsc,k$, if $\sts_{v_i} = \tempsat$, then $D_{v_i} \subseteq D_{v_c}$.
	Since the nodes of $D_{v_c} \setminus \{v\}$ are added to $D_v$ and the active eventuality $\varphi$ of $v$ is $\rat$-fulfilled through the active eventuality $\psi$ of $v_c$, it is clear that the third property of the lemma is satisfied.

	Now, we examine the case of a node $v$ whose dependency set is recalculated by the procedure \tref{upd} in line \ref{upd: updateDep}. 
	Its dependency set is recalculated after the definition of the status of a node $v_1$ and thus, we need to admit that $\sts_v = \tempsat$ and $v_1\in D_v$ (see line \ref{upd: while_Sat_TempSat}).
	According to line \ref{upd: updateDep}, $D_v$ is updated as follows: $D_v := (D_v\setminus\{v_1\}) \cup D_{v_1}$.
	
	By the induction hypothesis, we know that the lemma holds for the nodes of $D_v$ before its redefinition and it also holds for the nodes of $D_{v_1}$. Therefore, since $v_1\in D_v$, $v_1$ is a forward ancestor of $v$ and the nodes of $D_{v_1}$ are forward ancestors of $v_1$. It is clear that the nodes of $D_{v_1}$ are also forward ancestors of $v$. Therefore, the nodes of the updated $D_v$ are forward ancestors of $v$. 
	Furthermore, by the induction hypothesis, we also know that $v_1$ is a descendant of $v$ and the nodes of $D_{v_1}$ are descendants of $v_1$. Thus, the nodes of $D_{v_1}$ which are added to the updated $D_v$ are descendants of $v$.
	
	Finally, after the definition of the status of $v_1$ and the definition of the relation $\rat$ for the active eventualities of $v_1$, no eventuality of $v$ can be $\rat$-dependent on $v_1$. If some active eventuality of $v$ was dependent on $v_1$ before the definition of $\sts_{v_1}$, then, after the definition of $\sts_{v_1}$, it is dependent on the same nodes with some active eventuality of $v_1$. By the induction hypothesis, we know that these dependency nodes are in $D_{v_1}$ which are added to $D_v$. Therefore, the second property of the lemma is satisfied.
	
	In a similar way, we can argue for the third property of the lemma. By the induction hypothesis, we know that it is satisfied for the active eventualities of $v$ and for those of $v_1$. After the redefinition of $D_v$, if an active eventuality of $v$ is $\rat$-fulfilled through an active eventuality of $v_1$, then we know that the necessary nodes are in the updated $D_v$.
\end{proof}

\begin{lem}		\label{lem: sat unsat} 
	After the completion of the construction of a tableau $G$, the status of each node $v$ is assigned either the value \unsat or the value \sat (i.e. there is no node $v$ such that $\sts_v$ is undefined or $\sts_v = \tempsat$).
\end{lem}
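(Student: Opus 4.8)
The plan is to derive the statement from two facts: that \tref{ct} assigns a status to every node it creates, and that by the end of the construction every dependency set is empty. The first is immediate. Whenever a node $v$ is actually added to $V$, control in \tref{ct} passes through exactly one of the branches of lines \ref{ct: contr}--\ref{ct: state}, each of which (directly, or through \tref{csp} or \tref{css}) assigns $\sts_v$ a value in $\{\sat, \tempsat, \unsat\}$; a cached candidate is never added to $V$, so it needs no status. Hence no node ends with $\sts_v = \bot$. Once emptiness of all dependency sets is known, Lemma \ref{lem: sts dep} (the property that a \tempsat node has a non-empty dependency set) rules out the value \tempsat, and the two observations together force every status into $\{\sat, \unsat\}$.

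The core of the argument is the claim that, once the call \tref{ct}$(\cdot, \cdot, v')$ returns, the node $v'$ lies in no dependency set and is never inserted into one again. For a node with no cyclic parent this holds vacuously, because by Lemma \ref{lem: dependency set}(i) membership in any dependency set entails having an incoming cyclic edge. For a node $v'$ that does have a cyclic parent, line \ref{ct: update} fires \tref{upd}$(v')$ exactly as the call for $v'$ is about to return; its two while-loops (lines \ref{upd: while_Unsat} and \ref{upd: while_Sat_TempSat}) between them exhaust every node $w$ with $v' \in D_w$, each time deleting $v'$ from $D_w$---either by re-setting $D_w$ to $\emptyset$ (line \ref{upd: unsat}) or by subtracting it explicitly (line \ref{upd: updateDep}, using that $v' \notin D_{v'}$ since $D_{v'}$ consists of forward ancestors of $v'$). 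So immediately after \tref{upd}$(v')$ no dependency set contains $v'$.

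To see that $v'$ is never re-inserted I would combine two observations. First, cyclic edges are created only towards forward ancestors (line \ref{ct: caching}), that is, only towards nodes currently on the forward-DFS recursion stack; since $v'$ has just been popped and all later processing takes place in siblings and ancestors of $v'$, no subsequent cyclic edge can point to $v'$, so $v'$ gains no new cyclic parent and, by Lemma \ref{lem: dependency set}(i) again, cannot re-enter a dependency set by fresh cyclic looping. Second, every later write to a dependency set---in \tref{csp}, \tref{css}, \tref{upd} or \tref{propag}---takes its new members either from an already-computed dependency set (which, by the claim applied to the earlier-completed nodes, is free of $v'$) or as a freshly created child, necessarily distinct from $v'$; crucially, since $v'$ has by now been assigned $D_{v'} \neq \bot$, the union in lines \ref{csp: D} and \ref{css: D} contributes $D_{v'}$ rather than the singleton $\{v'\}$, and \tref{propag} only ever empties dependency sets. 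Hence $v'$ stays out for good.

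With the claim in hand the conclusion is short: after the construction terminates every node $v'$ has completed, so $v' \notin D_w$ for all $w$; as this holds for every $v'$, each $D_w$ is empty, and by the first paragraph every status is \sat or \unsat. I expect the non-re-insertion half of the claim to be the main obstacle: it has to be verified against every site that writes a dependency set, and it leans on the structural invariant of Lemma \ref{lem: dependency set}(i) both to know that \tref{upd} is triggered for precisely the nodes that can occur in dependency sets and to rule out any write that would smuggle a completed node back in.
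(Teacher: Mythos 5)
Your proposal is correct and takes essentially the same route as the paper's proof: both arguments combine the facts that every node added to $V$ receives a status, that a \tempsat status forces a non-empty dependency set (Lemma \ref{lem: sts dep}), and that the update mechanism of \tref{upd} (triggered at line \ref{ct: update} of \tref{ct}, justified via Lemma \ref{lem: dependency set}(i)) removes a node from all dependency sets as soon as its status is defined, so that at termination every dependency set is empty. The only difference is the bookkeeping direction---the paper maintains the invariant that the members of a fixed \tempsat node's dependency set always have undefined statuses, while you maintain the contrapositive invariant that a completed node never re-enters any dependency set---which is the same argument viewed from the other side.
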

\begin{proof}
	By the way that the procedure \tref{ct} works, it is clear that after the end of the construction of a tableau, the status of each node has been calculated and defined. 
	In the following, we examine a node $v$ whose status has been assigned the value \tempsat.	By Lemma \ref{lem: unchanged sat unsat}, we know that the status of a node which has been assigned one of the values \sat and \unsat never changes. Thus, the status of $v$ was assigned the value \tempsat the first time it was calculated by one of the procedures \tref{csp} and \tref{css}.
	By Lemma \ref{lem: sts dep}, we have that $D_v\neq\emptyset$ and by Lemma \ref{lem: dependency set}, we know that the nodes of $D_v$ are forward ancestors of $v$. Moreover, at the moment of the definition of $\sts_v$ and of $D_v$, since the statuses are defined in a postorder manner, the statuses of the nodes of $D_v$ have not been defined yet. 
	We consider a node $v_1$ which belongs to $D_v$. By Lemma \ref{lem: dependency set}, we know that $v_1$ has at least one cyclic parent. After the definition of its status, according to line \ref{upd: updateDep} of \tref{upd} which is triggered due to line \ref{ct: update} of \tref{ct}, $v_1$ is removed from $D_v$ and the nodes of $D_{v_1}$ are added to $D_v$. Again, since $D_{v_1}$ is a set of forward ancestors whose statuses have not been defined yet, we conclude that the same also holds for $D_v$.
	Eventually, after the completion of the construction of a tableau, the statuses of all the nodes cannot remain undefined. Therefore, $D_v$ is eventually defined as the empty set. By Lemma \ref{lem: sts dep}, we know that the dependency set of a node whose status is assigned the value \tempsat cannot be the empty set. Thus, $\sts_v$ is eventually assigned one of the values \sat and \unsat.
\end{proof}

\begin{lem} \label{lem: tempsat rat}
	Let $G=(V,E)$ be a tableau and $v$ a node of $G$. 
	If $\sts_v = \tempsat$, then each active eventuality $\varphi$ of $v$ either is $\rat$-fulfilled or is $\rat$-dependent on some node.
\end{lem}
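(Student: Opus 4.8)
The plan is to rephrase the conclusion in terms of $\rat$-unfulfilled eventualities and then inspect every point in the algorithm at which the value \tempsat is assigned. By Definition \ref{def: fulfilled dependent}, an active eventuality is $\rat$-unfulfilled exactly when it is neither $\rat$-fulfilled nor $\rat$-dependent on any node; hence the conclusion of the lemma is equivalent to the assertion that no active eventuality of $v$ is $\rat$-unfulfilled. Since a node's status is set to \tempsat only in the procedures \tref{csp}, \tref{css} and \tref{upd}, it suffices to establish this assertion at each of these three points.

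The cases of \tref{csp} (line \ref{csp: sat_tempsat}) and \tref{css} (line \ref{css: sat_tempsat}) are immediate. In both procedures the value \tempsat is assigned in the \textbf{else}-branch guarded by the test whether $\actf_v\cap\ev$ contains an $\rat$-unfulfilled eventuality (lines \ref{csp: if_unfulfilled} and \ref{css: if_unfulfilled}). Reaching the \tempsat assignment means this test failed, so at that moment every active eventuality of $v$ is already $\rat$-fulfilled or $\rat$-dependent.

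The remaining case is the reassignment of \tempsat in the second \textbf{while}-loop of \tref{upd} (line \ref{upd: TempSat}). Here I would exploit the role of the first \textbf{while}-loop (lines \ref{upd: while_Unsat}--\ref{upd: propag_Unsat}): it is a fixpoint iteration that reassigns the value \unsat to every node $v'$ with $\sts_{v'}=\tempsat$, $v\in D_{v'}$, and possessing an $\rat$-unfulfilled active eventuality, iterating until none remains. Upon its termination the invariant holds that no \tempsat node with $v$ in its dependency set carries an $\rat$-unfulfilled active eventuality. It then remains to check that the second loop preserves this invariant, so that any node it keeps at \tempsat inherits the desired property. Updating a dependency set $D_{v'}$ (line \ref{upd: updateDep}) changes neither $\rat$-fulfillment nor $\rat$-dependency, since both notions depend only on $\rat$ and on node statuses; and the only status changes triggered by the second loop are promotions from \tempsat to \sat (line \ref{upd: Sat} together with the \tref{propag} call of line \ref{upd: propag_Sat}, which spreads only the value \sat).

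The crux is therefore to argue that promoting a node from \tempsat to \sat can never render an eventuality of a still-\tempsat node $\rat$-unfulfilled, and I expect this monotonicity step to be the main obstacle. The value \sat is admissible for the intermediate nodes of an $\rat$-fulfilling sequence, so such a promotion cannot break an existing $\rat$-fulfilling sequence. If instead the promotion breaks an $\rat$-dependency sequence, that sequence must pass through the promoted node; since that node now has status \sat its dependency set is empty (Lemma \ref{lem: sts dep}), so by Lemma \ref{lem: dependency set} its active eventualities cannot be $\rat$-dependent, and being not $\rat$-unfulfilled they are $\rat$-fulfilled, whence concatenation yields an $\rat$-fulfilling sequence for the eventuality in question. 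Thus the invariant survives the second loop, and combining the three cases shows that whenever $\sts_v=\tempsat$ no active eventuality of $v$ is $\rat$-unfulfilled, as required.
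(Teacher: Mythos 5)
Your proposal has the same skeleton as the paper's proof: the guards in \tref{csp} and \tref{css} give the property at assignment time, the first while-loop of \tref{upd} is read as a fixpoint guaranteeing that no \tempsat node of the update family has an $\rat$-unfulfilled active eventuality, and the second loop must preserve this. Your ``crux'' is in fact more explicit than the paper on a point it leaves implicit: that promotions from \tempsat to \sat cannot hurt, since a dependency sequence broken at a promoted node can be concatenated with a fulfilling sequence of that node (which exists by the family invariant together with Lemma \ref{lem: sts dep} and Lemma \ref{lem: dependency set}(ii)) into a fulfilling sequence; just note that repeated pairs must be excised, as Definition \ref{def: fulfilled dependent} forbids them.

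The genuine gap is elsewhere: the lemma claims the property at \emph{every} moment at which $\sts_v=\tempsat$, while your argument establishes it only at the moments \tempsat is assigned and shows stability under promotions. It says nothing about stability under the \unsat assignments performed by the first loop (lines \ref{upd: while_Unsat}--\ref{upd: propag_Unsat}) and by \tref{propag}. When such a node $u$ (so $v\in D_u$) is demoted, every fulfilling or dependency sequence in which $u$ occurs as an intermediate pair is invalidated; this can concern a node $w$ that is still \tempsat but lies \emph{outside} the update family ($v\notin D_w$), and such a $w$ is never re-examined during this call of \tref{upd}, so one of its eventualities could silently become $\rat$-unfulfilled. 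For family nodes the fixpoint nature of the first loop absorbs the cascade, but for non-family nodes one must show the scenario is impossible. That is precisely the content of the paper's closing paragraph (``the algorithm keeps the status and the dependency set of a node up-to-date''): if every fulfilling sequence of an eventuality of $w$ passes through a \tempsat node, then Lemma \ref{lem: dependency set}(iii) gives $D_u\setminus\{w\}\subseteq D_w$ for each such node $u$, so $v\in D_u$ would force $v\in D_w$ (note $w\neq v$, as $v$'s status has only just been defined), a contradiction; and a fulfilling sequence whose intermediate nodes are all \sat is untouched by demotions, since those statuses never change (Lemma \ref{lem: sts dep}(iii)). Without this ``up-to-date'' step (and its analogue for dependency sequences, on which the paper itself is terse), your three-case analysis does not yield the stated lemma.
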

\begin{proof}
	The procedures \tref{csp} and \tref{css} assign the value \tempsat to $\sts_v$ iff the following two conditions hold: 
	\begin{enumerate*} [i)]
		\item each one of the active eventualities of $v$ is $\rat$-fulfilled or it is $\rat$-dependent on some node and at the same time, 
		\item the dependency set of $v$ is not the empty set.
	\end{enumerate*}
	Additionally, the status of a node $v$ might be recalculated by the procedure \tref{upd}, under the condition that $\sts_v = \tempsat$. We assume that the status of $v$ is reassigned again with the value \tempsat in line \ref{upd: TempSat}. Since lines \ref{upd: while_Unsat}-\ref{upd: propag_Unsat} are avoided, we can conclude that no eventuality is $\rat$-unfulfilled. Thus, each eventuality of $v$ is $\rat$-fulfilled or is $\rat$-dependent on some node.
	
	Finally, we show that the algorithm keeps the status and the dependency set of a node up-to-date. 
	Suppose that $\varphi$ is an active eventuality of a node $v$ which is $\rat$-fulfilled through a sequence $(v_1,\vartheta_1) \rat\dotsb\rat (v_k,\vartheta_k)$ of related pairs, where $v_1=v$ and $\vartheta_1 = \varphi$. Moreover, we assume that there is a node $v_i$, where $1 <i\leq k$, such that $\sts_{v_i} = \tempsat$. We additionally assume that the status of $v_i$ changes. This has as a result to possibly affect the $\rat$-fulfillment of $\varphi$. 
	We show that the status of $v$ is updated accordingly. The status of $v_i$ can change only after the definition of the status of a node (see lines \ref{ct: contr}-\ref{ct: sts state} of \tref{ct}) through line \ref{ct: update} of \tref{ct} with the use of \tref{upd}. This procedure affects only the nodes whose dependency sets contain a specific node. By Lemma \ref{lem: dependency set}, we know that $D_{v_i}\setminus\{v\}$ is a subset of $D_v$. Thus, the node $v$ is also updated by the same procedure call of \tref{upd}.
	According to lines \ref{upd: while_Unsat}-\ref{upd: propag_Unsat} of \tref{upd}, if some active eventuality of $v_i$ is $\rat$-unfulfilled, then $\sts_{v_i}$ is assigned the value \unsat. Now, by the same lines of \tref{upd}, since $\varphi$ cannot be $\rat$-fulfilled anymore, the status of $v$ is also defined accordingly.
\end{proof}

\begin{lem}		\label{lem: sat children}
	For any node $v$ of a tableau $G=(V,E)$ such that $\sts_v = \sat$, 
	if it is a partial node, then it has at least one child such that its status has also been assigned the value \sat, and 
	if it is a state node, then the statuses of all its children have also been assigned the value \sat.
\end{lem}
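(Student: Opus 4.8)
The plan is to examine the four lines at which a status is set to \sat --- line~\ref{csp: sat_tempsat} of \tref{csp}, line~\ref{css: sat_tempsat} of \tref{css}, line~\ref{upd: Sat} of \tref{upd} and line~\ref{propag: sat} of \tref{propag} --- and to show that once the call of \tref{ct} that finalizes $\sts_v=\sat$ has returned, the children demanded by the statement already carry \sat. Since \sat is permanent by Lemma~\ref{lem: unchanged sat unsat} and every node is eventually \sat or \unsat by Lemma~\ref{lem: sat unsat}, this yields the claim for the completed tableau. The case of \tref{propag} is immediate: line~\ref{propag: sat} assigns \sat to $v$ only after explicitly checking that either $v$ is partial and the child currently holding \sat is one of its children, or that $v$ is a state \emph{all} of whose children are \sat.

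For \tref{csp} and \tref{css}, which set $\sts_v:=\sat$ exactly when $D_v=\emptyset$, I would inspect the children on which the status is computed (the \sat children if \tref{csp} finds one, otherwise the \tempsat and undefined children in \tref{csp}, and all children in \tref{css}). A child $v_c\neq v$ with undefined status would be inserted into $D_v$ at line~\ref{csp: D}, respectively line~\ref{css: D}, contradicting $D_v=\emptyset$; hence each such child is \sat or \tempsat, and $D_v=\emptyset$ forces $D_{v_c}=\{v\}$ on every \tempsat one. If a \sat child is present we are already done for a partial node; otherwise, and for the remaining \tempsat children of a state, each \tempsat $v_c$ with $D_{v_c}=\{v\}$ exhibits, by Lemma~\ref{lem: dependency set}(i), a cyclic edge into $v$, so \tref{ct} invokes \tref{upd}$(v)$ at line~\ref{ct: update}. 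As $\sts_v=\sat$ gives $D_v=\emptyset$, the second \textbf{while} loop of \tref{upd} rewrites $D_{v_c}:=D_{v_c}\setminus\{v\}=\emptyset$ and assigns $\sts_{v_c}:=\sat$ to each such child. Thus on return a partial $v$ has a \sat child and a state $v$ has all its children \sat (none left \tempsat; \unsat children are ruled out below).

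The main obstacle is line~\ref{upd: Sat} of \tref{upd}, where an earlier call \tref{upd}$(w)$ turns a previously \tempsat node $v$ into \sat because $D_v$ has shrunk to $\emptyset$. The point I would exploit is that dependency sets are propagated coherently. When $v$ first received \tempsat it already had a cyclic parent (any child $v_c$ with $D_{v_c}=\{v\}$ witnesses this through Lemma~\ref{lem: dependency set}(i)), so the \tref{upd}$(v)$ triggered at that moment flattened each such child into $D_{v_c}=D_v$, while by Lemma~\ref{lem: dependency set}(iii) every \tempsat child still supporting the status of $v$ satisfies $D_{v_c}\setminus\{v\}\subseteq D_v$. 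Consequently such a child holds $w$ in its dependency set precisely when $v$ does, so the single \textbf{while} loop of \tref{upd}$(w)$ visits $v$ together with all these children, drives each of their dependency sets to $\emptyset$, and assigns them \sat within the same call. This leaves a \sat child for a partial $v$ and no \tempsat child for a state $v$.

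Finally, for a state $v$ with $\sts_v=\sat$ I would exclude an \unsat child. A state with an \unsat child is set to \unsat by line~\ref{css: unsat1} of \tref{css} at its first evaluation, and whenever a child turns \unsat afterwards this is propagated to the state by the \unsat branch of \tref{propag} (line~\ref{propag: unsat}), respectively by the first \textbf{while} loop of \tref{upd}; each of these would force $\sts_v=\unsat$, incompatible with the permanent value \sat of $v$ by Lemma~\ref{lem: unchanged sat unsat}. Hence no child of a \sat state is \unsat, and since by the previous paragraphs none is left \tempsat once the finalizing call returns and every node is eventually \sat or \unsat (Lemma~\ref{lem: sat unsat}), all children of a \sat state are \sat, which completes the argument.
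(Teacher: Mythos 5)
Your proposal mirrors the paper's own strategy (case analysis on the assignment sites, permanence of \sat, the mechanics of \tref{upd} and \tref{propag}), and the \tref{propag} case is indeed immediate; but there are two genuine gaps. The first: in both the \tref{csp}/\tref{css} case and the line~\ref{upd: Sat} case you conclude that the \tempsat children with dependency set $\{v\}$ (resp.\ $\{v_0\}$) are promoted to \sat by the \emph{second} while loop of \tref{upd}, but you never rule out that the \emph{first} while loop (lines~\ref{upd: while_Unsat}--\ref{upd: propag_Unsat}), which runs earlier in the same call, assigns them \unsat because one of their active eventualities has become $\rat$-unfulfilled. Your closing paragraph only excludes \unsat children of \emph{states}; for a \emph{partial} node $v$ the \unsat branch of \tref{propag} fires only when \emph{all} children of $v$ are \unsat, so the first loop killing the single candidate child yields no contradiction with $\sts_v=\sat$, and $v$ would end up \sat with no \sat child --- exactly what the lemma forbids. (Even the state case needs this: when the triggering node is $v$ itself, \tref{propag}$(v,v_c)$ does not reach $v$ because $v\notin D_v$, so the \unsat child would not force $\sts_v=\unsat$.) The paper closes this hole explicitly: from $D_v=\emptyset$ and the branch taken it deduces that all active eventualities of $v$ are $\rat$-fulfilled, then combines Lemma~\ref{lem: tempsat rat} with Lemma~\ref{lem: dependency set}(ii) to show that every active eventuality of such a child is $\rat$-fulfilled or $\rat$-dependent only on $v$, hence $\rat$-fulfilled once $\sts_v$ and the $\rat$-pairs of $v$ exist; therefore the first loop never touches these children.

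The second gap is the justification of your key invariant in the line~\ref{upd: Sat} case, namely that every \tempsat child $v_c$ of $v$ satisfies $D_{v_c}\setminus\{v\}\subseteq D_v$, so that $v_c$ ``holds $w$ in its dependency set precisely when $v$ does.'' You cite Lemma~\ref{lem: dependency set}(iii) for this, but that lemma speaks only about nodes occurring in $\rat$-fulfillment sequences of active eventualities of $v$, and only under the hypothesis that fulfillment goes exclusively through sequences containing a \tempsat node. Children of a state created for capability statements $\neg\cpi A$, or for diamond formulas whose bodies are not eventualities, never occur in such sequences, yet the state half of the lemma requires \emph{all} of them to turn \sat. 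The paper does not derive this invariant from Lemma~\ref{lem: dependency set}: it establishes it at the first computation of $D_v$ (lines~\ref{csp: D}/\ref{css: D}) and then proves, by an induction over the construction, that every subsequent modification (a cyclic child getting its status, line~\ref{upd: updateDep}, the \tref{propag} updates) preserves it, additionally using that once $\sts_v$ is defined $v$ is purged from all dependency sets, which upgrades the inclusion to $D_{v_c}\subseteq D_v$. Without this induction, your ``same call'' argument for \tref{upd}$(w)$ --- and with it the claim that no child of a state is left \tempsat --- is unsupported.
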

\begin{proof}
	We consider a node $v$ such that $\sts_v = \sat$ and we show that the properties of the lemma hold. 
	Have in mind that the status of a node is initially defined by one of the procedures \tref{csp} and \tref{css} and then it may be updated several times with the use of \tref{upd} due to line \ref{ct: update} of \tref{ct}. By Lemma \ref{lem: sat unsat}, we know that $\sts_v$ is eventually assigned one of \sat and \unsat.
	
	
	We consider a node $v$ and we examine all the cases on how $\sts_v$ is assigned the value \sat. First, we assume that one of the procedures \tref{csp} and \tref{css} directly assigns the value \sat to $\sts_v$.
	By the way that these procedures work, since $\sts_v = \sat$, we can conclude that $D_v$ is defined as the empty set. Moreover, by Lemma \ref{lem: dependency set}, we can conclude that all the active eventualities of $v$ are $\rat$-fulfilled, as otherwise $D_v$ would not be the empty set.
	Both procedures consider the necessary and appropriate children of $v$, i.e only those whose status has not been assigned the value \unsat. Since $D_v = \emptyset$, by the way that the procedures define $D_v$, we can conlude that there is a child $v_c$ of $v$ for which either $D_{v_c} = \emptyset$ or $D_{v_c} = \{v\}$.
	In the first case, by Lemma \ref{lem: sts dep}, we have that $\sts_{v_c} = \sat$.
	In the second case, by the same lemma, we have to admit that $\sts_{v_c} = \tempsat$. By Lemma \ref{lem: dependency set}, since $v$ is in $D_{v_c}$, we know that $v$ has at least one cyclic parent. Thus, after the definition of the status of $v$, the necessary nodes are updated by the procedure call $\tref{upd}(v)$ (see line \ref{ct: update} of \tref{ct}).
	According to line \ref{upd: propag_Sat_Unsat}, since $\sts_{v} = \sat$, the procedure \tref{propag} is used and due to $\sts_v$, it assigns exclusively the value \sat. So, since $D_{v_c} = \{v\}$ and $\sts_{v_c} = \tempsat$, the node $v_c$ might be updated  and hence, we have that $\sts_{v_c}=\sat$.
	
	Next, we consider that $v_c$ is updated through lines \ref{upd: while_Unsat}-\ref{upd: TempSat}.
	By Lemma \ref{lem: tempsat rat}, we know that each active eventuality of $v_c$ is $\rat$-fulfilled or is $\rat$-dependent on some node.
	Since $D_{v_c}$ was initially the singleton $\{v\}$, by Lemma \ref{lem: dependency set}, an active eventuality of $v_c$ could be dependent only on $v$. As $\sts_v = \sat$ and all the active eventualities of $v$ are $\rat$-fulfilled, we conclude that all the active eventualities of $v_c$ are also $\rat$-fulfilled. Hence, lines \ref{upd: while_Unsat}-\ref{upd: propag_Unsat} are definitely not used. 
	Moreover, for the same reasons, no node whose dependency set contains $v$ could have an active eventuality which is $\rat$-unfulfilled. 
	Thus, $v_c$, as well as all the nodes whose dependency sets contain $v$, is updated through lines \ref{upd: while_Sat_TempSat}-\ref{upd: TempSat}. According to line \ref{upd: updateDep}, since $D_{v_c}=\{v\}$ and $D_v=\emptyset$, $D_{v_c}$ is also defined as the empty set. Therefore, the value \sat is assigned to $\sts_{v_c}$ in line \ref{upd: Sat}. 
	Notice that some other node might be updated first through the same lines of \tref{upd}. The value \sat is assigned to the status of any node $v'$ with the same initial properties with those of $v_c$ (i.e. $\sts_{v'}=\tempsat$ and $D_{v'} = \{v\}$) and then, the procedure \tref{propag} is used which assigns exclusively the value \sat (as $\sts_{v'}=\sat$). Thus, the node $v_c$ might be updated by \tref{propag} with the same result.
	
	
	Now, we assume that $\sts_v$ was initially assigned the value \tempsat and eventually, it is assigned the value \sat by the procedure \tref{propag} in one of lines \ref{upd: propag_Sat_Unsat} and \ref{upd: propag_Sat} of \tref{upd}. Observe that \tref{propag} assigns the value \sat iff the properties of the lemma are satisfied.
	Next, we assume that the value \sat is assigned to $\sts_v$ by the procedure $\tref{upd}$ in line \ref{upd: Sat}. More specifically, after the definition of the status of a node $v_0$, the procedure call $\tref{upd}(v_0)$ is triggered in line \ref{ct: update} of \tref{ct} which has led to the definition of $\sts_v$. 
	Thus, we can conclude that $\sts_v$ was initially assigned the value \tempsat and $v_0$ was in $D_v$. Moreover, since $\sts_v$ is eventually assigned the value \sat, by Lemma \ref{lem: sts dep}, we can conclude that $D_v$ is defined as the empty set in line \ref{upd: updateDep} of \tref{upd}. 
	Finally, since the update process was triggered after the definition of the status of $v_0$ and in line \ref{upd: updateDep} the node $v_0$ is removed from $D_v$ and the nodes of $D_{v_0}$ are added to it, we can conclude that $D_v$ was initially the set $\{v_0\}$ and $D_{v_0}$ was empty.
	
	We know that the first time that $\sts_v$ was defined by one of the procedures \tref{csp} and \tref{css}, the value $\tempsat$ has been assigned to $\sts_v$. 
	For the definition of $D_v$, both procedures consider those children of $v$ whose statuses have not been defined yet or have been assigned the value \tempsat. 
	So, after the definition of $D_v$, for each child $v_c$ of $v$, if $D_{v_c} = \bot$, then $v_c$ is in $D_v$, or else the set $D_{v_c} \setminus \{v\}$ is a subset of $D_v$.
	We show that this property is maintained throughout the construction of the tableau.
	
	Let $v_c$ be a child of $v$ whose status and dependency set have not been defined yet ($v_c$ is a cyclic child of $v$ and a forward ancestor of $v$) and thus, $v_c$ is in $D_v$. 
	After the definition of the status of $v_c$, $v$ is updated by the procedure \tref{upd} (due to line \ref{ct: update} of \tref{ct}). 
	If $\sts_{v_c}= \sat$, then we have shown what we wanted. 
	If $\sts_{v_c} = \unsat$ and at the same time, 
	the procedure \tref{propag} assigns the value \unsat to $\sts_v$, then we have reached a contradiction. Thus, if $v$ is partial node, then there is at least one child whose status has not been assigned the value \unsat and if $v$ is a state, then all of its children have not been assigned the value \unsat.
	If $\sts_{v_c} = \tempsat$, then $v_c$ is removed from $D_v$ and the nodes of $D_{v_c}\neq\emptyset$ are added to $D_v$. Thus, $D_v$ cannot be the empty set and as a result, it cannot be assigned the value \sat. Notice that by Lemma \ref{lem: dependency set}, $v$ cannot be in $D_{v_c}$, as $v_c$ is a forward ancestor of $v$ (and not the other way around). So, we have that $D_{v_c} \setminus \{v\} \subseteq D_v$.
	
	Now, we consider a child $v_c$ of $v$ such that $\sts_{v_c} = \tempsat$ and $D_{v_c} \setminus \{v\}$ is a subset of $D_v$.	
	By Lemma \ref{lem: dependency set}, we know that all the nodes of $D_{v_c}$ are forward ancestors of $v_c$ and their statuses have not been defined at the moment of the definition of $D_{v_c}$ and $\sts_{v_c}$.
	We also consider a node $v'$ which is in $D_{v_c}$ (notice that $v'$ could even be the node $v$). After the definition of its status, by Lemma \ref{lem: dependency set}, the procedure \tref{upd} is triggered by line \ref{ct: update} of \tref{ct}.
	
	If $\sts_{v'}=\sat$, then the procedure \tref{propag} is used (see line \ref{upd: propag_Sat_Unsat} of \tref{upd}). In this case, if $\sts_{v_c}$ is assigned the value \sat, then we have shown what we wanted. 
	If $\sts_{v'}=\unsat$, then \tref{propag} is used again. If the previous procedure assigns the value \unsat to $v_c$ and at the same time, it also assigns the same value to $\sts_v$, then we have reached a contradiction. Thus, if $v$ is partial node, then there is at least one child whose status has not been assigned the value \unsat and if $v$ is a state, then all of its children have not been assigned the value \unsat.
	Notice that we reach the same conclusions if the value \unsat is assigned to $\sts_{v_c}$ due to lines \ref{upd: while_Unsat}-\ref{upd: propag_Unsat} of \tref{upd}.
	
	On the other hand, if $v_c$ is not affected by lines \ref{upd: propag_Sat_Unsat}-\ref{upd: propag_Unsat} of \tref{upd}, then lines \ref{upd: while_Sat_TempSat}-\ref{upd: TempSat} are used, as $v'$ is in $D_{v_c}$ and $\sts_{v_c} = \tempsat$. 	
	According to line \ref{upd: updateDep}, $v'$ is removed from $D_{v_c}$ and all the nodes of $D_{v'}$ are added to $D_{v_c}$. Notice that if $v'=v$, then since $D_{v_c} \setminus \{v\}$ is a subset of $D_v$ we have that $D_{v_c} = D_v$.
	If $v'\neq v$, then $v'$ is also in $D_v$. Again due to line \ref{upd: updateDep}, we have that $v'$ is removed from $D_v$ and all the nodes of $D_{v'}$ are added to $D_v$. Thus, the desirable property, that is $D_{v_c} \setminus \{v\}$ is a subset of $D_v$, remains true.
	
	So, we have shown that during the construction of the tableau, before the assignment of $\sts_v$ with the value \sat, there is at least one child $v_c$ of $v$ whose status is assigned the value \tempsat and if $v$ is a state, then the statuses of its children have not been assigned the value \unsat. Notice that by Lemma \ref{lem: sts dep}, we have that $D_{v_c}\neq \emptyset$. Moreover, for each child $v_c$ of $v$, if $\sts_{v_c}=\tempsat$, then $D_{v_c} \setminus \{v\}$ is a subset of $D_v$. 
	Finally, since the status of $v$ has been defined, the dependency set of any node cannot contain $v$. Thus, for each child $v_c$ of $v$, if $\sts_{v_c}=\tempsat$, then $D_{v_c} \subseteq D_v$. 
	
	Now, recall that the value $\sat$ is assigned to $\sts_v$, after the definition of $v_0$, by the procedure \tref{upd} (see line \ref{upd: Sat}). 
	A necessary condition for the previous assignment is that $D_v$ is defined as the empty set in line \ref{upd: updateDep}. Recall that $D_v$ was initially the singleton $\{ v_0 \}$. Since $D_v$ is updated to the set $(D_v \setminus \{v_0\})  \cup D_{v_0}$, we can conclude that $D_{v_0}$ is the empty set.
	Moreover, all the active eventualities of $v$ are $\rat$-fulfilled, as otherwise $\sts_v$ would be assigned the value \unsat due to lines \ref{upd: while_Unsat}-\ref{upd: unsat}.
	Notice that the fulfillment relation $\rat$ for the active eventualities of $v$ is defined by one of the procedures \tref{csp} and \tref{css} by considering its appropriate children (recall also Lemma \ref{lem: rat}). So, the active eventualities of $v$ are $\rat$-fulfilled through a child of $v$ whose active eventualities are also $\rat$-fulfilled and whose status has been assigned the value \tempsat. 
	We denote this child as $v_c$. We have shown that $D_{v_c}\neq\emptyset$ is a subset of the initial dependency set of $v$ which was the singleton $\{v_0\}$. Therefore, we have that $D_{v_c} = \{v_0\}$. 
	Since $\sts_{v_c} = \tempsat$ and $D_{v_c} = \{v_0\}$, the status and the dependency set of $v_c$ are also updated by lines \ref{upd: while_Sat_TempSat}-\ref{upd: TempSat}.
	Again, lines \ref{upd: while_Unsat}-\ref{upd: unsat} are avoided, as we have shown that the active eventualities of $v_c$ are $\rat$-fulfilled. So, by the previous properties, we have that $D_{v_c}$ is defined as the empty set and $\sts_{v_c}$ is also assigned the value \sat.
\end{proof}

\begin{lem} \label{lem: rat fulfillment}
	Let $G=(V,E)$ be a tableau and $v$ a node such that $\sts_v = \sat$. After the assignment of the value \sat to $\sts_v$ and after the update of the appropriate nodes by the procedure \tref{upd}, each active eventuality $\varphi$ of $v$ is $\rat$-fulfilled through a sequence $(v_1,\vartheta_1)\rat\dotsb\rat(v_k,\vartheta_k)$ of $\rat$-related pairs, such that $v_1=v$ and $\vartheta_1=\varphi$ and for $i=1,\dotsc,k$, $\sts_{v_i} = \sat$.
\end{lem}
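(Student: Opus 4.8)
The plan is to reduce the statement to a fact about dependency sets, exploiting that $\sts_v=\sat$ forces $D_v=\emptyset$ by Lemma~\ref{lem: sts dep}. First I would record that every active eventuality $\varphi$ of $v$ is \emph{already} $\rat$-fulfilled: the algorithm assigns \sat to a node only once none of its active eventualities is $\rat$-unfulfilled (the tests in \tref{csp}, \tref{css} and the first while-loop of \tref{upd}), so $\varphi$ is either $\rat$-fulfilled or $\rat$-dependent on some node, and the latter is excluded by Lemma~\ref{lem: dependency set}(ii), which would place that node in the empty set $D_v$. Thus a fulfilling sequence for $\varphi$ exists, and the whole content of the lemma is to see that some fulfilling sequence passes only through \sat nodes.

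The key step is to control the statuses along a fulfilling sequence through the dependency sets, by showing that at the moment considered \emph{no} node $w$ has $v\in D_w$. By Lemma~\ref{lem: sts dep} only a \tempsat node can carry a nonempty dependency set, so it suffices to treat those. If $v$ has cyclic parents, then the triggered call $\tref{upd}(v)$ visits exactly the \tempsat nodes $w$ with $v\in D_w$ and resets $D_w:=(D_w\setminus\{v\})\cup D_v = D_w\setminus\{v\}$, using $D_v=\emptyset$; if instead $v$ has no cyclic parents, then by inspecting how dependency sets are populated in \tref{csp}, \tref{css} and \tref{upd} one sees that membership always originates from a \emph{cyclic} child, so $v$ never enters any dependency set at all. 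In either case $v\notin D_w$ for every node $w$ at this point.

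The argument then finishes by contradiction. Suppose $\varphi$ were $\rat$-fulfilled \emph{only} through sequences containing some node of status \tempsat. This is precisely the hypothesis of Lemma~\ref{lem: dependency set}(iii), so for every fulfilling sequence $(v_1,\vartheta_1)\rat\dotsb\rat(v_k,\vartheta_k)$ of $\varphi$ and every index $i$ with $\sts_{v_i}=\tempsat$ we obtain $D_{v_i}\setminus\{v\}\subseteq D_v=\emptyset$, whence $D_{v_i}\subseteq\{v\}$; together with $v\notin D_{v_i}$ from the previous step this forces $D_{v_i}=\emptyset$, contradicting Lemma~\ref{lem: sts dep}. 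Hence some fulfilling sequence of $\varphi$ avoids \tempsat nodes, and as $v_1=v$ is \sat, all of its nodes are \sat, which is the claim. I expect the main obstacle to be the middle step: pinning down exactly which dependency sets still contain $v$ right after the update, which hinges on the postorder discipline of the status computation together with Lemma~\ref{lem: dependency set}(i); granting that, the rest is a direct application of the cited lemmas.
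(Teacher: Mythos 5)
Your reduction in the second and third paragraphs is sound: given that every active eventuality of $v$ is $\rat$-fulfilled at the moment in question, the combination of Lemma~\ref{lem: dependency set}(iii) with $D_v=\emptyset$, the fact that $v$ has been purged from all dependency sets, and Lemma~\ref{lem: sts dep}(ii) does force some fulfilling sequence to avoid \tempsat nodes, and this is a cleaner finish than the paper gives for the corresponding situation. The genuine gap is in your first step. You justify Claim~A (``every active eventuality of $v$ is already $\rat$-fulfilled'') by asserting that the algorithm assigns \sat only after a test that no active eventuality is $\rat$-unfulfilled, citing the tests in \tref{csp}, \tref{css} and the first while-loop of \tref{upd}. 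But there is a fourth route to \sat that performs no such test: line~\ref{propag: sat} of \tref{propag} assigns \sat purely from the statuses of children, and \tref{propag} is invoked at line~\ref{upd: propag_Sat_Unsat} of \tref{upd} \emph{before} the first while-loop even runs (and again at line~\ref{upd: propag_Sat} on nodes the first while-loop never revisits, since they are no longer \tempsat once reassigned). The lemma quantifies over every node with $\sts_v=\sat$, so it must cover nodes that became \sat by propagation, and for those your justification of Claim~A does not apply.

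This is not a cosmetic omission: the propagation case is exactly where the work lies, and it is why the paper proves the lemma by induction on the construction of $G$ (its third case). For a node $v_k$ that receives \sat from \tref{propag}, one must argue that its eventualities, which before the update were only known to be fulfilled-or-dependent (Lemma~\ref{lem: tempsat rat}), are actually fulfilled \emph{now}: an eventuality that was $\rat$-dependent on the trigger node $v_0$ ceases to be dependent once $\sts_{v_0}$ is defined, and showing it is then fulfilled requires knowing that the child through which propagation occurred already has all its active eventualities $\rat$-fulfilled through \sat nodes --- i.e., the induction hypothesis of the very lemma being proved, applied to a node whose status was set earlier, together with the fact that \tref{csp}/\tref{css} created the $\rat$-links from $(v_k,\psi)$ into that child. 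Without some such well-founded argument over the order of \sat assignments, your Claim~A assumes (for propagation-assigned nodes) essentially the statement you are trying to prove. The fix is to wrap your steps (B) and (C) inside the paper's induction, using them to discharge the cases where \sat comes from \tref{csp}, \tref{css} or line~\ref{upd: Sat} of \tref{upd}, and to add the paper's propagation-chain argument for the \tref{propag} case.
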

\begin{proof}
	We proceed by induction on the inductive definition of $G$. A tableau for a formula $\varphi$ can be viewed as the graph which has as root a node whose label has as only active formula the input formula $\varphi_0$ (see \tref{isSat}) and then, according to the procedure call $\tref{isSat}(\varphi_0)$, it is inductively defined either by expanding a node, or by (re)defining the status and the dependency set of a node. 
	In the sequel, we consider a node $v$ whose status has been assigned the value \sat and we examine all the cases that concern the definition of its status to show that its active eventualities are $\rat$-fulfilled. Additionally, we assume that the lemma holds for all the nodes whose statuses have been defined and have been assigned the value \sat before the status of $v$.

	
	First, we assume that one of the procedures \tref{csp} and \tref{css} directly assigns the value \sat to $\sts_v$. Both procedures assign the value \sat iff the following conditions hold: 
	\begin{enumerate*} [i)]
		\item each one of the active eventualities of $v$ is $\rat$-fulfilled or it is $\rat$-dependent on some node and 
		\item the dependency set of $v$ is the empty set.
	\end{enumerate*}
	By Lemma \ref{lem: dependency set}, we can conclude that no active eventuality is $\rat$-dependent on some node, as otherwise $D_v$ would not be the empty set. Therefore, all the active eventualities of $v$ are $\rat$-fulfilled.
	
	We have shown that an active eventuality $\varphi$ of $v$ is $\rat$-fulfilled through a sequence $(v_1,\vartheta_1)\rat\dotsb\rat(v_k,\vartheta_k)$ of related pairs, where $v_1=v$ and $\vartheta_1 = \varphi$. By Definition \ref{def: fulfilled dependent}, we know that for $i=1,\dotsc,k$, $\sts_{v_i} \in \{\sat,\tempsat\}$. At the moment of the definition of $\sts_v$, we assume that there is a node $v_i$, where $1< i\leq k$, such that $\sts_{v_i} = \tempsat$ and we show that $\sts_{v_i}$ is eventually assigned the value \sat. By Lemma \ref{lem: sts dep}, we know that $D_v = \emptyset$ and $D_{v_i} \neq \emptyset$ and by the third property of Lemma \ref{lem: dependency set}, that the set $D_{v_i} \setminus \{v\}$ is a subset of $D_v$. 
	Hence, $D_{v_i}$ is the singleton $\{v\}$, as $D_v=\emptyset$. So, by Lemma \ref{lem: dependency set}, $v$ has at least one cyclic parent and the procedure \tref{upd} is used after the definition of $\sts_v$. Moreover, since $\sts_{v_i} = \tempsat$ and $D_{v_i} = \{v\}$, we know that $v_i$ is updated. 
	We need to show that lines \ref{upd: while_Unsat}-\ref{upd: propag_Unsat} are avoided, i.e. no active eventuality of $v_i$ is $\rat$-unfulfilled. All the nodes are up-to-date except for those whose dependency sets contain $v$ and thus, the procedure call $\tref{upd}(v)$ is used. By Lemma \ref{lem: tempsat rat}, we know that all the active eventualities of $v_i$ are $\rat$-fulfilled or $\rat$-dependent. Since $\sts_v=\sat$, no active eventuality of $v_i$ can be affected so that it would be determined as $\rat$-unfulfilled.
	Moreover, by Lemma \ref{lem: dependency set}, if some active eventuality of $v_i$ is $\rat$-dependent on a node, then that node is $v$, as $D_{v_i}=\{v\}$. As the active eventualities of $v$ are $\rat$-fulfilled, we have that the active eventualities of $v_i$ are also $\rat$-fulfilled. So, $\sts_{v_i}$ is assigned the value \sat.
	
	
	Besides the cases that concern the procedures \tref{csp} and \tref{css}, the status of $v$ may be assigned the value \sat by the procedure \tref{upd}. The necessary conditions for the update of $v$ are the initial assignment of the value \tempsat to $\sts_v$ and the existence of a node $v_0$ in $D_v$. After the definition of $\sts_{v_0}$, the procedure call $\tref{upd}(v_0)$ is triggered in line \ref{ct: update} of \tref{ct} which leads to the assignment of the value \sat to $\sts_v$.
	First, we assume that the value \sat is assigned to $\sts_v$ by line \ref{upd: Sat} of \tref{upd}. Since $\sts_v=\sat$, $D_v$ is defined as the empty set in line \ref{upd: updateDep}. Hence, we have to admit that $D_v$ was initially the singleton $\{v_0\}$ and at the same time, $D_{v_0} = \emptyset$. As lines \ref{upd: while_Unsat}-\ref{upd: propag_Unsat} are avoided, we conclude that no eventuality of $v$ is $\rat$-unfulfilled. Moreover, as $D_v = \emptyset$, by Lemma \ref{lem: dependency set}, we also conclude that no eventuality of $v$ is $\rat$-dependent on a node. Therefore, all the active eventualities of $v$ are $\rat$-fulfilled. 
	
	We have shown that an active eventuality $\varphi$ of $v$ is $\rat$-fulfilled through a sequence $(v_1,\vartheta_1)\rat\dotsb\rat(v_k,\vartheta_k)$ of related pairs, where $v_1=v$ and $\vartheta_1 = \varphi$. By Definition \ref{def: fulfilled dependent}, we know that for $i=1,\dotsc,k$, $\sts_{v_i} \in \{\sat,\tempsat\}$. 
	At the moment of the definition of $\sts_v$, we assume that there is a node $v_i$, where $1< i\leq k$, such that $\sts_{v_i} = \tempsat$ and we show that $\sts_{v_i}$ is eventually assigned the value \sat. 
	By Lemma \ref{lem: sts dep}, we know that $D_v = \emptyset$ and $D_{v_i} \neq \emptyset$ and by the third property of Lemma \ref{lem: dependency set}, that the set $D_{v_i} \setminus \{v\}$ is a subset of $D_v$. 
	Hence, $D_{v_i}$ is the singleton $\{v_0\}$, as $D_v$ was initially the singleton $\{v_0\}$. Thus, $v_i$ is updated after the definition of the status of $v_0$, as $v$ was updated. 
	We need to show that lines \ref{upd: while_Unsat}-\ref{upd: propag_Unsat} are avoided, i.e. no active eventuality of $v_i$ is $\rat$-unfulfilled. These lines are avoided as they have been avoided in the case of $v$. Do not forget that $\varphi$ of $v$ is $\rat$-fulfilled through $v_i$ and if $\sts_{v_i}$ was assigned the value \unsat due to some $\rat$-unfulfilled eventuality, then $\varphi$ would also be $\rat$-unfulfilled. Thus, all the active eventualities of $v_i$ are also $\rat$-fulfilled. So, the status of $v_i$ is assigned the value \sat.
	
	
	Next, we assume that the value \sat is assigned to $\sts_v$ by the procedure \tref{propag} in one of lines \ref{upd: propag_Sat_Unsat} and \ref{upd: propag_Sat} of \tref{upd}.
	The previous assignment takes place in line \ref{upd: propag_Sat_Unsat} (through the procedure call $\tref{propag}(v_0,v_0)$), only if we assume that $\sts_{v_0}=\sat$. 
	As argued earlier, since $\sts_{v_0}=\sat$ has been defined by one of \tref{csp} and \tref{css}, we can conclude that the active eventualities of $v_0$ are $\rat$-fulfilled. 
	Additionally, $v$ might also be updated through another node $v'$ which initially satisfied the same properties as $v$ (i.e. $\sts_{v'} = \tempsat$ and $v_0 \in D_{v'}$). The node $v'$ is updated through lines \ref{upd: while_Sat_TempSat}-\ref{upd: TempSat} and the value \sat is assigned to $\sts_{v'}$ in line \ref{upd: Sat} (as $D_{v'}= \emptyset$). As argued earlier, the active eventualities of $v'$ are $\rat$-fulfilled. We assume that $v$ is updated by the procedure call $\tref{propag}(v_0,v')$ in line \ref{upd: propag_Sat}.
	
	In both previous cases, there is a node $v_1$ whose status has been assigned the value $\sat$ and its active eventualities are $\rat$-fulfilled. Also, it has triggered the update of the status of $v$ through the procedure \tref{propag}. Thus, there is a sequence $v_1,\dotsc,v_k$ of nodes with $k\geq2$ such that $v_k=v$ and for $i=2,\dotsc,k$,
	$v_i$ is a parent of $v_{i-1}$ and  
	the status $\sts_{v_i}$ was initially assigned the value $\tempsat$ and $v_0$ was in $D_{v_i}$ and eventually, after the application of \tref{propag}, we have that $\sts_{v_i}=\sat$ and $D_{v_i}=\emptyset$. 
	Moreover, according to \tref{propag}, for $i=2,\dotsc,k$, if $v_i$ is a partial node, then the status of at least one of its children has been assigned the value \sat (notice that $\sts_{v_{i-1}} = \sat$) and if $v_i$ is a state node, then the statuses of all of its children have been assigned the value \sat.
	
	We know that all the active eventualities of $v_1$ are $\rat$-fulfilled. We show that the same also holds for the active eventualities of $v_k$.
	First, we assume that $v_k$ is a partial node. Have in mind that the status of $v_k$ was initially assigned the value \tempsat. So, by the way that the procedure \tref{csp} defines the relation $\rat$, we can conclude that for each active eventuality $\psi$ of $v_k$, we have that $(v_k,\psi)\rat(v_{k-1},\psi')$, where $\psi'$ is some formula of $v_{k-1}$.
	By Lemma \ref{lem: rat}, we know that there are two cases. In the first, $\psi'$ is the necessary non-eventuality in order for $\psi$ to be $\rat$-fulfilled. In the second case, $\psi'$ is an active eventuality of $v_{k-1}$. Since $\sts_{v_{k-1}}$ is assigned the value \sat before $v_k$, by the induction hypothesis, we know that $\psi'$ is $\rat$-fulfilled. Therefore, since $(v_k,\psi) \rat (v_{k-1},\psi')$, $\psi$ is also $\rat$-fulfilled. 
	
	Now, we assume that $v_k$ is a state. An active eventuality $\varphi$ of $v_k$ is some formula $\neg\forall A.\xi$ with $A\in\atp_\Omega$. By the way that \tref{ansr} defines the children of a state, we know that there is a child $v'$ of $v_k$ such that $\neg\xi$ is an active eventuality of $v'$. By the way that \tref{css} defines the fulfillment relation $\rat$, we know that $(v_k,\varphi) \rat (v',\neg\xi)$. No matter whether $v'=v_{k-1}$ or not, $v'$ is a child of $v_k$ whose status has been assigned the value \sat before $v_k$ (otherwise, \tref{propag} would not assign the value \sat to $\sts_{v_k}$) and by the induction hypothesis, we know that $\neg\xi$ is $\rat$-fulfilled. Since $(v_k,\varphi)\rat (v',\neg\xi)$, we can conclude that $\varphi$ is also $\rat$-fulfilled.
\end{proof}

\subsection{Soundness} 		\label{subsec: soundness}

\begin{defi} \label{def: fat path} 
The \textit{fat path} $\fp(G)=(V',E')$ of a tableau $G=(V,E)$ such that the status of its root node has been assigned the value \sat is a subgraph of $G$ and it is inductively defined as follows:
\begin{itemize}
\item The root node of $G$ belongs to the fat path $\fp(G)$.
\item If $v\in V'$ is a partial node, then all of its children $v'\in V$ such that $\sts_{v'} = \sat$ are added to $\fp(G)$ as children of $v$ with the same type of edges as in $G$.
\item If $v\in V'$ is a state, then all of its children are added to $\fp(G)$ as its children with the same type of edges as in $G$.
\end{itemize}
Finally, we say that a partial node $v'$ \textit{ends up at} a state $v$ in a fat path $\fp(G)=(V,E)$ iff there is a sequence $v_1,\dotsc,v_k$ of nodes of $V$ with $k>1$ such that $v_1=v'$, $v_k=v$ and for $i=1,\dotsc,k-1$, $v_i$ is a partial node and $(v_i,v_{i+1})\in E$.
\end{defi}

\begin{lem}	\label{lem: fat path properties} 
Let $\fp(G)$ be the fat path of a tableau $G$.
The status of each node of $\fp(G)$ has been assigned the value \sat and each partial node of $\fp(G)$ has at least one child node.
\end{lem}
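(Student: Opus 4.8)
The plan is to argue by induction on the inductive definition of the fat path $\fp(G)$ given in Definition \ref{def: fat path}, using Lemma \ref{lem: sat children} as the main tool at each step. The two assertions will be established simultaneously: that every node of $\fp(G)$ carries status \sat, and that every partial node of $\fp(G)$ has at least one child in $\fp(G)$.

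First I would treat the base case. The only node introduced by the first clause of Definition \ref{def: fat path} is the root of $G$, and the fat path is defined precisely under the hypothesis that the status of the root has been assigned the value \sat. Hence the root satisfies the status assertion, and the base of the induction is in place.

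Next comes the inductive step. Suppose $v\in V'$ already belongs to $\fp(G)$ with $\sts_v=\sat$, and consider the children that Definition \ref{def: fat path} attaches to $v$. If $v$ is a partial node, Lemma \ref{lem: sat children} guarantees, since $\sts_v=\sat$, that $v$ has at least one child $v'$ in $G$ with $\sts_{v'}=\sat$; the partial-node clause of the fat path definition adds to $\fp(G)$ exactly those children of $v$ whose status is \sat, so $v'$ is added and every child of $v$ that enters $\fp(G)$ has status \sat. This simultaneously supplies $v$ with a child in $\fp(G)$ and preserves the status invariant for the newly added nodes. If instead $v$ is a state, Lemma \ref{lem: sat children} gives that \emph{all} children of $v$ have status \sat; the state clause of the definition adds all children of $v$, each of which therefore has status \sat.

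Since every node of $\fp(G)$ is either the root or is introduced as a child in one of these two clauses, the induction shows that each node of $\fp(G)$ has status \sat, which is the first assertion. The second assertion then follows at once: any partial node $v$ of $\fp(G)$ has $\sts_v=\sat$ by the first assertion, so Lemma \ref{lem: sat children} yields a child of $v$ with status \sat, and by the partial-node clause of Definition \ref{def: fat path} this child lies in $\fp(G)$. I do not anticipate a genuine obstacle, as the argument is essentially a direct transfer of Lemma \ref{lem: sat children} through the recursive construction; the only points requiring care are to invoke the defining hypothesis of $\fp(G)$ (root status \sat) for the base case, and to note that the partial-node clause retains only the \sat-children, so that the status invariant is never broken as the fat path grows.
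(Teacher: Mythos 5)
Your proof is correct and follows exactly the route the paper takes: the paper's proof simply states that the lemma is an immediate consequence of Definition \ref{def: fat path} and Lemma \ref{lem: sat children}, and your induction over the recursive construction of $\fp(G)$ is the natural spelling-out of that one-line argument.
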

\begin{proof}
The proof is an immediate consequence of Definition \ref{def: fat path} and of Lemma \ref{lem: sat children}.
\end{proof}

\begin{defi} 		\label{def: tableau path}
Let $G=(V,E)$ be a tableau or a subgraph of a tableau, $v,v'\in V$ nodes and formulas $\varphi\in\Phi_v$ and $\varphi'\in\Phi_{v'}$.
A \emph{tableau path} $\tp(v,\varphi,v',\varphi')$ in $G$ is a sequence $(v_0,\psi_0), \dotsc, (v_k,\psi_k)$ of pairs of nodes and formulas with $k\geq0$ such that $v_0 = v$, $\psi_0=\varphi$, $v_k=v'$, $\psi_k=\varphi'$ and for $i =0,\dotsc,k$, $\psi_i \in \Phi_{v_i}$ and for $i=0,\dotsc,k-1$,
	\begin{enumerate}
	\item if $v_{i+1} \!= v_i$, then $\psi_i\in\rdf_{v_i}$ and $\psi_i$ is fully reduced to $\psi_{i+1}$ within $v_{i+1}$, in symbols $\psi_i\vtrd_{v_{i+1}}\!\psi_{i+1}$,
	
	\item if $v_i\neq v_{i+1}$ and $v_i$ is a partial node, then $(v_i,v_{i+1})\in E$ and either $\psi_i \vtrd_{v_{i+1}} \psi_{i+1}$, or $\psi_{i+1}=\psi_i$,
	
	\item if $v_i\neq v_{i+1}$ and $v_i$ is a state, then $(v_i,v_{i+1})\in E$ and $\psi_i=\neg\forall A.\chi \in\actf_{v_i}$ with $A\!\in\atp_\Omega$ and $\psi_{i+1}=\neg\chi$.
\end{enumerate}
\end{defi}

\begin{lem}	\label{lem: tableau path}
	Let $\fp(G)=(V',E')$ be a fat path of a tableau $G=(V,E)$.
	For any node $v$ of $\fp(G)$, for any active eventuality $\varphi = \neg\forall A_1.\dotsb\forall A_k.\forall A^*.\chi$ of $\lab_v$ with $k\geq0$ and $\neg\chi\notin\ev$,
	there is a tableau path $\tp(v,\varphi,v_1,\neg\chi)$ in $\fp(G)$, where $v_1$ is a node of $V'$.
\end{lem}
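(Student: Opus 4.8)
The plan is to obtain, via Lemma \ref{lem: rat fulfillment}, a $\rat$-fulfilling sequence for $\varphi$ all of whose nodes carry status \sat (so that they automatically lie inside $\fp(G)$), and then to unwind that sequence step by step into a tableau path: each $\rat$-link is translated into the corresponding edge-crossing, and the within-node reductions that the relation $\reach$ silently records are expanded into explicit reduction steps.

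First I would observe that, since $v$ is a node of $\fp(G)$, Lemma \ref{lem: fat path properties} gives $\sts_v=\sat$. Applying Lemma \ref{lem: rat fulfillment} to $v$ and its active eventuality $\varphi$ yields a sequence $(u_1,\vartheta_1)\rat\dotsb\rat(u_m,\vartheta_m)$ of $\rat$-related pairs with $u_1=v$, $\vartheta_1=\varphi$ and $\sts_{u_i}=\sat$ for every $i$; by Definition \ref{def: fulfilled dependent} the final formula is $\vartheta_m=\neg\chi$, and each $\vartheta_i$ is an active formula of $u_i$ (an eventuality for $i<m$). Next I would show, by induction on $i$, that every $u_i$ belongs to $\fp(G)$ and that each edge $(u_i,u_{i+1})$ lies in $E'$. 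The base case $u_1=v\in V'$ is the hypothesis. For the step, Lemma \ref{lem: rat}(i) gives that $u_{i+1}$ is a child of $u_i$; if $u_i$ is partial then $\sts_{u_{i+1}}=\sat$ and if $u_i$ is a state the child is taken unconditionally, so in either case Definition \ref{def: fat path} places $u_{i+1}$ in $\fp(G)$ joined to $u_i$ by an edge of the same type as in $G$.

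Then I would convert each $\rat$-link into a tableau-path segment. When $u_i$ is a state, Lemma \ref{lem: rat}(ii) forces $\vartheta_i=\neg\forall B.\vartheta\in\actf_{u_i}$ with $B\in\atp_\Omega$ and $\vartheta_{i+1}=\neg\vartheta$, so $(u_i,\vartheta_i),(u_{i+1},\vartheta_{i+1})$ is directly a step of type (3) of Definition \ref{def: tableau path}. When $u_i$ is partial, the link is produced by line \ref{csp: rat} of \tref{csp}, hence $\vartheta_{i+1}\in\reach(\vartheta_i,\lab_{u_{i+1}})$; since $\reach(\vartheta_i,\lab_{u_{i+1}})$ is defined, we already have $\vartheta_i\in\Phi_{u_{i+1}}\cap\ev$, and unfolding Definition \ref{def: reach} gives a chain $\vartheta_i=\psi_1\vtrd_{u_{i+1}}\psi_2\vtrd_{u_{i+1}}\dotsb\vtrd_{u_{i+1}}\psi_r=\vartheta_{i+1}$ with $\psi_1,\dotsc,\psi_{r-1}\in\rdf_{u_{i+1}}\cap\ev$. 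I would realize this link as the segment $(u_i,\vartheta_i),(u_{i+1},\vartheta_i),(u_{i+1},\psi_2),\dotsc,(u_{i+1},\vartheta_{i+1})$, whose first step crosses the edge keeping the formula fixed (type (2), with $\psi_{i+1}=\psi_i$) and whose remaining steps are within-node full reductions (type (1), legitimate because each $\psi_j\in\rdf_{u_{i+1}}$). Concatenating these segments for $i=1,\dotsc,m-1$, identifying the shared endpoint $(u_{i+1},\vartheta_{i+1})$ of consecutive segments, produces a single tableau path $\tp(v,\varphi,u_m,\neg\chi)$ lying entirely in $\fp(G)$, which is the required path (with $u_m$ playing the role of the node $v_1$ in the statement).

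The main obstacle will be the faithful translation of the $\reach$-mediated $\rat$-links for partial nodes into explicit type-(1) and type-(2) steps, since $\reach$ collapses an entire chain of full reductions inside the child's label into a single relation; the care needed is to verify that the starting formula $\vartheta_i$ genuinely lies in $\Phi_{u_{i+1}}$ (which is exactly the standing hypothesis making $\reach(\vartheta_i,\lab_{u_{i+1}})$ defined) and that each intermediate formula lies in $\rdf_{u_{i+1}}$ as demanded by clause (1) of Definition \ref{def: tableau path}, so that the expanded segment is a legitimate tableau path and not merely a suggestive abbreviation.
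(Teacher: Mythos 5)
Your proof is correct, and its skeleton coincides with the paper's: both start from Lemma \ref{lem: fat path properties} and Lemma \ref{lem: rat fulfillment} to obtain a $\rat$-fulfilling sequence running through nodes with status \sat (hence nodes of $\fp(G)$), dispatch the state-to-child links by Lemma \ref{lem: rat}, and reduce the partial-to-child links to the $\reach$ chains underlying line \ref{csp: rat} of \tref{csp}. Where you genuinely differ is in how those chains are converted into tableau-path steps. You expand each chain $\vartheta_i=\psi_1\vtrd_{u_{i+1}}\dotsb\vtrd_{u_{i+1}}\psi_r=\vartheta_{i+1}$ into one edge-crossing step with the formula held fixed (clause (2) of Definition \ref{def: tableau path} with equal formulas) followed by explicit within-node clause-(1) steps; this is sound precisely because Definition \ref{def: reach} supplies the memberships your last paragraph worries about, namely $\vartheta_i\in\Phi_{u_{i+1}}$ (guaranteed by Lemma \ref{lem: saturation properties} and needed for $\reach(\vartheta_i,\lab_{u_{i+1}})$ to be defined) and $\psi_j\in\rdf_{u_{i+1}}$ for $j<r$. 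The paper instead invokes Lemma \ref{lem: properties of vtrd} to show that every such chain has length at most two: after one full reduction, an eventuality either becomes the non-eventuality $\neg\chi$ or acquires a leading modality from $\atp_\Omega$ and is therefore active, so the chain stops. Consequently, in the paper the $\rat$-sequence is \emph{verbatim} a tableau path, containing no within-node steps at all. Your version is marginally more robust (it would survive even if $\reach$ chains could be long) and avoids one auxiliary lemma; the paper's version yields a shorter, structurally cleaner path at the cost of that extra appeal. Both arguments completely prove the statement.
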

\begin{proof}
	Since $v$ is a node of $\fp(G)$, by Lemma \ref{lem: fat path properties}, we know that $\sts_v=\sat$. Hence, by Lemma \ref{lem: rat fulfillment}, we can conclude that $\varphi$ is $\rat$-fulfilled through nodes whose statuses have been assigned the value \sat (notice that all of them are in $\fp(G)$). Therefore, there is a sequence $(v_1,\psi_1) \rat (v_2,\psi_2) \rat\dotsb\rat (v_n,\psi_n)$ of $\rat$-related pairs of nodes of $\fp(G)$ and of formulas such that $v_1 = v$, $\psi_1=\varphi$ and $\psi_k=\neg\chi$. By the properties of Lemma \ref{lem: rat}, we can conclude that for $i=1,\dotsc,n-1$, $\psi_i$ is an eventuality of the form $\neg\forall B_1.\dotsb\forall B_r.\forall A^*.\chi$ with $r\geq0$.
	
	In the following, we argue that the sequence $(v_1,\psi_1),\dotsc,(v_n,\psi_n)$ is the necessary tableau path. We consider a pair $(v_i,\psi_i)$ with $1\leq i \leq n-1$, we examine all the necessary cases and we show that the subsequence $(v_i,\psi_i), (v_{i+1},\psi_{i+1})$ is compatible with the requirements of Definition \ref{def: tableau path}.
	
	We distinguish cases based on what kind of node $v_i$ is. 
	If $v_i$ is a state, then by Lemma \ref{lem: rat} we can conclude that 
		$v_{i+1}$ is a child of $v_i$, 
		$\psi_i$ is an active eventuality of $v_i$ of the form $\neg\forall C.\vartheta$ with $C\in\atp_\Omega$ and 
		$\psi_{i+1}$ is the formula $\neg\vartheta$.
	The pairs $(v_i,\psi_i),(v_{i+1},\psi_{i+1})$ are compatible with the requirements of Definition \ref{def: tableau path}.
	
	Now, we assume that $v_i$ is a partial node. By Lemma \ref{lem: rat}, we know that $v_{i+1}$ is a child of $v_i$. 
	The procedure \tref{csp} uses the set $\reach(\psi_i,\lab_{v_{i+1}})$ for the definition of $(v_i,\psi_i) \rat (v_{i+1},\psi_{i+1})$. More specifically, the formula $\psi_{i+1}$ belongs to the set $\reach(\psi_i,\lab_{v_{i+1}})$.
	By Definition \ref{def: reach}, we know that there is a sequence $\vartheta_1,\dotsc,\vartheta_m$ of formulas with $m\geq1$ such that $\vartheta_1=\psi_i$ and $\vartheta_m=\psi_{i+1}$ and for $j=1,\dotsc,m-1$, $\vartheta_j \in \rdf_{v_{i+1}}$ is a reduced $\alpha$/$\beta$ formula and $\vartheta_j\vtrd_{v_{i+1}}\vartheta_{j+1}$.
	In the case that $m=1$, we can conclude that $\psi_i$ and $\psi_{i+1}$ are the same formula which is also an active eventuality of $v_{i+1}$. Once more, the pairs $(v_i,\psi_i),(v_{i+1},\psi_{i+1})$ are compatible with the requirements of Definition \ref{def: tableau path}.
	In the case that $m>1$, since $\vartheta_1\vtrd\vartheta_2$, by Lemma \ref{lem: properties of vtrd}, we know that there are two cases for $\vartheta_2$. The first case is that it is the non-eventuality $\neg\chi$ (notice that $m=2$). In the other case, $\vartheta_2$ is an active eventuality of $v'$ of the form $\neg\forall B_1.\dotsb\forall B_r.\forall A^*.\chi$ with $r\geq1$ and $B_1\in\atp_\Omega$. Since $B_1\in\atp_\Omega$, $\vartheta_2$ cannot be an $\alpha$/$\beta$ formula. Again, we have that $m=2$. 
	In both cases, we have that $\vartheta_1\vtrd\vartheta_2$ and $m=2$ and hence, the requirements of Definition \ref{def: tableau path} are satisfied.
\end{proof}

\begin{thm}[Soundness] \label{thm: soundness}
	If the status of the root node of the tableau $G = (V, E)$ for a formula $\varphi_0$ was assigned the value \sat (i.e. the procedure call $\tref{isSat}(\varphi_0)$ returns true), then $\varphi_0$ is satisfiable.
\end{thm}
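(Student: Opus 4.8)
The plan is to build a Hintikka structure $\mc{H}$ out of the fat path $\fp(G)$ and then invoke Lemma~\ref{lem: Hintikka->Model} to obtain a model satisfying $\varphi_0$. Since the root of $G$ has status \sat, the fat path $\fp(G)$ is defined, and by Lemma~\ref{lem: fat path properties} every one of its nodes has status \sat and every partial node has a child. The states of $\mc{H}$ will be (copies of) the saturated nodes, i.e.\ the states, occurring in $\fp(G)$, a saturated node $v$ contributing the label $L(v)=\Phi_v$. The relations $\leads{A}$ for $A\in\atp_\Omega$ come from the transitional rule: whenever a state $v$ contains a diamond $\neg\forall A.\chi\in\actf_v$ with $A\in\atp_\Omega$, the procedure \tref{ansr} has created a \sat child (a child of a \sat state is \sat by Lemma~\ref{lem: sat children}) which, by Lemma~\ref{lem: saturation properties}, ends up at a state $v'$ whose label contains $\neg\chi$, and I set $v\leads{A}v'$. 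Because the root, viewed as a possibly partial node, ends up at a state $s_0$ with $\{\varphi_0\}=\Phi_{root}\subseteq\Phi_{s_0}$ by Lemma~\ref{lem: saturation properties}(i), it then suffices to verify that $\mc{H}$ is a genuine Hintikka structure: for then $\varphi_0\in L(s_0)$ and Lemma~\ref{lem: Hintikka->Model} yields satisfiability.

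The bulk of the work is checking the conditions of Definition~\ref{def: Hintikka structure}. Conditions \ref{contr} and \ref{cap1} hold because every fat-path node has status \sat and so survived the contradiction test of \tref{ct} (line~\ref{ct: contr}); hence no complementary pair $\{p,\neg p\}$ or $\{\cpi A,\neg\cpi A\}$ and no formula $\neg\cpi\psi$ can occur in its label. Condition \ref{alpha/beta} is exactly the defining property of a \emph{saturated} label: its active set contains no $\alpha$/$\beta$ formula, so by Definition~\ref{def: label} each $\alpha$/$\beta$ formula of $\Phi_v$ already has one of its reduction sets inside $\Phi_v$. Conditions \ref{box} and \ref{Omega} follow from the way \tref{ansr} forms transitional children: box formulas $\forall a.\vartheta$ and $\forall\Omega.\vartheta$ are always active, and \tref{ansr} copies the matching $\vartheta$ into the child's label; for the shape requirement in \ref{Omega}(i) one checks that bare $\forall\Omega.$-boxes are generated only in the paired form $\forall\Omega.\forall\Omega^*.\varphi$, arising from reducing $\forall\Omega^*.\varphi$. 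Condition \ref{diamond} is immediate from the definition of $\leads{A}$, and condition \ref{cap2} follows from the capability branch of \tref{ansr} (lines~\ref{ansr: capB}--\ref{ansr: capE}), which for each $\neg\cpi A\in\Phi_v$ with $A\in\wt{\Sigma}$ produces precisely the child demanded by \ref{cap2}.

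The central obstacle is condition \ref{ev}, the fulfillment of eventualities. Here I would combine Lemma~\ref{lem: rat fulfillment}, which guarantees that every active eventuality of a \sat node is $\rat$-fulfilled through \sat nodes, with Lemma~\ref{lem: tableau path}, which already packages this as a tableau path $\tp(v,\varphi,v_1,\neg\chi)$ inside $\fp(G)$. The remaining task is to convert such a tableau path into a \emph{structure} path in the sense of Definition~\ref{def: structure path}: each maximal run of static/reduction steps through partial nodes must be absorbed into the single saturated state at which those partial nodes end up, using that this end-state's label contains all ancestor formulas (Lemma~\ref{lem: saturation properties}(i)) and, being saturated, contains the reduction sets witnessing every $\vtrd$ step, while each transitional step across a state becomes a $\leads{A}$ edge. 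Matching Definitions~\ref{def: tableau path} and \ref{def: structure path} step by step and contracting repeated pairs yields the required $\stp(s,\varphi,s',\neg\chi)$.

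The last delicate point is condition \ref{transitions}, which forbids two distinct process labels on one pair of states; this is precisely why the states of $\mc{H}$ are taken to be \emph{copies} of the saturated nodes rather than the nodes themselves. I would unravel the state-level graph of $\fp(G)$ into a tree, keeping one copy of a saturated node per path reaching it, so that every state has a unique incoming transition and \ref{transitions} holds trivially. The resulting structure may be infinite, which is harmless since Lemma~\ref{lem: Hintikka->Model} imposes no finiteness requirement, and each eventuality still possesses a \emph{finite} fulfilling structure path, obtained by lifting the finite tableau path found above. Assembling these verifications shows that $\mc{H}$ is a Hintikka structure for $\varphi_0$, and Lemma~\ref{lem: Hintikka->Model} then gives that $\varphi_0$ is satisfiable, completing the proof.
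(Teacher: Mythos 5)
Your proposal is essentially the paper's own proof: you build the Hintikka structure on the saturated nodes of the fat path $\fp(G)$, take $L(v)=\Phi_v$, generate $\leads{A}$ from transitional edges followed by runs of partial nodes that end up at states, and discharge conditions \ref{contr}--\ref{ev} of Definition \ref{def: Hintikka structure} with exactly the lemmas the paper uses (Lemmas \ref{lem: fat path properties}, \ref{lem: saturation properties}, \ref{lem: sat children}, \ref{lem: rat fulfillment}, \ref{lem: tableau path}), concluding with Lemma \ref{lem: Hintikka->Model}. The only genuine divergence is condition \ref{transitions}: the paper repairs multiply-labelled transitions by adding finitely many \emph{shadow states} (a duplicate $\tilde{s}_2$ of $s_2$ that receives the offending transition and inherits all outgoing transitions of $s_2$), whereas you unravel the state-level graph into a possibly infinite forest. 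Both devices are sound --- Definition \ref{def: Hintikka structure} and Lemma \ref{lem: Hintikka->Model} impose no finiteness --- and both preserve the remaining conditions because labels and outgoing behaviour are duplicated; the paper's variant additionally keeps the witnessing structure finite, which your unraveling gives up.

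Three loose ends in your write-up should be tightened. First, $\leads{A}$ must relate $v$ to \emph{every} state that the transitional child ends up at (as the paper's definition does), not to one chosen $v'$: the structure path you construct for condition \ref{ev} has to follow the particular states reached by the tableau path, and those specific transitions must exist in $\mc{H}$. Second, your unraveling must retain a copy of every state of $\fp(G)$ (e.g., unravel from all states, not only from the state the root ends up at): the witnesses demanded by condition \ref{cap2} are reached along edges labelled with capability statements $\neg\cpi A$, which generate no $\leads{}$ transitions, so in a root-generated unraveling they may be unreachable and would silently disappear. Third, Lemma \ref{lem: tableau path} applies to \emph{active} eventualities, but an $\alpha$/$\beta$ eventuality in a state's label lies in its reduced set; so, as the paper does, you must first use condition \ref{alpha/beta} and Lemma \ref{lem: properties of vtrd} to take one $\vtrd$ step inside the state to reach either $\neg\chi$ or an active eventuality $\neg\forall B_1.\dotsb\forall B_r.\forall A^*\!.\chi$ with $B_1\in\atp_\Omega$, and only then invoke the lemma. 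None of these is fatal, but each is needed for the argument to go through as stated.
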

\begin{proof}
	By Lemma \ref{lem: Hintikka->Model}, we know that if there is a Hintikka structure for $\varphi_0$, then $\varphi_0$ is satisfiable. Hence, it is enough to show that there is a Hintikka structure for $\varphi_0$.
	We consider the fat path $\fp(G) = (V',E')$ of $G$ and based on it, we define a labelled structure $\mc{H} =\langle S, (\leads{A})_{A\in\atp_\Omega}, L \rangle$, as follows:
	\begin{itemize}
	\item $S$ is defined as the set of all the states of $\fp(G)$:	$S = \{ v \mid  v \text{ is a state of $\fp(G)$}\}$
	
	\item For any states $v,v'\in S$, for any $A \in\atp_\Omega$, $v \leads{A}v'$ iff there is a sequence $v_0, \dotsc, v_k$ of nodes of $\fp(G)$ with $k\geq1$ such that:
	\begin{enumerate*}[i)]
	\item $v_0 = v$ and $v_k = v'$,
	\item $(v_0,v_1)\in E'$ is an edge which is labelled with a formula of the form $\neg\forall A.\chi$,
	\item for $i=1,\dotsc,k-1$, $v_i$ is a partial node and $(v_i,v_{i+1})\in E'$, i.e. $v_i$ ends up at $v_k$ (recall Definition \ref{def: fat path}).
	\end{enumerate*}
	
	\item For each state $v \in S$, we define $L(v)  =  \Phi_v$.
	\end{itemize}

	To show that $\mc{H}$ is a Hintikka structure, we need to show that all conditions \ref{contr}-\ref{transitions} of Definition \ref{def: Hintikka structure} are fulfilled.
	

To show that \ref{contr} is satisfied, we need to show that for some state $v\in S$, if $\neg\varphi\in L(v)$, then $\varphi\notin L(v)$, where $\varphi$ is an atomic formula of $\atf$ or a capability statement $\cpi A$ such that $A\in\wt{\Sigma}$. Equivalently, we show that for any formula $\varphi$ of this particular form, the set $\{\varphi,\neg\varphi\}$ cannot be a subset of $L(v)$. 
Additionally, to show that \ref{cap1} is satisfied, we need to show that for any state $v$ in $S$, $L(v)$ does not contain capabilities statements of the form $\neg\cpi\varphi$. 
The formulas that $\ref{contr}$ and \ref{cap1} concern are not $\alpha$/$\beta$ ones and thus, they can only appear in the active set of a node.
Recall that $L(v)$ is defined as the set $\Phi_v$ and that by Lemma \ref{lem: fat path properties}, we know that the status of each node of $\fp(G)$ is different from the value \unsat. Thus, we conclude that $\actf_v$ does not have any contradictory formulas or formulas of the form $\neg\cpi\varphi$, as otherwise, according to lines \ref{ct: contr}-\ref{ct: unsat} of \tref{ct}, the status of $v$ would be assigned the value \unsat.


For condition \ref{alpha/beta}, we consider some state $v\in S$ and we show that for some $\alpha$/$\beta$ formula $\varphi$ in $L(v)$, one of its reduction sets is a subset of $L(v)$. 
Since $L(v) = \Phi_v$ and $v$ is a state, by Definition \ref{def: label}, $\actf_v$ does not contain $\alpha$/$\beta$ formulas and for each $\alpha$/$\beta$ formula in $\rdf_v$, there is a reduction set which is a subset of $\Phi_{v} = L(v)$.
Due to line \ref{ct: extrdf} of \tref{ct}, $\Phi_v$ might be extended with the reduced $\alpha$/$\beta$ formulas of another state with the same active set with that of $v$.  Since the active sets of the involved states are the same, the new label remains saturated.


Next, we show that condition \ref{cap2} is satisfied. In particular, we show that for a state $v\in S$, if a capability statement $\neg\cpi A$ with $A\in\wt{\Sigma}$ and $\imath\in I$ is in $L(v)$ and $\{ \cpi(\varphi_1\Ra\psi_1),\dotsc,\cpi(\varphi_k\Ra\psi_k) \}$ is the set of all the capabilities statements in $L(v)$ that concern precondition-effect processes for $\imath\in I$, then there is a state $v'\in S$ such that $\{ \varphi_1, \dotsc, \varphi_k, \neg\forall A.\forall(\neg\psi_1).\dotsb\forall(\neg\psi_k).\false \}$ is a subset of $L(v')$. 
Observe that the previously mentioned capabilities statements of that particular form are not $\alpha$/$\beta$ formulas and thus, they can only be active formulas of $v$. 
We show that \ref{cap2} is satisfied as a consequence of the procedure \tref{ansr} which applies the capability rule. 
First, notice that by the definition of $\fp(G)$, since $v$ is a state, all of its children in $G$ also belong to $\fp(G)$.
The procedure \tref{ansr} defines a child node $v_1$ of $v$ whose set of formulas is the set $\{ \varphi_1, \dotsc, \varphi_k, \neg\forall A.\forall(\neg\psi_1).\dotsb\forall(\neg\psi_k).\false \}$. 
If $v_1$ is a state, then we have shown what we wanted, as $v_1$ is a state of $S$ and $L(v_1) = \Phi_{v_1}$.
If $v_1$ is not a state, then it is enough to show that the partial node $v_1$ ends up at a state of $\fp(G)$. Have in mind that, by the definition of $\fp(G)$ and by Lemma \ref{lem: fat path properties}, each partial node in $V'$ has at least one child. So, by Lemma \ref{lem: saturation properties}, there is a sequence $v_1,\dotsc,v_k$ of nodes of $\fp(G)$ with $k>1$ such that $v_k$ is a state and for $i=1,\dotsc,k-1$, $v_i$ is a partial node and $(v_i,v_{i+1})\in E'$ and $\Phi_{v_i} \subseteq \Phi_{v_{i+1}}$. By the way that $\mc{H}$ has been defined, $v_k$ is in $S$ and $\Phi_{v_1}$ is a subset of $\Phi_{v_k} = L(v_k)$.


In the following, we show that \ref{box} is satisfied. Suppose that there are states $v,v'\in S$ such that $\forall a.\vartheta\in L(v)$ and $v \leads{a} v'$. We show that $\vartheta\in L(v')$. 
Since $v\leads{a}v'$, we know that there is a sequence $v_0,\dotsc,v_k$ of nodes of $\fp(G)$, with $k\geq1$, such that $v_1,\dotsc,v_{k-1}$ are partial nodes and $v_0 = v$ and $v_k = v'$ and $v_1$ ends up at the state $v_k$. 
Moreover, by Lemma \ref{lem: saturation properties}, we also know that for $i=1,\dotsc,k-1$, $\Phi_{v_i}$ is a subset of $\Phi_{v_{i+1}}$ and thus, $\Phi_{v_1}\subseteq \Phi_{v_k}$.
Also, since $v \leads{a} v'$, the edge $(v_0,v_1)\in E'$ is labelled with some formula $\neg\forall a.\chi$. Thus, the transitional rule was applied to $v_0$ due to $\neg\forall a.\chi\in \actf_v$ which led to the definition of $v_1$. According to \tref{ansr}, we have that $\vartheta\in\Phi_{v_1}$. By the definition of $S$, we have that $v_k$ is a state of $S$. Therefore, since $\Phi_{v_1}$ is a subset of $\Phi_{v_k} = L(v_k)$, $\vartheta$ is also in $L(v_k)$.


Next, we show that \ref{Omega} holds. We consider a state $v\in S$ such that $\forall\Omega.\vartheta$ is in $L(v)$. 
First, we assume that there is a process $A\in\atp_\Omega$ and a state $v'\in S$ such that $v\leads{A}v'$ and we show that $\vartheta\in L(v')$. This case is similar to \ref{box}, as it is a consequence of the transitional rule which is applied by \tref{ansr} due to a formula $\neg\forall A.\chi$ in $\Phi_v$.

Now, we deal with the first part of \ref{Omega}.
Recall that $\Omega$ is a syntactic element which was introduced (see Lemma \ref{lem: Omega}) to make precondition-effect processes easier to handle (see Corollary \ref{cor: =>2}). So, observe that a non-negated boxed formula with $\Omega$ occurs in a node of a tableau only as the result of the application of the static rule on a $\beta$-formula of the form $\forall(\chi_1\Ra\chi_2).\psi$. Hence, either the formula $\chi_1 \land \forall\Omega^*.\forall\chi_2.\psi$  or the formula $\forall\Omega^*.\psi$ is added to the partial node. Next, in both cases, again due to the static rule, formulas of the form $\forall\Omega.\forall\Omega^*.\xi$ occur, where $\xi$ is some appropriate formula. 

Now, suppose that there is a state $v\in S$ such that $\forall\Omega.\vartheta$ is in $L(v)$. From what we have mentioned so far, the existence of this kind of formula is justified only by the existence of a state $v'$ in $S$ such that a formula of the form $\forall(\chi_1\Ra\chi_2).\psi$ is in $L(v')$. It is possible that $v'$ and $v$ are the same state. As argued earlier, the formula $\forall(\chi_1\Ra\chi_2).\psi$ leads to a formula of the form $\vartheta = \forall\Omega^*.\varphi$ in $L(v')$ which in turn leads to $\forall\Omega.\Omega^*.\varphi$. 
Finally, having in mind \ref{alpha/beta} and the proven second part of \ref{Omega}, we can conclude that there is a sequence of transitions $s_1 \leads{A_1} s_2 \dotsb s_{k-1} \leads{A_{k-1}} s_k$ in $\mc{H}$, with $k\geq1$, such that $s_1=v'$ and $s_k=v$ and for $i=1,\dotsc,k$, $\forall\Omega^*.\varphi, \forall\Omega.\forall\Omega^*.\varphi \in L(s_i)$.


As far as \ref{diamond} is concerned, we assume that for a state $v\in S$, there is a formula $\neg\forall A.\chi$ in $L(v)$ such that $A\in\atp_\Omega$. We show that there is a state $v'\in S$ such that $v\leads{A}v'$ and $\neg\chi\in L(v')$. 
Similarly to \ref{cap2}, as a consequence of \tref{ansr}, we can argue that a child $v_1\in V'$ of $v$ is defined such that $\neg\chi\in\Phi_{v_1}$, and the edge $(v,v_1)\in E'$ is labelled with the formula $\neg\forall A.\chi$. 
Moreover, in the case of \ref{cap2} by using Lemma \ref{lem: saturation properties}, we have shown that there is a sequence $v_1,\dotsc,v_k$ of nodes of $\fp(G)$ such that 
either $k=1$ and $v_1=v_k$ is a state
or $k>1$ and $v_1$ is a partial node which ends up at the state $v_k$.
In both cases, by the definition of $\mc{H}$, $v_k\in S$ and $v \leads{A} v_k$ and $\neg\chi\in L(v_k)$.


Next, we argue that \ref{ev} holds. Suppose that for a state $v\in S$, the $\alpha$/$\beta$ eventuality $\varphi\!=\!\neg\forall A_1.\!\dotsb\forall A_k.\forall A^*\!.\chi$ with $k\geq0$ and $\neg\chi\notin\ev$ is in $L(v)$. We show that there is a structure path $\stp(v,\varphi,v'',\neg\chi)$ in $\mc{H}$, with $v''\in S$. 

We have shown in the case of \ref{alpha/beta} that there is a reduction set $\mc{R}$ of $\varphi$ which is a subset of $L(v)$. By Definition \ref{def: reduction sets}, we know that there is a pair $(\mc{T},\psi)$ in $\mc{FD}(\varphi)$ such that $\mc{R} = \mc{T}\cup\{\psi\}$. According to Definition \ref{def: vtrd}, since $\varphi$ is an $\alpha$/$\beta$ eventuality, we have that $\varphi\vtrd\psi$.
By Lemma \ref{lem: properties of vtrd}, we know that there are two cases for $\psi$. The first case is that it is the non-eventuality $\neg\chi$ and thus, the necessary structure path is the sequence $(v,\varphi),(v,\neg\chi)$.
In the other case, $\psi$ is an active eventuality of $v$ of the form $\neg\forall B_1.\dotsb\forall B_r.\forall A^*.\chi$ with $r\geq1$ and $B_1\in\atp_\Omega$. 
By Lemma \ref{lem: tableau path}, we know that there is a tableau path $\tp(v,\psi, v',\neg\chi)$, where $v'$ is some node of $\fp(G)$. 
In the following, we define the appropriate structure path by using the tableau path $\tp(v,\varphi, v',\neg\chi) = (v,\varphi), \tp(v,\psi, v',\neg\chi)$.

First, we extend $\tp(v,\varphi, v',\neg\chi)$ to ensure that the node of its last pair is a state of $\fp(G)$.
In the case of \ref{cap2}, we argued that a partial node ends up at a state of $\fp(G)$. Therefore, there is a sequence of nodes $v_1,\dotsc,v_k$ of $\fp(G)$ with $k\geq1$, such that $v_1=v'$ and $v_k$ is a state and for $i=1,\dotsc,k-1$, $v_i$ is a partial node and $(v_i,v_{i+1})\in E'$. 
In the following, we examine the tableau path $\tp(v,\varphi,v',\neg\chi)$, $(v_2, \neg\chi),\dotsc,(v_{k}, \neg\chi)$, denoted by $\tp(v,\varphi, v'',\neg\chi)$.

We define the structure path $\stp(v,\varphi, v'',\neg\chi)$, by modifying $\tp(v,\varphi,\allowbreak v'',\neg\chi)$ in the appropriate way. 
First, we replace the node of any pair of the tableau path $\tp(v,\varphi, v'',\neg\chi)$ which is not a state of $\fp(G)$ with the first state of $\fp(G)$ that is met in a subsequent pair of $\tp(v,\varphi, v'',\neg\chi)$. 

We say that the transitional rule is applied to a node of a pair of a tableau path of the form $(v_1,\neg\forall A.\psi)$ with $A\in\atp_\Omega$  iff the next pair has the form $(v_2,\neg\psi)$ such that $(v_1,v_2)\in E'$. 
In the sequel, we examine a subsequence $(v_1,\xi_1),\dotsc,(v_k,\xi_k)$ of $\tp(v,\varphi, v'',\neg\chi)$, with $k\geq2$. We assume that $v_1$ and $v_k$ are the only states of the subsequence in which the transitional rule is applied, except, of course, for $v_k$ in the case that $(v_k,\xi_k)$ is the last pair of $\tp(v,\allowbreak\varphi, v'',\neg\chi)$. 
Recall that $v_1$ and $v_k$ belong to $S$ and for $i=2,\dotsc,k-1$, we replace $v_i$ with $v_k$. 
Before applying these replacements, by the definition of a tableau path, we know that there is $j\in \{1,\dotsc,k-1\}$, such that the followings hold:
\begin{enumerate*}[i)]
\item for $i=2,\dotsc,j$, 
	$v_i$ is a partial node which ends up at the state $v_k$ and 
	$\Phi_{v_i}$ is a subset of $\Phi_{v_{i+1}}$ (see Lemma \ref{lem: saturation properties}) and 
	either $\xi_i \vtrd_{v_{i+1}} \xi_{i+1}$ (recall Definition \ref{def: fully reduced}), or $\xi_i=\xi_{i+1}$,
\item for $i=j+1,\dotsc,k-1$,	
	$v_i= v_{i+1} = v_k$ and $\xi_i \vtrd_{v_{i+1}} \xi_{i+1}$.
\end{enumerate*}
By the definition of $L$, we conclude that for $i=2,\dotsc,k-1$, $\Phi_{v_i}$ is a subset of $L(v_k)$.
Finally, by the definition of the relation $\leads{}$ in $\mc{H}$, if we assume that $\xi_1$ is some formula of the form $\neg\forall B.\vartheta$ with $B\in\atp_\Omega$, then we have that $v_1\leads{B}v_k$.

After the replacement of the appropriate nodes in $\tp(v,\varphi, v'',\neg\chi)$ with the appropriate states of $\mc{H}$, we merge any consecutive identical pairs that occur. From what we have mentioned in the previous paragraph, it is straightforward to conclude that the new sequence $(s_1,\varphi_1),\dotsc,(s_r,\varphi_r)$ that occurs satisfies the following properties: 
\begin{enumerate*}[i)]
\item $(s_1,\varphi_1) = (v,\varphi)$ and $(s_r,\varphi_r) = (v'',\neg\chi)$,
\item for $i=1,\dotsc,r$, $s_i$ is a state of $\mc{H}$ such that $\varphi_i\in L(s_i)$,
\item for $i=1,\dotsc,r-1$, if $\varphi_i = \neg\forall B.\vartheta$ with $B\in\atp_\Omega$, then $\varphi_{i+1}=\neg\vartheta$ and $s_i\leads{B}s_{i+1}$, else $\varphi_i\vtrd\varphi_{i+1}$ and $s_{i+1} = s_i$ and there is a reduction set $\mc{R}$ of ${\varphi_i}$ such that $\varphi_{i+1}\in\mc{R}$ and $\mc{R} \subseteq L(s_{i+1})$.
\end{enumerate*}
By Definition \ref{def: structure path}, the defined sequence forms the desired structure path $\stp(v,\varphi, v'',\neg\chi)$. 


While most of the properties have been shown to be true based on the way that the algorithm has built the tableau $G$ and on the way that we defined $\mc{H}$, \ref{transitions} might not be satisfied. Thus, in the following, we explicitly modify $\mc{H}$ in two steps so that the specific property is true.
First, for any two different processes $A_1$ and $A_2$ of $\atp_\Omega$, 
if there are states $s_1$ and $s_2$ in $S$ such that $s_1\leads{A_1}s_2$ and $s_1\leads{A_2}s_2$,
then we remove the transition $s_1\leads{A_2}s_2$ and we add a new state $\tilde{s}_2$ in $S$, called shadow state of $s_2$, with the following properties:
\begin{enumerate*}[i)]
\item $L(\tilde{s}_2) = L(s_2)$,
\item we define the transition $s_1\leads{A_2}\tilde{s}_2$.
\end{enumerate*}
Now, it is immediate that for any two different processes $A_1$ and $A_2$ of $\atp_\Omega$, there are not states $s_1$ and $s_2$ in $S$ such that $s_1\leads{A_1}s_2$ and $s_1\leads{A_2}s_2$. In other words, if $s_1\leads{A_1}s_2$ and $s_1\leads{A_2}s_2$, then $A_1=A_2$.
Finally, in order for the other properties to remain true and unaffected, we add the appropriate transitions in $\mc{H}$.
For any shadow state $\tilde{s}$ of a state $s$, for any process $A\in\atp_\Omega$, for any state $s'\in S$, if $s\leads{A}s'$, then we define $\tilde{s}\leads{A}s'$.


To conclude this proof, we need to show that there is a state $s$ of $\mc{H}$ such that the input formula $\varphi_0$ of \tref{isSat} is in $L(s)$.
Since the root of $G$ is in $\fp(G)$ (see Definition \ref{def: fat path}), by Lemmas \ref{lem: saturation properties} and \ref{lem: fat path properties} (as shown in similar cases in this proof), there is a state in $\fp(G)$ (as well as in $\mc{H}$) which contains $\varphi_0$.
\end{proof}

\subsection{Completeness} 		\label{subsec: completeness}

In the following, we say that a state $w$ of an $\mc{L}$-model $\mc{M}$ satisfies a label $\lab$ and write $w\models^\mc{M}\lab$ iff it satisfies all the formulas of $\Phi_\lab$. Also, we say that $w$ satisfies a node $v$ and write $w\models^\mc{M}\!v$ iff it satisfies the label $\lab_v$.

\begin{thm}[Completeness]
If a formula $\varphi_0$ is satisfied by a state of an $\mc{L}$-model $\mc{M}$,
then the status of the root of the tableau $G=(V,E)$ for $\varphi_0$ is assigned the value \sat (i.e. $\tref{isSat}(\varphi_0)$ returns true).
\end{thm}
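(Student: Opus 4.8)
The plan is to prove the contrapositive through a single invariant: no $\mc{M}$-satisfiable node of $G$ is ever assigned the status \unsat. Here I call a node $v$ \emph{$\mc{M}$-satisfiable} if some state $w$ of $\mc{M}$ satisfies its label, i.e. $w\models^\mc{M}v$. Since $\varphi_0$ is satisfied by a state of $\mc{M}$, the root, whose label is $\langle\{\varphi_0\},\rd^0\rangle$, is $\mc{M}$-satisfiable; once the invariant is in place, Lemma \ref{lem: sat unsat} (every node terminates with status \sat or \unsat) forces the root's status to be \sat, which is exactly the claim.

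First I would establish the invariant on the ``local'' status assignments, arguing by contradiction that there is a \emph{first} $\mc{M}$-satisfiable node $v$ to receive \unsat and ruling out each line of the procedures that could cause it. An $\mc{M}$-satisfiable node contains no contradictory pair and no $\neg\cpi\varphi$, so line \ref{ct: unsat} of \tref{ct} is excluded. For a satisfiable partial node, soundness of the static (or-)rule (Proposition \ref{prop: static sound}) supplies an $\mc{M}$-satisfiable reduction-set child, which by minimality is not \unsat; hence \tref{csp} does not assign \unsat at line \ref{csp: unsat1}, and \unsat cannot propagate to $v$ (a partial node becomes \unsat by propagation only when all its children are \unsat). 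For a satisfiable state, soundness of the transitional and capability (and-)rules (Propositions \ref{prop: trans sound}, \ref{prop: cap sound}) makes all children $\mc{M}$-satisfiable, hence none is \unsat, which excludes line \ref{css: unsat1} and the \unsat-propagation of line \ref{propag: unsat}.

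The heart of the proof is ruling out \unsat caused by a $\rat$-unfulfilled active eventuality (lines \ref{csp: if_unfulfilled}, \ref{css: if_unfulfilled}, and \ref{upd: while_Unsat}). For this I would show that every active eventuality $\varphi$ of an $\mc{M}$-satisfiable node $v$ is $\rat$-fulfilled or $\rat$-dependent, hence never $\rat$-unfulfilled. Fixing a witness $w\models^\mc{M}v$, Lemma \ref{lem: fpaths} yields a fulfilling path $\fpath(\mc{M},w,\varphi,\vtrd)=(w_1,\psi_1),\dots,(w_m,\psi_m)$ ending at $\psi_m=\neg\chi$. I would translate this path into a sequence of $\rat$-related pairs of $G$ step by step: a reduction step $\psi_i\vtrd\psi_{i+1}$ (with $w_i=w_{i+1}$ and $w_i\models\mc{R}$ for the relevant reduction set) is mirrored inside a partial node by the $\reach$-based $\rat$-edge that \tref{csp} installs at line \ref{csp: rat} (Definition \ref{def: reach}), while a transition step $\psi_i=\neg\forall B.\xi$ with $B\in\atp_\Omega$ and $w_i\onto{B}w_{i+1}$ is mirrored by the child created for $\neg\forall B.\xi$ together with the $\rat$-edge $(v',\neg\forall B.\xi)\rat(v'',\neg\xi)$ that \tref{css} installs at line \ref{css: rat}. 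Each model state $w_i$ witnesses $\mc{M}$-satisfiability of the corresponding tableau node, so by the invariant and minimality none of these nodes is \unsat; if the tableau sequence reaches $\neg\chi$ with every intermediate node carrying \sat or \tempsat, then $\varphi$ is $\rat$-fulfilled, and if it first meets a forward-ancestor node whose status is still undefined, then $\varphi$ is $\rat$-dependent on that node---either way it is not $\rat$-unfulfilled.

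Finally I would close the gap between ``not \unsat'' and ``\sat'' using Lemma \ref{lem: sat unsat}: every node terminates with \sat or \unsat, and combined with the invariant every $\mc{M}$-satisfiable node---in particular the root---ends up \sat. I expect the eventuality argument of the third paragraph to be the main obstacle, and its delicate points are (a) that the $\rat$-edges needed to mirror the model's fulfilling path are already present at the moment \tref{csp}/\tref{css} test $\rat$-unfulfillment, which rests on the postorder schedule of status computation, and (b) that the algorithm's pruning of or-branches (\tref{asr} halts at the first \sat child) never discards the satisfiable branch the model points to; one must argue that whichever branch the algorithm commits to remains $\mc{M}$-satisfiable and that \tempsat nodes depending on as-yet-unresolved ancestors are later re-resolved to \sat, not \unsat, by \tref{upd} and \tref{propag}, which in turn relies on Lemmas \ref{lem: tempsat rat}, \ref{lem: dependency set}, and \ref{lem: sat children}.
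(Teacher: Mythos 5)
Your overall strategy coincides with the paper's: assume a first satisfiable node receives \unsat, exclude the contradiction-based and child-based assignments via Propositions \ref{prop: static sound}, \ref{prop: trans sound} and \ref{prop: cap sound}, kill the unfulfilled-eventuality case by transcribing the fulfilling path of Lemma \ref{lem: fpaths} into a chain of $\rat$-related pairs, and finish with Lemma \ref{lem: sat unsat}. That skeleton is sound, and your delicate points (a) and (b) are exactly where the real work lies.

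However, your proposed resolution of point (b) is a step that would fail. You plan to argue that ``whichever branch the algorithm commits to remains $\mc{M}$-satisfiable.'' This is false in general: when \tref{asr} stops early because some already-created child obtained status \sat, that child corresponds to an arbitrary reduction set of the principal formula; it is satisfiable in \emph{some} model (that is all \sat can guarantee), but the fixed witness $w$ of $\mc{M}$ --- and possibly every state of $\mc{M}$ --- may falsify it. Consequently you cannot continue your fulfilling-path transcription through the committed branch, and the inductive step of your invariant stalls precisely there. The fix the paper uses, and the one you need, is different: it first observes (via \tref{csp} together with Lemmas \ref{lem: sts dep} and \ref{lem: rat fulfillment}) that whenever \tref{asr} prunes, the parent partial node itself is assigned \sat; hence every satisfiable partial node of the tableau either has \emph{all} of its satisfiable children present in $G$, or carries status \sat. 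In the second case Lemma \ref{lem: rat fulfillment} states that all active eventualities of that node are $\rat$-fulfilled through \sat nodes, and since your transcription has already produced $(v_1,\xi_1)\rat\dotsb\rat(v_i,\xi_i)$, the tracked eventuality $\varphi=\xi_1$ is itself $\rat$-fulfilled --- contradicting the assumption that it was $\rat$-unfulfilled and closing that case immediately, with no need for any $\mc{M}$-satisfiable child. With this case split substituted for your claim in (b), the rest of your argument goes through essentially as in the paper.
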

\begin{proof}
	In the following, we define an appropriate subgraph of the tableau $G$ in which the root node is included and then, we show that the statuses of its nodes are all assigned the value \sat.
	However, first, we need to make a remark about the procedure \tref{asr}. The specific procedure uses the reduction sets of some active $\alpha$/$\beta$ formula of a partial node in order to define the necessary children. By Proposition \ref{prop: static sound}, we know that if the parent partial node is satisfiable, then at least one of the reductions sets of the principal $\alpha$/$\beta$ formula is also satisfiable. The question that arises is whether \tref{asr} defines the child which is the result of the addition of the satisfiable reduction set. Observe that if the value \sat has been assigned to the status of one of the already defined children, then \tref{asr} does not proceed to the definition of the remaining children.
	Nevertheless, in the aforementioned case, we can be certain that the status of the parent partial node is also assigned the value \sat. This is an immediate consequence of how the procedure \tref{csp} works. Due to the status value \sat of the last defined child, by Lemmas \ref{lem: sts dep} and \ref{lem: rat fulfillment}, we know that the dependency set of the child is the empty set and all of its active eventualities are $\rat$-fulfilled. Now, by \tref{csp}, we can conclude that the dependency set of the parent partial node is also defined as the empty set and by the way that the fulfillment relation $\rat$ is defined, all of its active eventualities are $\rat$-fulfilled. Thus, the procedure \tref{csp} assigns the value \sat. 
	
	To sum up, we can be certain that one of the following properties holds for a satisfiable partial node in a tableau: 
	\begin{enumerate*}[i)]
		\item it has been expanded with all of its possible satisfiable children (which are at least one), or
		\item its status has been assigned the value \sat.
	\end{enumerate*}
	Of course, the previous properties are not mutually exclusive.
	
	Now, we define a subgraph $G'$ of the tableau $G$ inductively, as follows:
	\begin{enumerate}[i)]
		\item The satisfiable root node of $G$ belongs to $G'$.
		
		\item Each satisfiable node $v$ of $G'$ is expanded by distinguishing cases on its type:
		\begin{itemize}
			\item $v$ is a partial node: if all its satisfiable children belong to $G$, then all of them are added to $G'$, 
			or else, its status has been assigned the value \sat and $v$ is not expanded, i.e. it is a leaf of $G'$.
			\item $v$ is a state node: all of its children are added to $G'$.
		\end{itemize}
	\end{enumerate}
	
	By the definition of $G'$, by the way that \tref{asr} and \tref{ansr} expand a node and by Propositions \ref{prop: static sound}, \ref{prop: trans sound} and \ref{prop: cap sound}, we can conclude that all the nodes of $G'$ are satisfiable. Furthermore, we know that each partial node has been expanded with all of its possible satisfiable children or its status has been assigned the value \sat.
	
	
	We show that the statuses of all the nodes of $G'$ take the value \sat. In the following, we assume that the status of a node $v$ of $G'$ is assigned the value \unsat and we reach a contradiction. Without loss of generality, we also assume that it is the first node of $G'$ which takes the status value \unsat.	
	Since all the nodes of $G'$ are satisfiable, contradictory formulas or even capabilities statements of the form $\neg\cpi \varphi$ cannot occur in $\Phi_v$. Thus, lines \ref{ct: contr}-\ref{ct: unsat} of \tref{ct} are avoided.
	Additionally, one of the reasons that the procedures \tref{csp}, \tref{css} and \tref{propag} assign the value \unsat to the status of a node is that there are the appropriate children with this status value. Since $v$ is the first node of $G'$ to which the status value \unsat is assigned, its children cannot affect the definition of its status in this way. 
	Also, recall that the status of a leaf partial node of $G'$ takes the value \sat. 
	
	
	According to the procedures \tref{csp}, \tref{css} and \tref{upd}, the last reason which can justify the assignment of the value \unsat to $\sts_v$ is that there is an active eventuality $\varphi = \neg\forall A_1.\dotsb\forall A_n.\forall A^*.\chi$ of $v$ with $n\geq0$ and $\neg\chi\notin\ev$ which is not $\rat$-fulfilled and at the same time, it is not $\rat$-dependent on any node.	
	As $v$ belongs to $G'$, we know that it is satisfied by some state $w$ of $\mc{M}$. Additionally, by Lemma \ref{lem: fpaths}, we know that there is a fulfilling path $\fpath(\mc{M},w,\varphi,\vtrd) = (w_1,\vartheta_1),\dotsc,(w_k,\vartheta_k)$ for the active eventuality $\varphi$ of $v$. Without loss of generality, we assume that $\neg\chi$ appears for the first time in the last pair of the sequence.
	We define a sequence $(v_1,\xi_1,j_1),\dotsc,(v_r,\xi_r,j_r)$ of triples in which $v_1 = v$ and $\xi_1 = \varphi$ and the involved nodes are nodes of $G'$. The sequence has the following properties: 
	\begin{enumerate}[i)]
		\item for $i=1,\dotsc,r-1$, we have that $\xi_i$ is an active eventuality of $v_i$ of the form $\neg\forall B_1.\dotsb\forall B_m .\forall A^*.\chi$ with $m\geq0$ and $(v_i,\xi_i) \rat (v_{i+1},\xi_{i+1})$,
		\item for $i=2,\dotsc,r$, $\sts_{v_i} = \tempsat$
		\item the indexes $j_1,\dotsc,j_r$ indicate pairs of $\fpath(\mc{M},w,\varphi,\vtrd)$ which satisfy the following properties: 
		\begin{itemize}
			\item $j_1 = 1$ and $j_r = k$ and for $i=1,\dotsc,r-1$, $j_i\leq j_{i+1}$ and if $j_i\neq j_{i+1}$, then $j_{i+1}= j_i +1$, 
			\item for $i=1,\dotsc,r$, $w_{j_i} \models^\mc{M} v_i$ and $\vartheta_{j_i} = \xi_i$.
		\end{itemize}
	\end{enumerate}
	
	The first triple of the sequence is the triple $(v_1,\xi_1,j_1) = (v,\varphi,1)$ and we know that $w_{j_1} = w_1 = w$ and $w_{j_1} \models^\mc{M} v_1$ and $\vartheta_{j_1} = \xi_1 = \varphi$. 
	Next, we consider a subsequence $(v_1,\xi_1,j_1),\dotsc,(v_i,\xi_i,j_i)$, where $1\leq i\leq r-1$, for which we assume that the above properties hold and we expand it with the next triple by maintaining the desirable properties.
	
	
	First, we assume that $v_i$ is a partial node. Moreover, we assume that the value \sat has been assigned to its status. Thus, by Lemma \ref{lem: rat fulfillment},  $\xi_i$ is $\rat$-fulfilled. Additionally, by the properties of the sequence, we know that $(v_1,\xi_1)\rat(v_2,\xi_2)\rat\dotsb\rat(v_i,\xi_i)$. Thus, the formula $\xi_1=\varphi$ is also $\rat$-fulfilled. This case cannot occur, as by assumption we know that $\varphi$ is not $\rat$-fulfilled.
	
	Now, we assume that \tref{asr} has expanded the partial node $v_i$ with all of its satisfiable children. By the definition of $G'$, we know that all of them are in $G'$. The procedure \tref{asr} expands $v$ by applying the static rule. In particular, there is an active $\alpha$/$\beta$ formula in $v_i$ which is regarded as the principal formula and for each one of its reduction sets an appropriate child is created. 
	Recall that we know that $w_{j_i} \models^\mc{M} v_i$ and $\vartheta_{j_i} = \xi_i$ and thus, by Proposition \ref{prop: static sound}, there is at least one child $v'$ of $v_i$ which is also satisfied by $w_{j_i}$.
	First, we assume that $\xi_i$ is not the principal formula and thus, we define $v_{i+1}$  as the node $v'$ and $\xi_{i+1}$ as the formula $\xi_i$. Since $w_{j_i} \models^\mc{M} v_{i+1}$ and $\vartheta_{j_i} = \xi_i = \xi_{i+1}$, we let $j_{i+1}$ be the index $j_i$.
	
	Now, we examine the case that $\xi_i$ is the principal formula of the static rule. Since $\xi_i = \vartheta_{j_i}$ is an $\alpha$/$\beta$ formula, by the properties of $\fpath(\mc{M},w,\varphi,\vtrd)$ (see Definition \ref{def: fpath}), we have that $w_{j_i} = w_{j_i+1}$ and $\vartheta_{j_i} \vtrd \vartheta_{j_i+1}$ and there is a reduction set $\mc{R}$ of $\vartheta_{j_i}$ such that $\vartheta_{j_i+1}\in\mc{R}$ and $w_{j_i} \models^\mc{M} \mc{R}$. Therefore, we define as $v_{i+1}$ the child of $v_i$ which is defined as the result of the addition of the formulas of $\mc{R}$ to the set $\Phi_{v_i}$ (see \tref{asr}). Additionally, we define as $\xi_{i+1}$ the formula $\vartheta_{j_i+1}$ which is a formula of $v_{i+1}$. Finally, since $w_{j_i+1} \models^\mc{M} v_{i+1}$ and $\vartheta_{j_i+1} = \xi_{i+1}$, we let $j_{i+1}$ be the index $j_i+1$.
	By Lemma \ref{lem: properties of vtrd}, since $\xi_i \vtrd \xi_{i+1}$, the formula $\xi_{i+1}$ is either an active eventuality of $v_{i+1}$ of the necessary form or the non-eventuality $\neg\chi$. Since $\neg\chi$ appears in the last pair of $\fpath(\mc{M},w,\varphi,\vtrd)$ and $\xi_{i+1} = \vartheta_{j_{i+1}}$, if $j_{i+1} < k$, then $\xi_{i+1}$ is an active eventuality of $v_{i+1}$. 
	
	By the previous properties and by the way that \tref{csp} works with input the node $v_i$, we have that $(v_i,\xi_i) \rat (v_{i+1},\xi_{i+1})$. By assumption, the status of $v_{i+1}$ cannot be assigned the value \unsat (at least at the moment of the definition of $\sts_v$), as $v$ is the first node of $G'$ whose status takes this value.
	First, we assume that $\sts_{v_{i+1}}$ is undefined. Since $(v_1,\xi_1)\rat\dotsb\rat(v_{i+1},\xi_{i+1})$, we conclude that the active eventuality $\varphi$ of $v$ is $\rat$-dependent on $v_{i+1}$. However, this cannot be true, as $\varphi$ is not $\rat$-dependent on any node. 
	Hence, the only remaining valid conclusion is that $\sts_{v_{i+1}}$ has been assigned the value \tempsat.
	
	
	Now, we assume that $v_i$ is a state and we define the triple $(v_{i+1},\xi_{i+1},j_{i+1})$.
	Since $\xi_i$ is in $\actf_{v_i}$, it is some formula $\neg\forall B.\psi$ with $B\in\atp_\Omega$. Due to $\xi_i = \neg\forall B.\chi$, \tref{ansr} defines a child of $v_i$ which has as active eventuality the formula $\neg\psi$. We define $v_{i+1}$ to be this child of $v_i$ and $\xi_{i+1}$ its active eventuality $\neg\psi$.
	We know that $w_{j_i}$ satisfies $v_i$ and of course $\xi_i$ which is the formula $\vartheta_{j_i}$. Since $\vartheta_{j_i} = \xi_i$ is the formula $\neg\forall B.\psi$ with $B\in\atp_\Omega$, by the properties of $\fpath(\mc{M},w,\varphi,\vtrd)$, we have that $w_{j_i} \onto{B} w_{j_i+1}$ and $\vartheta_{j_i+1} = \neg\psi$. Observe that $\vartheta_{j_i+1} = \xi_{i+1} = \neg\psi$ and by the way that \tref{ansr} defines $v_{i+1}$ and by Proposition \ref{prop: trans sound}, $w_{j_i+1}$ satisfies $v_{i+1}$. Hence, we define $j_{i+1}$ to be the index $j_i+1$.
	By the previous properties and by the way that \tref{css} works, similarly to the case of a partial node, we can show that $(v_i,\xi_i) \rat (v_{i+1},\xi_{i+1})$ and that $\sts_{v_{i+1}}$ has been assigned the value \tempsat.
	
	
	To conclude the construction of the sequence $(v_1,\xi_1,j_1),\dotsc,(v_r,\xi_r,j_r)$ with the necessary properties, we need to show that, eventually, a sequence of arbitrary finite length $r$ is defined such that the index $j_r$ indicates the $k$th pair of $\fpath(\mc{M},w,\varphi,\vtrd)$ (i.e. $j_r=k$).
	The rules that extent the sequence of triples increase the index $j_i$ by 1, except for the case in which $v_i$ is a partial node and at the same time, $\xi_i$ is not the principal formula of the static rule.
	Thus, since the extension of the sequence follows the construction of the tableau, it is enough to show that a partial node reaches a state and thus, the static rule has been applied for all the $\alpha$/$\beta$ formulas.
	Observe that this is a consequence of Lemma \ref{lem: saturation properties}.
	
	
	We have shown that there is a sequence $(v_1,\xi_1,j_1),\dotsc,(v_r,\xi_r,j_r)$ of triples such that $(v_1,\xi_1)\rat\dotsb\rat(v_r,\xi_r)$ and for $i=2,\dotsc,r$, $\sts_{v_i}=\tempsat$. Moreover, since $j_r = k$ and $\vartheta_{j_r} = \xi_r$, by the properties of $\fpath(\mc{M},w,\varphi,\vtrd)$, we can conclude that  $\xi_r$ is the non-eventuality $\neg\chi$. Hence, we can conclude that the active eventuality $\varphi$ of $v$ is $\rat$-fulfilled and thus, we have reached a contradiction. 
	Therefore, the status of no node of $G'$ can be assigned the value \unsat. Finally, by Lemma \ref{lem: sat unsat}, we conclude that the statuses of all the nodes of $G'$ are eventually assigned the value \sat.
\end{proof}

\subsection{Complexity} 		\label{subsec: complexity}

Here, we investigate the complexity of the satisfiability algorithm for \tpdl. We examine various characteristics of a tableau and of its nodes and we close the section with the complexity analysis of the algorithm.

\begin{defi} 		\label{def: closure}
	The closure $cl(\varphi) \subseteq \mc{L}_s$ of a formula $\varphi$ is defined as follows:
	\begin{itemize}
		\item $\varphi\in cl(\varphi)$
		\item if $\psi$ is an $\alpha$/$\beta$ formula of $cl(\varphi)$, then all of its reduction sets are subsets of $cl(\varphi)$.
		
		\item if $\neg\forall A.\chi\in cl(\varphi)$ with $A\in\atp_\Omega$, then $\neg\chi\in cl(\varphi)$.
		\item if $\forall A.\chi\in cl(\varphi)$ with $A\in\atp_\Omega$, then $\chi\in cl(\varphi)$.
		
		\item if $\neg\cpi(\chi_1\Ra\chi_2) \in cl(\varphi)$, then $\{ \neg\chi_1, \chi_2 \} \subseteq cl(\varphi)$.
		\item if $\cpi(\chi_1\Ra\chi_2) \in cl(\varphi)$, then $\{ \chi_1, \neg\chi_2 \} \subseteq cl(\varphi)$.
	\end{itemize}
\end{defi}

\begin{defi} 		\label{def: cpr}
	Let $\varphi$ be some formula,
	$\neg\cpi A$ a capability statement of $cl(\varphi)$ with $A\in\wt{\Sigma}$ and 
	$\Gamma$ a set $\{ \cpi(\varphi_1\Ra\psi_1),\dotsc, \cpi(\varphi_k\Ra\psi_k) \}$ of capabilities statements of $cl(\varphi)$ with preconditions-effects processes, with $k\geq0$.
	We define $\cpr(\varphi,\neg\cpi A,\Gamma)$ as the following set of formulas:
	\begin{equation}	\label{eq: cpr1}
	\big\{ \neg\forall A. \forall(\neg\psi_1).\dotsb\forall(\neg\psi_k).\false,\;\,  
	\neg\forall(\neg\psi_1).\dotsb\forall(\neg\psi_k).\false, \dotsc, \neg\forall(\neg\psi_k).\false,\;\, 
	\neg \false \big\}
	\end{equation}
	If $A$ is some process $\chi_1\Ra\chi_2$, we additionally consider that the four formulas of the following set also belong to the set $\cpr(\varphi,\neg\cpi A,\Gamma)$:
	\begin{equation}	\label{eq: cpr2}
	\begin{split}
	\big\{	\neg\forall(\neg\chi_1). \forall\Omega. \forall(\neg\psi_1).\dotsb\forall(\neg\psi_k).\false,\; 
			\neg \forall\Omega. \forall(\neg\psi_1).\dotsb\forall(\neg\psi_k).\false,&\\
			\neg\forall\Omega. \forall(\chi_2). \forall(\neg\psi_1).\dotsb\forall(\neg\psi_k).\false,\;
			\neg\forall(\chi_2). \forall(\neg\psi_1).\dotsb\forall(\neg\psi_k).\false							&\big\}
	\end{split}
	\end{equation}

	For a specific formula $\varphi$, we define $\cpr_\varphi$ as the set of all the different sets $\cpr(\varphi,\neg\cpi A,\Gamma)$ of formulas.
\end{defi}

\begin{lem} 		\label{lem: cpr properties}
Let $\Delta$ be the set $\cpr(\varphi,\neg\cpi A,\Gamma)$ of formulas. The following properties hold:
\begin{enumerate}[i)]
	\item There are no capabilities statements (even negated ones) in $\Delta$ and there are no formulas of the form $\forall A.\vartheta$ in $\Delta$.
	\item For each $\alpha$/$\beta$ formula $\xi$ in $\Delta$, the formulas of all of its reduction sets are in $cl(\varphi)\cup\Delta$.
	\item For each formula of the form $\neg\forall A.\chi$ in $\Delta$ with $A\in\atp_\Omega$, the formula $\neg\chi$ is in $\Delta$.
\end{enumerate}
\end{lem}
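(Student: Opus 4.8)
The plan is to verify each of the three properties directly from Definition \ref{def: cpr}, inspecting the explicit syntactic shapes of the formulas listed in \eqref{eq: cpr1} and \eqref{eq: cpr2}. All three claims are about the closed form of the set $\cpr(\varphi,\neg\cpi A,\Gamma)$, so the proof is essentially a careful syntactic bookkeeping exercise; no semantic argument is needed.

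For property i), I would simply scan the two sets displayed in \eqref{eq: cpr1} and \eqref{eq: cpr2}. Every formula listed there is either $\neg\false$ or of the form $\neg\forall B.\vartheta$ for appropriate $B$ and $\vartheta$ (where $B$ is $A$, $\neg\chi_1$, $\Omega$, $\chi_2$, or one of the tests $\neg\psi_j$). In particular none of them is a capability statement $\cp_\jmath C$ or $\neg\cp_\jmath C$, and none has the form $\forall A.\vartheta$ (all the boxed subformulas appear only underneath an outermost negation). This establishes i) by inspection.

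For property iii), I would observe that the formulas of the form $\neg\forall B.\chi$ with $B\in\atp_\Omega$ occurring in $\Delta$ are exactly those whose leading modality is $\forall\Omega$ — namely $\neg\forall\Omega.\forall(\neg\psi_1).\dotsb\forall(\neg\psi_k).\false$ and $\neg\forall\Omega.\forall(\chi_2).\forall(\neg\psi_1).\dotsb\forall(\neg\psi_k).\false$ from \eqref{eq: cpr2} (the formula $\neg\forall A.\forall(\neg\psi_1).\dotsb.\false$ from \eqref{eq: cpr1} qualifies only when $A$ itself lies in $\atp_\Omega$, a case handled the same way). For each such formula, stripping the leading $\neg\forall\Omega$ yields exactly the next formula in the list — $\neg\forall(\neg\psi_1).\dotsb\forall(\neg\psi_k).\false$ and $\neg\forall(\chi_2).\forall(\neg\psi_1).\dotsb.\false$ respectively — both of which are explicitly members of $\Delta$. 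So iii) follows by matching each candidate against the list.

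\textbf{The main work, and the only genuinely delicate step, is property ii)}, because it requires invoking the reduction sets (Definition \ref{def: reduction sets}) and, for the eventualities among these formulas, the decomposition machinery of Definition \ref{def: decomposition}. Here I would go through the $\alpha/\beta$ formulas in $\Delta$ by type. A formula such as $\neg\forall(\neg\chi_1).\forall\Omega.\dotsb.\false$ is a $\beta$-formula ($\neg\forall\psi.\varphi$), whose reduction sets contribute $\neg\neg\chi_1$ (hence $\chi_1$, available in $cl(\varphi)$ when $A=\chi_1\Ra\chi_2$ by the closure clause for $\neg\cpi(\chi_1\Ra\chi_2)$) and the box-formula $\neg\forall\Omega.\dotsb.\false\in\Delta$. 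A formula $\neg\forall(\chi_2).\dotsb.\false$ similarly reduces to $\neg\chi_2$ (in $cl(\varphi)$) and $\neg\forall(\neg\psi_1).\dotsb.\false\in\Delta$. The eventualities $\neg\forall\Omega.\dotsb.\false$ require appealing to $\fd$ and Lemma \ref{lem: properties of vtrd}, whose reduction sets contain the fully-decomposed residues $\neg\false$ and the intermediate $\neg\forall\Omega^*$-formulas; I would check that each such residue is again listed in $\Delta$ or is a $\neg\psi_j$/$\neg\false$ member of $cl(\varphi)\cup\Delta$. The obstacle is ensuring the closure $cl(\varphi)$ is rich enough to absorb the precondition/effect pieces $\chi_1,\neg\chi_2,\psi_j$ that peel off; this is exactly what the closure clauses for $\neg\cpi(\chi_1\Ra\chi_2)$ and $\cpi(\varphi_j\Ra\psi_j)$ in Definition \ref{def: closure} are designed to provide, and I would cite them explicitly to close the argument.
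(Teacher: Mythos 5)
Your overall route is the same as the paper's: the paper's entire proof is that the three properties follow immediately by inspecting Definitions \ref{def: closure} and \ref{def: cpr}, and your treatment of properties i) and iii) is exactly that inspection, correctly carried out. The problem is in property ii), precisely the step you flag as the delicate one: your computation of the reduction sets has a systematic polarity error. By Table \ref{tab: alpha/beta}, a negated test $\neg\forall\psi.\vartheta$ is an $\alpha$ formula (not $\beta$) with $\alpha_1=\neg\vartheta$ and $\alpha_2=\psi$ --- the test keeps its polarity. Hence the formula $\neg\forall(\neg\chi_1).\forall\Omega.\dotsb.\false$ contributes $\neg\chi_1$ (not $\neg\neg\chi_1$, hence not $\chi_1$), the formula $\neg\forall(\chi_2).\dotsb.\false$ contributes $\chi_2$ (not $\neg\chi_2$), and the test-chain formulas $\neg\forall(\neg\psi_j).\dotsb.\false$ of \eqref{eq: cpr1} contribute $\neg\psi_j$ (not $\psi_j$). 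These correct residues are exactly what Definition \ref{def: closure} provides: $\{\neg\chi_1,\chi_2\}\subseteq cl(\varphi)$ from $\neg\cpi(\chi_1\Ra\chi_2)\in cl(\varphi)$, and $\{\varphi_j,\neg\psi_j\}\subseteq cl(\varphi)$ from $\cpi(\varphi_j\Ra\psi_j)\in\Gamma\subseteq cl(\varphi)$. The residues you name --- $\chi_1$, $\neg\chi_2$, $\psi_j$ --- are \emph{not} guaranteed to lie in $cl(\varphi)\cup\Delta$ by any closure clause, so your verification, read literally, fails and would wrongly suggest the lemma is false. The fix is simply to apply the $\alpha$-rule with the correct polarities, after which the closure clauses match up term by term.

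Two smaller points. First, your appeal to eventualities, $\fd$, and Lemma \ref{lem: properties of vtrd} is misdirected: no formula of $\Delta$ contains a starred program, so $\Delta$ contains no eventualities at all; moreover a formula $\neg\forall\Omega.\vartheta$ is not an $\alpha$/$\beta$ formula in this paper ($\Omega$ is treated as a member of $\atp_\Omega$, so such formulas are diamonds handled by the transitional rule), hence the second and third formulas of \eqref{eq: cpr2} fall under property iii), which you already handled, and need no reduction-set check. Second, you omit the one genuine $\beta$ formula in $\Delta$, namely $\neg\forall(\chi_1\Ra\chi_2).\forall(\neg\psi_1).\dotsb\forall(\neg\psi_k).\false$ (the first formula of \eqref{eq: cpr1} in the case $A=\chi_1\Ra\chi_2$): its two singleton reduction sets are precisely the first and third formulas of \eqref{eq: cpr2}, which is the whole reason the set \eqref{eq: cpr2} is included in Definition \ref{def: cpr}. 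This case is easy, but it is the keystone of the bookkeeping and should be stated.
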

\begin{proof}
	The properties of the lemma are an immediate consequence of Definitions \ref{def: closure} and \ref{def: cpr}.
\end{proof}

\begin{lem} 		\label{lem: closure properties}
	For any formula $\varphi$, the following properties hold:
	\begin{enumerate}[i)]
		\item The cardinality of $cl(\varphi)$ is quadratically bounded by the size of $\varphi$ (recall Definition \ref{def: size}), i.e. $\sharp cl(\varphi)$ is in $\mc{O}(|\varphi|^2)$.
		\item For any set $\cpr(\varphi,\neg\cpi A,\Gamma)$, $\sharp \cpr(\varphi,\neg\cpi A,\Gamma)$ is in  $\mc{O}(\sharp cl(\varphi))$.
		\item The cardinality of $\cpr_\varphi$ is exponential in the size of $\varphi$, i.e. $\sharp \cpr_\varphi  \in  2^{\mc{O}(|\varphi|^2)}$.
	\end{enumerate}
\end{lem}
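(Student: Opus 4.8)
The plan is to prove each of the three cardinality bounds in Lemma \ref{lem: closure properties} in turn, building on the definitions of $cl(\varphi)$, $\cpr(\varphi,\neg\cpi A,\Gamma)$, and $\cpr_\varphi$, and citing the quadratic closure bound that the paper attributes to \cite{Hartonas1}.

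For property i), I would first recall that the novelty here (as flagged in the discussion preceding Definition \ref{def: size}) is that the size measure is rigged precisely so that the Fischer--Ladner-style closure admits a \emph{quadratic} rather than linear bound, compensating for the precondition-effect and capability constructs. The argument is to show that every formula entering $cl(\varphi)$ is a ``subformula-like'' expression built from the syntactic material of $\varphi$. The $\alpha/\beta$ decomposition clauses and the box/diamond clauses only produce formulas whose total size is controlled, because the definition of $\siz{AB} = 1 + 2\siz{A} + \siz{B}$ and $\siz{A^*} = 1 + 2\siz{A}$ assign enough ``budget'' to the iteration and composition operators to absorb the repeated unfolding $\forall A^*.\vartheta \rightsquigarrow \forall A.\forall A^*.\vartheta$. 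I would bound $\sharp cl(\varphi)$ by counting the distinct formulas of the form $\forall A'.\chi'$ (and their negations, and capability statements) where $A'$ ranges over ``suffixes'' of the process terms occurring in $\varphi$ and $\chi'$ over subformulas; the product of these two counts, each linear in $|\varphi|$, gives the $\mc{O}(|\varphi|^2)$ bound. I would invoke \cite{Hartonas1} for the detailed verification rather than reprove it, since the excerpt explicitly defers to that reference.

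For property ii), the key observation is that by Definition \ref{def: cpr} each set $\cpr(\varphi,\neg\cpi A,\Gamma)$ consists of the ``telescoping'' sequence of formulas obtained by successively stripping boxes off the single formula $\neg\forall A.\forall(\neg\psi_1).\dotsb\forall(\neg\psi_k).\false$ (plus the four extra formulas of \eqref{eq: cpr2} when $A=\chi_1\Ra\chi_2$, a constant additive term). The length of this sequence is at most $k+2$, and $k$ is the number of precondition-effect capability statements in $\Gamma\subseteq cl(\varphi)$, hence $k \leq \sharp cl(\varphi)$. By Lemma \ref{lem: cpr properties}, all the generated formulas stay within $cl(\varphi)\cup\Delta$, so there is no blow-up; thus $\sharp\cpr(\varphi,\neg\cpi A,\Gamma) \in \mc{O}(\sharp cl(\varphi))$, and composing with i) yields $\mc{O}(|\varphi|^2)$.

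For property iii), which I expect to be the main point of interest, the idea is that $\cpr_\varphi$ is the set of \emph{all distinct} such $\cpr(\varphi,\neg\cpi A,\Gamma)$, indexed by a choice of a negated capability statement $\neg\cpi A \in cl(\varphi)$ (of which there are at most $\sharp cl(\varphi)$ many, i.e. $\mc{O}(|\varphi|^2)$) together with a \emph{subset} $\Gamma$ of the precondition-effect capability statements in $cl(\varphi)$. Since there are $\mc{O}(|\varphi|^2)$ candidate statements of the form $\cpi(\varphi'\Ra\psi')$ in the closure, the number of such subsets is bounded by $2^{\mc{O}(|\varphi|^2)}$, and multiplying by the polynomial number of choices for $\neg\cpi A$ is absorbed into the exponential. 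Hence $\sharp\cpr_\varphi \in 2^{\mc{O}(|\varphi|^2)}$. The main obstacle is simply making the indexing precise—arguing that distinct sets $\cpr(\varphi,\neg\cpi A,\Gamma)$ arise from distinct pairs $(\neg\cpi A,\Gamma)$ so that the power-set count is an honest upper bound—which is straightforward once one reads off the dependence on $(\neg\cpi A,\Gamma)$ directly from Definition \ref{def: cpr}.
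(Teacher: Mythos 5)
Your proposal is correct and follows essentially the same route as the paper: property i) is delegated to the quadratic Fischer--Ladner bound of \cite{Hartonas1}, property ii) is read directly off Definition \ref{def: cpr} (the telescoping sequence of at most $k+2$ formulas plus a constant number of extras, with $k\leq\sharp cl(\varphi)$), and property iii) follows by counting the pairs $(\neg\cpi A,\Gamma)$, where the subsets $\Gamma$ of capabilities statements in $cl(\varphi)$ contribute the $2^{\mc{O}(|\varphi|^2)}$ factor. Your write-up is in fact slightly more careful than the paper's terse proof, e.g.\ in explicitly absorbing the polynomially many choices of $\neg\cpi A$ into the exponential bound.
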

\begin{proof}
	The first property is by the properties of the Fischer-Ladner closure of a formula \cite{Hartonas1}. Recall that the reduction sets of an $\alpha$/$\beta$ formula are nothing more than the result of a series of applications of the usual $\alpha$/$\beta$ rules (see Definitions \ref{def: reduction sets} and \ref{def: decomposition}).
	The second property is an immediate consequence of the definition of a set $\cpr(\varphi,\neg\cpi A,\Gamma)$.
	The number of sets in $\cpr_\varphi$ depends on the number of different subsets of capabilities statements of $cl(\varphi)$. The number of different subsets of $cl(\varphi)$ is at most $2^{\sharp cl(\varphi)}$.
	Since $\sharp cl(\varphi)$ is in $\mc{O}(|\varphi|^2)$, $\sharp \cpr_\varphi$ is in $2^{\mc{O}(|\varphi|^2)}$.
\end{proof}

\begin{lem}	\label{lem: subset of cl}
	For any node $v$ of a tableau $G$ for a formula $\varphi$, 
	there is a set $\Delta$ of formulas in $\cpr_\varphi \cup \{\emptyset\}$ such that
	\begin{enumerate*}[i)]
		\item if $v$ is a partial node, then $\Phi_v \subseteq cl(\varphi) \cup \Delta$, and
		\item if $v$ is a state node, then $\actf_v \subseteq cl(\varphi) \cup \Delta$ and $\rdf_v \subseteq cl(\varphi) \cup \bigcup\cpr_\varphi$.
	\end{enumerate*}
\end{lem}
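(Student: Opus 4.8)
The plan is to proceed by induction on the construction of the tableau $G$, viewing it as built by a sequence of steps that either create a new node (through the procedures \tref{asr} and \tref{ansr}, applying the static, transitional and capability rules) or modify the formula set of an existing node (the extension of a reduced set at line \ref{ct: extrdf} of \tref{ct}); the status and dependency updates performed by \tref{upd} and \tref{propag} never touch labels and can be ignored. For each node I would maintain the stronger invariant that there is a single witnessing set $\Delta\in\cpr_\varphi\cup\{\emptyset\}$ with $\Phi_v\subseteq cl(\varphi)\cup\Delta$, as long as $v$ has not had its reduced set extended, and I would show that such extensions affect only states and only their reduced sets, which is exactly what forces the weaker bound $\rdf_v\subseteq cl(\varphi)\cup\bigcup\cpr_\varphi$ in the state case. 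The two structural inputs I would lean on throughout are Definition \ref{def: closure} (the closure is closed under reduction sets, under the $\neg\chi$ of a diamond formula, and under the components of capability statements) and Lemma \ref{lem: cpr properties}, whose parts i)--iii) say that a set $\cpr(\varphi,\neg\cpi A,\Gamma)$ contains no capability statements and no positive boxed formula $\forall A.\vartheta$, is closed under reduction sets relative to $cl(\varphi)\cup\Delta$, and is closed under the $\neg\chi$ of its diamond formulas.

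For the base case the root has $\Phi=\{\varphi\}\subseteq cl(\varphi)$, so $\Delta=\emptyset$ works. For the inductive step I would analyse the three creation rules. For the static rule (\tref{asr}, line \ref{asr: deflabel}) the child's set is $\Phi_{v_0}\cup\mc{R}_i^\psi$ for a principal $\alpha/\beta$ formula $\psi\in\actf_{v_0}$; by the inductive hypothesis $\psi\in cl(\varphi)\cup\Delta$, and then either the closure clause of Definition \ref{def: closure} (if $\psi\in cl(\varphi)$) or Lemma \ref{lem: cpr properties} ii) (if $\psi\in\Delta$) keeps each reduction set inside $cl(\varphi)\cup\Delta$, so the same $\Delta$ is inherited. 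For the transitional rule (\tref{ansr}, lines \ref{ansr: trans1B}--\ref{ansr: trans2E}) the new set consists of $\neg\chi$ together with the box-contents $\vartheta$ of formulas $\forall a.\vartheta,\forall\Omega.\vartheta\in\actf_{v_0}$; here Lemma \ref{lem: cpr properties} i) is decisive, as it guarantees these positive boxed formulas cannot lie in $\Delta$ and hence lie in $cl(\varphi)$, whence their contents do too, while $\neg\chi$ lies in $cl(\varphi)$ or in $\Delta$ by Definition \ref{def: closure} resp.\ Lemma \ref{lem: cpr properties} iii); thus the parent's $\Delta$ again suffices.

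The capability rule (\tref{ansr}, lines \ref{ansr: capB}--\ref{ansr: capE}) is the only place a non-empty $\Delta$ is born: the new set (see equation \ref{eq: cap Delta}) consists of the preconditions $\varphi_1,\dotsc,\varphi_k$, which lie in $cl(\varphi)$ because each $\cpi(\varphi_i\Ra\psi_i)\in\actf_{v_0}$ cannot come from $\Delta$ (Lemma \ref{lem: cpr properties} i)) and so lies in $cl(\varphi)$, together with the single formula $\neg\forall A.\forall(\neg\psi_1).\dotsb\forall(\neg\psi_k).\false$, which is precisely the leading element of $\cpr(\varphi,\neg\cpi A,\Gamma)$ in equation \ref{eq: cpr1}; since $\neg\cpi A$ and all the $\cpi(\varphi_i\Ra\psi_i)$ are in $cl(\varphi)$, this set is a legitimate member of $\cpr_\varphi$ by Definition \ref{def: cpr} and is taken as the new $\Delta$. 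Finally, for the merge at line \ref{ct: extrdf}, I would observe that it fires only when two similar labels have different reduced sets, which by Definition \ref{def: similar labels} forces both to be saturated, i.e.\ states; it leaves the common active set untouched and only enlarges $\rdf_{v_1'}$ by $\rdf_{v_1}$, and since the candidate label $\lab_{v_1}$ is itself produced by one of the three creation rules the analysis above gives $\rdf_{v_1}\subseteq\Phi_{v_1}\subseteq cl(\varphi)\cup\Delta_{v_1}$ for a single $\Delta_{v_1}$, so the reduced set stays within $cl(\varphi)\cup\bigcup\cpr_\varphi$ while the single-$\Delta$ bound on the active set persists. I expect the main obstacle to be exactly this last point: keeping the bookkeeping honest across the caching/merge step and seeing that it, and only it, is what makes the union $\bigcup\cpr_\varphi$ (rather than one $\Delta$) both necessary and sufficient for the reduced sets of states, whereas the remaining cases are routine appeals to Definition \ref{def: closure} and Lemma \ref{lem: cpr properties}.
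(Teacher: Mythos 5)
Your proposal is correct and follows essentially the same route as the paper's proof: induction on the construction of the tableau, with a case analysis over the static, transitional and capability rules (using Definition \ref{def: closure} and Lemma \ref{lem: cpr properties} in exactly the same roles, and with the capability rule as the sole source of a non-empty $\Delta$), plus the merge step at line \ref{ct: extrdf} as the sole reason the reduced sets of states need the weaker bound $cl(\varphi)\cup\bigcup\cpr_\varphi$. Your additional observation that the merge can only affect saturated labels, by Definition \ref{def: similar labels}, makes explicit a point the paper's proof takes for granted.
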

\begin{proof}
	We proceed by induction on the inductive definition of $G$.
	Obviously, the only formula of the root (that is the input formula of \tref{isSat}) is in the corresponding closure. 
	In the sequel, we consider a node $v_c$ different from the root and we examine all the cases that lead to its definition. Each node of the tableau, besides the root, has a parent. Thus, we also consider the parent of $v_c$, denoted by $v$, in which one of the procedures \tref{csp} and \tref{css} has been applied and has led to the definition of $v_c$ as a child of $v$. 
	By the induction hypothesis, we know that the lemma holds for $v$ and we show that it also holds for $v_c$.
	
	First, we assume that $v$ is a partial node and that $v_c$ is the result of the application of the static rule on $v$ by the procedure \tref{asr}. By the induction hypothesis, we know that $\Phi_{v}$ is a subset of $cl(\varphi) \cup \Delta$, where $\Delta$ is either the empty set or one of the sets of $\cpr_\varphi$. For the definition of $v_c$, \tref{asr} adds to $\Phi_v$ the formulas of one of the reduction sets of an active $\alpha$/$\beta$ formula $\xi$ of $v$. 
	If $\xi$ is in $cl(\varphi)$, then by Definition \ref{def: closure} we know that all of its reductions sets are subsets of $cl(\varphi)$. Thus, we also have that $\Phi_{v_c}$ is a subset of $cl(\varphi)\cup\Delta$.
	Next, we assume that $\Delta\neq\emptyset$ is one of the sets in $\cpr_\varphi$ and that $\xi$ is in it. By Lemma \ref{lem: cpr properties}, we reach the same conclusion as in the previous case.
	
	Next, we assume that $v_c$ is the result of the application of one of the non static rules on a state $v$ by the procedure \tref{ansr}. By the induction hypothesis, we know that $\actf_v \subseteq cl(\varphi) \cup \Delta$, where $\Delta$ is the empty set or one of the sets of $\cpr_\varphi$.	
	If $v_c$ is the result of the application of the transitional rule (see lines \ref{ansr: trans1B}-\ref{ansr: trans2E} of \tref{ansr}) on $v$,
	then there is a formula $\neg\forall A.\chi$ in $\actf_v$ with $A\in\atp_\Omega$ which is considered as responsible for the definition of $v_c$. Since $\Phi_v \subseteq cl(\varphi) \cup \Delta$, we distinguish two cases for $\neg\forall A.\chi$. 
	If $\neg\forall A.\chi$ is in $cl(\varphi)$, then, by Definition \ref{def: closure}, $\neg\chi$ is also in $cl(\varphi)$. Additionally, by Lemma \ref{lem: cpr properties}, we know that all the active formulas of $v$ of the form $\forall A.\vartheta$ with $A\in\atp_\Omega$ belong to $cl(\varphi)$. 
	Thus, $\Phi_{v_c}$ is a subset of $cl(\varphi)$.
	If $\neg\forall A.\chi$ is in $\Delta$, then, by Lemma \ref{lem: cpr properties}, we have that $\Phi_{v_c} \subseteq cl(\varphi) \cup \Delta$.
	
	Finally, we consider that $v_c$ is the result of the application of the capability rule on $v$. So, there is a capability statement $\neg\cpi A$ in $\actf_v$ with $A\in \wt{\Sigma}$. Moreover, let $\Gamma$ be the largest subset of $\actf_v$ of non-negated capabilities statements of the form $\cpi(\chi_1\Ra\chi_2)$. By Lemma \ref{lem: cpr properties}, we know that $\neg\cpi A$ and that all the formulas of $\Gamma$ belong to $cl(\varphi)$. 
	In the following, we show that $\Phi_{v_c}$ is a subset of $cl(\varphi) \cup \cpr(\varphi,\neg\cpi A,\Gamma)$.
	By Definition \ref{def: closure}, we know that for each capability statement $\cpi(\chi_1\Ra\chi_2)$ in $cl(\varphi)$, $\chi_1$ is in $cl(\varphi)$. 
	Furthermore, by Definition \ref{def: cpr}, we know that if $\Gamma$ is the set $\{ \cpi(\chi_1\Ra\psi_1), \dotsc, \cpi(\chi_k\Ra\psi_k) \}$, then the formula $\neg\forall A.\forall(\neg\psi_1).\dotsb.\forall(\neg\psi_k).\false$ is in $\cpr(\varphi,\neg\cpi A,\Gamma)$.
	Therefore, by the way that \tref{ansr} defines $v_c$ (see lines \ref{ansr: capB}-\ref{ansr: capE}), we can conclude that $\Phi_{v_c}\subseteq cl(\varphi) \cup \cpr(\varphi,\neg\cpi A,\Gamma)$.
	
	Besides the previous cases, the algorithm might alter the set of formulas of a pre-existing state $v$. We assume that the algorithm has defined a state $v'$ which is never used, as there is a pre-existing state $v$ with the same active set. According to line \ref{ct: extrdf} of \tref{ct}, the formulas of $\rdf_{v'}$ are added to $\rdf_v$. Since $\actf_v$ does not change, we deal only with its reduced set.
	By the induction hypothesis, we know that the lemma holds for $v$ and $v'$. Since $\rdf_v$ and $\rdf_{v'}$ are subsets of $cl(\varphi) \cup \bigcup\cpr_\varphi$, it is immediate that $\rdf_v \cup \rdf_{v'}$ is also a subset of $cl(\varphi) \cup \bigcup\cpr_\varphi$.
\end{proof}

\begin{lem}	\label{lem: card Phi of a node}
	For any node $v$ of a tableau $G$ for a formula $\varphi$, 
	if $v$ is a partial node, then the cardinality of $\Phi_v$ is in $\mc{O}(|\varphi|^2)$, and 
	if $v$ is a state node, then the cardinality of $\actf_v$ is in $\mc{O}(|\varphi|^2)$ and the cardinality of $\rdf_v$ in $2^{\mc{O}(|\varphi|^2)}$.
\end{lem}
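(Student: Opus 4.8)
The plan is to read off both bounds directly from the two preceding lemmas, Lemma \ref{lem: subset of cl} and Lemma \ref{lem: closure properties}, so that the proof reduces to elementary asymptotic arithmetic. No new combinatorial work is needed: Lemma \ref{lem: subset of cl} tells us \emph{where} the formulas of a node live, and Lemma \ref{lem: closure properties} tells us \emph{how large} those ambient sets are.

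First I would handle a partial node $v$. By Lemma \ref{lem: subset of cl} there is a set $\Delta \in \cpr_\varphi \cup \{\emptyset\}$ with $\Phi_v \subseteq cl(\varphi)\cup\Delta$. If $\Delta=\emptyset$ then $\sharp\Phi_v \le \sharp cl(\varphi)$, and if $\Delta\neq\emptyset$ then $\Delta = \cpr(\varphi,\neg\cpi A,\Gamma)$ for suitable $A,\Gamma$, whose cardinality is in $\mc{O}(\sharp cl(\varphi))$ by the second item of Lemma \ref{lem: closure properties}. Combining with the first item of that lemma, namely $\sharp cl(\varphi)\in\mc{O}(|\varphi|^2)$, gives $\sharp\Phi_v \le \sharp cl(\varphi)+\sharp\Delta \in \mc{O}(|\varphi|^2)+\mc{O}(|\varphi|^2) = \mc{O}(|\varphi|^2)$, as required.

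For a state node $v$ the active set is treated identically: Lemma \ref{lem: subset of cl} gives $\actf_v \subseteq cl(\varphi)\cup\Delta$ with the same kind of $\Delta$, so the argument above yields $\sharp\actf_v \in \mc{O}(|\varphi|^2)$. The reduced set requires the only genuinely new observation. By Lemma \ref{lem: subset of cl} we have $\rdf_v \subseteq cl(\varphi)\cup\bigcup\cpr_\varphi$. The union $\bigcup\cpr_\varphi$ is taken over $\sharp\cpr_\varphi$ sets, each of cardinality $\mc{O}(|\varphi|^2)$, so its cardinality is at most $\sharp\cpr_\varphi \cdot \mc{O}(|\varphi|^2)$. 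Using the third item of Lemma \ref{lem: closure properties}, $\sharp\cpr_\varphi \in 2^{\mc{O}(|\varphi|^2)}$, and the polynomial factor $\mc{O}(|\varphi|^2)$ is absorbed into the exponential bound, so $\sharp\bigcup\cpr_\varphi \in 2^{\mc{O}(|\varphi|^2)}$; adding the polynomial term $\sharp cl(\varphi)$ leaves this unchanged, whence $\sharp\rdf_v \in 2^{\mc{O}(|\varphi|^2)}$.

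The proof is essentially bookkeeping, so there is no real obstacle; the only point that deserves care is the reduced-set bound, where one must check that multiplying the exponentially many sets of $\cpr_\varphi$ by the polynomial size of each individual set does not push the estimate beyond $2^{\mc{O}(|\varphi|^2)}$ — which it does not, since $2^{\mc{O}(|\varphi|^2)}\cdot\mc{O}(|\varphi|^2)$ is again $2^{\mc{O}(|\varphi|^2)}$.
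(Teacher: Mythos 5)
Your proof is correct and takes exactly the route the paper intends: the paper's own proof simply states that the lemma is an immediate consequence of Lemmas \ref{lem: closure properties} and \ref{lem: subset of cl}, and your argument fills in precisely that bookkeeping, including the one point worth checking, namely that $2^{\mc{O}(|\varphi|^2)}\cdot\mc{O}(|\varphi|^2)$ remains in $2^{\mc{O}(|\varphi|^2)}$ for the bound on $\rdf_v$.
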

\begin{proof}
	The proof is an immediate consequence of Lemmas \ref{lem: closure properties} and \ref{lem: subset of cl}.
\end{proof}

\begin{lem}	 		\label{lem: reduction sets exp}
	If $\xi$ is an $\alpha$/$\beta$ eventuality of the closure $cl(\varphi)$ of a formula $\varphi$, then its reduction degree is in $2^{\mc{O}(|\varphi|^2)}$.
\end{lem}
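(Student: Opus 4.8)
The plan is to identify the reduction degree $n$ of $\xi$ with the cardinality of its finalized decomposition set and then bound that cardinality by a purely combinatorial argument. By Definition \ref{def: reduction sets}, for an $\alpha$/$\beta$ eventuality the family of reduction sets is indexed precisely by the pairs of $\fd(\xi)$, and $n$ is declared equal to $\sharp\fd(\xi)$. Hence it suffices to prove $\sharp\fd(\xi)\in 2^{\mc{O}(|\varphi|^2)}$, and for this I would locate each pair of $\fd(\xi)$ inside the closure $cl(\varphi)$ and count.

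First I would show that every pair $(\mc{T},\vartheta)\in\fd(\xi)$ has both components in $cl(\varphi)$. By Definition \ref{def: reduction sets} the set $\mc{R}=\mc{T}\cup\{\vartheta\}$ is a reduction set of $\xi$, and since $\xi$ is an $\alpha$/$\beta$ formula belonging to $cl(\varphi)$, Definition \ref{def: closure} guarantees that all reduction sets of $\xi$ are subsets of $cl(\varphi)$. Therefore $\mc{T}\subseteq cl(\varphi)$ and $\vartheta\in cl(\varphi)$ for every pair in $\fd(\xi)$. If one prefers a direct verification, the same containment can be read off the decomposition rules of Definition \ref{def: decomposition}: each rule replaces the current principal formula by one of its $\alpha_1$/$\beta_i$ components and, in the test rule, moves an $\alpha_2$ test into $\mc{T}$, and every such component again lies in $cl(\varphi)$ by closure under reduction sets.

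With the containment in hand the count is immediate. The identity map embeds $\fd(\xi)$ into $2^{cl(\varphi)}\times cl(\varphi)$, so $\sharp\fd(\xi)\le 2^{\sharp cl(\varphi)}\cdot\sharp cl(\varphi)$. By part (i) of Lemma \ref{lem: closure properties} we have $\sharp cl(\varphi)\in\mc{O}(|\varphi|^2)$, whence $\sharp\fd(\xi)\le 2^{\mc{O}(|\varphi|^2)}\cdot\mc{O}(|\varphi|^2)=2^{\mc{O}(|\varphi|^2)}$, and since the reduction degree of $\xi$ is exactly $\sharp\fd(\xi)$ this yields the claimed bound. The only genuine obstacle here is the containment step: one must be certain that neither the terminal principal formulas $\vartheta$ nor the sentential tests gathered in $\mc{T}$ ever escape $cl(\varphi)$ during the decomposition. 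This is precisely what closure of $cl(\varphi)$ under the formation of reduction sets (Definition \ref{def: closure}) secures, so once that fact is invoked the remainder of the argument is routine counting.
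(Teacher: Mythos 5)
Your proof is correct, but it takes a genuinely different route from the paper's. The paper argues structurally: it views the decomposition set $\mc{D}(\xi)$ as a tree of triples with branching factor at most two (at most two decomposition rules apply to any triple), observes that along any root-to-leaf path the $\mc{P}$-component strictly accumulates principal formulas drawn from a quadratically bounded universe (by the Fischer--Ladner closure properties of \cite{Hartonas1}), concludes the tree has height $\mc{O}(|\varphi|^2)$ and hence at most $2^{\mc{O}(|\varphi|^2)}$ leaves, and finally injects $\fd(\xi)$ into the leaves. You instead bypass the tree entirely: since $\xi\in cl(\varphi)$ is an $\alpha$/$\beta$ formula, Definition \ref{def: closure} forces every reduction set $\mc{T}\cup\{\vartheta\}$ into $cl(\varphi)$, so $\fd(\xi)\subseteq 2^{cl(\varphi)}\times cl(\varphi)$, and the bound follows from Lemma \ref{lem: closure properties}(i) by brute counting; this is legitimate, since that lemma precedes the present one and its proof does not depend on it (it bounds the number of \emph{formulas} in the closure, not the number of reduction sets, so there is no circularity). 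Your argument is shorter and more modular, delegating all the hard work to the closure cardinality bound. What the paper's tree analysis buys in exchange is information your counting argument does not provide: it bounds the size of the entire decomposition set $\mc{D}(\xi)$ and exhibits why the decomposition process terminates, which is what one actually needs when arguing (as in Theorem \ref{thm: complexity}) that the procedure \tref{asr} can \emph{compute} the reduction sets in exponential time, rather than merely that there are exponentially many of them.
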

\begin{proof}
	According to Definition \ref{def: reduction sets}, the number of the reduction sets of an $\alpha$/$\beta$ eventuality $\xi$ is equal to $\sharp\fd(\xi)$. Now, according to Definition \ref{def: decomposition}, the cardinality of $\fd(\xi)$ depends on the set $\mc{D}(\xi)$. In the following, we examine the cardinalities of these sets.
	The decomposition set $\mc{D}(\xi)$ can be represented as a tree of triples that satisfies the following conditions:
	\begin{enumerate*}[i)]
		\item its root is the triple $(\emptyset,\emptyset, \xi)$, 
		\item each pair of parent-child nodes forms an instance of one of the decomposition rules of Definition \ref{def: decomposition} and
		\item the leaves of the tree are triples $(\mc{P},\mc{T},\vartheta)$ such that $\vartheta$ is in $\mc{P}$ or $\vartheta$ is a non-eventuality or $\vartheta$ is some formula $\neg\forall A.\psi$ with $A\in\atp_\Omega$.
	\end{enumerate*}
	
	According to Definition \ref{def: decomposition}, there are at most two decomposition rules for the same triple. Thus, we can conclude that each node of the tree of $\mc{D}(\xi)$ has at most two children. For any triple of the tree, there is a sequence of triples which forms the path from the root towards it. In particular, for some triple $tr$ of the tree, there is a sequence $(\mc{P}_1,\mc{T}_1,\vartheta_1), \dotsc, (\mc{P}_m,\mc{T}_m,\vartheta_m)$ of triples with $m\geq1$ such that
	$(\mc{P}_1,\mc{T}_1,\vartheta_1) = (\emptyset,\emptyset,\xi)$ and
	$(\mc{P}_m,\mc{T}_m,\vartheta_m)=tr$ and
	for $i=1,\dotsc,m-1$, $\dfrac{(\mc{P}_i,\mc{T}_i,\vartheta_i)}  {(\mc{P}_{i+1},\mc{T}_{i+1},\vartheta_{i+1})}$ is an instance of one of the decomposition rules of Definition \ref{def: decomposition} such that $\vartheta_i\notin\mc{P}_i$ is an $\alpha$/$\beta$ formula and $\mc{P}_{i+1}=\mc{P}_i\cup\{\vartheta_i\}$.
	It is clear that for $i=1,\dotsc,m-1$, $\mc{P}_i$ is the set $\{ \vartheta_1, \dotsc, \vartheta_{i-1} \}$ of $\alpha$/$\beta$ formulas. Thus, $\mc{P}_m = \{\vartheta_1,\dotsc,\vartheta_{m-1}\}$ and the cardinality $\sharp\mc{P}_m$ is $m-1$.
	So, by the properties of the Fischer-Ladner closure of a formula, see \cite{Hartonas1}, and by the way that the decomposition rules have been defined, we can conclude that for any triple $(\mc{P},\mc{T},\vartheta)$, $\sharp\mc{P}$ is in $\mc{O}(|\varphi|^2)$. 	
	Since the number of the principal formulas that can appear is bounded, eventually we reach a leaf triple $(\mc{P},\mc{T},\vartheta)$ for which one of the following properties holds: 
	\begin{enumerate*}[i)]
		\item $\vartheta$ is a non-eventuality,
		\item $\vartheta$ is a formula of the form $\neg\forall A.\chi$ with $A\in\atp_\Omega$,
		\item $\vartheta$ is in $\mc{P}$.
	\end{enumerate*}
	Therefore, the height of the tree is also in $\mc{O}(|\varphi|^2)$. Thus, the number of the triples of the tree of $\mc{D}(\xi)$ is in $2^{\mc{O}(|\varphi|^2)}$ and the same also holds for the number of the leaf triples. Notice that the finalized decomposition set $\fd(\xi)$ is a subset of the set of the pairs $(\mc{T},\vartheta)$ for which there is a leaf triple $(\mc{P},\mc{T},\vartheta)$ in the tree of $\mc{D}(\xi)$ (see Definition \ref{def: decomposition}). Therefore, we have to admit that $\sharp\fd(\xi)$ is also in $2^{\mc{O}(|\varphi|^2)}$.
\end{proof}

\begin{lem} 		\label{lem: number of nodes}
	The number of the nodes of a tableau $G$ for a formula $\varphi_0$ is in $2^{\mc{O}(n^2)}$, where $n$ is the size of $\varphi_0$.
\end{lem}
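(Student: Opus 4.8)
The plan is to bound the number of nodes by bounding the number of distinct \emph{similarity classes} of labels, exploiting the caching carried out by \tref{ct}. First I would observe that whenever a candidate node $v_1$ is about to be created, \tref{ct} searches for a pre-existing node $v_1'$ with $\lab_{v_1'} \approx \lab_{v_1}$ (in the sense of Definition \ref{def: similar labels}); if such a node exists, $v_1$ is discarded and only an edge is added. Hence at every stage the set $V$ carries pairwise non-similar labels, and it suffices to count how many pairwise non-similar labels can occur. By Definition \ref{def: similar labels}, a saturated (state) label is determined up to similarity by its active set $\actf_v$, whereas a partial label is determined up to similarity by the pair $(\actf_v,\rdf_v)$ of its active and reduced sets; so I would bound the states and the partial nodes separately.

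For the state nodes, by Lemma \ref{lem: subset of cl} there is a single set $\Delta \in \cpr_\varphi \cup \{\emptyset\}$ with $\actf_v \subseteq cl(\varphi) \cup \Delta$. The number of possible active sets is therefore at most
\[
\sum_{\Delta \in \cpr_\varphi \cup \{\emptyset\}} 2^{\sharp(cl(\varphi) \cup \Delta)}.
\]
By Lemma \ref{lem: closure properties}, $\sharp cl(\varphi) \in \mc{O}(n^2)$ and $\sharp\Delta \in \mc{O}(\sharp cl(\varphi))$, so each exponent lies in $\mc{O}(n^2)$ and each summand in $2^{\mc{O}(n^2)}$; since $\sharp(\cpr_\varphi \cup \{\emptyset\})$ is also in $2^{\mc{O}(n^2)}$, the whole sum lies in $2^{\mc{O}(n^2)}$. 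I would note that although a state's reduced set may have cardinality $2^{\mc{O}(n^2)}$ by Lemma \ref{lem: card Phi of a node}, this is harmless, as similarity of states ignores the reduced set.

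For the partial nodes, the same Lemma \ref{lem: subset of cl} confines $\Phi_v$, hence both $\actf_v$ and $\rdf_v$, to $cl(\varphi) \cup \Delta$ for a single $\Delta$. Since $\actf_v$ and $\rdf_v$ are disjoint, a partial similarity class is specified by assigning each element of $cl(\varphi) \cup \Delta$ one of three possibilities (in $\actf_v$, in $\rdf_v$, or in neither), giving at most $\sum_{\Delta} 3^{\sharp(cl(\varphi) \cup \Delta)}$ classes, which is again in $2^{\mc{O}(n^2)}$ by the same estimate. Adding the two bounds yields $2^{\mc{O}(n^2)}$ nodes in total, as claimed.

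The main obstacle — and precisely the reason Lemma \ref{lem: subset of cl} is phrased the way it is — is to avoid a spurious double-exponential blow-up. Every formula occurring in any node belongs to $cl(\varphi) \cup \bigcup \cpr_\varphi$, but this union has cardinality $2^{\mc{O}(n^2)}$ (because $\cpr_\varphi$ contains exponentially many sets), so counting arbitrary subsets of it would give the useless bound $2^{2^{\mc{O}(n^2)}}$. The crucial fact I must invoke is that each \emph{individual} node is anchored to one $\cpr$ set $\Delta$ of merely polynomial cardinality; this keeps every exponent in $\mc{O}(n^2)$, so that multiplying the (exponentially many) choices of $\Delta$ by the (exponentially many) subsets of $cl(\varphi) \cup \Delta$ collapses back to a single exponential $2^{\mc{O}(n^2)}$.
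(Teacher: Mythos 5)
Your proposal is correct and follows essentially the same route as the paper's own proof: both count similarity classes of labels (justified by the caching in \tref{ct}), anchor each node to a single set $\Delta \in \cpr_{\varphi_0} \cup \{\emptyset\}$ via Lemma \ref{lem: subset of cl}, and combine the cardinality bounds of Lemma \ref{lem: closure properties} so that the product of exponentially many choices of $\Delta$ with exponentially many subsets of $cl(\varphi_0) \cup \Delta$ stays in $2^{\mc{O}(n^2)}$. Your explicit sum over $\Delta$ and the three-way assignment for partial labels are just a slightly more careful presentation of the paper's multiplication of the counts of active and reduced sets.
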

\begin{proof}
	By the definition of the algorithm (see lines \ref{ct: caching}-\ref{ct: extrdf} of \tref{ct}), we know that there are no different nodes with similar labels in $G$. More specifically, according to Definition \ref{def: similar labels}, there are no partial nodes with the same active set and with the same reduced set and there are no states with the same active set.
	By Definition \ref{def: label}, the active set of a state cannot have $\alpha$/$\beta$ formulas and the active set of a partial node has at least one $\alpha$/$\beta$ formula. Thus, an active set of a state cannot be the active set of a partial node and vice versa.
	
	By Lemma \ref{lem: subset of cl}, we know that the active set of a node is the union of a subset of $cl(\varphi_0)$ and of a set in $\cpr_{\varphi_0}$. 
	By Lemma \ref{lem: closure properties}, we know that the number of the different subsets of $cl(\varphi_0)$ is in $2^{\mc{O}(n^2)}$. By the same lemma, we also know that the cardinality of $\cpr_{\varphi_0}$ is also in $2^{\mc{O}(n^2)}$. Therefore, the number of the different sets of formulas which are defined as the union of a subset of $cl(\varphi_0)$ and of a set in $\cpr_{\varphi_0}$ is in $2^{\mc{O}(n^2)} \cdot 2^{\mc{O}(n^2)} \in 2^{\mc{O}(n^2)}$.
	Thus, the number of the different sets of formulas which can be active sets of nodes is in $2^{\mc{O}(n^2)}$. The same also holds for the number of the different reduced sets for the partial nodes.
	Therefore, since by Definition \ref{def: label}, the label of a node essentially consists of two disjoint sets, namely an active set and a reduced set, the number of the different nodes in $G$ is in $2^{\mc{O}(n^2)} \cdot 2^{\mc{O}(n^2)} \in 2^{\mc{O}(n^2)}$.
\end{proof}

\begin{thm}	[Complexity] 	\label{thm: complexity} 
	The procedure \tref{isSat} with input a formula $\varphi_0$ runs in exponential time in the size $n$ of $\varphi_0$.
\end{thm}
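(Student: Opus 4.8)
The plan is to charge the total running time to two quantities—the number of primitive operations the algorithm performs and the cost of each—and to show that both are singly exponential in $n$, so that their product lies in $2^{\mathcal{O}(n^2)}$ and is therefore exponential in $n$. First I would fix the basic size parameters. By Lemma \ref{lem: number of nodes}, the vertex set satisfies $\sharp V \in 2^{\mathcal{O}(n^2)}$. By Lemmas \ref{lem: card Phi of a node} and \ref{lem: subset of cl}, every label is representable in space $2^{\mathcal{O}(n^2)}$: the active part has size $\mathcal{O}(n^2)$ and the reduced part size $2^{\mathcal{O}(n^2)}$, while each individual formula is drawn from $cl(\varphi_0)\cup\bigcup\cpr_{\varphi_0}$ and is thus of size polynomial in $n$. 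The number of children a node ever spawns is bounded by the reduction degree of its principal formula—which is $2^{\mathcal{O}(n^2)}$ by Lemma \ref{lem: reduction sets exp}—for partial nodes, and by $\sharp\actf_v \in \mathcal{O}(n^2)$ for states; hence $\sharp E \in 2^{\mathcal{O}(n^2)}$ as well. Finally the fulfillment relation relates pairs from $V\times\ev$ to pairs from $V\times\mc{L}_s$, so the number of related pairs satisfies $\sharp{\rat} \in 2^{\mathcal{O}(n^2)}$.

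Next I would bound the cost of each procedure in isolation. The caching test in \tref{ct} (line \ref{ct: caching}) compares a candidate label against every existing node, costing $\sharp V$ times the label-comparison cost, i.e.\ $2^{\mathcal{O}(n^2)}$. Computing the reduction sets used by \tref{asr} amounts to building the decomposition set, which has $2^{\mathcal{O}(n^2)}$ triples (as in the proof of Lemma \ref{lem: reduction sets exp}); \tref{ansr} applies the transitional and capability rules, again at cost $2^{\mathcal{O}(n^2)}$. In \tref{csp} and \tref{css} the dependency set is a union of at most $\sharp V$ sets each of size $\leq \sharp V$, the relation $\rat$ is extended using $\reach$ (a polynomial-size reachability computation inside a single label), and the tests for a $\rat$-fulfilled, $\rat$-dependent, or $\rat$-unfulfilled eventuality reduce to reachability in the directed graph whose vertices are the $\rat$-related pairs; all of these are polynomial in $\sharp{\rat}$ and hence $2^{\mathcal{O}(n^2)}$.

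The crux is to bound the total number of times statuses and dependency sets are recomputed through \tref{upd} and \tref{propag}. Here I would exploit the structure established in Lemmas \ref{lem: sts dep} and \ref{lem: dependency set}: a status, once set to \sat or \unsat, never changes (Lemma \ref{lem: unchanged sat unsat}), and while a node carries the status \tempsat its dependency set consists solely of forward ancestors whose statuses are still undefined. Every invocation of \tref{upd} is triggered by some node $v_0$ acquiring a defined status, and it only touches nodes $v'$ with $v_0\in D_{v'}$, replacing $v_0$ by $D_{v_0}$, whose members lie strictly closer to the root. Since each node's status becomes final exactly once, the number of triggering events is $\leq \sharp V$; each event updates at most $\sharp V$ dependency sets, each update costing $2^{\mathcal{O}(n^2)}$, and \tref{propag} likewise visits each edge only a bounded number of times per event. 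Thus the total number of status and dependency (re)computations is $2^{\mathcal{O}(n^2)}$.

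I expect this last bookkeeping—showing that the cascading updates triggered by \tref{upd} and \tref{propag} do not exceed a single exponential—to be the main obstacle, since a naive count might fear an unbounded number of \tempsat-recalculations; the monotone march of dependency sets toward the root (Lemma \ref{lem: dependency set}), together with the termination guarantee that every dependency set is eventually emptied and every status finalized (Lemma \ref{lem: sat unsat}), is exactly what keeps it under control. Multiplying the $2^{\mathcal{O}(n^2)}$ operations by the $2^{\mathcal{O}(n^2)}$ per-operation cost yields a total running time in $2^{\mathcal{O}(n^2)}$, which is exponential in the size $n$ of $\varphi_0$, as claimed.
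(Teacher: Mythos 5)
Your proposal is correct and takes essentially the same approach as the paper's own proof: both multiply the $2^{\mc{O}(n^2)}$ bound on the number of tableau nodes (Lemma \ref{lem: number of nodes}) by exponential per-operation costs (via Lemmas \ref{lem: card Phi of a node}, \ref{lem: reduction sets exp} and \ref{lem: rat}), and both tame the \tref{upd}/\tref{propag} cascade by observing that the triggering node is removed from every dependency set it occurs in and never re-inserted, so each node is updated at most once per trigger and there are at most $\sharp V$ triggers. No gaps.
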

\begin{proof}
	The procedure \tref{isSat} constructs a tableau $G=(V,E)$ for the formula $\varphi_0$. By Lemma \ref{lem: number of nodes}, we know that the number of the nodes of $G$ is in $2^{\mc{O}(n^2)}$. In the following, we examine the computational cost of the expansion of a node and of the calculation of its status and of its dependency set, including also the definition of the fulfillment relation for its active eventualities. We additionally consider the cost of the update process which is triggered after the definition of the status of a node. It is enough to show that each one of the previous tasks requires at most exponential time.
	
	The expansion of a node is a responsibility of the procedures \tref{asr} and \tref{ansr}.
	By Lemma \ref{lem: card Phi of a node}, we know that the cardinality of the active set of formulas of a node is quadratically bounded. According to Definition \ref{def: reduction sets}, the reduction degree of an $\alpha$/$\beta$ non-eventuality is at most two, whereas, in the case of an $\alpha$/$\beta$ eventuality, by Lemma \ref{lem: reduction sets exp}, it is exponential in $n$. Thus, the procedure \tref{asr} works in exponential time. 
	Next, we argue that \tref{ansr} expands a state node in $\mc{O}(n^2)$. For each formula $\neg\forall A.\chi$ with $A\in\atp_\Omega$ and for each formula $\neg\cpi A$ with $A\in\wt{\Sigma}$, \tref{ansr} defines a child. Since by Lemma \ref{lem: card Phi of a node} the cardinality of the active set of formulas of a node is in $\mc{O}(n^2)$, the number of children of a state is also in $\mc{O}(n^2)$.
	
	Now, we examine the procedures \tref{csp} and \tref{css}. 
	Recall that we have already argued that the number of children of a partial node is exponential and that of a state is quadratically bounded. Moreover, the cardinality of the set of formulas of a node is quadratically bounded.
	The main computational cost of both procedures lies in the fulfillment relation $\rat$, as they examine whether an active eventuality of a node is $\rat$-fulfilled. 
	By Lemma \ref{lem: number of nodes}, we know that the number of the nodes of $G$ is in $2^{\mc{O}(n^2)}$. Additionally, by Lemma \ref{lem: card Phi of a node}, we know that the number of the active eventualities of a node is in $\mc{O}(n^2)$. Since the $\rat$-related pairs are pairs of nodes with their own active eventualities (see also Lemma \ref{lem: rat}), we have that the number of such pairs is in $\mc{O}(n^2) \cdot 2^{\mc{O}(n^2)} \in 2^{\mc{O}(n^2)}$. 
	Moreover, by Lemma \ref{lem: rat}, we know that if $(v,\varphi)\rat(v',\varphi')$, then $v'$ is a child of $v$. As argued in this proof, the number of children of a node is exponential in $n$ and as a result, the number of related pairs  in $\rat$ is also exponential in $n$.
	To examine if an active eventuality is $\rat$-fulfilled or is $\rat$-dependent, by Definition \ref{def: fulfilled dependent}, we need to examine all the  reachable pairs by visiting each pair only once. In the worst case, the procedure might traverse all the involved pairs.
	Therefore, the determination of whether an active eventuality is $\rat$-fulfilled requires exponential time.
	
	After the expansion of a node $v$ with its children and after the definition of its status and of its dependency set, the procedure \tref{ct} updates the nodes which have $v$ in their dependency sets by using the procedure \tref{upd} (see line \ref{ct: update}). The evaluation of the boolean conditions in the two while-loops of \tref{upd} requires exponential time, as by Lemma \ref{lem: number of nodes}, we know that the number of the nodes of $G$ is in $2^{\mc{O}(n^2)}$. If the evaluation is true, then the necessary nodes are updated. 
	Let us assume that $v'$ is a node such that $v$ is in $D_{v'}$. First, we assume that it is updated according to lines \ref{upd: while_Unsat}-\ref{upd: propag_Unsat} and as a result, $D_{v'}$ is defined as the empty set. In the second case, we assume that it is updated to the set $(D_{v'}\setminus \{v\})  \cup  D_v$ according to lines \ref{upd: while_Sat_TempSat}-\ref{upd: TempSat}. By Lemma \ref{lem: dependency set}, we know that all the nodes of $D_v$ are forward ancestors of $v$ and thus, $v$ cannot be in $D_v$. 
	In both cases, since $v$ cannot anymore be in $D_{v'}$, we conclude that $v'$ cannot be updated more than once due to $v$.
	Since the number of the nodes of $G$ is in $2^{\mc{O}(n^2)}$, the number of the executions of the while-loops is also in $2^{\mc{O}(n^2)}$.
	The same also holds for the procedure \tref{propag} (which is used within the while-loops), as it updates only nodes whose dependency sets contain the necessary node.
\end{proof}

\section{Conclusions and Further Work} 	\label{sec: conclusions}

In this paper, we addressed the problem of the complexity of the satisfiability of \tpdl by presenting a deterministic tableau-based algorithm which handles appropriately the new operators of \tpdl and which was shown to be sound, complete and exponential. 

The precondition-effect construct is handled through the appropriate $\beta$ rules of Table \ref{tab: alpha/beta} and of course, through the static rule which is applied by the procedure \tref{asr}. The specific operator gives us the ability to express the universal definable relation through the term $\Omega=\true\!\Ra\!\true$ (see Lemma \ref{lem: Omega}). From the moment that for any process $\varphi\Ra\psi$, we have that its interpretation is equal to $\lonto{\neg\varphi\Omega} \cup \lonto{\Omega\psi}$ (recall Corollary \ref{cor: =>2}) things are considerably simplified. The algorithm uses the corresponding tableau rules and the problem of handling the precondition-effect contruct is reduced to the problem of handling the simpler term $\Omega$.

The capabilities statements required special attention due to their interpretation. The expected $\alpha$/$\beta$ rules (recall Table \ref{tab: alpha/beta}) whose application leads to capabilities statements such that their processes cannot be decomposed further were not enough. A negated capability statement $\neg\cpi A$ such that $A\in\wt{\Sigma}$ along with a non-negated statement $\cpi B$ such that $B$ is also in $\wt{\Sigma}$ may cause problems. The algorithm should somehow make sure that the interpretation of $A$ is not a subrelation of that of $B$, as otherwise the agent $\imath$ should also have the capability to execute $A$. In other words, in terms of type processes, it should define a transition of type $A$, but not of type $B$. As we have seen, the capability rule (see Subsection \ref{subsec: tableau calculus}) takes care of such cases. 

The complexity analysis of the algorithm in Subsection \ref{subsec: complexity} reveals that the capability rule has a worth-mentioning difference from the other rules. As mentioned earlier, the proposed capability rule is required to express the existence of a transition with the necessary properties to hold at the two involved states. To achieve this, the capability rule introduces new formulas which do not belong to the closure of the input formula of the algorithm (a variation of the Fischer-Ladner closure \cite{Hartonas1}), as it becomes apparent from Definitions \ref{def: closure} and \ref{def: cpr} and from Lemma \ref{lem: subset of cl}. According to its definition (recall equations \eqref{eq: cap Delta} and \eqref{eq: capability rule}), for each negated capability statement and for each subset of non-negated capabilities statements with precondition-effect processes, an appropriate formula is defined which indicates the necessary transition. As the number of the subsets of the non-negated capabilities statements is at most exponential, the number of these new formulas is also at most exponential. Fortunately, this does not have a significant impact on the complexity of the algorithm and it still runs in exponential time. 

The algorithm has available for possible reuse not only the states, but also the partial nodes of a tableau. Nevertheless, there is a difference between the states and the partial nodes. The reuse of a state depends exclusively on its active set, whereas in the case of a partial node, its reduced set is also considered. Observe that states with the same reduced sets have also the same active sets, but states with the same active sets may have different reduced sets. On the other hand, as far as a partial node is concerned, if the algorithm does not consider its reduced set, a loop may be formed in which an active set never becomes saturated. 

The algorithm restricts or-branching whenever possible. The procedure \tref{asr} does not expand a partial node with its remaining children if the status of one of its previously defined children has been assigned the value \sat. 
Thanks to the dependency set of a node, the algorithm has the ability to define the status of a node accordingly.  
While the status value \sat indicates satisfiability, the value \tempsat indicates possible satisfiability of the node, as it depends on ancestor nodes.
Thus, the algorithm distinguishes between the actual and the possible satisfiability and in the former case, it has the opportunity to discard or-branches.

An issue that we need to consider is the extension of our algorithm for the backwards possibility operator or even for the converse by applying the approach devised in \cite{Gore-CPDL}. 
The main concerns are two. 
First, we need to examine whether our rules for the new operators of \tpdl can be adapted to the setting of the algorithm of \cite{Gore-CPDL}. Since we adopt similar $\alpha$/$\beta$ rules and similar non-static rules (i.e. the transitional and the capability rule) to those of \cpdl, it is expected that problematic situations should be avoided. As we consider operators which ``look backwards'', keep in mind that the capability rule does not require the existence of a transition between the premise and the conclusions of the rule (see Subsection \ref{subsec: tableau calculus}).
Second, we should examine whether our approach for the restriction of or-branching causes any problems. 
We should mention here that similar optimizations have been proposed for \pdl and \cpdl in \cite{Widmann_Thesis} where the lack of diamond formulas with atomic processes in a node indicates satisfiability.

\phantomsection
\addcontentsline{toc}{section}{References}

{\small 
	\newcommand{\etalchar}[1]{$^{#1}$}
	
}

\end{document}